\documentclass{article}
\usepackage{amsmath,amssymb,amsthm}
\usepackage{hyperref}

\newtheorem{theorem}{Theorem}[section]
\newtheorem{lemma}[theorem]{Lemma}
\newtheorem{proposition}[theorem]{Proposition}
\newtheorem{example}[theorem]{Example}
\newtheorem{corollary}[theorem]{Corollary}
\newtheorem{remark}[theorem]{Remark}
\newtheorem{definition}[theorem]{Definition}

\newtheorem{applemma}{Lemma}
\newtheorem{appropo}{Proposition}

\newcommand{\ep}{\hfill $\Box$}
\newcommand{\angX}{\langle X \rangle}

\def\calA{{\mathcal A}}
\def\calB{{\mathcal B}}
\def\calC{{\mathcal C}}
\def\calD{{\mathcal D}}
\def\calE{{\mathcal E}}
\def\calF{{\mathcal F}}
\def\calT{{\mathcal T}}
\def\calU{{\mathcal U}}
\def\calV{{\mathcal V}}

\def\bfK{{\mathbf K}}
\def\bfa{{\mathbf a}}

\def\rmM{{\mathrm M}}
\def\rmT{{\mathrm T}}

\def\vk{\varkappa}
\def\si{\sigma}
\def\om{\omega}
\def\ve{\varepsilon}
\def\vp{\varphi}
\def\Si{\Sigma}
\def\Om{\Omega}
\def\Ga{\Gamma}

\def\etai{\eta^{-1}}
\def\vpi{\varphi^{-1}}
\def\psii{\psi^{-1}}

\def\ra{\rightarrow}
\def\lra{\leftrightarrow}
\def\Ra{\Rightarrow}
\def\LRa{\Leftrightarrow}

\def\es{\emptyset}
\def\se{\subseteq}

\def\Con{\mathrm{Con}}
\def\Eq{\mathrm{Eq}}
\def\FCon{\mathrm{FCon}}
\def\FGCon{\mathrm{FGCon}}
\def\fork{\mathrm{fork}}
\def\GCon{\mathrm{GCon}}
\def\hg{\mathrm{hg}}
\def\ia{\mathrm{ia}}
\def\RA{\mathrm{RA}}
\def\RCon{\mathrm{RCon}}
\def\RGCon{\mathrm{RGCon}}
\def\Reg{\mathbf{Reg}}
\def\root{\mathrm{root}}
\def\rt{\mathrm{rt}}
\def\SA{\mathrm{SA}}
\def\size{\mathrm{size}}
\def\st{\mathrm{st}}
\def\Tr{\mathrm{Tr}}

\def\SalgA{{\mathcal A} = (A,\Sigma )}
\def\SalgB{{\mathcal B} = (B,\Sigma )}
\def\OalgA{{\mathcal A} = (A,\Omega )}
\def\OalgB{{\mathcal B} = (B,\Omega )}
\def\GalgC{{\mathcal C} = (C,\Gamma )}

\def\SX{\Sigma X}
\def\SXt{T_\Sigma(X)}
\def\SXc{C_\Sigma(X)}
\def\OX{\Omega X}
\def\OXt{T_{\Omega}(X)}
\def\OY{\Omega Y}
\def\OYt{T_\Omega(Y)}
\def\OYc{C_\Omega(Y)}
\def\GXt{T_{\Gamma}(X)}
\def\SXta{\calT_\Sigma(X)}
\def\OXta{\calT_\Omega(X)}
\def\OYta{\calT_\Omega(Y)}

\def\varC{\calC = \{\calC(\Si,X)\}}
\def\varU{\calU = \{\calU(\Si,X)\}}
\def\varV{\calV = \{\calV(\Si,X)\}}
\def\RecSX{Rec(\Sigma,X)}
\def\RecOY{Rec(\Omega,Y)}
\def\VRA{\mathbf{VRA}}

\def\VUT{\mathbf{VUT}}
\def\Nil{\mathbf{Nil}}

\title{Varieties of Unranked Tree Languages}

\author{
Magnus Steinby\footnote{Department of
Mathematics and Statistics, University of Turku, FI-20014 Turku,
Finland, email: steinby@utu.fi}
\and
Eija Jurvanen\footnote{Department of Mathematics and Statistics, University of Turku, FI-20014 Turku, Finland, email: jurvanen@utu.fi}
\and
Antonio Cano \footnote{Departamento de Sistemas
Inform\'aticos y Computaci\'on, Universidad Polit\'ecnica de
Valencia, Camino de Vera s/n, P.O. Box: 22012, E-46020 - Valencia,
email:  acano@dsic.upv.es.}
}

\begin{document}

\maketitle

\begin{abstract}
We study varieties that contain unranked tree languages over all alphabets. Trees are labeled with symbols from two alphabets, an unranked operator alphabet and an alphabet used for leaves only. Syntactic algebras of unranked tree languages are defined similarly as for ranked tree languages, and an unranked tree language is shown to be recognizable iff its syntactic algebra is regular, i.e., a finite unranked algebra in which the operations are defined by regular languages over its set of elements. We establish a bijective correspondence between varieties of unranked tree languages and varieties of regular algebras. For this, we develop a basic theory of unranked algebras in which algebras over all operator alphabets are considered together. Finally, we show that the natural unranked counterparts of several general varieties of ranked tree languages form varieties in our sense.

This work parallels closely the theory of general varieties of ranked tree languages and general varieties of finite algebras, but many nontrivial modifications are required. For example, principal varieties as the basic building blocks of varieties of tree languages have to be replaced by what we call quasi-principal varieties, and we device a general scheme for defining these  by certain systems of congruences.
\end{abstract}

\noindent\textbf{Keywords:}  tree language; unranked tree; syntactic algebra; variety of tree languages; unranked algebra

\section{Introduction}\label{se:Intro}

In its prevalent form, the theory of tree automata and tree
languages (cf.~\cite{That73}, \cite{GeSt84}, \cite{GeSt97}
or~\cite{TATA07} for general expositions) deals with trees in
which the nodes are labelled with symbols from a ranked alphabet;
in a ranked alphabet each symbol has a unique nonnegative integer
rank, or arity, that specifies the number of children of any node
labelled with that symbol. Thus a ranked alphabet may be viewed as
a finite set of operation symbols in the sense of algebra, and
then trees are conveniently defined as terms and finite
(deterministic, bottom-up) tree automata become essentially finite
algebras. As a matter of fact, the theory of tree automata arose
from the interpretation of ordinary automata as finite unary algebras
advocated by J.R.~B\"uchi and J.B.~Wright already around 1960
(cf.~\cite{GeSt84} or \cite{GeSt97} for notes on this subject and
references to the early literature). Universal algebra has offered
the theory of tree automata a solid foundation, and the definition
of trees as terms links it naturally also with term rewriting.
Nothing of this would be lost even if each symbol in a ranked
alphabet is allowed a fixed finite set of ranks. However,
when trees are used as representations of XML documents or parses
of sentences of a natural language, fixing the possible ranks of a
symbol is awkward. It is in particular the study of XML that
propels the current interest in unranked tree languages
(cf.~\cite{BrMW01}, \cite{Neve02}, \cite{MaNi07}, \cite{TATA07}
or~\cite{Schw07}, for example).

Actually, unranked  trees are nothing new in the theory of tree
languages. Let us note just two early papers, published in 1967
and 1968, respectively. In \cite{That67} Thatcher defines
recognizable unranked tree languages, proves some of their basic
properties, and establishes a connection between them and the
derivation trees of extended context-free grammars; the paper is
motivated by the study of natural languages. Recognizability is
defined using ``pseudoautomata'', a concept attributed to B\"uchi
and Wright, in which state transitions are regulated by regular
languages over the state set. This idea reappears in various forms
in most works on recognizable unranked tree languages, and also in
this paper pseudoautomata play a central role (we call
them regular algebras). In \cite{PaQu68} Pair and Quere consider
hedges (that they call ramifications), i.e., finite sequences of
unranked trees, and they introduce a new class of algebras,
binoids, in terms of which the recognizability of hedge languages
are defined and discussed. Also hedges have become a much used
notion in the theory of unranked tree languages
(cf.~\cite{Taka75,BrMW01,BoWa08,BoSS12}, for example). However, we shall
consider just unranked trees.

The varieties to be studied here contain tree languages over
all unranked alphabets and leaf alphabets, and we take as our
starting point the theory of  general varieties of (ranked) tree
languages presented in \cite{Stei98}. However, in addition to the
modifications to be expected, some novel notions are needed. On
the other hand, the formalism is actually simplified by the fact
that symbols have no ranks.

The paper is organized as follows. In Section 2 we recall some
general preliminaries, introduce unranked trees and some related
notions. In addition to an unranked alphabet, that we call the
operator alphabet, we use also a leaf alphabet for labeling leaves
only. If $\Si$ is an operator alphabet and $X$ a leaf alphabet,
then unranked $\SX$-trees are defined as unranked $\Si$-terms with
variables in $X$, and sets of such terms are called unranked
$\SX$-tree languages. This arrangement with two alphabets will be
convenient for the algebraic treatment of our subject but, as we
shall demonstrate, it is also natural in typical applications.

In Section 3 we develop the basic theory of unranked algebras in a
way that allows us to consider together algebras over different
operator alphabets. Here we can follow quite closely the
corresponding generalized theory of ordinary (i.e., ranked)
algebras as presented in \cite{Stei98}. In the next section we
consider the unranked algebras in terms of which recognizability
is defined. We call them regular algebras, but they are precisely the pseudoautomata
of B\"uchi and Wright mentioned above. We show that the class of
regular algebras is closed under our generalized constructions of
subalgebras, epimorphic images and finite direct products. Thus
they form the greatest variety of regular algebras (VRA). By first
proving a number of commutation and semi-commutation relations
between the class operators corresponding to the various
constructions of algebras, we derive a representation for the VRA
generated by a given class of regular algebras similar to Tarski's
classical HSP-theorem (cf.~\cite{BuSa81} or \cite{Berg12}, for
example). The regular congruences considered in Section 5 are
intimately connected with regular algebras. Indeed, the
congruences of a regular algebra are regular, and the quotient
algebra of an unranked algebra with respect to a congruence is
regular exactly in case the congruence is regular.

In Section 6 we introduce syntactic congruences and syntactic
algebras of subsets of unranked algebras. Also here it is
convenient to consider these notions on this general level.
Syntactic congruences and syntactic algebras of unranked tree
languages are then obtained by viewing them as subsets of term
algebras. All these notions are natural adaptations of their
ranked counterparts (cf. \cite{Stei79,Stei92,Stei05} or
\cite{Alme90}). Our syntactic congruences of unranked tree
languages appear also in \cite{BrMW01} as `top congruences'. In
\cite{BoSS12} the term `syntactic algebra' designates a different
notion that is associated with hedge languages. Similarly as in
\cite{Stei98}, we shall also need reduced syntactic congruences
and algebras obtained by merging symbols that are equivalent with
respect to the subset considered. In Section 7 we define an
unranked tree language to be recognizable if it is recognized by a
regular algebra. This definition is essentially the same as that
of \cite{That67} and equivalent to other definitions that use
finite automata. As one would expect, an unranked tree language is
recognizable if and only if its syntactic algebra is regular, and
the syntactic algebra is in a natural sense the least unranked
algebra recognizing any given unranked tree language. We also show
that the syntactic algebra of any effectively given recognizable
unranked tree language can be effectively constructed; here this
is less obvious than in the ranked case as the operations are
infinite objects and there are infinitely many trees of any given
height $\geq 1$.

In Section 8 we introduce varieties of unranked tree languages
(VUTs). Such a variety contains languages for all operator and
leaf alphabets. Similarly as in the ranked case, a VUT is usually
most naturally defined in terms of congruences of term algebras,
and hence we introduce also varieties of regular congruences
(VRCs) and show that each VRC yields a VUT. Most examples of
varieties of ranked tree languages are so-called principal
varieties or unions of them (cf.~\cite{Stei92,Stei98,Stei05}). A
principal variety corresponds to a variety of congruences that
consists of principal filters of the congruence lattices of term
algebras in which the generating congruences are of finite index.
Because of the unlimited branching  in unranked  trees, the
corresponding VUTs cannot be defined this way. Instead, we
introduce the notion of consistent systems of congruences. These
systems yield varieties of regular congruences and varieties of
unranked tree languages that we call quasi-principal.

In Section 9 we establish a bijective correspondence between the
varieties of regular algebras and the varieties of unranked tree
languages. Section 10 contains several examples of VUTs that may
be regarded as the natural unranked counterparts of some (general)
varieties of ranked tree languages. Thus we have the VUTs of
finite/co-finite, definite, reverse definite, generalized
definite, aperiodic, locally testable and piecewise testable
unranked tree languages. In most cases these VUTs are unions of
quasi-principal VUTs defined by consistent systems of congruences.
For example, for each $k\geq 2$, we define the VUT of $k$-testable
unranked tree languages by means of a consistent system of
congruences that is naturally given by the definition of
$k$-testability, and the VUT of all locally testable unranked
tree languages is the union of these VUTs.

In the final section we briefly review the results of the paper
and note some further topics to be considered.

Several proofs that have been omitted or just outlined in the main text can be found in the Appendix at the end of the paper.

\section{General preliminaries and unranked trees}\label{se:Preliminaries}

We may write $A := B$ to emphasize that $A$ is defined to be $B$.
Similarly, $A :\LRa B$ means that $A$ is defined by the condition
expressed by $B$. For any integer $n\geq 0$, let $[n] :=
\{1,\ldots,n\}$. For a relation $\rho \se A\times B$, the fact
that $(a,b)\in \rho$  is also expressed by $a \,\rho\, b$ or  $a
\equiv_\rho b$. For any $a\in A$, let $a\rho := \{ b \mid a \rho
b\}$, and for $A'\se A$, let $A'\rho := \{b\in B \mid  (\exists
a\in A')\, a\,\rho\, b\}$. The \emph{converse} of $\rho$ is the
relation $\rho^{-1}:=\{ (b,a) \mid a\rho b\}$ ($\se B\times A$).
 The \emph{composition} of two relations
$\rho \se A\times B$ and $\rho' \se B\times C$ is the relation
$\rho \circ \rho' := \{ (a,c) \mid a\in A, c\in C, (\exists b\in
B)\, a \rho b \; \mathrm{and}\; b \rho' c\}$. The set of
equivalence relations on a set $A$ is denoted by $\Eq(A)$, and for
any $\theta\in\Eq(A)$, let $A/\theta := \{a\theta \mid a\in A\}$
be the corresponding quotient set. Let $\Delta_A := \{(a,a)\mid
a\in A\}$ be the \emph{diagonal relation} and $\nabla_A := A\times
A$ be the \emph{universal relation}.

For a mapping $\vp :A \ra B$, the image $\vp(a)$ of an element
$a\in A$ is also denoted by $a\vp$. Accordingly, if $H\se A$ and
$K\se B$, we may also write $H\vp$ and $K\vpi$ for $\vp(H)$ and
$\vpi(K)$, respectively. Especially homomorphisms will be treated
this way as right operators and the composition of $\vp :A \ra B$
and $\psi : B \ra C$ is written as $\vp\psi$. The \emph{identity
map} $A \ra A, a \mapsto a,$ is denoted by $1_A$. For any sets $A_1,\ldots,A_n$ ($n\geq 1$) and any $i\in [n]$, we let $\pi_i$ denote the \emph{$i^{th}$ projection} $A_1\times\ldots\times A_n \ra A_i, \: (a_1,\ldots,a_n) \mapsto a_i$.

For any alphabet $X$, we denote by $X^*$ the set of all words over
$X$, the empty word by $\ve$, and by $Rec(X^*)$ the set of all
regular languages over $X$.

We need also some notions from lattice theory (cf.~\cite{BuSa81}
or \cite{DaPr02}, for example). Let $\leq$ be an order on a set
$L$. A nonempty subset $D$ of $L$ is said to be \emph{directed} if
for all elements $d_1,d_2\in D$, there is an element $d\in D$ such
that $d_1,d_2 \leq d$, and it is a \emph{chain} in $L$ if any two
of its elements are comparable. Of course, any chain is directed.
Now, let $(L,\leq)$ be a lattice. A nonempty subset $F\se L$ is a
\emph{filter} if
\begin{itemize}
  \item[(1)] $a,b\in F$ implies $a\wedge b \in F$, and
  \item[(2)] $a\in F$, $b\in L$ and $a\leq b$ imply $b\in F$.
\end{itemize}
The \emph{filter generated} by a nonempty subset $H$ of $L$, i.e.,
the intersection of all filters containing $H$, is denoted by
$[H)$. It is easy to see that
$$[H) \, =  \, \{b\in L \mid a_1\wedge \ldots \wedge a_n \leq b \; \mathrm{for \; some} \; n \geq 1 \; \mathrm{and} \; a_1,\ldots,a_n \in H\}.$$
As a special case, we get the \emph{principal filter} $[a) :=
\{b\in L \mid a \leq b\}$ generated by a singleton subset
$\{a\}\se L$.

The \emph{unranked trees} to be considered here are finite and
node-labeled, and their branches have a specified left-to-right
order. Both from a theoretical point of view and for some
applications, it will be natural to use two alphabets, an
\emph{operator alphabet} and a \emph{leaf alphabet}, for labelling
our trees. A symbol from the operator alphabet may appear as the
label of any node of a tree, while the symbols of the leaf
alphabet appear as labels of leaves only. In what follows, $\Si$,
$\Om$, $\Ga$ and $\Psi$ denote operator alphabets, and $X$, $Y$
and $Z$ leaf alphabets. We assume that all alphabets are finite
and that operator alphabets are also nonempty.

\begin{definition}\label{de:Trees}{\rm The set $\SXt$ of
\emph{unranked $\SX$-trees} is the smallest set $T$ of strings
such that (1) $X\cup \Sigma \subseteq T$, and
    (2)  $f(t_1,\ldots ,t_m)\in T$ whenever  $f\in
    \Sigma$, $m>0$ and $t_1,\ldots ,t_m\in T$.
Subsets of $\SXt$ are called \emph{unranked $\Sigma X$-tree
languages}. Often we speak simply about \emph{$\SX$-trees} and
\emph{$\Sigma X$-tree languages}, or just about
\emph{(unranked) trees} and \emph{(unranked) tree
languages} without specifying the alphabets. }\ep
\end{definition}

Any $u\in X\cup \Sigma$ represents a one-node tree in which the
only node is labeled with $u$, and $f(t_1,\ldots ,t_m)$ is
interpreted as a tree formed by adjoining the $m$ trees
represented by $t_1, \ldots ,t_m$ to a new $f$-labeled root in
this left-to-right order.

As the definition of the set $\SXt$ is inductive, notions relating
to $\SX$-trees can be defined recursively and statements about
them can be proved by tree induction. For example, the
\emph{height} $\hg(t)$ and the \emph{root (symbol)} $\root(t)$  of
a $\SX$-tree $t$ are defined thus:
\begin{itemize}
  \item[(1)] $\hg(u) = 0$ and $\root(u) =u$  for $u\in \Si \cup X$;
  \item[(2)] $\hg(t) = \max \{\hg(t_1),\ldots ,\hg(t_m)\}+1$ and
  $\root(t) = f$  for $t = f(t_1,\ldots ,t_m)$.
\end{itemize}

\begin{definition}\label{de:Contexts}{\rm
Let $\xi$ be a special symbol not in $\Sigma$ or $X$. A $\Sigma
X$\emph{-context} is a $\Sigma (X\cup \{\xi \})$-tree in which
$\xi$ appears exactly once. Let $\SXc$ denote the set of all
$\Sigma X$-contexts.

If $p,q\in \SXc$, then $p(q)$ is the $\Sigma X$-context obtained
from $p$ by replacing the $\xi$ in it with $q$. Similarly, if
$t\in \SXt$ and $p\in \SXc$, then $p(t)$ is the $\Sigma X$-tree
obtained when the $\xi$ in $p$ is replaced with $t$.
The \emph{height} $\hg(p)$ and the \emph{root} $\root(p)$  of a
$\SX$-context $p$ are defined treating $p$ as a $\Si(X \cup \{\xi\})$-tree.}\ep
\end{definition}

Let us illustrate the above definitions by a few examples.

\begin{example}\label{Ex:Trees&Contexts}{\rm
Let $f,g\in \Sigma$ and $x,y\in X$. The $\SX$-tree $t :=
f(g(y),x,f)$ and the $\SX$-context $p := f(\xi ,f(g))$ are
depicted in Figure \ref{fig:example31}. Now  $\hg(t) = \hg(p) =
2$, $\root(t) = \root(p) = f$, $p(t) = f(f(g(y),x,f) ,f(g))$ and
$p(g(\xi)) = f(g(\xi),f(g))$. On the other hand, $g(\xi)(p) =
g(f(\xi,f(g)))$. }\ep

\begin{figure}[h]
\centering
\setlength{\unitlength}{1mm}
\begin{picture}(28,31)(0,0)
\put(3,3){\line(0,1){8}} 
\put(16,17){\line(0,1){8}} 
\qbezier(3,17)(9.5,21)(16,25) 
\qbezier(29,17)(22.5,21)(16,25) 
\put(2,0){$y$}
\put(2,13){$g$}
\put(15,13){$x$}
\put(28,13){$f$}
\put(15,27){$f$}
\end{picture}
\qquad
\begin{picture}(28,31)(0,0)
\put(25,3){\line(0,1){8}} 
\qbezier(3,17)(8.5,21)(14,25) 
\qbezier(25,17)(19.5,21)(14,25) 
\put(24,0){$g$}
\put(2,13){$\xi$}
\put(24,13){$f$}
\put(13,27){$f$}
\end{picture}
\caption{$\SX$-tree $f(g(y),x,f)$ and $\SX$-context $f(\xi
,f(g))$}
\label{fig:example31}
\end{figure}
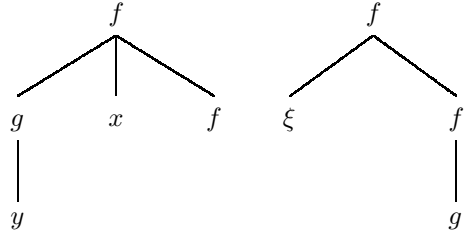
\end{example}

By the following examples we demonstrate that the use of two
alphabets is quite natural also in typical applications of unranked trees.

\begin{example}\label{ex:XML example}{\rm
Figure \ref{fig:XML tree} shows the tree representation of a small
XML document. Here \verb"invoices", \verb"invoice" and \verb"line"
belong to the operator alphabet while \verb"text" is used as a
generic name for a leaf symbol. }\ep

\begin{figure}[h]
\centering
\setlength{\unitlength}{1mm}
\begin{picture}(90,45)(0,0)
\put(2.5,1) {text}
\put(6,5){\line(0,1){8}}
\put(3,14) {line}
\put(40,1) {text}
\put(43.5,5){\line(0,1){8}}
\put(40.5,14) {line}
\put(79.5,1) {text}
\put(83,5){\line(0,1){8}}
\put(80,14) {line}
\put(19,27) {invoice}
\put(77,27) {invoice}
\put(47,40) {invoices}
\qbezier(6,18)(15.5,21.5)(25,25) 
\qbezier(43.5,18)(34.25,21.5)(25,25) 
\put(83,18){\line(0,1){7}} 
\qbezier(25,31)(39.5,35)(54,39) 
\qbezier(83,31)(68.5,35)(54,39) 
\end{picture}
\caption{Unranked tree representing the structure of an XML
document.}
\label{fig:XML tree}
\end{figure}
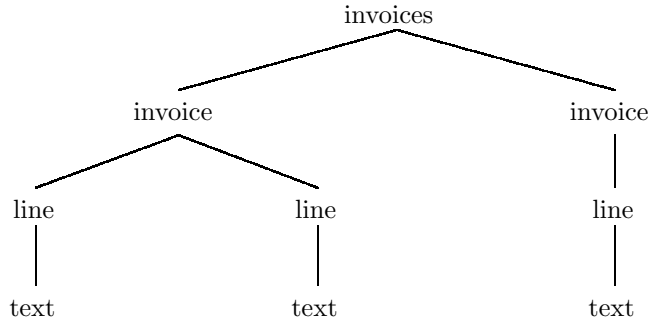
\end{example}

\begin{example}\label{ex:Neven02}{\rm In Example 3.2 of \cite{Neve02}
(also Example 1 of \cite{CrLT05}) the unranked trees are Boolean
expressions without complements and variables, in which
disjunctions and conjunctions may appear with any nonnegative
arities. The alphabet consists of the symbols $\vee, \wedge, 0$
and $1$. In our formalism it is natural to partition this set into
an operator alphabet $\Si = \{\vee,\wedge\}$ and a leaf alphabet
$X = \{0,1\}$; the symbols $0$ and $1$ never label inner nodes.
}\ep
\end{example}

\begin{example}\label{ex:Linguistics}{\rm
Also the parse trees of sentences in a natural language are often
best viewed as unranked. For example, in the tree shown in Figure
\ref{fig:linguistics tree}, the label \verb"NP" has multiple
arities (2 and 3). The symbols S, VP, NP etc. that stand for the
various grammatical categories  form the operator alphabet while
the words ``This'', ``singer'', ``has'' etc. belong to the leaf alphabet.
}\ep

\begin{figure}[h]
\centering
\setlength{\unitlength}{1mm}
\begin{picture}(110,56)(0,0)
\put(2.5,13) {This}
\put(6.5,17){\line(0,1){9}}
\put(3,26.5) {Det}
\put(20,13) {singer}
\put(26,17){\line(0,1){9}}
\put(24.5,26.5) {N}
\qbezier(6.5,30)(11.75,34)(16,38) 
\qbezier(26,30)(21,34)(16,38) 
\put(13,39){NP}
\qbezier(16,42.5)(27,47)(38,51.5) 
\put(37,52){S}
\qbezier(38,51.5)(49.25,47)(60.5,42.5) 
\put(57.5,39){VP}
\qbezier(42.5,30)(51.5,34)(60.5,38) 
\put(41,26.5){V}
\qbezier(78,30)(69.25,34)(60.5,38) 
\put(75,26.5){NP}
\put(42.5,17){\line(0,1){9}}
\put(40.2,13){has}
\qbezier(57,17)(68.5,21.5)(78,26) 
\put(54,13){Det}
\qbezier(77,17)(77.5,21.5)(78,26) 
\put(75.5,13){A}
\qbezier(99,17)(88.5,21.5)(78,26) 
\put(97.5,13){N}
\put(57,4){\line(0,1){8}}
\put(56,0){a}
\put(77,4){\line(0,1){8}}
\put(70.5,0){natural}
\put(99,4){\line(0,1){8}}
\put(93.5,0){talent}
\end{picture}
\caption{A parse tree of a sentence}
\label{fig:linguistics tree}
\end{figure}
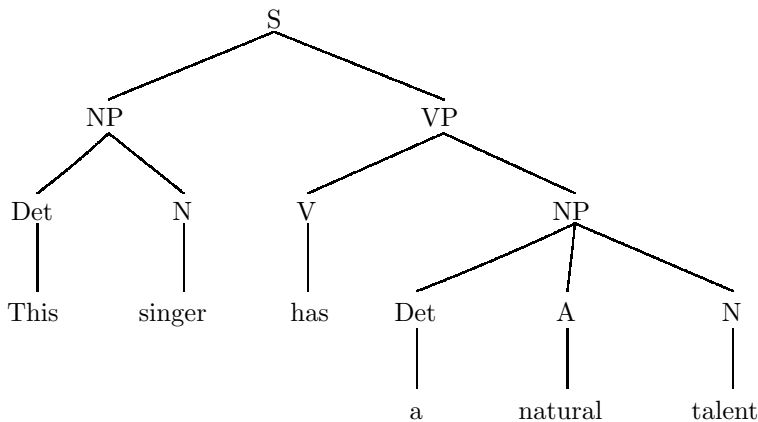

\end{example}

\section{Unranked algebras}\label{se:Unranked algebras}

For an algebraic theory of unranked tree languages we have to
adapt the basic notions and facts of universal algebra to algebras
over unranked sets of operation symbols. Since we will consider
varieties containing tree languages over all alphabets, these
notions are formulated in a  way that corresponds to the
generalized variety theory of \cite{Stei98}. The prefix $g$
appearing in some names stands for ``generalized''.

In the following, the set $A$ of elements of an algebra will be
regarded also as an alphabet and the set of all finite sequences
of elements of $A$ is denoted by $A^*$, and an $m$-tuple
$(a_1,\ldots,a_m)\in A^m$ ($m\geq 0$) may be written as the word
$a_1\ldots a_m$ and subsets of $A^*$  may be viewed as languages.

\begin{definition}\label{de:UAlgebras}{\rm
An \emph{unranked $\Sigma$-algebra} $\calA$ consists of a nonempty
set $A$ (of \emph{elements} of $\calA$) and an operation $f_\calA
: A^* \ra A$ for each $f\in \Si$. We write simply $\SalgA$. The
algebra $\calA$ is \emph{finite} if $A$ is a finite set, and it is
\emph{trivial} if $A$ is a one-element set. We may also speak just
about $\Si$\emph{-algebras} or \emph{algebras} when there
is no danger of confusion.
 }\ep
\end{definition}

These are essentially the "pseudoalgebras" used by Thatcher
\cite{That67} who attributes the concept to J.R.~B\"uchi and
J.B.~Wright (1960). In what follows, $\SalgA$, $\SalgB$, $\OalgB$,
$\GalgC$, etc. are always unranked algebras with the operator
alphabets shown.

\begin{definition}\label{de:Subalgebras}{\rm
If $\Om \se \Si$, an $\Omega$-algebra $\OalgB$ is an
\emph{$\Om$-subalgebra} of a $\Sigma$-algebra $\SalgA$ if  $B\se
A$ and $f_\calB(w) = f_\calA(w)$ for all $f\in \Om$ and $w\in
B^*$. Then we also call $\calB$ a \emph{g-subalgebra} of $\calA$
without specifying $\Om$. If $\Om = \Si$, we call $\calB$ simply a
\emph{subalgebra}. \ep }
\end{definition}

If $\OalgB$ is an $\Om$-subalgebra of $\SalgA$, then $B$ is an
\emph{$\Om$-closed subset} of $\calA$, i.e., $f_\calA(w)\in B$ for
all $f\in\Om$  and $w\in B^*$. Any $\Om$-closed subset $B$ must be
nonempty since  $\Om \neq \es$ and $f_\calA(\ve) \in B$ for every
$f\in \Om$. This means that there is a perfect correspondence
between $\Om$-subalgebras and $\Om$-closed subsets, and hence we
may identify the two.

The intersection of any set of $\Om$-subalgebras is an
$\Om$-subalgebra, and the \emph{$\Om$-subalgebra $\langle H
\rangle_\Om$ generated} by a subset $H\se A$ can be defined in the
usual way as the intersection of all $\Om$-subalgebras that
contain $H$ as a subset. We write $\langle H \rangle_\Si$ simply
as $\langle H \rangle$ and call it the \emph{subalgebra generated}
by $H$.

\begin{definition}\label{de:g-morphism}{\rm
A pair of mappings  $\iota : \Si \ra \Om,\, \vp : A \ra B$ forms a
\emph{g-morphism} from $\SalgA$ to $\OalgB$, written as
$(\iota,\vp) : \calA \ra \calB$, if
$$
    f_{\calA}(a_1,\ldots ,a_m)\vp \: = \:
    \iota(f)_{\calB}(a_1\vp ,\ldots ,a_m\vp)
$$
for all  $f\in \Si$, $m\geq 0$ and $a_1,\ldots,a_m\in A$. A
g-morphism is a \emph{g-epimorphism}, a \emph{g-monomorphism} or a
\emph{g-isomorphism} if both maps are, respectively,  surjective,
injective or bijective. Two algebras $\calA$ and $\calB$ are said
to be \emph{g-isomorphic}, $\calA \cong_g \calB$  in symbols, if
there is a g-isomorphism $(\iota,\varphi) : \calA \ra \calB$, and
$\calB$ is called a \emph{g-image} of $\calA$, if there is a
g-epimorphism from $\calA$ onto $\calB$. Furthermore, an algebra
$\calA$ is said to be \emph{g-covered} by an algebra $\calB$,
$\calA \preceq_g \calB$ in symbols, if $\calA$ is a g-image of a
g-subalgebra of $\calB$. }\ep
\end{definition}

If $(\iota,\vp) : \calA \ra \calB$ is a g-morphism as above, let
$\vp_* : A^* \ra B^*$ be the extension of $\vp$ to a monoid
morphism. Then the fact that $(\iota,\vp)$ is a g-morphism can be
expressed by saying that $f_\calA(w)\vp = \iota(f)_\calB(w\vp_*)$
for all $f\in \Si$ and $w\in A^*$.

A \emph{morphism} $\vp : \calA \ra \calB$ between two
$\Si$-algebras $\SalgA$ and $\SalgB$ is a mapping $\vp : A\ra B$
such that $f_\calA(w)\vp = f_\calB(w\vp_*)$ for all $f\in \Si$ and
$w\in A^*$. It may be viewed as a special g-morphism $(\iota,\vp)
: \calA \ra \calB$ in which $\iota$ is the identity map $1_\Si$.
If $\vp$ is surjective, injective or bijective, then it is called an \emph{epimorphism}, a \emph{monomorphism} or an \emph{isomorphism}, respectively. The algebras $\calA$ and $\calB$ are \emph{isomorphic}, $\calA \cong \calB$ in symbols, if there is an isomorphism $\vp: \calA \to \calB$. Similarly, $\calB$ \emph{covers}  $\calA$,
and we express this by writing $\calA \preceq \calB$, if $\calA$ is
an epimorphic image of some subalgebra of $\calB$.

The g-morphisms of unranked algebras have all the basic properties
of morphisms of ordinary algebras. Some of them are listed in the
following lemma.

\begin{lemma}\label{le:g-morphisms}  Let $\SalgA$, $\OalgB$ and
$\GalgC$ be unranked algebras, and $(\iota,\vp) : \calA \ra \calB$
and $(\vk,\psi) : \calB \ra \calC$ be g-morphisms.
\begin{itemize}
  \item[{\rm (a)}] The product $(\iota\vk,\vp\psi) : \calA \ra
  \calC$ is also a g-morphism. Moreover, if $(\iota,\vp)$ and
  $(\vk,\psi)$ are g-epi-, g-mono- or g-isomorphisms, then so is
  $(\iota\vk,\vp\psi)$.

  \item[{\rm (b)}] If $R$ is a g-subalgebra of $\calB$, then $R\vp^{-1}$ is a g-subalgebra of $\calA$. In particular, if $R$ is a $\Psi$-subalgebra of $\calB$ for some $\Psi \se \Om$, then $R\vp^{-1}$ is a $\iota^{-1}(\Psi)$-subalgebra of $\calA$.

  \item[{\rm (c)}] If $S$ is a g-subalgebra of $\calA$, then $S\vp$ is a g-subalgebra of $\calB$. In particular, if $S$ is a $\Psi$-subalgebra of $\calA$ for some $\Psi\se \Si$, then $S\vp$ is a $\iota(\Psi)$-subalgebra of $\calB$.
\end{itemize}
\end{lemma}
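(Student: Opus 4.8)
The plan is to reduce all three parts to the defining equation of a g-morphism written in its monoid-extended form: for a g-morphism $(\iota,\vp):\calA\ra\calB$ one has $f_\calA(w)\vp=\iota(f)_\calB(w\vp_*)$ for every $f\in\Si$ and $w\in A^*$, where $\vp_*:A^*\ra B^*$ is the extension of $\vp$ to a monoid morphism. I would repeatedly use that this extension is functorial, so that $(\vp\psi)_*=\vp_*\psi_*$, together with $(\iota\vk)(f)=\vk(\iota(f))$.

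For part (a), I would fix $f\in\Si$ and $w\in A^*$ and compute $f_\calA(w)(\vp\psi)$ by applying the two g-morphism equations in turn:
\[
 f_\calA(w)(\vp\psi)=\bigl(\iota(f)_\calB(w\vp_*)\bigr)\psi=\vk(\iota(f))_\calC(w\vp_*\psi_*)=(\iota\vk)(f)_\calC\bigl(w(\vp\psi)_*\bigr),
\]
which is exactly the g-morphism condition for $(\iota\vk,\vp\psi)$. The \emph{moreover} clause then follows because surjectivity, injectivity and bijectivity are each preserved under composition of maps; applying this simultaneously to the pair $\iota,\vk$ and to the pair $\vp,\psi$ shows that the composite is a g-epi-, g-mono- or g-isomorphism whenever both factors are.

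For parts (b) and (c) it suffices to treat the stated $\Psi$-subalgebra versions, since a g-subalgebra is by definition a $\Psi$-subalgebra for some $\Psi$, and the unqualified statements follow by taking $\Psi$ to be the full operator alphabet. For (b), let $R$ be a $\Psi$-subalgebra of $\calB$ with $\Psi\se\Om$. To show that $R\vpi$ is $\iota^{-1}(\Psi)$-closed, I would pick $f\in\iota^{-1}(\Psi)$ and $w\in(R\vpi)^*$; then each letter of $w$ maps under $\vp$ into $R$, so $w\vp_*\in R^*$, and since $\iota(f)\in\Psi$ and $R$ is $\Psi$-closed, the equation $f_\calA(w)\vp=\iota(f)_\calB(w\vp_*)$ places $f_\calA(w)\vp$ in $R$, i.e.\ $f_\calA(w)\in R\vpi$. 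For (c), let $S$ be a $\Psi$-subalgebra of $\calA$ with $\Psi\se\Si$. Given $g\in\iota(\Psi)$, I would fix some $f\in\Psi$ with $\iota(f)=g$, and given $w\in(S\vp)^*$, choose a preimage word $v\in S^*$ with $v\vp_*=w$; then $g_\calB(w)=\iota(f)_\calB(v\vp_*)=f_\calA(v)\vp\in S\vp$, using that $S$ is $\Psi$-closed.

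These verifications are essentially routine unfoldings of the definitions, so no single step is a serious obstacle. The only points requiring care are bookkeeping rather than ideas: one must keep the two alphabet maps straight, distinguishing the preimage $\iota^{-1}(\Psi)$ in (b) from the image $\iota(\Psi)$ in (c); and in (c) one must note that although the preimage word $v$ is not unique, the value $g_\calB(w)$ depends only on $g$ and $w$, so exhibiting one admissible $v$ that lands $g_\calB(w)$ in $S\vp$ already suffices. Nonemptiness of the resulting closed sets need not be checked separately, since it is automatic for any $\Psi$-closed set, as observed after Definition~\ref{de:Subalgebras}.
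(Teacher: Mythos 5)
Your proposal is correct and follows essentially the same route as the paper's own proof: part (a) is the identical two-step computation $f_\calA(w)\vp\psi=\iota(f)_\calB(w\vp_*)\psi=(\iota\vk)(f)_\calC(w(\vp\psi)_*)$, and parts (b) and (c) are the same direct closure verifications, checking $\iota^{-1}(\Psi)$-closedness of $R\vpi$ and $\iota(\Psi)$-closedness of $S\vp$ by pushing the g-morphism equation through. The only differences are cosmetic (you work with words $w\in A^*$ where the paper writes tuples $a_1,\ldots,a_m$, and you make the reduction of the unqualified g-subalgebra statements to the $\Psi$-versions explicit, which the paper leaves tacit).
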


\begin{proof} All of the assertions have straightforward proofs. As an
example we consider statement (b).
Let $R$ be a $\Psi$-subalgebra of $\calB$ for some $\Psi \se \Om$.
To show that $R\vp^{-1}$ is a $\iota^{-1}(\Psi)$-closed subset of
$\calA$, consider  any $f \in \iota^{-1}(\Psi)$, $m\geq 0$ and
$a_1,\ldots,a_m \in R\vp^{-1}$. Since $R$ is $\Psi$-closed,
$\iota(f) \in \Psi$ and $a_1\vp,\dots, a_m\vp \in R$, we get
$$
    f_{\calA} (a_1,\dots,a_m)\vp =
    \iota(f)_{\calB} (a_1\vp,\dots,a_m\vp) \in R,
$$
and hence $f_{\calA} (a_1,\dots,a_m) \in R\vp^{-1}$.
\end{proof}

Next we consider congruences and quotients of unranked algebras.

\begin{definition}\label{de:g-Congruence}{\rm
A \emph{g-congruence} on a $\Si$-algebra $\SalgA$ is a pair
$(\si,\theta)$, where $\si \in \Eq(\Si)$ and $\theta \in \Eq(A)$,
such that for any $f,g\in \Si$, $m \geq 0$  and $a_1,\ldots
,a_m,b_1,\ldots ,b_m\in A$, $$f\, \si \, g, \: a_1 \, \theta \,
b_1, \ldots \, ,a_m \, \theta \, b_m \; \Ra \;
f_{\calA}(a_1,\ldots ,a_m)\: \theta \: g_{\calA}(b_1,\ldots
,b_m).$$ Let $\GCon(\calA)$ denote the set of all g-congruences on
$\calA$.}\ep
\end{definition}

Any algebra $\SalgA$ has as g-congruences at least
$(\Delta_\Si,\Delta_A)$ and $(\si,\nabla_A)$, where  $\si$ is any
equivalence on $\Si$. It is easy to see that  with respect to the
order defined by
$$(\si,\om) \leq (\si',\om') \: :\LRa \: \si \se \si' \:
\mathrm{and} \: \om \se \om' \qquad ((\si,\om),(\si'\om') \in
\GCon(\calA)),$$ $\GCon(\calA)$ forms a complete lattice in which
joins and meets are formed componentwise in $\Eq(\Si)$ and
$\Eq(A)$, respectively.

The ordinary \emph{congruences} of $\SalgA$ are the equivalences
$\theta\in \Eq(A)$ such that $(\Delta_\Si,\theta)\in\GCon(\calA)$.
Their set is denoted by $\Con(\calA)$. Note also that $\theta \in
\Con(\calA)$ whenever $(\om,\theta) \in \GCon(\calA)$ for some
$\om \in \Eq(\Si)$.

\begin{definition}\label{de:QuotientAlgebra}{\rm
For any g-congruence $(\si,\theta) \in \GCon(\calA)$ of an algebra
$\SalgA$, the \emph{g-quotient algebra} $\calA /(\si,\theta) =
(A/\theta ,\Sigma/\si)$ is defined by setting $$(f\si)_{\calA
/(\si,\theta)}(a_1\theta,\ldots ,a_m\theta) = f_{\calA}(a_1,\ldots
,a_m)\theta$$ for all $f\in\Si$, $m\geq 0, $ and $a_1,\ldots
,a_m\in A$.}\ep
\end{definition}

The operations of $\calA/(\si,\theta)$ are clearly well-defined
when $(\si,\theta)\in \GCon(\calA)$. Note also that $(f\si)_{\calA
/(\si,\theta)}(\ve) = f_{\calA}(\ve)\theta$ for every $f\in \Si$.

The following lemma can be proved by appropriately modifying the usual proofs of the corresponding classical statements (cf.~\cite{BuSa81}, for example).

\begin{lemma}\label{le:Quotients&Homomorphisms}
Let $\SalgA$ and $\OalgB$ be any algebras.
\begin{enumerate}
  \item[{\rm (a)}] For any g-congruence $(\si,\theta)$ of $\calA$, the
maps $\theta_\natural : A \ra A/\theta , a \mapsto
a\theta,$ and $\si_\natural : \Sigma \ra \Sigma/\si, f \mapsto
f\si,$ define a g-epimorphism $(\si_\natural,\theta_\natural) :
\calA \ra \calA/(\si,\theta)$.
  \item[{\rm (b)}] The \emph{kernel} $\ker (\iota,\vp) :=
  (\ker \iota, \ker \vp)$ of a g-morphism
  $(\iota,\vp) : \calA \ra \calB$  is a g-congruence of $\calA$, and $\calA/\ker (\iota, \varphi) \cong_g \calB$ if
  $(\iota,\varphi)$ is a g-epimorphism.  \ep
\end{enumerate}
\end{lemma}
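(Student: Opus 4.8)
The plan is to prove this lemma exactly as one proves the classical homomorphism theorems for ordinary algebras, but keeping track of the extra component on the operator alphabet. For part (a), I would first verify that the pair $(\si_\natural,\theta_\natural)$ satisfies the defining equation of a g-morphism from Definition \ref{de:g-morphism}. Fixing $f\in\Si$, $m\geq 0$ and $a_1,\ldots,a_m\in A$, I would compute both sides: on the one hand $f_\calA(a_1,\ldots,a_m)\theta_\natural = f_\calA(a_1,\ldots,a_m)\theta$, and on the other hand $\si_\natural(f)_{\calA/(\si,\theta)}(a_1\theta_\natural,\ldots,a_m\theta_\natural) = (f\si)_{\calA/(\si,\theta)}(a_1\theta,\ldots,a_m\theta)$, which equals $f_\calA(a_1,\ldots,a_m)\theta$ by the very definition of the g-quotient operations in Definition \ref{de:QuotientAlgebra}. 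Since the two sides coincide, $(\si_\natural,\theta_\natural)$ is a g-morphism, and it is clearly a g-epimorphism because both canonical maps $a\mapsto a\theta$ and $f\mapsto f\si$ are surjective by construction of the quotient sets $A/\theta$ and $\Si/\si$.

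For part (b), I would proceed in three steps. First, show $\ker(\iota,\vp) = (\ker\iota,\ker\vp)$ is a g-congruence, i.e.\ that it lies in $\GCon(\calA)$: here $\ker\iota\in\Eq(\Si)$ and $\ker\vp\in\Eq(A)$ are equivalences as kernels of maps, so I only need the compatibility condition of Definition \ref{de:g-Congruence}. Assuming $f\mathrel{\ker\iota}g$ and $a_i\mathrel{\ker\vp}b_i$ for all $i$, I would apply the g-morphism equation twice to get $f_\calA(a_1,\ldots,a_m)\vp = \iota(f)_\calB(a_1\vp,\ldots,a_m\vp)$ and $g_\calA(b_1,\ldots,b_m)\vp = \iota(g)_\calB(b_1\vp,\ldots,b_m\vp)$; since $\iota(f)=\iota(g)$ and $a_i\vp = b_i\vp$, the right-hand sides agree, forcing $f_\calA(a_1,\ldots,a_m)\mathrel{\ker\vp}g_\calA(b_1,\ldots,b_m)$, which is exactly the required implication.

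The second and third steps establish the g-isomorphism $\calA/\ker(\iota,\vp)\cong_g\calB$ when $(\iota,\vp)$ is a g-epimorphism. Writing $(\si,\theta):=(\ker\iota,\ker\vp)$, I would define the induced pair $\bar\iota:\Si/\si\ra\Om$, $f\si\mapsto\iota(f)$, and $\bar\vp:A/\theta\ra B$, $a\theta\mapsto a\vp$, and check first that these are well defined (because $f\si=g\si$ means $\iota(f)=\iota(g)$, and likewise for $\vp$) and bijective: injectivity is immediate from the definition of the kernels, and surjectivity follows from the surjectivity of $\iota$ and $\vp$. Finally I would verify that $(\bar\iota,\bar\vp)$ is a g-morphism by the same computation as in part (a), using Definition \ref{de:QuotientAlgebra} to unfold the quotient operations and the g-morphism property of $(\iota,\vp)$ to land in $\calB$. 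None of these steps presents a genuine obstacle; the only point demanding mild care is the bookkeeping of the two-component structure, ensuring that the operator-alphabet maps $\si_\natural$, $\bar\iota$ are handled in tandem with the element maps and that well-definedness is checked for both components rather than only for the carrier, as one would in the purely ranked setting.
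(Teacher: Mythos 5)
Your proposal is correct and follows essentially the same route as the paper's own proof: the same direct computation with the g-quotient operations in part (a), the same two-fold application of the g-morphism equation for the kernel in part (b), and the same induced pair of maps $(f\si\mapsto\iota(f),\, a\theta\mapsto a\vp)$ with the identical well-definedness, bijectivity and g-morphism checks for the isomorphism. No gaps.
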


The \emph{quotient algebra} $\calA/\theta = (A/\theta,\Si)$ of a
unranked algebra $\SalgA$ with respect to a congruence $\theta \in
\Con(\calA)$ is defined by setting
$f_{\calA/\theta}(a_1\theta,\ldots,a_m\theta) =
f_\calA(a_1,\ldots,a_m)\theta$ for all $f\in \Si$, $m\geq 0$ and
$a_1,\ldots,a_m\in A$. It may be regarded as a special g-quotient
of $\calA$; if we identify in the natural way the operator
alphabets $\Si/\Delta_\Si$ and $\Si$, then $\calA/\theta$ is
isomorphic to $\calA/(\Delta_\Si,\theta)$.

\begin{definition}\label{de:DirectProduct}{\rm
For any mapping $\vk : \Ga \ra \Si \times \Om$, the
\emph{$\vk$-product} of $\SalgA$ and $\OalgB$ is the $\Ga$-algebra
$\vk(\calA,\calB) = (A\times B,\Ga)$ defined so that, for any
$f\in \Gamma$, $m \geq 0$  and $(a_1,b_1),\ldots ,(a_m,b_m)\in
A\times B$,
$$f_{\vk(\calA,\calB)}((a_1,b_1),\ldots ,(a_m,b_m)) \:
= \: (g_\calA(a_1,\ldots ,a_m),h_\calB(b_1,\ldots ,b_m)),$$ where
$(g,h) = \vk(f)$. The products $\vk(\calA_1,\ldots,\calA_n)$ of
any number $n \geq 0$ of unranked algebras are defined similarly.
Without specifying the map $\vk$, we call all such products
jointly \emph{g-products}. For $n = 0$, the g-product
$\vk(\calA_1,\ldots,\calA_n)$ is taken to be the appropriate
trivial algebra (unique up to isomorphism).  }\ep
\end{definition}

The usual \emph{direct product} $\calA_1\times \ldots \times
\calA_n$ of algebras $\calA_1 = (A_1,\Si),\ldots,\calA_n =
(A_n,\Si)$ of the same type may be reconstrued as the g-product
$\vk(\calA_1,\ldots,\calA_n)$ for $\vk: \Si \ra \Si \times \ldots
\times \Si, \, f \mapsto (f,\ldots,f)$. The g-products of just one
factor are of special interest.

\begin{definition}\label{de:Derived algebras}{\rm For any
mapping $\iota : \Si \ra \Om$ and any unranked $\Om$-algebra
$\OalgB$, the $\Si$-algebra $\iota(\calB) = (B,\Si)$ such that
$f_{\iota(\calB)} = \iota(f)_\calB$ for every $f\in \Si$, is
called the \emph{$\iota$-derived algebra} of $\calB$ or, without
specifying $\iota$,  a \emph{g-derived algebra} of $\calB$.
}\ep
\end{definition}

This notion is a natural analog of a special kind of the derived
algebras considered in universal algebra
(cf.~\cite{GrSc90,Schw93}, and also \cite{Stei12}), and it has
similar properties. In particular, we have the following obvious fact.

\begin{lemma}\label{le:HomomDerivAlgebras} If $(\iota,\vp) :
\calA \ra \calB$ is a g-morphism from a $\Si$-algebra $\SalgA$ to
an $\Om$-algebra $\OalgB$, then $\vp : \calA \ra \iota(\calB)$ is
a morphism of $\Si$-algebras. \ep
\end{lemma}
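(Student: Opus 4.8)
The plan is to verify directly that $\vp$ satisfies the defining identity of a morphism between $\Si$-algebras, using only the definitions of the $\iota$-derived algebra $\iota(\calB)$ and of a g-morphism. Since $\calA$ and $\iota(\calB)$ are both $\Si$-algebras, what must be checked is that $f_\calA(a_1,\ldots,a_m)\vp = f_{\iota(\calB)}(a_1\vp,\ldots,a_m\vp)$ for every $f\in\Si$, $m\geq 0$ and $a_1,\ldots,a_m\in A$ (equivalently, $f_\calA(w)\vp = f_{\iota(\calB)}(w\vp_*)$ for all $w\in A^*$).

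First I would fix $f\in\Si$, $m\geq 0$ and $a_1,\ldots,a_m\in A$, and unfold the right-hand side. By Definition~\ref{de:Derived algebras}, the operation $f_{\iota(\calB)}$ is by construction $\iota(f)_\calB$, so $f_{\iota(\calB)}(a_1\vp,\ldots,a_m\vp) = \iota(f)_\calB(a_1\vp,\ldots,a_m\vp)$. On the other hand, the g-morphism identity of Definition~\ref{de:g-morphism} asserts precisely that $f_\calA(a_1,\ldots,a_m)\vp = \iota(f)_\calB(a_1\vp,\ldots,a_m\vp)$. Combining the two equalities yields $f_\calA(a_1,\ldots,a_m)\vp = f_{\iota(\calB)}(a_1\vp,\ldots,a_m\vp)$, which is exactly the required morphism condition.

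I do not expect a genuine obstacle here, which matches the paper's description of the statement as an obvious fact: the lemma is essentially a rewriting of the g-morphism condition under the identification $f_{\iota(\calB)} = \iota(f)_\calB$. The only point worth keeping in mind is that the quantifier $m\geq 0$ already subsumes the nullary case $m=0$ (that is, $w=\ve$), so the constants $f_\calA(\ve)$ are handled by the same computation and no separate case analysis is needed.
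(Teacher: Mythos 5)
Your proof is correct and is exactly the direct unfolding the paper has in mind: the paper states this lemma as an ``obvious fact'' with no written proof, and the intended argument is precisely your chain $f_\calA(a_1,\ldots,a_m)\vp = \iota(f)_\calB(a_1\vp,\ldots,a_m\vp) = f_{\iota(\calB)}(a_1\vp,\ldots,a_m\vp)$ obtained by combining Definition~\ref{de:g-morphism} with Definition~\ref{de:Derived algebras}. Your remark that the case $m=0$ needs no separate treatment is also accurate.
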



Let us now consider subdirect decompositions of unranked algebras.
Here it suffices to define all the appropriate notions with just
finite algebras in mind. In the following, $\Si_1,\ldots,\Si_n$
and $\Ga$ are operator alphabets, and for the given map  $\vk :
\Ga \ra \Si_1 \times \ldots \times \Si_n$ and each $i\in [n]$, we
denote by $\vk_i$ the composition $\vk\pi_i$ of $\vk$ and the
$i^{th}$ projection $\pi_i : \Si_1 \times \dots \times \Si_n \ra
\Si_i$, i.e., $\vk_i(f) = \vk(f)\pi_i$ for every $f\in \Ga$. Note
that $\pi_i$ also denotes the projection $A_1\times \dots \times
A_n \ra A_i$.

\begin{definition}\label{de:gsd prod and repr}{\rm A
\emph{generalized subdirect product}, a \emph{gsd-product} for
short, of some unranked algebras $\calA_1 =
(A_1,\Si_1),\ldots,\calA_n = (A_n,\Si_n)$ ($n\geq 0$) is a
g-subalgebra $\OalgB$ of a g-product $\vk(\calA_1,\ldots,A_n) =
(A_1\times \dots \times A_n,\Ga)$ such that $B\pi_i = A_i$ and
$\Om\vk_i = \Si_i$ for every $i\in [n]$.

A \emph{gsd-representation} of $\SalgA$, with factors $\calA_1 =
(A_1,\Si_1),\ldots,\calA_n = (A_n,\Si_n)$ ($n\geq 0$), is a
g-monomorphism $(\iota,\vp) : \calA \ra
\vk(\calA_1,\ldots,\calA_n)$, where also $\vk : \Ga \ra \Si_1
\times \dots \times \Si_n$ is injective, such that $A\vp\pi_i =
A_i$ and $\Si\iota\vk_i = \Si_i$ for every $i\in [n]$. Such a
gsd-representation is \emph{proper} if for no $i\in [n]$, both
$\vp\pi_i : A \ra A_i$ and $\iota\vk_i : \Si \ra \Si_i$ are
injective. A finite unranked algebra is \emph{gsd-irreducible} if
it has no proper gsd-representation. }\ep
\end{definition}

That the above gsd-representation  is proper means that none of
the g-epimorphisms $(\iota\vk_i,\vp\pi_i) : \calA \ra \calA_i$ is
a g-isomorphism (cf. Lemma \ref{le:gsd-repr to g-congr} below).

\begin{remark}\label{re:gsd-repr and monom}{\rm If we compose the
maps $\iota : \Si \ra \Ga$ and $\vk : \Ga \ra \Si_1 \times \dots
\times \Si_n$ of the above definition, the gsd-representation
$(\iota,\vp) : \calA \ra \vk(\calA_1,\ldots,\calA_n)$ yields a
monomorphism $\vp : \calA \ra (\iota\vk)(\calA_1,\ldots,\calA_n)$
of $\Si$-algebras. }\ep
\end{remark}

The next two lemmas show the links between the
gsd-representations (with finitely many factors) and the
g-congruences of an unranked algebra. They can be proved in the
same way as their classical counterparts (cf.~\cite{Berg12}, for
example).

\begin{lemma}\label{le:gsd-repr to g-congr} Let $(\iota,\vp) :
\calA \ra \vk(\calA_1,\ldots,\calA_n)$ be a gsd-representation of
an unranked algebra $\SalgA$ with factors $\calA_1 =
(A_1,\Si_1),\ldots,\calA_n = (A_n,\Si_n)$ ($n\geq 0$). For every
$i\in [n]$, $(\iota\vk_i,\vp\pi_i) : \calA \ra \calA_i$ is a
g-epimorphism. Moreover, if we  write $(\si_i,\theta_i) :=
\ker(\iota\vk_i,\vp\pi_i)$ for  each $i\in [n]$, then
\begin{itemize}
  \item[{\rm (a)}] $(\si_i,\theta_i) \in \GCon(\calA)$ and
  $\calA/(\si_i,\theta_i) \cong_g \calA_i$, and
  \item[{\rm (b)}] $(\si_1,\theta_1)\wedge \ldots \wedge (\si_n,\theta_n)
  = (\Delta_\Si, \Delta_A)$.
\end{itemize}
If the representation is proper, then $(\si_i,\theta_i) >
(\Delta_\Si,\Delta_A)$ for every $i\in [n]$. \ep
\end{lemma}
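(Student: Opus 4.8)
The plan is to unwind the definitions and transport the classical subdirect-product argument into the two-sorted (operator alphabet plus element set) setting, keeping careful track of both components throughout. The statement to prove is Lemma~\ref{le:gsd-repr to g-congr}: given a gsd-representation $(\iota,\vp) : \calA \ra \vk(\calA_1,\ldots,\calA_n)$, each $(\iota\vk_i,\vp\pi_i) : \calA \ra \calA_i$ is a g-epimorphism, and with $(\si_i,\theta_i) := \ker(\iota\vk_i,\vp\pi_i)$ we have (a) the g-quotient isomorphisms and (b) that the componentwise meet of the kernels is the bottom g-congruence $(\Delta_\Si,\Delta_A)$; moreover properness forces each kernel to lie strictly above the bottom.

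First I would verify that each $(\iota\vk_i,\vp\pi_i)$ is a g-morphism. This is just composition: by Lemma~\ref{le:g-morphisms}(a) the pair $(\iota,\vp) : \calA \ra \vk(\calA_1,\ldots,\calA_n)$ composed with the projection-type g-morphism $(\vk_i,\pi_i) : \vk(\calA_1,\ldots,\calA_n) \ra \calA_i$ is again a g-morphism, once I observe that $(\vk_i,\pi_i)$ really is a g-morphism — which follows immediately from Definition~\ref{de:DirectProduct}, since the $i^{th}$ coordinate of $f_{\vk(\calA_1,\ldots,\calA_n)}$ is exactly $(\vk_i(f))_{\calA_i}$ on the projected arguments. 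Surjectivity on both sorts is precisely the gsd-representation conditions $A\vp\pi_i = A_i$ and $\Si\iota\vk_i = \Si_i$, giving the g-epimorphism claim. Part (a) is then just Lemma~\ref{le:Quotients&Homomorphisms}(b): the kernel of a g-epimorphism is a g-congruence and the induced g-quotient is g-isomorphic to the target, so $(\si_i,\theta_i)\in\GCon(\calA)$ and $\calA/(\si_i,\theta_i) \cong_g \calA_i$.

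For part (b) I would compute the componentwise meet directly, using the fact (stated after Definition~\ref{de:g-Congruence}) that meets in $\GCon(\calA)$ are formed componentwise in $\Eq(\Si)$ and $\Eq(A)$. So I must show $\si_1\cap\cdots\cap\si_n = \Delta_\Si$ and $\theta_1\cap\cdots\cap\theta_n = \Delta_A$. Unravelling kernels, $f\,\si_i\,g$ means $\iota\vk_i(f) = \iota\vk_i(g)$, i.e. $\vk_i(\iota(f)) = \vk_i(\iota(g))$; and $a\,\theta_i\,b$ means $a\vp\pi_i = b\vp\pi_i$. If $(f,g)$ lies in every $\si_i$, then $\iota(f)$ and $\iota(g)$ agree under every $\vk_i$, hence $\vk(\iota(f)) = \vk(\iota(g))$ in $\Si_1\times\cdots\times\Si_n$; since $\vk$ is injective (a standing hypothesis for gsd-representations) this gives $\iota(f) = \iota(g)$, and since $\iota$ in a gsd-representation is the $\Si$-component of a g-monomorphism it is injective, so $f = g$. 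The argument for $\theta$ is the mirror image using injectivity of $\vp$ and of $\vk$ (equivalently, that the combined map $\vp$ into $(\iota\vk)(\calA_1,\ldots,\calA_n)$ of Remark~\ref{re:gsd-repr and monom} is a monomorphism separating points by their coordinates). This yields $(\si_1,\theta_1)\wedge\cdots\wedge(\si_n,\theta_n) = (\Delta_\Si,\Delta_A)$.

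For the final sentence, if the representation is proper then by definition for no $i$ are both $\vp\pi_i$ and $\iota\vk_i$ injective; equivalently, for each $i$ the g-epimorphism $(\iota\vk_i,\vp\pi_i)$ fails to be a g-isomorphism, so its kernel $(\si_i,\theta_i)$ is not $(\Delta_\Si,\Delta_A)$, giving $(\si_i,\theta_i) > (\Delta_\Si,\Delta_A)$. The one point needing a little care is the bookkeeping in part (b): I must be clean about the distinction between $\iota$ being injective and $\vk$ being injective, and use them in the right order, because it is their \emph{composite} $\iota\vk$ (and the analogous statement for $\vp$) that actually witnesses the monomorphism. I expect this definitional juggling — rather than any genuinely hard mathematics — to be the main obstacle, exactly as the paper signals when it says these lemmas ``can be proved in the same way as their classical counterparts.''
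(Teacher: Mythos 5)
Your proof is correct and is exactly the argument the paper intends: the paper omits the proof of this lemma, deferring to the classical subdirect-representation argument, and your write-up is that argument correctly transported to the two-sorted setting (the projections $(\vk_i,\pi_i)$ as g-morphisms composed via Lemma~\ref{le:g-morphisms}(a), Lemma~\ref{le:Quotients&Homomorphisms}(b) for the kernels and g-quotients, coordinatewise meets for part (b), and properness via failure of injectivity of one of the two component maps). The only cosmetic slip is that injectivity of $\vk$ is not needed for the $\theta$-component of (b) --- two tuples of $A_1\times\cdots\times A_n$ agreeing in all coordinates are already equal, so injectivity of $\vp$ suffices there --- while $\vk$'s injectivity, together with that of $\iota$, is what the $\si$-component genuinely requires.
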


\begin{lemma}\label{le:g-gonr to gsd-repr} If an unranked algebra
$\SalgA$ has g-congruences
$(\si_1,\theta_1)$, \dots, $(\si_n,\theta_n)$ such that
$(\si_1,\theta_1)\wedge \ldots \wedge (\si_n,\theta_n) =
(\Delta_\Si, \Delta_A)$, then $$(1_\Si,\vp) : \calA \ra
\vk(\calA/(\si_1,\theta_1),\ldots,\calA/(\si_n,\theta_n))$$ is a
gsd-representation of $\calA$ for $\vk : \Si \ra \Si/\si_1
\times\dots \times \Si/\si_n, \: f\mapsto (f\si_1,\ldots,f\si_n),$
and $\vp : A \ra A/\theta_1\times \dots \times A/\theta_n, \: a
\mapsto (a\theta_1,\ldots,a\theta_n)$. If $(\si_i,\theta_i) >
(\Delta_\Si,\Delta_A)$ for every $i\in [n]$, then this
gsd-representation is proper.\ep
\end{lemma}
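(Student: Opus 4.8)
The plan is to verify in turn the requirements of Definition~\ref{de:gsd prod and repr} for $(1_\Si,\vp)$ to be a gsd-representation — that it is a g-monomorphism, that $\vk$ is injective, and that $A\vp\pi_i = A_i$ and $\Si\iota\vk_i = \Si_i$ for each $i$ — and then to establish properness. The statement is essentially the converse of Lemma~\ref{le:gsd-repr to g-congr}, so I expect no genuine obstacle; the work is bookkeeping, and the only points demanding care are to keep track of the componentwise formation of meets in $\GCon(\calA)$ and to unwind the definition of the quotient operations. Throughout, the factors are $\calA_i = \calA/(\si_i,\theta_i)$, so $A_i = A/\theta_i$ and $\Si_i = \Si/\si_i$, while $\iota = 1_\Si$ and the g-product has operator alphabet equal to the domain $\Si$ of $\vk$.

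First I would check that $(1_\Si,\vp)$ is a g-morphism. For $f\in\Si$ and $a_1,\ldots,a_m\in A$, the image $f_\calA(a_1,\ldots,a_m)\vp$ has $i^{th}$ component $f_\calA(a_1,\ldots,a_m)\theta_i$, whereas the $i^{th}$ component of $f_{\vk(\ldots)}(a_1\vp,\ldots,a_m\vp)$ is $(f\si_i)_{\calA/(\si_i,\theta_i)}(a_1\theta_i,\ldots,a_m\theta_i)$; by Definition~\ref{de:QuotientAlgebra} these coincide, so the g-morphism condition holds. Next I would observe that, because $\iota=1_\Si$, $\vk_i(f)=f\si_i$ and $a\vp\pi_i=a\theta_i$, the composite $(\iota\vk_i,\vp\pi_i)$ is precisely the natural g-epimorphism $(\si_{i\natural},\theta_{i\natural}):\calA\ra\calA/(\si_i,\theta_i)$ of Lemma~\ref{le:Quotients&Homomorphisms}(a). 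Its surjectivity then delivers $A\vp\pi_i = A/\theta_i = A_i$ and $\Si\iota\vk_i = \Si/\si_i = \Si_i$ simultaneously, disposing of the two surjectivity conditions.

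It then remains to see that $\vp$ and $\vk$ are injective. If $a\vp = b\vp$ then $(a,b)\in\theta_i$ for every $i$; since meets in $\GCon(\calA)$ are formed componentwise, the hypothesis $(\si_1,\theta_1)\wedge\cdots\wedge(\si_n,\theta_n)=(\Delta_\Si,\Delta_A)$ forces $\theta_1\cap\cdots\cap\theta_n=\Delta_A$, whence $a=b$; injectivity of $\vk$ follows identically from $\si_1\cap\cdots\cap\si_n=\Delta_\Si$. Thus $(1_\Si,\vp)$ is a g-monomorphism, hence a gsd-representation. Finally, under the assumption $(\si_i,\theta_i)>(\Delta_\Si,\Delta_A)$, the map $\vp\pi_i=\theta_{i\natural}$ is injective iff $\theta_i=\Delta_A$ and $\iota\vk_i=\si_{i\natural}$ is injective iff $\si_i=\Delta_\Si$, so if both were injective we would get $(\si_i,\theta_i)=(\Delta_\Si,\Delta_A)$, contradicting strictness. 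Hence for no $i$ are both injective, and the representation is proper. As noted, the only step requiring real attention is the componentwise reading of the meet; everything else is a direct unravelling of the definitions.
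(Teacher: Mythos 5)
Your proof is correct and matches the intended argument: the paper itself omits the proof of this lemma, noting only that it "can be proved in the same way as its classical counterpart" (Birkhoff's subdirect representation from a family of congruences meeting in the diagonal), and your verification is precisely that argument adapted to g-congruences — the g-morphism property falls out of Definition~\ref{de:QuotientAlgebra}, the surjectivity conditions from identifying $(\iota\vk_i,\vp\pi_i)$ with the natural g-epimorphism of Lemma~\ref{le:Quotients&Homomorphisms}(a), injectivity of $\vp$ and $\vk$ from the componentwise formation of meets in $\GCon(\calA)$, and properness from the observation that $\theta_{i\natural}$ (resp.\ $\si_{i\natural}$) is injective iff $\theta_i=\Delta_A$ (resp.\ $\si_i=\Delta_\Si$). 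There are no gaps.
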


The following proposition contains the counterparts of Birkhoff's
two fundamental theorems about subdirect representations.

\begin{proposition}\label{pr:gsd-irr and gsd-dec} Let $\SalgA$ be a finite unranked algebra.
\begin{itemize}
  \item[{\rm (a)}] $\calA$ is gsd-irreducible if and only if $|A|=|\Si| = 1$ or it has a least nontrivial g-congruence, i.e.,
$\bigcap(\GCon(\calA)\setminus \{(\Delta_\Si,\Delta_A)\}) >
(\Delta_\Si,\Delta_A)$.
  \item[{\rm (b)}]  $\calA$ has a gsd-representation with finitely many factors each of which is a gsd-irreducible g-image of $\calA$.
\end{itemize}
\end{proposition}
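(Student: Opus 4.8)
The plan is to transfer everything to the g-congruence lattice $\GCon(\calA)$ and argue by finite-lattice combinatorics, using the two translation Lemmas~\ref{le:gsd-repr to g-congr} and~\ref{le:g-gonr to gsd-repr} to pass back and forth between gsd-representations and families of g-congruences. Since $A$ and $\Si$ are finite, so are $\Eq(A)$ and $\Eq(\Si)$, and hence $\GCon(\calA)$ is a \emph{finite} lattice; I may therefore invoke standard facts about finite lattices. Write $\bot := (\Delta_\Si,\Delta_A)$, and call a g-congruence \emph{nontrivial} if it is strictly above $\bot$.

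For part~(a) I treat both directions. Suppose first that $|A| = |\Si| = 1$; then $\GCon(\calA) = \{\bot\}$, so $\calA$ has no nontrivial g-congruence, and since by Lemma~\ref{le:gsd-repr to g-congr} a proper gsd-representation would supply g-congruences $(\si_i,\theta_i) > \bot$, no such representation exists and $\calA$ is gsd-irreducible. Suppose next that $\calA$ has a least nontrivial g-congruence $\mu = \bigcap(\GCon(\calA)\setminus\{\bot\}) > \bot$. A proper gsd-representation would, by Lemma~\ref{le:gsd-repr to g-congr}, yield $(\si_1,\theta_1),\ldots,(\si_n,\theta_n) > \bot$ with $\bigwedge_i(\si_i,\theta_i) = \bot$; but each $(\si_i,\theta_i) \geq \mu$, so the meet is $\geq \mu > \bot$, a contradiction, and again $\calA$ is gsd-irreducible. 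For the converse I argue contrapositively: if $\calA$ is nontrivial and lacks a least nontrivial g-congruence, then (as the text records, $(\Delta_\Si,\nabla_A)$, or $(\si,\nabla_A)$ with $\si$ nontrivial, is a witness) the set $\GCon(\calA)\setminus\{\bot\}$ is nonempty and finite, and the absence of a least element means precisely that the meet of this set is $\bot$. Enumerating its members as $(\si_1,\theta_1),\ldots,(\si_n,\theta_n)$ ($n \geq 1$, each $> \bot$, with meet $\bot$) and applying Lemma~\ref{le:g-gonr to gsd-repr} gives a \emph{proper} gsd-representation, so $\calA$ is not gsd-irreducible.

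For part~(b), if $\calA$ is gsd-irreducible (in particular if it is trivial) the identity g-isomorphism is a one-factor gsd-representation and we are done; so assume $\calA$ is not gsd-irreducible, whence $\GCon(\calA)$ has more than one element. In this finite lattice the bottom $\bot$ is the meet of the finitely many meet-irreducible elements of $\GCon(\calA)$. The crux is the claim that if $(\si,\theta)\in\GCon(\calA)$ is meet-irreducible, then $\calA/(\si,\theta)$ is gsd-irreducible. To prove it I would use the correspondence theorem for g-congruences, namely that through the natural g-epimorphism $(\si_\natural,\theta_\natural) : \calA \ra \calA/(\si,\theta)$ of Lemma~\ref{le:Quotients&Homomorphisms}(a) the lattice $\GCon(\calA/(\si,\theta))$ is order-isomorphic to the principal filter $[(\si,\theta))$ of $\GCon(\calA)$, with the bottom of the quotient corresponding to $(\si,\theta)$. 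In a finite lattice meet-irreducibility of $(\si,\theta)$ is equivalent to having a unique upper cover, i.e.\ to $[(\si,\theta))$ having a unique atom, which is then its least nontrivial element; translated through the isomorphism, $\calA/(\si,\theta)$ has a least nontrivial g-congruence and is nontrivial (as $(\si,\theta)$ is not the top), so it is gsd-irreducible by part~(a). Finally, taking $(\si_1,\theta_1),\ldots,(\si_n,\theta_n)$ to be all the meet-irreducible g-congruences, we have $\bigwedge_i(\si_i,\theta_i) = \bot$, so Lemma~\ref{le:g-gonr to gsd-repr} produces a gsd-representation of $\calA$ whose factors are the quotients $\calA/(\si_i,\theta_i)$; each is gsd-irreducible by the claim and is a g-image of $\calA$ via the corresponding natural g-epimorphism.

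I expect the correspondence theorem for g-congruences used in part~(b) to be the main obstacle, since it is not among the lemmas stated in the excerpt. It should follow by adapting the classical proof: pull a g-congruence $(\bar\si,\bar\theta)$ of $\calA/(\si,\theta)$ back to the pair given by $\{(f,g) : f\si \,\bar\si\, g\si\}$ on $\Si$ and $\{(a,b) : a\theta \,\bar\theta\, b\theta\}$ on $A$, and check that this assignment is an order isomorphism onto $[(\si,\theta))$. A route that sidesteps meet-irreducibility is induction on $|\GCon(\calA)|$: a non-gsd-irreducible $\calA$ has, by part~(a) and Lemma~\ref{le:g-gonr to gsd-repr}, a proper gsd-representation with factors $\calA/(\si_i,\theta_i)$ whose g-congruence lattices are the strictly smaller filters $[(\si_i,\theta_i))$, and composing these with the gsd-representations supplied by the induction hypothesis—using associativity of g-products together with Lemma~\ref{le:g-morphisms}(a), which makes a g-image of a g-image of $\calA$ again a g-image of $\calA$—yields the result, the only routine point being that a composite of gsd-representations is again one. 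The genuinely degenerate cases (the trivial algebra and the empty product $n = 0$) need only a brief remark and are covered by the convention that the trivial algebra is gsd-irreducible.
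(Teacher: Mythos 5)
Your proposal is correct and takes essentially the same approach as the paper: part (a) is obtained from Lemmas~\ref{le:gsd-repr to g-congr} and~\ref{le:g-gonr to gsd-repr} exactly as the paper indicates, and part (b) carries out the classical finite subdirect-decomposition argument (meet-irreducible elements of the finite lattice $\GCon(\calA)$ together with a correspondence theorem for g-congruences) to which the paper's one-line proof simply defers. Your identification of the correspondence theorem for g-congruences as the one unstated ingredient, together with your pull-back sketch of its proof, is precisely the adaptation that the paper's appeal to the classical counterpart presupposes.
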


\begin{proof} Statement (a) follows from
Lemmas \ref{le:gsd-repr to g-congr} and \ref{le:g-gonr to gsd-repr}. Statement (b) could be
proved similarly as its classical counterpart (cf.~\cite{BuSa81},
for example).
\end{proof}

Note that a trivial $\Si$-algebra $\SalgA$ is gsd-irreducible
exactly in case $|\Si| \leq 2$. Indeed, if $|\Si|>2$, then $\calA$
has a proper gsd-representation in which the factors are trivial
algebras with smaller operator alphabets. If $|\Si|= 2$, then
$(\nabla_\Si,\Delta_A)$ is the least nontrivial g-congruence of
$\calA$.

\begin{definition}\label{de:TermAlgebra}{\rm For any
$\Si$ and $X$, we define the {\em unranked $\SX$-term algebra}
$\calT_\Si(X) = (T_{\Sigma}(X), \Sigma)$ in such a way that for
any $f \in \Si$,
\begin{itemize}
  \item[(1)] $f_{\SXta}(\ve) = f$, and
  \item[(2)] $f_{\calT_\Sigma(X)}(t_1, \ldots , t_m) \ = \ f(t_1,
\ldots , t_m)$ for any $m > 0$ and $t_1, \ldots , t_m \in
T_{\Sigma}(X)$.
\end{itemize}
Again, we may speak simply about the {\em $\SX$-term algebra} or
a \emph{term algebra}. \ep }
\end{definition}

A g-morphism $(\iota,\vp) : \SXta \ra \OYta$ between unranked term
algebras defines a mapping from $\SXt$ to $\OYt$ that replaces
each label $f\in \Si$ with $\iota(f)\in \Om$ and each leaf labeled
with a symbol $x\in X$ with the $\OY$-tree $x\vp$. Such mappings
are the unranked analogs of the \emph{inner alphabetic tree
homomorphisms} considered in \cite{JuPoTh94} (in the form they
appear in \cite{Stei98}). For any given $\iota : \Si \ra \Om$ and
any leaf alphabet $X$, we define a special mapping $\iota_X : \SXt
\ra \OXt$ of this type as follows:
\begin{itemize}
  \item[(1)] $\iota_X(x) = x$ for $x\in X$;
  \item[(2)] $\iota_X(f) = \iota(f)$ for $f\in \Si$;
  \item[(3)] $\iota_X(t) =
  \iota(f)(\iota_X(t_1),\ldots,\iota_X(t_m))$ for $t =
  f(t_1,\ldots,t_m)$.
\end{itemize}
Then $(\iota,\iota_X) : \SXta \ra \OXta$ is a g-morphism that
transforms any $\SX$-tree to an $\OX$-tree by replacing any label
$f\in \Si$ with $\iota(f)$ but preserving all symbols from $X$.

The following proposition expresses an important property of our
term algebras.

\begin{proposition}\label{pr:TermAlgebraFree}
For any $\Si$ and any $X$, the term algebra $\SXta$ is
\emph{freely generated} by $X$ over the class of all unranked
algebras, that is to say,
\begin{itemize}
  \item[{\rm (a)}]  $\angX = \SXt$, and
  \item[{\rm (b)}] if $\OalgA$ is any unranked algebra, then for
  any pair of mappings $\iota : \Si \ra \Om$ and $\alpha : X \ra A$,
  there is a unique g-morphism
  $(\iota,\vp_{\iota,\alpha}) : \SXta \ra \calA$ such that
  $\vp_ {\iota, \alpha} \big|_X  = \alpha$.
\end{itemize}
\end{proposition}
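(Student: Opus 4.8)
The plan is to treat this as the unranked analog of the classical freeness of term algebras, establishing (a) and (b) separately by structural induction on the inductive definition of $\SXt$. For part (a), the inclusion $\angX \se \SXt$ is immediate, since $\SXt$ is the universe of $\SXta$. For the reverse inclusion the key observation is that every $\Si$-subalgebra, being a $\Si$-closed subset, must contain $f_{\SXta}(\ve) = f$ for each $f \in \Si$; hence $\Si \se \angX$, and of course $X \se \angX$. Since $\angX$ is closed under every operation $f_{\SXta}(t_1,\ldots,t_m) = f(t_1,\ldots,t_m)$, it contains $X \cup \Si$ and is closed under the tree-forming rule, so the minimality clause of Definition \ref{de:Trees} yields $\SXt \se \angX$.

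For the existence part of (b), I would define $\vp := \vp_{\iota,\alpha}$ by recursion on the structure of $\SX$-trees: put $x\vp = x\alpha$ for $x \in X$, $f\vp = \iota(f)_\calA(\ve)$ for $f \in \Si$, and $f(t_1,\ldots,t_m)\vp = \iota(f)_\calA(t_1\vp,\ldots,t_m\vp)$ for $m > 0$. This is well-defined by the inductive structure of $\SXt$. To verify that $(\iota,\vp)$ is a g-morphism I would check the defining identity $f_{\SXta}(t_1,\ldots,t_m)\vp = \iota(f)_\calA(t_1\vp,\ldots,t_m\vp)$ for all $m \geq 0$: for $m > 0$ this is exactly the third clause (using $f_{\SXta}(t_1,\ldots,t_m) = f(t_1,\ldots,t_m)$), while for $m = 0$ it reduces to $f\vp = \iota(f)_\calA(\ve)$, which is the second clause (using $f_{\SXta}(\ve) = f$). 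The condition $\vp\big|_X = \alpha$ is immediate from the first clause.

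For uniqueness, suppose $(\iota,\psi)$ is any g-morphism with $\psi\big|_X = \alpha$; I would show $\psi = \vp$ by tree induction. On $X$ both agree with $\alpha$; for $f \in \Si$ the g-morphism condition forces $f\psi = \iota(f)_\calA(\ve) = f\vp$; and for $t = f(t_1,\ldots,t_m)$ the same condition together with the induction hypothesis gives $t\psi = \iota(f)_\calA(t_1\psi,\ldots,t_m\psi) = \iota(f)_\calA(t_1\vp,\ldots,t_m\vp) = t\vp$. Combined with part (a), which guarantees that $X$ generates $\SXta$, this establishes that $\vp$ is the unique such g-morphism.

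The argument is routine once the recursion is set up correctly, and I expect no serious obstacle. The one point requiring care — and the genuine difference from the classical ranked setting — is the treatment of the symbols of $\Si$: in the unranked term algebra these are not free constants but are the values $f_{\SXta}(\ve)$ of the operations on the empty word, so they are \emph{forced} to map to $\iota(f)_\calA(\ve)$ rather than being freely assignable. Keeping the $m = 0$ case of the g-morphism identity aligned with this nullary behavior is the subtlety to watch throughout both the existence and the uniqueness arguments.
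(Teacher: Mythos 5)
Your proof is correct and follows essentially the same route as the paper's: part (a) via the observation that any $\Si$-closed set containing $X$ must contain each $f = f_{\SXta}(\ve)$ and is closed under the tree-forming rule, and part (b) by defining $\vp_{\iota,\alpha}$ through the forced recursion (with the symbols of $\Si$ mapped to $\iota(f)_\calA(\ve)$) and checking the g-morphism identity and uniqueness by tree induction. The subtlety you flag — that elements of $\Si$ are the nullary values $f_{\SXta}(\ve)$ rather than freely assignable constants — is exactly the point the paper's conditions (2) in its proof of (b) encode, so nothing is missing.
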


\begin{proof} (a) It is clear that $X\se \angX \se \SXt$, and that to prove
$\SXt \se \angX$, it suffices to show that $\SXt \se B$ for any
$\Si$-closed subset $B$ of $\SXta$ for which $X \se B$. This can
be done simply by tree induction.

(b) For any $\iota:\Sigma\to\Omega$ and $\alpha:X \to A$, a
g-morphism $(\iota,\vp):\SXta \to \calA$ such that $\vp\big|_X =
\alpha$ must satisfy the following conditions:
\begin{itemize}
    \item[(1)] $x\vp = \alpha(x)$ for $x \in X$;
    \item[(2)] $f\vp = f_{\SXta}(\ve)\vp =
          \iota(f)_{\calA}(\ve)$ for $f \in \Si$;
    \item[(3)] $t\vp = \iota(f)_{\calA}(t_1\vp, \dots, t_m\vp)$ for $t= f(t_1,\dots,t_m)$.
\end{itemize}
It is easy to show by tree induction that these conditions assign a unique value $t\vp$ to each $t\in \SXt$, and that the thus defined $(\iota,\vp)$ is a g-morphism.
\end{proof}

Similarly as for ordinary algebras, the values of
$\vp_{\iota,\alpha}$ can be obtained by evaluating term functions
for the valuation $\alpha: X \to A$.

\begin{definition}\label{de:TermFunctions}{\rm For any operator alphabet
$\Si$ and leaf alphabet $X$, the \emph{term function} $t^\calA
\colon A^X \ra A$ of a $\Si$-algebra $\SalgA$ induced by a
$\SX$-tree $t\in \SXt$ is defined as follows. For any $\alpha :
X\ra A$,
\begin{itemize}
  \item[{\rm (1)}] $x^\calA(\alpha) = \alpha(x)$ for any $x\in X$
  \item[{\rm (2)}] $f^\calA(\alpha) = f_\calA(\ve)$ for any
  $f \in \Si$, and
  \item[{\rm (3)}] $t^\calA(\alpha) =
  f_\calA(t_1^\calA(\alpha),\ldots,t_m^\calA(\alpha))$ for $t =
  f(t_1,\ldots,t_m)$. \ep
\end{itemize}
}
\end{definition}

It is easy to see that with $\calA$, $\iota$ and $\alpha$ as in
Proposition~\ref{pr:TermAlgebraFree}, $t\vp_{\iota,\alpha} =
\iota_X(t)^{\calA}(\alpha)$ for every $t \in \SXt$.
The following notion will also be useful.

\begin{definition}\label{de:Translations}{\rm A mapping $p : A \ra A$ is
an \emph{elementary translation} of an unranked $\Si$-algebra
$\SalgA$ if there exist an $f\in \Si$, an $m>0$, and an $i\in [m]$
and elements $a_1$, \dots, $a_{i-1}$, $a_{i+1}$, \dots, $a_m\in A$ such
that $p(b) = f_\calA(a_1\ldots a_{i-1}ba_{i+1}\ldots a_m)$ for all
$b\in A$. The set $\Tr(\calA)$ of \emph{translations} of $\calA$
is defined as the smallest set of mappings $A\ra A$ that contains
the identity map $1_A : A \ra A$ and all elementary translations
of $\calA$, and is closed under composition. }\ep
\end{definition}

The following facts can be shown exactly as for ordinary algebras
(cf.~\cite{BuSa81}, for example).

\begin{lemma}\label{le:Transl and Congr} Any congruence of an unranked algebra $\SalgA$ is invariant with respect to all
translations of $\calA$, and an equivalence on $A$ is a congruence
of $\calA$ if it is invariant with respect to all elementary
translations of $\calA$. \ep
\end{lemma}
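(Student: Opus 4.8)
The plan is to prove the two implications separately, after fixing the reading of ``invariant'': an equivalence $\theta\in\Eq(A)$ is \emph{invariant with respect to} a map $p\colon A\ra A$ when $a\,\theta\,b$ implies $p(a)\,\theta\,p(b)$ for all $a,b\in A$. The first assertion is that every $\theta\in\Con(\calA)$ is invariant under each member of $\Tr(\calA)$, and the second is the converse sufficiency at the level of elementary translations only.

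For the first assertion I would argue that the set $\calD$ of all maps $A\ra A$ under which a fixed congruence $\theta$ is invariant contains the generators of $\Tr(\calA)$ and is closed under composition, so that $\Tr(\calA)\se\calD$ by minimality. The identity $1_A\in\calD$ is immediate. For an elementary translation $p$, say $p(x)=f_\calA(a_1\ldots a_{i-1}\,x\,a_{i+1}\ldots a_m)$, suppose $a\,\theta\,b$; applying the defining condition of $(\Delta_\Si,\theta)\in\GCon(\calA)$ with $f\,\Delta_\Si\,f$, with $a_j\,\theta\,a_j$ in each position $j\neq i$, and with $a\,\theta\,b$ in position $i$, yields exactly $p(a)\,\theta\,p(b)$, so $p\in\calD$. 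Closure of $\calD$ under composition is routine: if $\theta$ is invariant under $p$ and under $q$, then $a\,\theta\,b\Ra p(a)\,\theta\,p(b)\Ra q(p(a))\,\theta\,q(p(b))$.

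For the converse, assume $\theta\in\Eq(A)$ is invariant under all elementary translations and verify the congruence condition: for $f\in\Si$, $m\geq 0$ and $a_1\,\theta\,b_1,\ldots,a_m\,\theta\,b_m$, show $f_\calA(a_1,\ldots,a_m)\,\theta\,f_\calA(b_1,\ldots,b_m)$. The case $m=0$ is trivial by reflexivity, since both sides equal $f_\calA(\ve)$. For $m>0$ I would change one argument at a time: set $c_i:=f_\calA(b_1,\ldots,b_i,a_{i+1},\ldots,a_m)$ for $i=0,\ldots,m$, so $c_0$ and $c_m$ are the two terms to be compared. For each $i\in[m]$ the map $p_i(x):=f_\calA(b_1,\ldots,b_{i-1},x,a_{i+1},\ldots,a_m)$ is an elementary translation, and $c_{i-1}=p_i(a_i)$, $c_i=p_i(b_i)$; invariance under $p_i$ together with $a_i\,\theta\,b_i$ gives $c_{i-1}\,\theta\,c_i$, and transitivity of $\theta$ then yields $c_0\,\theta\,c_m$.

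I do not expect a genuine obstacle here, as the result is the unranked analog of a standard fact; the only points requiring care are bookkeeping ones. In the first assertion one must not overlook that reaching all of $\Tr(\calA)$ needs the minimality/closure argument rather than a direct check, and in the second one must confirm that each intermediate map $p_i$ really fits Definition~\ref{de:Translations} (same $f$, same arity $m>0$, fixed entries $b_1,\ldots,b_{i-1},a_{i+1},\ldots,a_m$) and must separately dispose of the nullary case $m=0$, where the constants $f_\calA(\ve)$ are compared to themselves.
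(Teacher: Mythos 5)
Your proof is correct and is precisely the standard argument the paper points to: the paper itself omits the proof, saying only that the lemma "can be shown exactly as for ordinary algebras" with a reference to Burris--Sankappanavar. Your two-part argument --- showing the set $\calD$ of maps leaving $\theta$ invariant contains $1_A$ and the elementary translations and is closed under composition, hence contains $\Tr(\calA)$ by minimality, and then the change-one-coordinate-at-a-time chain (with the nullary case $m=0$ disposed of by reflexivity) for the converse --- is exactly that classical proof, correctly adapted to the unranked setting.
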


Moreover, we have the following counterpart to Lemma 5.3 of
\cite{Stei98}.

\begin{lemma}\label{le:Morphisms and translations} Let
$(\iota,\vp) : \calA \ra \calB$ be a g-morphism from a
$\Si$-algebra $\SalgA$ to an $\Om$-algebra $\OalgB$. For every
translation $p\in \Tr(\calA)$, there is a translation
$p_{\iota,\vp}$ of $\calB$ such that $p(a)\vp =
p_{\iota,\vp}(a\vp)$ for every $a\in A$. If $(\iota,\vp)$ is a
g-epimorphism, then every translation of $\calB$ equals
$p_{\iota,\vp}$ for some $p\in \Tr(\calA)$. \ep
\end{lemma}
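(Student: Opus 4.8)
The plan is to proceed by induction on the structure of a translation $p\in\Tr(\calA)$, following the recursive definition of $\Tr(\calA)$ as the smallest set containing $1_A$ and the elementary translations and closed under composition. First I would handle the base cases. For $p=1_A$, the obvious choice is $p_{\iota,\vp}=1_B$, and the required identity $p(a)\vp=1_B(a\vp)$ holds trivially. For an elementary translation $p(b)=f_\calA(a_1\ldots a_{i-1}\,b\,a_{i+1}\ldots a_m)$ determined by $f\in\Si$, the index $i$, and the surrounding elements, I would define the corresponding elementary translation $p_{\iota,\vp}$ of $\calB$ by $p_{\iota,\vp}(c)=\iota(f)_\calB(a_1\vp\ldots a_{i-1}\vp\,c\,a_{i+1}\vp\ldots a_m\vp)$. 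The defining equation $p(a)\vp=p_{\iota,\vp}(a\vp)$ is then exactly an instance of the g-morphism condition of Definition~\ref{de:g-morphism}, applied to the word $a_1\ldots a_{i-1}\,a\,a_{i+1}\ldots a_m$.

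For the inductive step, suppose $p=p'\circ p''$ with $p',p''\in\Tr(\calA)$ (using the paper's right-operator convention, one composes the already-constructed translations in the matching order), and assume as induction hypotheses that translations $p'_{\iota,\vp}$ and $p''_{\iota,\vp}$ of $\calB$ have been found satisfying the corresponding equations. I would then set $p_{\iota,\vp}:=p'_{\iota,\vp}\circ p''_{\iota,\vp}$, which is again a translation of $\calB$ since $\Tr(\calB)$ is closed under composition, and verify the equation by composing the two hypotheses: applying $\vp$ and pushing it through both factors one at a time. Care is needed only to compose in the order dictated by the definition of $\Tr(\calA)$ so that the intermediate element $p''(a)$ (or $p'(a)$) lines up with its image under $\vp$.

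For the final assertion, assume $(\iota,\vp)$ is a g-epimorphism and let $q\in\Tr(\calB)$ be arbitrary; I must exhibit some $p\in\Tr(\calA)$ with $q=p_{\iota,\vp}$. The idea is to run the previous construction in reverse, again by induction on the structure of $q$. For $q=1_B$ take $p=1_A$. For an elementary translation $q(c)=g_\calB(b_1\ldots b_{i-1}\,c\,b_{i+1}\ldots b_m)$ of $\calB$, surjectivity of $\iota$ yields an $f\in\Si$ with $\iota(f)=g$, and surjectivity of $\vp$ yields preimages $a_j\in A$ with $a_j\vp=b_j$ for each $j\neq i$; the elementary translation $p$ of $\calA$ built from $f$, $i$, and these $a_j$ then satisfies $p_{\iota,\vp}=q$ by the base-case computation above. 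The composition case is handled by the induction hypothesis together with the fact, already established in the first half of the proof, that the assignment $p\mapsto p_{\iota,\vp}$ respects composition.

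The main obstacle, and the only genuinely delicate point, is bookkeeping rather than mathematics: one must be scrupulous about the right-operator composition convention adopted in Section~\ref{se:Preliminaries} (where $\vp\psi$ means first $\vp$ then $\psi$) so that the order of factors in $p=p'\circ p''$ and in $p_{\iota,\vp}=p'_{\iota,\vp}\circ p''_{\iota,\vp}$ is consistent and the intermediate images under $\vp$ cancel correctly. A secondary subtlety in the final assertion is that the choice of $p$ is generally not unique — distinct surjective preimages under $\iota$ and $\vp$ may yield the same $q$ — but since only existence of some $p$ with $p_{\iota,\vp}=q$ is claimed, a definite-choice argument (or an appeal to the axiom of choice for the finitely many elements involved, which is unnecessary here as the alphabets and relevant element sets are finite) suffices.
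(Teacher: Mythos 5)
Your proof is correct and is exactly the argument the paper intends: the paper states Lemma~\ref{le:Morphisms and translations} without proof, as the counterpart of Lemma~5.3 of \cite{Stei98}, to be established by the standard structural induction over the generation of $\Tr(\calA)$ (identity, elementary translations, closure under composition), which is what you carry out, the elementary case being a direct instance of the g-morphism identity and the converse direction using surjectivity of both $\iota$ and $\vp$ to lift elementary translations and then closing under composition. No gaps; your bookkeeping remarks about the composition convention and the non-uniqueness of $p$ are appropriate and do not affect the existence claims.
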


Finally, it is easy to see that the translations of a term algebra
$\SXta$ can be defined by and correspond bijectively to the
$\SX$-contexts: for every $p\in \Tr(\SXta)$ there is a unique
$q\in \SXc$ such that $p(t) = q(t)$ for every $t\in \SXt$, and
conversely.

\section{Regular unranked algebras}\label{se:Regular algebras}

Let us now introduce the unranked algebras that will play the same
role here as finite algebras in the ranked case. In \cite{That67}
they were called ``pseudoautomata''.

\begin{definition}\label{de:Regular algebras}{\rm An unranked
algebra $\SalgA$ is said to be \emph{regular} if it is finite and
$f_\calA^{-1}(a)$ is a regular language over $A$ for all $f\in
\Si$ and $a\in A$. \ep }
\end{definition}

The condition that all the sets $f_\calA^{-1}(a) = \{ w\in A^*
\mid f_\calA(w) = a \}$ be regular languages means that the
functions $f_\calA : A^* \ra A$ can be computed by finite
automata.

\begin{example}\label{ex:NonRegular Finite Alg}{\rm
Let $\Sigma =  \{ f\}$ and let $\calA = (\{0 ,1\},\Si)$ be the
unranked $\Si$-algebra such that $f_\calA(w) = 1$ if $ w \in
\{0^n1^n \mid n \geq 0 \}$, and $f_\calA(w) = 0$ otherwise ($w\in
A^*$). In this case ${\cal A}$ is not regular since  the language
$f_\calA^{-1}(1) = \{0^n1^n \mid n \geq 0 \}$ is not regular. On
the other hand, a regular algebra $\calA$ is obtained if we define
$f_\calA$ by setting
$$f_\calA (w) = a_1+\ldots +a_n \mod 2, $$
for $w = a_1\ldots a_n$, where $a_1, \ldots, a_n \in A$. }\ep
\end{example}

\begin{lemma}\label{le:Regular subalgebras and images} All g-subalgebras
and all g-images of a regular algebra are regular.
\end{lemma}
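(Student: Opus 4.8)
The plan is to treat the two assertions separately, since in each case finiteness is immediate and the whole problem reduces to showing that the preimage languages of the smaller algebra are regular. Throughout I would lean on two standard facts about regular word languages, which I would record at the outset: (i) if $B \se A$ and $L \se B^*$, then $L$ is regular over $A$ if and only if it is regular over $B$ (proved by deleting from a recognizing automaton all transitions labelled by the letters of $A\setminus B$, which changes nothing since no word of $L$ uses them); and (ii) the image of a regular language under a length-preserving monoid morphism is regular. Both are routine consequences of the closure properties of $Rec(A^*)$.

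For g-subalgebras, let $\OalgB$ be an $\Om$-subalgebra of a regular algebra $\SalgA$, so $\Om \se \Si$, $B \se A$, and $f_\calB(w) = f_\calA(w)$ for all $f\in\Om$ and $w\in B^*$. Then $B$ is finite, and for each $f\in\Om$ and $b\in B$ I would simply observe that
$$ f_\calB^{-1}(b) \: = \: \{w\in B^* \mid f_\calA(w)=b\} \: = \: f_\calA^{-1}(b)\cap B^*. $$
Since $f_\calA^{-1}(b)$ and $B^*$ are both regular over $A$, so is their intersection; and as it lies inside $B^*$, fact (i) makes it regular over $B$. Hence $\calB$ is regular.

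For g-images, let $(\iota,\vp):\calA\ra\calB$ be a g-epimorphism with $\calA$ regular, and write $\vp_* : A^* \ra B^*$ for the induced monoid morphism. Then $\calB$ is finite. Fix $g\in\Om$ and $b\in B$. Surjectivity of $\iota$ yields some $f\in\Si$ with $\iota(f)=g$, and the g-morphism identity reads $f_\calA(w)\vp = g_\calB(w\vp_*)$ for all $w\in A^*$. Because $\vp$, and hence $\vp_*$, is surjective, the main step is to verify the identity
$$ g_\calB^{-1}(b) \: = \: \Big(\bigcup_{a\in b\vpi} f_\calA^{-1}(a)\Big)\vp_*, $$
where $b\vpi = \{a\in A \mid a\vp = b\}$: a word $v\in B^*$ satisfies $g_\calB(v)=b$ iff some $\vp_*$-preimage $w$ of $v$ has $f_\calA(w)\vp = b$, that is, $f_\calA(w)\in b\vpi$. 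The right-hand side is the image under the length-preserving morphism $\vp_*$ of a finite union of regular languages (there are only finitely many $a\in b\vpi \se A$, and each $f_\calA^{-1}(a)$ is regular), hence regular over $B$ by fact (ii). Thus $\calB$ is regular.

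I expect the only genuine content to lie in the g-image case, specifically in pinning down the displayed set equation: surjectivity of $\vp_*$ is what lets one lift an arbitrary $v\in B^*$ to some $w\in A^*$ in the inclusion $g_\calB^{-1}(b)\se(\ldots)\vp_*$, while the g-morphism identity is used in both directions, and the union must range over the whole finite fibre $b\vpi$ rather than over a single preimage of $b$. Everything else—finiteness, the subalgebra computation, and the two auxiliary facts about $Rec(A^*)$—is routine.
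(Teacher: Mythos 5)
Your proof is correct and follows essentially the same route as the paper's: the subalgebra case via $f_\calB^{-1}(b) = f_\calA^{-1}(b)\cap B^*$, and the image case via the identity $g_\calB^{-1}(b) = f_\calA^{-1}(b\vpi)\vp_*$ (your union over $a\in b\vpi$ is exactly the paper's $f_\calA^{-1}(b\vpi)$), using surjectivity of $\vp_*$ and closure of regular languages under morphic images. Your explicit handling of the alphabet-change fact (regularity over $A$ versus over $B$) is a detail the paper leaves implicit, but the argument is the same.
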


\begin{proof}  Let $\SalgA$ be a regular algebra and let $\OalgB$ be any
unranked algebra.

If $\calB$ is an $\Om$-subalgebra of $\calA$, then
$f_\calB^{-1}(b) = f_\calA^{-1}(b)\cap B^*$ is a regular language
for all $f\in \Om$ and $b\in B$, and hence $\calB$ is regular.

Assume now that there is a g-epimorphism $(\iota,\vp) : \calA \ra
\calB$. Consider any $g\in \Om$, $b\in B$ and $w\in B^*$. If
$\vp_* : A^* \ra B^*$ is the extension of $\vp$ to a monoid
morphism, then $w = v\vp_*$ for some $v\in A^*$. If $f \in \Si$ is
such that $\iota(f) = g$, then
\begin{align*} w \in g_\calB^{-1}(b) \: & \LRa \: g_\calB(w) = b
\: \LRa \: \iota(f)_\calB(v\vp_*) = b \: \LRa f_\calA(v)\vp = b
 \:  \LRa v\in f_\calA^{-1}(b\vp^{-1}) \\
& \Ra w\in f_\calA^{-1}(b\vp^{-1})\vp_*,
\end{align*}
i.e., $g_\calB^{-1}(b) \se f_\calA^{-1}(b\vp^{-1})\vp_*$. For
the converse inclusion, let $w \in f_\calA^{-1}(b\vp^{-1})\vp_*$.
Then $w = v\vp_*$ for some $a\in b\vpi$ and $v\in f_\calA^{-1}(a)$.
This means that
$$b = a\vp = f_\calA(v)\vp = g_\calB(v\vp_*) = g_\calB(w),$$ and
hence $w\in g_\calB^{-1}(b)$. Since $f_\calA^{-1}(b\vp^{-1})$ is
the union of finitely many regular sets $f_\calA^{-1}(a)$, where
$a\in b\vp^{-1}$, this means that $g_\calB^{-1}(b)$ is regular.
\end{proof}

\begin{lemma}\label{le:RegularProduct} Any g-product of regular
algebras is regular. In particular, any g-derived algebra of a
regular algebra is regular.
\end{lemma}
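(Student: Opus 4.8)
The plan is to verify the two defining conditions of regularity directly for a g-product. Finiteness is immediate: if each $A_i$ is finite, then so is the underlying set $A_1\times\ldots\times A_n$ of $\vk(\calA_1,\ldots,\calA_n)$. It therefore remains to show that, for each $f\in\Ga$ and each tuple $(a_1,\ldots,a_n)\in A_1\times\ldots\times A_n$, the language $f_{\vk(\calA_1,\ldots,\calA_n)}^{-1}\big((a_1,\ldots,a_n)\big)$ is a regular language over the alphabet $A_1\times\ldots\times A_n$. For this I would use the projections $\pi_i : A_1\times\ldots\times A_n \ra A_i$ as alphabet maps and let ${\pi_i}_* : (A_1\times\ldots\times A_n)^* \ra A_i^*$ denote the extension of $\pi_i$ to a monoid morphism.

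Writing $\vk(f) = (g_1,\ldots,g_n)$ with $g_i\in\Si_i$, the definition of the g-product gives, for any word $w\in (A_1\times\ldots\times A_n)^*$, that $f_{\vk(\calA_1,\ldots,\calA_n)}(w) = (a_1,\ldots,a_n)$ holds if and only if $(g_i)_{\calA_i}(w{\pi_i}_*) = a_i$ for every $i\in[n]$. Hence
$$f_{\vk(\calA_1,\ldots,\calA_n)}^{-1}\big((a_1,\ldots,a_n)\big) \: = \: \bigcap_{i=1}^{n} ({\pi_i}_*)^{-1}\big((g_i)_{\calA_i}^{-1}(a_i)\big).$$
Since each $\calA_i$ is regular, every set $(g_i)_{\calA_i}^{-1}(a_i)$ is a regular language over $A_i$. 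Its inverse image under the monoid morphism ${\pi_i}_*$ is again regular, because regular languages are closed under inverse morphisms, and a finite intersection of regular languages is regular. This shows the displayed language is regular, so $\vk(\calA_1,\ldots,\calA_n)$ is a regular algebra.

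For the degenerate case $n = 0$ the g-product is a trivial algebra $(\{\ast\},\Ga)$, which is regular since $f^{-1}(\ast) = \{\ast\}^*$ is regular for every $f\in\Ga$. Finally, the statement about g-derived algebras is the special case $n = 1$: the $\iota$-derived algebra $\iota(\calB)$ of a regular algebra $\OalgB$ is precisely the one-factor g-product $\vk(\calB)$ for $\vk = \iota : \Si \ra \Om$, and is therefore regular. I do not expect any genuine obstacle here; the only point requiring care is the observation that the coordinatewise description of the g-product operations turns the preimage of a single output tuple into the intersection of the inverse projection-images of the factors' preimages, after which the standard closure properties of regular word languages finish the argument.
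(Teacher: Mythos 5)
Your proof is correct and follows essentially the same route as the paper's: both reduce the preimage of an output tuple under a g-product operation to the intersection of inverse images, under the monoid-morphism extensions of the projections, of the factors' (regular) preimage languages, and then invoke closure of regular languages under inverse morphisms and finite intersection. The only differences are cosmetic — the paper treats $n=2$ "to simplify the notation" where you treat general $n$, and you additionally spell out the $n=0$ and $n=1$ (g-derived algebra) cases, which the paper leaves implicit.
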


\begin{proof} To simplify the notation, we consider a g-product
$\vk(\calA,\calB) = (A\times B, \Ga)$ of just two regular algebras
$\SalgA$ and $\OalgB$. Let $f\in \Ga$ and $(a,b)\in A\times B$,
and assume that $\vk(f) = (g,h)$. If $\vp_1 : (A\times B)^* \ra
A^*$ and $\vp_2 : (A\times B)^* \ra B^*$ are the extensions of the
projections $\pi_1 : A\times B \ra A$ and $\pi_2 : A\times B \ra
B$ to monoid morphisms, then $f_{\vk(\calA,\calB)}(w) =
(g_\calA(w\vp_1),h_\calB(w\vp_2))$ for any $w\in (A\times B)^*$.
From this it follows that $f_{\vk(\calA,\calB)}^{-1}(a,b) =
g_\calA^{-1}(a)\vp_1^{-1}\cap h_\calB^{-1}(b)\vp_2^{-1}$, which is
a regular set.
\end{proof}

Let us say that a regular algebra $\SalgA$ is \emph{effectively
given} if, for all $f\in \Si$ and $a\in A$, we are given a finite
recognizer of $f_\calA^{-1}(a)$. The following proposition
expresses an important property of regular algebras.

\begin{proposition}\label{pr:TranslRegAlg} For any effectively given regular algebra $\SalgA$, the set $\Tr(\calA)$ of all translations of $\calA$ is effectively computable.
\end{proposition}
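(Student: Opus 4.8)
The plan is to exploit that $A$ is finite, so $\Tr(\calA)$ is a \emph{finite} subset of the finite set $A^A$ of all maps $A \ra A$; the task is to produce this subset by an algorithm. The only real difficulty is that the definition of an elementary translation $p(b) = f_\calA(a_1\ldots a_{i-1}\,b\,a_{i+1}\ldots a_m)$ involves a fixed sequence of unbounded length (the arity $m$ is arbitrary), so one cannot simply enumerate all descriptions. I resolve this by a finite-automaton analysis that collapses the infinitely many descriptions into finitely many behaviours.

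First I would combine the given recognizers. For each $f\in \Si$ the languages $f_\calA^{-1}(a)$ ($a\in A$) form a finite partition of $A^*$ into regular sets, so taking the product of the given deterministic recognizers yields, effectively, a single deterministic automaton $M_f$ over the alphabet $A$, with finite state set $Q_f$, initial state $q_0$, transition behaviour written $q\cdot w$ (the state reached from $q$ on reading $w\in A^*$), and a labelling $\lambda : Q_f \ra A$, such that $f_\calA(w) = \lambda(q_0\cdot w)$ for every $w\in A^*$.

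The crux is to show that the set $E_f$ of elementary translations arising from $f$ is finite and computable. Every elementary translation from $f$ has the form $p(b) = f_\calA(u\,b\,v)$ for some prefix $u = a_1\ldots a_{i-1}$ and suffix $v = a_{i+1}\ldots a_m$ in $A^*$, and conversely every such pair $(u,v)$ yields one. Using $M_f$ we have $p(b) = \lambda\bigl((q_0\cdot u)\cdot b\cdot v\bigr)$, so $p$ depends on $u$ only through the reachable state $q := q_0\cdot u \in Q_f$, and on $v$ only through the \emph{residual} map $r_v : Q_f \ra A$, $r_v(s) := \lambda(s\cdot v)$. The set $R_f := \{\, r_v \mid v\in A^* \,\}$ lies in the finite set $A^{Q_f}$ and is computable by saturation, since $r_\ve = \lambda$ and $r_{bv}(s) = r_v(s\cdot b)$: one closes $\{\lambda\}$ under the operations $r \mapsto (s \mapsto r(s\cdot b))$ for $b\in A$ until no new map appears. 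The reachable states of $M_f$ are found by ordinary reachability. Hence
$$E_f \: = \: \{\, b \mapsto r(q\cdot b) \mid q \text{ reachable in } M_f,\; r\in R_f \,\}$$
is a finite, effectively computable set, and the union $E := \bigcup_{f\in\Si} E_f$ gives all elementary translations of $\calA$.

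Finally, by Definition~\ref{de:Translations}, $\Tr(\calA)$ is the smallest subset of $A^A$ containing $1_A$ and $E$ and closed under composition, i.e.\ the submonoid of $(A^A,\circ)$ generated by $E$. Since $A^A$ is finite and composition of maps on $A$ is computable, this submonoid is obtained by a second terminating saturation: start from $\{1_A\}\cup E$ and repeatedly adjoin composites $p\circ q$ until the set stabilizes. This yields $\Tr(\calA)$ explicitly, completing the construction. The main obstacle, and the one genuinely unranked point, is the middle step: reducing the \emph{a priori} infinitely many elementary-translation descriptions (because $m$ is unbounded) to the finite computable set $E_f$, which is exactly what the finite-index structure of $M_f$ — a pumping-type argument — makes possible.
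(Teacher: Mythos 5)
Your proof is correct and takes essentially the same route as the paper's: both collapse the infinitely many elementary translations $f_{u,v}$, $u,v\in A^*$, to a finite computable set by abstracting the prefix $u$ and the suffix $v$ through the finite recognizers of the languages $f_\calA^{-1}(a)$ --- the paper via recognizing monoid morphisms $\vp_a$ and a finite set of representatives of the induced finite-index congruence $\sim$ on $A^*$, you via reachable states and residual maps of a product DFA, which is the same idea in automaton rather than monoid dress. Your final saturation step (closing $\{1_A\}\cup E$ under composition inside the finite set $A^A$ to get $\Tr(\calA)$ from the elementary translations) is left implicit in the paper, and spelling it out is a small but welcome addition.
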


\begin{proof}  For any $f\in \Si$ and $u,v\in A^*$, let $f_{u,v} : A \ra A,
a \mapsto f_\calA(uav)$, be the elementary translation defined by
$f$, $u$ and $v$. We show that $E_f := \{f_{u,v} \mid u,v\in
A^*\}$ is effectively computable for each $f$.

For each $a\in A$, we can find a finite monoid $M_a$, a morphism
$\vp_a : A^* \ra M_a$ and a subset $F_a \se M_a$ such that
$f_\calA^{-1}(a) = F_a\vpi_a$. Let $\sim$ be the equivalence on
$A^*$ such that for any $u,v\in A^*$, $u \sim v$ if and only if
$u\vp_a = v\vp_a$ for every $a\in A$. Then $f_{u,v} = f_{u',v'}$
for all words $u,v,u',v'\in A^*$ such that $u \sim u'$ and $v \sim
v'$. Indeed, for all $a,b\in A$,
\begin{align*}
  f_{u,v}(a) = b \: & \LRa \: f_\calA(uav) = b \: \LRa \: uav \in f_\calA^{-1}(b) \: \LRa \: (uav)\vp_b \in F_b\\
  & \LRa \: u\vp_b\cdot a\vp_b \cdot v\vp_b \in F_b \: \LRa \: \: u'\vp_b\cdot a\vp_b \cdot v'\vp_b \in F_b \: \LRa \ldots \\
  & \LRa \: f_{u',v'}(a) = b.
\end{align*}
Let $R$ be a set of representatives of the partition $A^*/\sim$.
Such an $R$ is finite and can be effectively formed using the
regular sets $m\vpi_a$ ($m\in M_a$, $a\in A$). Then $E_f$ is
obtained as the set $\{f_{u,v} \mid u,v\in R\}$.
\end{proof}

We shall consider classes containing  unranked $\Si$-algebras for
any operator alphabet $\Si$. Note that also the operators $S$, $H$
and $P_f$ are now applied to such classes.  The class of all
regular algebras is denoted by $\Reg$.

\begin{definition}\label{de:VRAs}{\rm For any class $\bfK$ of
unranked algebras, let
\begin{itemize}
  \item[(1)]  $S_g(\bfK)$ be the class of algebras g-isomorphic to a g-subalgebra of a member $\bfK$,
  \item[(2)] $H_g(\bfK)$  be the class of all g-images of members of $\bfK$,
  \item[(3)] $P_g(\bfK)$ be the class of algebras isomorphic to g-products of members of $\bfK$,
  \item[(4)] $S(\bfK)$ be the class of algebras isomorphic to a subalgebra of a member $\bfK$,
  \item[(5)] $H(\bfK)$  be the class of all epimorphic images of members of $\bfK$, and
  \item[(6)] $P_f(\bfK)$ be the class of algebras isomorphic to the direct product of a finite family of members of $\bfK$.
\end{itemize}
A class $\bfK$ of regular unranked algebras is a \emph{variety of
regular algebras (VRA)} if $S_g(\bfK)$, $H_g(\bfK)$, $P_g(\bfK) \se
\bfK$. The class of all VRAs is denoted by $\VRA$.}\ep
\end{definition}

Since g-derived algebras are special g-products, the following
fact is an immediate consequence of the definition of VRAs.

\begin{lemma}\label{le:VRAsSolid} Every VRA is closed under
the forming of g-derived algebras. \ep
\end{lemma}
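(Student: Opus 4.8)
The plan is to recognize the $\iota$-derived algebra of Definition~\ref{de:Derived algebras} as nothing more than a g-product with a single factor, and then to invoke the closure property $P_g(\bfK) \se \bfK$ that holds by definition of a VRA. This is exactly the observation flagged in the sentence introducing the lemma, and the whole argument reduces to matching two definitions.

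First I would unwind the definitions. Let $\bfK$ be a VRA, let $\calB = (B,\Om)$ be a member of $\bfK$, and let $\iota : \Si \ra \Om$ be an arbitrary map; the goal is to show that the $\iota$-derived algebra $\iota(\calB) = (B,\Si)$ lies in $\bfK$. By Definition~\ref{de:Derived algebras}, $\iota(\calB)$ has carrier set $B$ and operations $f_{\iota(\calB)} = \iota(f)_\calB$ for each $f \in \Si$. On the other hand, take the single factor $\calB$ and the index map $\vk := \iota$ in the one-factor instance of Definition~\ref{de:DirectProduct} (so that $\Ga := \Si$, and the one-fold product $\Om$ of the operator alphabet of $\calB$ is identified with $\Om$ itself). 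Then the g-product $\vk(\calB)$ has carrier $B$ and operations $f_{\vk(\calB)} = \vk(f)_\calB = \iota(f)_\calB$ for each $f \in \Si$. Hence $\iota(\calB)$ and the one-factor g-product $\vk(\calB)$ are literally the same $\Si$-algebra, operation for operation.

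From here the conclusion is immediate: by definition $P_g(\bfK)$ contains every algebra isomorphic to a g-product of members of $\bfK$, so $\iota(\calB) \in P_g(\bfK)$, and since $\bfK$ is a VRA we have $P_g(\bfK) \se \bfK$, whence $\iota(\calB) \in \bfK$. As $\calB \in \bfK$ and $\iota : \Si \ra \Om$ were arbitrary, $\bfK$ is closed under the forming of g-derived algebras. There is essentially no obstacle in this argument; the only point requiring care is the purely notational bookkeeping that a one-factor g-product really does reproduce the derived-algebra construction, i.e.\ that choosing the product's index map $\vk$ to be $\iota$ yields precisely the operations $\iota(f)_\calB$. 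This is settled by directly comparing Definitions~\ref{de:DirectProduct} and~\ref{de:Derived algebras}, so the proof is genuinely a one-line consequence of the VRA axioms once the identification is spelled out.
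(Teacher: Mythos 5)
Your proof is correct and follows exactly the paper's own argument: the paper justifies this lemma precisely by the observation that a g-derived algebra is a one-factor g-product, so closure under $P_g$ gives the result. You have merely spelled out the notational identification of $\iota(\calB)$ with the one-factor product $\vk(\calB)$ for $\vk=\iota$, which the paper leaves implicit.
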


From Lemmas \ref{le:Regular subalgebras and images} and
\ref{le:RegularProduct} we get the following proposition.

\begin{proposition}\label{pr:RegularAlgebrasVFRA} The class
$\Reg$ of all regular unranked algebras is a VRA, and hence it is
the greatest VRA. \ep
\end{proposition}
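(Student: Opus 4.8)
The plan is to verify directly that $\Reg$ satisfies the three closure conditions in the definition of a VRA (Definition~\ref{de:VRAs}), drawing entirely on Lemmas~\ref{le:Regular subalgebras and images} and~\ref{le:RegularProduct}, and then to deduce maximality from the fact that every VRA is by definition a class of regular algebras. Since these two lemmas already establish closure of the class of regular algebras under g-subalgebras, g-images, and g-products, the real work has been done, and the proof amounts to checking that the operators $S_g$, $H_g$ and $P_g$ produce nothing outside $\Reg$.

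Concretely, I would argue as follows. For $H_g(\Reg) \se \Reg$: any member of $H_g(\Reg)$ is a g-image of a regular algebra, hence regular by Lemma~\ref{le:Regular subalgebras and images}. For $S_g(\Reg) \se \Reg$: any member is g-isomorphic to a g-subalgebra of a regular algebra; the g-subalgebra is regular by Lemma~\ref{le:Regular subalgebras and images}, and then its g-isomorphic copy is regular as well, since a g-isomorphism is in particular a g-epimorphism and so the copy is again a g-image of a regular algebra. For $P_g(\Reg) \se \Reg$: any member is isomorphic to a g-product of regular algebras, which is regular by Lemma~\ref{le:RegularProduct}, and an isomorphic (hence g-isomorphic, hence g-image) copy of a regular algebra is regular by the same remark. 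Thus all three inclusions hold and $\Reg$ is a VRA.

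For the maximality claim, I would note that by Definition~\ref{de:VRAs} every VRA $\bfK$ is a class of regular algebras, so $\bfK \se \Reg$. Combined with the first part, this shows $\Reg$ is itself a VRA containing every VRA, i.e., it is the greatest element of $\VRA$ under inclusion.

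The only point requiring a moment's care — and it is the sole (minor) obstacle — is the phrasing of $S_g$ and $P_g$ in terms of algebras \emph{g-isomorphic} or \emph{isomorphic} to the relevant construction, rather than equal to it. The one-line observation that regularity is preserved under g-isomorphism (because a g-isomorphism is a g-epimorphism, so Lemma~\ref{le:Regular subalgebras and images} applies) disposes of this, after which the proof is a routine assembly and no further computation is needed.
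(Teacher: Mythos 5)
Your proof is correct and follows exactly the paper's route: the paper derives this proposition directly from Lemmas~\ref{le:Regular subalgebras and images} and~\ref{le:RegularProduct}, with maximality being immediate since every VRA consists of regular algebras by definition. Your extra observation that g-isomorphic (or isomorphic) copies are g-images, and hence regular by Lemma~\ref{le:Regular subalgebras and images}, is a point the paper leaves implicit but is handled correctly in your write-up.
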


The intersection of all VRAs that contain a given class $\bfK$ of
regular algebras is obviously a VRA. It is called the \emph{VRA
generated} by $\bfK$ and it is denoted by $V_g(\bfK)$.

If $P$ and $Q$ are any algebra class operators, such as $H_g$,
$S_g$ or $P_f$, we denote, as usual, by $PQ$ the operator such
that for any class $\bfK$ of algebras, $PQ(\bfK) = P(Q(\bfK))$.
Moreover, $P \leq Q$ means that $P(\bfK) \se Q(\bfK)$ for every
class $\bfK$.

The obvious fact that $\bfK \se S(\bfK) \se S_g(\bfK)$, $\bfK \se
H(\bfK) \se H_g(\bfK)$, and $\bfK \se P_f(\bfK) \se P_g(\bfK)$ for
any class  $\bfK$ of finite unranked algebras, will be used
without comment.

To obtain an analog of Tarski's HSP-representation of the
generated variety-operator (cf.~\cite{BuSa81} or \cite{Berg12})
for $V_g$, we first prove some commutation and semi-commutation
relations of our class operators.

\begin{lemma}\label{le:SHPcommut}
\begin{itemize}
    \item[{\rm (a)}]  $S_gS_g = S_gS = SS_g = S_g$.
    \item[{\rm (b)}]  $H_gH_g = H_gH = HH_g = H_g$.
    \item[{\rm (c)}]  $P_gP_g = P_gP_f = P_fP_g = P_g$.
    \item[{\rm (d)}]  $S_gH \leq S_gH_g \leq HS_g \leq H_gS_g$.
    \item[{\rm (e)}]  $P_gS \leq P_gS_g \leq SP_g \leq S_gP_g$.
    \item[{\rm (f)}]  $P_gH \leq P_gH_g \leq HP_g \leq H_gP_g$
\end{itemize}
\end{lemma}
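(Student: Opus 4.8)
The plan is to prove each of the six semi-commutation relations in Lemma~\ref{le:SHPcommut} by unwinding the definitions of the class operators and exhibiting, for a given algebra in the left-hand class, a witnessing construction placing it in the right-hand class. Since each line is a chain of three inclusions, I would treat the chains as built out of a few recurring primitive inclusions. The two leftmost inclusions in (d), (e), (f) are immediate from the pointwise bounds $S \le S_g$, $H \le H_g$ noted just before the lemma (applying a monotone operator to both sides), so essentially only the \emph{rightmost} inclusion in each of (d), (e), (f) carries genuine content, namely $HS_g \le H_gS_g$, $SP_g \le S_gP_g$, and $HP_g \le H_gP_g$; but these too follow from $H \le H_g$ and $S \le S_g$ by monotonicity. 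The real work is in the \emph{middle} inclusions, which assert that one operator can be pushed past another: $S_gH_g \le HS_g$ in (d), $P_gS_g \le SP_g$ in (e), and $P_gH_g \le HP_g$ in (f), together with the idempotency/absorption identities (a)--(c).

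For the idempotency lines (a)--(c), I would argue that each operator is idempotent ($S_gS_g = S_g$, etc.) because a g-subalgebra of a g-subalgebra is a g-subalgebra (using Lemma~\ref{le:g-morphisms} together with the composition of g-morphisms from Lemma~\ref{le:g-morphisms}(a)), a g-image of a g-image is a g-image, and a g-product of g-products is a g-product; the mixed equalities like $S_gS = SS_g = S_g$ then follow by sandwiching between $\bfK \se S(\bfK) \se S_g(\bfK)$ and idempotency. For (c) the identity $P_gP_f = P_fP_g = P_g$ uses that an ordinary direct product of g-products is again expressible as a single g-product via composing the indexing maps $\vk$, essentially the associativity of the g-product construction.

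For the substantive middle inclusions I would proceed as follows. To show $S_gH_g \le HS_g$, I take $\calA \in S_gH_g(\bfK)$: there is $\calC \in \bfK$, a g-epimorphism $(\iota,\vp):\calC \ra \calD$, and $\calA$ is g-isomorphic to a g-subalgebra of $\calD$. The plan is to pull the g-subalgebra back through $(\iota,\vp)$ using Lemma~\ref{le:g-morphisms}(b): the preimage of the g-subalgebra is a g-subalgebra of $\calC$, hence in $S_g(\bfK)$, and the restriction of $(\iota,\vp)$ to it is a g-epimorphism onto the given g-subalgebra, placing $\calA$ in $HS_g(\bfK)$ (upgrading to a plain epimorphism by Lemma~\ref{le:HomomDerivAlgebras} or by the correspondence $H \le H_g$). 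The inclusions $P_gS_g \le SP_g$ and $P_gH_g \le HP_g$ follow the same pattern applied componentwise: given a g-product of g-subalgebras (resp.\ g-images) of members of $\bfK$, I would form the g-product of the \emph{ambient} algebras and observe that the product of the g-subalgebras is a g-subalgebra of it (resp.\ that the product map is a g-epimorphism onto it), so one can factor the constructions in the reversed order.

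The main obstacle I anticipate is bookkeeping on the operator-alphabet components: unlike the classical ranked setting, every g-morphism carries a map $\iota:\Si \ra \Om$ between operator alphabets, and every g-product carries an indexing map $\vk$. So when pulling a g-subalgebra back through a g-epimorphism or pushing a product past a quotient, I must track both the element-set map \emph{and} the alphabet map simultaneously, verifying that the composite alphabet maps still satisfy the surjectivity/injectivity and closure conditions built into Definitions~\ref{de:Subalgebras}, \ref{de:g-morphism}, and \ref{de:DirectProduct}. Lemma~\ref{le:g-morphisms}(b),(c) already packages exactly the alphabet-preimage bookkeeping I need (it tells me that $R\vp^{-1}$ is a $\iota^{-1}(\Psi)$-subalgebra when $R$ is a $\Psi$-subalgebra), so the real care is simply to invoke it with the correct $\Psi$ at each step and to confirm that composing $\iota$-maps behaves functorially as guaranteed by Lemma~\ref{le:g-morphisms}(a). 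None of the individual verifications is deep, but getting the alphabet indices to line up across the pushed-past constructions is where an error would most easily creep in.
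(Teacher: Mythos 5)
Your high-level decomposition---idempotency plus sandwiching for (a)--(c), monotonicity for the outer inclusions of (d)--(f), with the real content in the middle inclusions---matches the paper's, and your treatment of (a)--(c) is essentially the paper's argument. But your plan for the middle inclusions has a genuine gap: those inclusions end in the \emph{ordinary} operators $H$ and $S$, i.e., they demand a plain epimorphism (resp.\ subalgebra) between algebras over the \emph{same} operator alphabet, while every map your constructions produce is a g-epimorphism or a g-subalgebra. Concretely, for (d): a typical member of $S_gH_g(\bfK)$ arises from $\calA = (A,\Si)\in\bfK$, a g-epimorphism $(\iota,\vp):\calA\ra\calA'$, a g-subalgebra $\calB=(B,\Om)$ of $\calA'$, and a g-isomorphism $(\vk,\psi):\calB\ra\calC$ onto $\calC=(C,\Ga)$. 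Your pullback $\calB\vpi=(B\vpi,\iota^{-1}(\Om))$ is indeed a g-subalgebra of $\calA$ by Lemma~\ref{le:g-morphisms}(b), and restricting $(\iota,\vp)$ and composing with $(\vk,\psi)$ gives a g-epimorphism onto $\calC$---but this only proves $S_gH_g\le H_gS_g$, not the claimed $S_gH_g\le HS_g$. Neither of your proposed repairs closes this: ``$H\le H_g$'' runs in the wrong direction (it turns plain epimorphisms into g-epimorphisms, never conversely), and Lemma~\ref{le:HomomDerivAlgebras}, applied to the g-epimorphism $\calB\vpi\ra\calC$, yields a plain epimorphism onto the \emph{derived} algebra of $\calC$ over the alphabet $\iota^{-1}(\Om)$---a different algebra from $\calC$ (the alphabet map is surjective but generally far from injective), and no fact available at this stage transfers membership in $HS_g(\bfK)$ from that derived algebra back to $\calC$. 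The same defect affects (e) and (f): forming the g-product of the ambient algebras and observing that the product of the g-subalgebras (resp.\ the product of the quotient maps) sits inside it proves only $P_gS_g\le S_gP_g$ and $P_gH_g\le H_gP_g$. These weaker relations would suffice for $V_g=H_gS_gP_g$, but not for the lemma as stated, and the stronger middle forms are exactly what Lemma~\ref{le:H_gS_P_g relations} and Proposition~\ref{pr:V_g = HSP_g} ($V_g=HSP_g$) later rely on.

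The missing idea, which is the crux of the paper's proof, is to use the freedom in the alphabet component to make the relevant alphabet maps \emph{bijective onto the target's alphabet} before taking images or subalgebras. For (d): choose a transversal $\Si'\se\iota^{-1}(\Om)$ on which $\iota$ restricts to a bijection $\iota':\Si'\ra\Om$; since a g-subalgebra may use any subset of the operator alphabet, $\calD=(B\vpi,\Si')$ is still a g-subalgebra of $\calA$, and transporting its operations along the bijection $\iota'\vk:\Si'\ra\Ga$ yields a $\Ga$-algebra $\calE$ with $(\iota'\vk,1_{B\vpi}):\calD\ra\calE$ a g-isomorphism, so $\calE\in S_g(\bfK)$; now $\vp\psi:\calE\ra\calC$ is a plain epimorphism of $\Ga$-algebras, giving $\calC\in HS_g(\bfK)$. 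For (e), since the alphabet maps of the g-isomorphisms $\calB_i\ra\calC_i$ are bijections $\iota_i:\Om_i\ra\Ga_i$, one defines the ambient product's indexing map $\vk:\Ga\ra\Si_1\times\dots\times\Si_n$ by composing $\lambda$ with the $\iota_i^{-1}$; then $\vk(\calA_1,\dots,\calA_n)$ is itself a $\Ga$-algebra in $P_g(\bfK)$, $B_1\times\dots\times B_n$ is an ordinary ($\Ga$-indexed) subalgebra of it, and the product of the $\vp_i$ is a plain isomorphism onto the given algebra. For (f) the same works with chosen sections of the surjective alphabet maps $\iota_i$ in place of inverses. Without this transversal/relabeling step, your proof establishes only the $H_g$-, $S_g$-versions of the middle inclusions, which is strictly less than Lemma~\ref{le:SHPcommut} claims.
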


\begin{proof} Statements (a) and (b) hold because obviously $S_gS_g =
S_g$ and $H_gH_g = H_g$.
To prove (c), it clearly suffices to show that $P_gP_g \leq P_g$,
and in each of (d), (e) and (f), it suffices to prove the second
inequality because the other two are obvious. By the way of an
example, we verify the inequality $S_gH_g \le HS_g$.

Let $\bfK$ be any class of unranked algebras. To construct
a typical member  $\GalgC$ of $S_gH_g(\bfK)$,
let $\calA = (A,\Sigma)$ be in $\bfK$, $(\iota,\vp): \calA \ra
\calA'$ be a g-epimorphism, $\OalgB$ be a g-subalgebra of
$\calA'$, and let $(\vk,\psi): \calB \to \calC$ be a
g-isomorphism. Now $\calB\vp^{-1} =
(B\vp^{-1},\iota^{-1}(\Omega))$ is a g-subalgebra of $\calA$. If
we choose a subset $\Sigma'$ of $\iota^{-1}(\Omega)$ so that the
restriction of $\iota$ to $\Sigma'$ is a bijection $\iota':
\Sigma' \to \Omega$, then $\calD = (B\vp^{-1},\Sigma')$ is a
g-subalgebra of $\calA$.

Next we define a $\Gamma$-algebra $\calE = (B\vp^{-1},\Gamma)$ so
that for each $g \in \Gamma$, $g_{\calE} = f_{\calD}$ for the $f
\in \Sigma'$ such that $g = \vk(\iota'(f))$. Then $(\iota'\vk,
1_{B\vp^{-1}}) : \calD \ra \calE$ is a g-isomorphism. Indeed, if
$f \in \Sigma'$ and $w \in (B\vp^{-1})^*$, then
$$
    f_{\calD} (w) 1_{B\vp^{-1}}  = f_{\calD} (w)
    = \vk(\iota'(f))_{\calE} (w 1_{B\vp^{-1}}).
$$
This means that $\calE \in S_g(\bfK)$. Next, we show that
$\vp\psi: \calE \to \calC$ is an epimorphism. Clearly,
$B\vp^{-1}\vp\psi = C$. Consider any $g \in \Gamma$ and $w \in
(B\vp^{-1})^*$. Let $f \in \Sigma'$ and $h \in \Omega$ be such
that $\iota'(f) = h$ and $\vk(h) = g$. Then
$$
    g_{\calE} (w) \vp\psi = f_{\calD} (w) \vp \psi
    = f_{\calA} (w) \vp \psi = h_{\calA'} (w\vp) \psi
    = h_{\calB} (w\vp) \psi = g_{\calC} (w\vp\psi).
$$
Thus $\calC \in HS_g(\bfK)$.
\end{proof}

The relations $S_gS_g = S_g$, $H_gH_g = H_g$, $P_gP_g =  P_g$,
$S_gH_g \leq H_gS_g$, $P_gS_g \leq S_gP_g$, and $P_gH_g \leq
H_gP_g$ yield, in the usual way (cf.~\cite{BuSa81} or~\cite{Berg12},
for example) the following representation of the $V_g$-operator.

\begin{proposition}\label{pr:V_g = H_gS_gP_g} $V_g = H_gS_gP_g$. \ep
\end{proposition}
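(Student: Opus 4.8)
The plan is to prove $V_g = H_gS_gP_g$ by the standard two-step argument used for Tarski's classical HSP-theorem, relying entirely on the commutation and semi-commutation relations just established in Lemma~\ref{le:SHPcommut}. The easy direction is $H_gS_gP_g \leq V_g$: since every VRA is closed under $S_g$, $H_g$ and $P_g$ by definition, and $V_g(\bfK)$ is a VRA containing $\bfK$, applying any composite of these operators to $\bfK$ keeps us inside $V_g(\bfK)$. Thus it suffices to concentrate on the reverse inclusion $V_g \leq H_gS_gP_g$.

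For the reverse inclusion, the key observation is that $H_gS_gP_g(\bfK)$ is itself closed under each of the three operators $S_g$, $H_g$ and $P_g$; once this is shown, $H_gS_gP_g(\bfK)$ is a VRA containing $\bfK$ (as $\bfK \se H_gS_gP_g(\bfK)$ via the identity embeddings), and hence contains the smallest such VRA, namely $V_g(\bfK)$. To verify closure, I would push each operator past the composite $H_gS_gP_g$ from the left using the semi-commutation relations. Concretely, using $S_gH_g \leq H_gS_g$ together with $S_gS_g = S_g$ from parts (a), (b), (d) of the lemma, one computes
\begin{align*}
  S_g H_g S_g P_g \: &\leq \: H_g S_g S_g P_g \: = \: H_g S_g P_g.
\end{align*}
Similarly, $H_g H_g S_g P_g = H_g S_g P_g$ is immediate from $H_gH_g = H_g$. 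For the product operator, combining $P_gH_g \leq H_gP_g$ and $P_gS_g \leq S_gP_g$ from parts (e) and (f), along with $P_gP_g = P_g$ from part (c), yields
\begin{align*}
  P_g H_g S_g P_g \: &\leq \: H_g P_g S_g P_g \: \leq \: H_g S_g P_g P_g \: = \: H_g S_g P_g.
\end{align*}
These three computations show that $S_g$, $H_g$ and $P_g$ each map $H_gS_gP_g(\bfK)$ into itself.

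Finally, I would assemble these facts: $H_gS_gP_g(\bfK)$ is a class of regular algebras (by Propositions~\ref{pr:RegularAlgebrasVFRA} and the closure lemmas) that contains $\bfK$ and is closed under $S_g$, $H_g$ and $P_g$, so it is a VRA containing $\bfK$, whence $V_g(\bfK) \se H_gS_gP_g(\bfK)$. Together with the easy direction this gives $V_g = H_gS_gP_g$. I do not anticipate any genuine obstacle here, since all the substantive work — establishing the semi-commutation inequalities — was already carried out in Lemma~\ref{le:SHPcommut}; the only point requiring mild care is the bookkeeping of which inequalities are needed and in what order, and confirming that the relevant left-multiplications by $S_g$, $H_g$, $P_g$ really do collapse back to $H_gS_gP_g$ rather than to a longer word.
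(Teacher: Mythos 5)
Your proof is correct and is precisely the argument the paper intends: the paper derives Proposition~\ref{pr:V_g = H_gS_gP_g} ``in the usual way'' from the relations $S_gS_g = S_g$, $H_gH_g = H_g$, $P_gP_g = P_g$, $S_gH_g \leq H_gS_g$, $P_gS_g \leq S_gP_g$ and $P_gH_g \leq H_gP_g$, which is exactly the two-step Tarski-style argument you spell out (and it matches the technique the paper uses explicitly for Proposition~\ref{pr:V_g = HSP_g}). Your operator computations are the right ones, and the one point you leave implicit --- monotonicity of $H_g$, $S_g$, $P_g$, needed to push inequalities through from the right --- is standard and unproblematic.
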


For a simpler representation of the $V_g$-operator, in which
just the $P$-operator appears in the generalized form, we need
also the following two relations.

\begin{lemma}\label{le:H_gS_P_g relations}
{\rm (a)}  $H_gS \le HS_g$,  and {\rm (b)} $S_gP_g \leq SP_g$.
\end{lemma}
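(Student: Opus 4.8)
The plan is to prove the two semi-commutation relations $H_gS \le HS_g$ and $S_gP_g \le SP_g$ directly, each by taking a typical member of the left-hand class, unpacking its construction, and reassembling it as a member of the right-hand class. Both proofs follow the same strategy already used in Lemma~\ref{le:SHPcommut} to verify $S_gH_g \le HS_g$: the key idea is to convert a g-morphism (which changes the operator alphabet via some $\iota : \Si \ra \Om$) into an ordinary morphism composed with a passage to a g-derived or g-subalgebra, thereby pushing the alphabet-renaming into the $S_g$-operator and leaving only an ordinary $H$ or $S$ operator on the outside.

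For part (a), $H_gS \le HS_g$, I would start with a class $\bfK$ of unranked algebras and a typical member $\calC$ of $H_gS(\bfK)$. So there is a $\SalgA$ in $\bfK$, a subalgebra $\OalgB$ with $\Om = \Si$ (since $S$ does not change the operator alphabet) and $B \se A$, and a g-epimorphism $(\iota,\vp) : \calB \ra \calC$ onto $\GalgC$. The natural move is to realize $\calC$ instead as an ordinary epimorphic image of a suitable g-subalgebra of $\calA$. Using Lemma~\ref{le:HomomDerivAlgebras}, the g-morphism $(\iota,\vp)$ factors through the $\iota$-derived algebra $\iota(\calC)$: the map $\vp : \calB \ra \iota(\calC)$ is an ordinary $\Si$-morphism, and since $\vp$ is surjective it is an epimorphism. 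But $\iota(\calC)$ is a $\Si$-algebra on the same carrier $C$ as $\calC$, and $\calC$ is a g-derived algebra of $\iota(\calC)$ via $\iota$ itself; hence $\calC \in S_g(\{\iota(\calC)\})$ (indeed a g-derived algebra is a one-factor g-product, but we can also view the identity as the relevant g-subalgebra map). So $\calC$ is g-isomorphic to something in $S_g$ of the ordinary $\Si$-algebra $\iota(\calC)$, and $\iota(\calC) \in H(\bfK)$ because $\vp$ is an ordinary epimorphism from the subalgebra $\calB$ of $\calA$. Assembling, $\calC \in S_g H(\bfK) \se S_g H_g(\bfK)$; but to land inside $HS_g(\bfK)$ I would reverse the order, taking the g-subalgebra step first on $\calA$ and the ordinary epimorphism second, mirroring exactly the bookkeeping in the proof of $S_gH_g \le HS_g$.

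For part (b), $S_gP_g \le SP_g$, I would take a typical member $\OalgB$ of $S_gP_g(\bfK)$: a g-subalgebra of a g-product $\vk(\calA_1,\ldots,\calA_n)$ of members $\calA_i$ of $\bfK$. The goal is to exhibit $\calB$ (up to g-isomorphism) as an ordinary subalgebra of a possibly different g-product of the $\calA_i$. The mechanism is to absorb the alphabet-renaming inherent in the g-subalgebra inclusion into the defining map of the g-product. Concretely, if $\calB = (B,\Om)$ sits inside $\vk(\calA_1,\ldots,\calA_n) = (A_1\times\cdots\times A_n,\Ga)$ via $\Om \se \Ga$ and $B \se A_1\times\cdots\times A_n$, I would form the new g-product $\vk'(\calA_1,\ldots,\calA_n)$ with operator alphabet $\Om$, where $\vk' := \vk\big|_\Om$ is the restriction of $\vk$ to $\Om$. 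Then $\calB$ is literally an $\Om$-closed subset of the $\Om$-algebra $\vk'(\calA_1,\ldots,\calA_n)$, i.e.\ an ordinary subalgebra of it, so $\calB \in SP_g(\bfK)$.

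The main obstacle in both parts is purely organizational rather than mathematical: one must keep scrupulous track of which operator alphabet each intermediate algebra carries and verify at each step that the relevant map is a genuine (g-)subalgebra inclusion or (g-)epimorphism under the correct alphabet identification. In particular, for part (b) the subtlety is checking that restricting the index map $\vk$ to the subalphabet $\Om$ yields a well-defined g-product whose operations genuinely agree with those inherited by $\calB$ from the ambient g-product; once this alphabet bookkeeping is set up correctly, the operation-by-operation verification is the same routine computation as in Lemma~\ref{le:SHPcommut}, and I would omit it or relegate it to a single displayed identity.
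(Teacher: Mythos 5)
Your strategy is the right one, but as written neither part fully establishes its inequality; part (a) can be repaired easily, while part (b) has a genuine gap. On (a): the factoring idea is sound---by Lemma~\ref{le:HomomDerivAlgebras}, $\vp : \calB \ra \iota(\calC)$ is an ordinary epimorphism, so $\iota(\calC)\in HS(\bfK)$ (note: $HS(\bfK)$, not $H(\bfK)$ as you wrote), and $\calC$ is recovered from $\iota(\calC)$ by a derived-algebra step. But the derived-algebra map must go $\Ga \ra \Si$, so it cannot be ``$\iota$ itself'': you need an injective section $\mu:\Ga\ra\Si$ of $\iota$, which exists precisely because $\iota$ is surjective; then $(\mu,1_C)$ exhibits $\calC$ as g-isomorphic to the g-subalgebra $(C,\mu(\Ga))$ of $\iota(\calC)$. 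This places $\calC$ in $S_gHS(\bfK)$---the wrong order---and your closing remark only gestures at the repair. There are two ways to finish: either invoke the already-proven Lemma~\ref{le:SHPcommut}, which gives $S_gHS \le HS_gS = HS_g$ (parts (d) and (a) of that lemma); this completes your route and is in fact shorter and genuinely different from the paper's argument. Or carry out the ``reversed-order'' construction you allude to, which, once executed (choose $\Om\se\Si$ on which $\iota$ restricts to a bijection onto $\Ga$, relabel $(B,\Om)$ into a $\Ga$-algebra lying in $S_g(\bfK)$, and check that $\vp$ is then an ordinary epimorphism), is exactly the paper's proof; as submitted, that step is deferred, not done.

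On (b) the proposal misses the one non-routine ingredient. Setting $\vk' := \vk|_\Om$ shows only that a \emph{literal} g-subalgebra of a \emph{literal} g-product lies in $SP_g(\bfK)$. But by Definition~\ref{de:VRAs} a typical member $\calD$ of $S_gP_g(\bfK)$ is merely \emph{g-isomorphic}, via some bijective relabeling $\iota$ of the operator alphabets, to such a g-subalgebra $\calC$ (and the ambient algebra is merely isomorphic, via some $\psi$, to a g-product). The class $SP_g(\bfK)$ is closed under ordinary isomorphism by definition, but its closure under g-isomorphism is precisely what is at stake here, so you cannot pass from $\calC\in SP_g(\bfK)$ to $\calD\in SP_g(\bfK)$ for free. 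The paper's fix is to define $\lambda(g) := \vk(\iota^{-1}(g))$, i.e., to absorb the \emph{inverse relabeling}, and not just the sub-alphabet inclusion, into the product map, so that the new g-product $\lambda(\calA_1,\ldots,\calA_n)$ already carries the alphabet $\Ga$ of $\calD$; then $C\psi^{-1}$ is an ordinary subalgebra of it and $\psi\vp$ an ordinary isomorphism onto $\calD$. This is the very ``absorption'' mechanism you describe in prose, but your concrete construction implements it only for the inclusion $\Om\se\Ga$, whereas the actual content of (b) lies in handling the relabelings $\iota$ and $\psi$.
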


\begin{proof} Throughout this proof, $\bfK$ is again any given class of
unranked  algebras.

For proving (a), let $\calA = (A,\Sigma) \in \bfK$,
$\calB = (B,\Sigma)$ be a subalgebra of $\calA$,
and let $(\iota,\vp): \calB \ra \calC$ be a g-epimorphism onto a $\Gamma$-algebra $\calC = (C,\Ga)$.

Let $\Omega \se \Sigma$ be such that the restriction $\iota':
\Omega \ra \Gamma$ of $\iota$ to $\Omega$ is a bijection. Then
$\calB' = (B, \Omega)$ is a g-subalgebra of $\calA$.
Define a new algebra $\calB'' = (B,\Gamma)$ by setting for each $g \in \Gamma$,
$g_{\calB''} = h_{\calB'}$ for the $h \in \Omega$ such that
$\iota'(h) = g$. It is easy to show that $(\iota',1_B) : \calB'
\ra \calB''$ is a g-isomorphism. Hence, $\calB'' \in
S_g(\bfK)$. To prove $\calC \in HS_g(\bfK)$, it suffices to
verify that  $\vp: \calB'' \to \calC$ is an epimorphism of
$\Ga$-algebras.  Consider any $g \in \Gamma$, $m \ge 0$, and
$b_1$, \dots, $b_m \in B$, and let  $h \in \Omega$ be the symbol
such that $\iota(h) = g$. Then
$$
    g_{\calB''} (b_1, \dots, b_m) \vp = h_{\calB'} (b_1, \dots, b_m) \vp
    = h_{\calB} (b_1, \dots, b_m) \vp
    = g_{\calC} (b_1\vp, \dots, b_m\vp).
$$
Since $\vp$ is surjective, this shows that $\vp$ is an
epimorphism.

To prove (b), let $n\geq 0$, $\calA_i = (A_i,\Sigma_i) \in \bfK$
for each $i\in [n]$,  $\vk: \Omega \to \Sigma_1 \times \dots
\times \Sigma_n$ be a mapping, $\psi: \vk(\calA_1, \dots, \calA_n)
\ra \calB$ be an isomorphism to an algebra $\OalgB$, $\calC =
(C,\Omega')$ be a g-subalgebra of $\calB$,  and $(\iota,\vp)$ a
g-isomorphism from $\calC$ to $\calD = (D,\Gamma)$. Then $\calD$
is a typical representative of $S_gP_g(\bfK)$.

Let $\lambda: \Gamma \to \Sigma_1 \times \dots \times \Sigma_n$ be
the mapping such that $\lambda(g) = \vk(\iota^{-1}(g))$ for each
$g\in \Ga$. Then $\calE = (C\psi^{-1},\Gamma)$ is a subalgebra of
the g-product $\lambda(\calA_1, \dots, \calA_n) =
(A_1\times\ldots\times A_n,\Ga)$. Indeed, let $g \in \Gamma$, $m
\ge 0$, and $\bfa_1 = (a_{11}, \dots, a_{1n})$, \dots, $\bfa_m =
(a_{m1}, \dots, a_{mn}) \in C\psi^{-1}$. If $\iota^{-1}(g) = h \in
\Omega'$ and $\vk(h) = (f_1, \dots, f_n)$, then
\begin{align*}
    g_{\lambda(\calA_1, \dots, \calA_n)}(\bfa_1, \dots, \bfa_m) \psi
     = ( (f_1)_{\calA_1} (a_{11}, \dots, a_{m1}), \dots,
         (f_n)_{\calA_n} (a_{1n}, \dots, a_{mn}) ) \psi \\
     = h_{\vk(\calA_1, \dots, \calA_n)}
      (\bfa_1, \dots, \bfa_m) \psi
     = h_{\calB} (\bfa_1\psi, \dots, \bfa_m\psi)
     = h_{\calC} (\bfa_1\psi, \dots, \bfa_m\psi)
\end{align*}
is in $C$ since $C$ is an $\Omega'$-closed subset of $\calB$.
To show that also $\calD$ is in $SP_g(\bfK)$, we verify that
$\psi\vp: \calE \ra \calD$, with $\psi$ restricted to $C\vpi$, is an isomorphism. For this, consider
any $g \in \Gamma$, $m \ge 0$ and $\bfa_1$, \dots, $\bfa_m \in
C\psi^{-1}$. If $h\in \Om'$ is the symbol such that $\iota(h) =
g$, then
\begin{align*}
    g_{\calE}
      (\bfa_1, \dots, \bfa_m) \psi\vp &= g_{\lambda(\calA_1, \dots, \calA_n)}
         (\bfa_1, \dots, \bfa_m) \psi\vp
    = h_{\calC} (\bfa_1\psi, \dots, \bfa_m\psi) \vp \\
    &= g_{\calD} (\bfa_1\psi\vp, \dots, \bfa_m\psi\vp).
\end{align*}
Moreover, it is clear that $\psi\vp$ is bijective.
\end{proof}

Now we get the simplified representation of the $V_g$-operator.

\begin{proposition}\label{pr:V_g = HSP_g} $V_g = HSP_g$.
\end{proposition}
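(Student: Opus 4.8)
The plan is to leverage the representation $V_g = H_gS_gP_g$ already established in Proposition~\ref{pr:V_g = H_gS_gP_g}, so that the whole statement reduces to the operator identity $H_gS_gP_g = HSP_g$. One inclusion is immediate: since $H \le H_g$ and $S \le S_g$ (as noted in the paragraph preceding Lemma~\ref{le:SHPcommut}), monotonicity of the operators gives $HSP_g \le H_gSP_g \le H_gS_gP_g = V_g$. Thus the real content is the reverse inequality $H_gS_gP_g \le HSP_g$, and this is where the two new relations of Lemma~\ref{le:H_gS_P_g relations} come in.

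For the reverse inequality I would simply chain the two semi-commutation relations, pushing $H_g$ and $S_g$ down to their non-generalized versions while leaving $P_g$ untouched. Concretely, applying Lemma~\ref{le:H_gS_P_g relations}(b) (that $S_gP_g \le SP_g$) inside $H_g$, then Lemma~\ref{le:H_gS_P_g relations}(a) (that $H_gS \le HS_g$) with $P_g$ post-composed, and finally Lemma~\ref{le:H_gS_P_g relations}(b) once more inside $H$, yields
$$V_g = H_gS_gP_g \le H_gSP_g \le HS_gP_g \le HSP_g.$$
Together with the inclusion $HSP_g \le V_g$ from the previous paragraph, this forces equality throughout, so $V_g = HSP_g$. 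Each step uses only the elementary fact that all of our class operators are monotone, so that $P \le Q$ implies both $RP \le RQ$ and $PR \le QR$ for any operator $R$; this is exactly what licenses applying a relation of Lemma~\ref{le:H_gS_P_g relations} inside a surrounding $H_g$, $H$, or before a trailing $P_g$.

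I do not expect a genuine obstacle here, since the substantive work has been front-loaded into Lemma~\ref{le:H_gS_P_g relations}: the present proof is a purely formal three-step reduction. The only points requiring care are bookkeeping of the \emph{direction} of each inequality and the repeated, silent appeal to monotonicity under composition. It is also worth noting, in line with the remark motivating this proposition, that $P_g$ is never moved across any other operator in the chain, which is precisely why the resulting representation keeps only the $P$-operator in its generalized form.
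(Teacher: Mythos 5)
Your proof is correct, but it takes a genuinely different route from the paper's. You reduce everything to the operator inequality $H_gS_gP_g \le HSP_g$, obtained by the three-step chain $H_gS_gP_g \le H_gSP_g \le HS_gP_g \le HSP_g$ (using only Lemma~\ref{le:H_gS_P_g relations}(a), (b) and monotonicity), and then invoke Proposition~\ref{pr:V_g = H_gS_gP_g} for both directions; each step of the chain is sound, including the silent appeals to monotonicity and to post-composition. The paper instead argues in the classical Tarski style: since $\bfK \se HSP_g(\bfK) \se H_gS_gP_g(\bfK) = V_g(\bfK)$, it suffices to show that $HSP_g(\bfK)$ is itself a VRA, which is done by verifying the three closure inclusions $S_g(HSP_g) \le HSP_g$, $H_g(HSP_g) \le HSP_g$ and $P_g(HSP_g) \le HSP_g$; this uses, besides Lemma~\ref{le:H_gS_P_g relations}, also several relations from Lemma~\ref{le:SHPcommut} (e.g.\ $P_gH \le HP_g$, $P_gS \le SP_g$, $P_gP_g = P_g$, $S_gS = S_g$, $H_gH = H_g$) together with the minimality of $V_g(\bfK)$ among VRAs containing $\bfK$. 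What your version buys is brevity and economy of means -- you never need the $P_g$-commutation relations at all, since $P_g$ is never moved; what the paper's version buys is the explicit exhibition of $HSP_g(\bfK)$ as a variety, i.e.\ the closure facts themselves, which is the standard template for HSP-type representations and makes the proof less dependent on the exact statement of the preceding proposition.
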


\begin{proof}  Since $\bfK \se HSP_g(\bfK) \se H_gS_gP_g(\bfK) = V_g(\bfK)$
for any class $\bfK$ of regular algebras, it suffices to verify
that $HSP_g(\bfK)$ is a VRA. This holds because
\begin{align*}
    S_g(HSP_g) &\le HS_gSP_g \le HS_gP_g \le HSP_g, \\
    H_g(HSP_g) &\le H_gSP_g \le HS_gP_g \le HSP_g, \; \mathrm{and}\\
    P_g(HSP_g) &\le HP_gSP_g \le HSP_gP_g = HSP_g,
\end{align*}
by Lemmas~\ref{le:SHPcommut} and~\ref{le:H_gS_P_g relations}.
\end{proof}

Finally, let us note the following important fact.

\begin{lemma}\label{le:SAs, RAs and VRAs} Let $\bfK$ be a VRA. If $(\si,\theta)$ is a g-congruence of an unranked algebra $\SalgA$, then $\calA/\theta \in \bfK$ iff $\calA/(\si,\theta) \in \bfK$.
\end{lemma}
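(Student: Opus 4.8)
The plan is to exploit that both quotients live on the same underlying set $A/\theta$ and that their operations are matched by the natural projection $\si_\natural : \Si \ra \Si/\si$. First note that $\calA/\theta$ is defined: since $(\si,\theta)\in\GCon(\calA)$, we have $\theta\in\Con(\calA)$ by the remark following Definition~\ref{de:g-Congruence}. Comparing the defining equations in Definition~\ref{de:QuotientAlgebra} (and the definition of $\calA/\theta$ given just after Lemma~\ref{le:Quotients&Homomorphisms}), we obtain, for all $f\in\Si$, $m\geq 0$ and $a_1,\ldots,a_m\in A$,
$$ f_{\calA/\theta}(a_1\theta,\ldots,a_m\theta) = f_\calA(a_1,\ldots,a_m)\theta = (f\si)_{\calA/(\si,\theta)}(a_1\theta,\ldots,a_m\theta). $$
Thus the operation named $f$ in $\calA/\theta$ coincides with the operation named $f\si = \si_\natural(f)$ in $\calA/(\si,\theta)$, and both implications follow from this single identity.

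For the direction $\calA/\theta\in\bfK \Ra \calA/(\si,\theta)\in\bfK$, I would verify that the pair $(\si_\natural, 1_{A/\theta})$ is a g-epimorphism from $\calA/\theta$ onto $\calA/(\si,\theta)$. The displayed identity is exactly the g-morphism condition of Definition~\ref{de:g-morphism} for the operator map $\si_\natural : \Si \ra \Si/\si$ and the element map $1_{A/\theta}$, and both component maps are surjective. Hence $\calA/(\si,\theta)$ is a g-image of $\calA/\theta$, and since $\bfK$ is closed under $H_g$, it lies in $\bfK$.

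For the converse, I would observe that $\calA/\theta$ is literally the $\si_\natural$-derived algebra of $\calA/(\si,\theta)$. Indeed, by Definition~\ref{de:Derived algebras} the algebra $\si_\natural(\calA/(\si,\theta))$ is the $\Si$-algebra on $A/\theta$ whose operation for each $f\in\Si$ is $\si_\natural(f)_{\calA/(\si,\theta)} = (f\si)_{\calA/(\si,\theta)}$, which by the displayed identity equals $f_{\calA/\theta}$. So $\si_\natural(\calA/(\si,\theta)) = \calA/\theta$ as $\Si$-algebras. Since every VRA is closed under forming g-derived algebras (Lemma~\ref{le:VRAsSolid}), $\calA/(\si,\theta)\in\bfK$ forces $\calA/\theta\in\bfK$.

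There is no genuine obstacle here beyond careful bookkeeping. The only point needing attention is to confirm that the operations match under $\si_\natural$ precisely, so that $\calA/\theta$ is equal to the derived algebra $\si_\natural(\calA/(\si,\theta))$ (not merely g-isomorphic to it) and that $(\si_\natural,1_{A/\theta})$ is a bona fide g-epimorphism. Once the displayed identity is established, each implication is an immediate application of closure of $\bfK$ under $H_g$ and under g-derived algebras, respectively.
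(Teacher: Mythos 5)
Your proposal is correct and follows essentially the same route as the paper: the forward direction via the g-epimorphism $(\si_\natural,1_{A/\theta}) : \calA/\theta \ra \calA/(\si,\theta)$ and closure of $\bfK$ under $H_g$, and the converse by identifying $\calA/\theta$ with the derived algebra $\si_\natural(\calA/(\si,\theta))$ and invoking Lemma~\ref{le:VRAsSolid}. The operation-matching identity you display is exactly the computation the paper uses, so no further changes are needed.
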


\begin{proof} It is easy to verify that $(\si_\natural,1_{A/\theta}) :
\calA/\theta \ra \calA/(\si,\theta)$ is a g-epimorphism. Hence,
$\calA/\theta \in \bfK$ implies $\calA/(\si,\theta) \in \bfK$.

Assume now that $\calA/(\si,\theta) \in \bfK$. The g-derived
algebra $\si_\natural(\calA/(\si,\theta))$ is actually the algebra
$\calA/\theta$. Indeed, both are $\Si$-algebras with the same set
$A/\theta$ of elements, and for any $f\in \Si$, $m\geq 0$ and
$a_1,\ldots,a_m\in A$,
\begin{align*}
f_{\si_\natural(\calA/(\si,\theta))}(a_1\theta,\ldots,a_m\theta)
\: &= \: (f\si)_{\calA/(\si,\theta)}(a_1\theta,\ldots,a_m\theta)\:
= \: f_\calA(a_1,\ldots,a_m)\theta\\
&= \: f_{\calA/\theta}(a_1\theta,\ldots,a_m\theta).
\end{align*}
Hence, $\calA/\theta \in \bfK$ by Lemma \ref{le:VRAsSolid}.
\end{proof}

\section{Regular congruences}\label{se:Reg congr}
We shall now introduce the congruences that play here the same
role as the congruences of finite index in the theory of ranked
varieties. In what follows, $\SalgA$ is again  any unranked
algebra. Let $\FCon(\calA) := \{\theta \in \Con(\calA) \mid
A/\theta \; \mathrm{finite}\}$ and let $\FGCon(\calA) :=
\{(\si,\theta)\in \GCon(\calA) \mid \theta \in \FCon(\calA)\}$. If
$\theta\in \Eq(A)$, we treat $A/\theta$ also as an alphabet
and $(A/\theta)^*$ as the free monoid generated by it.

\begin{definition}\label{de:Regular congr}{\rm We call a congruence
$\theta$ of $\calA$ \emph{regular} if $\theta \in \FCon(\calA)$
and for every $f\in \Si$ and every $a\in A$,
$f_{\calA/\theta}^{-1}(a\theta)$ is a regular language over
$A/\theta$. A g-congruence $(\si,\theta)$ of $\calA$ is
\emph{regular} if $\theta$ is a regular congruence. Let
$\RCon(\calA)$ and $\RGCon(\calA)$, respectively, denote the sets
of regular congruences and regular g-congruences of $\calA$. }\ep
\end{definition}

For any $\theta\in \Con(\calA)$, let $\eta_\theta : A^* \ra
(A/\theta)^*$ be the monoid morphism such that $a\eta_\theta =
a\theta$ for each $a\in A$. It is easy to see that for all $f\in
\Si$, $a\in A$ and $w\in A^*$, $w\eta_\theta \in
f_{\calA/\theta}^{-1}(a\theta)$ iff $f_\calA(w) \in a\theta$, and
hence $f_\calA^{-1}(a\theta) =
f_{\calA/\theta}^{-1}(a\theta)\eta_\theta^{-1}$. Since
$\eta_\theta$ is surjective, this implies also
$f_{\calA/\theta}^{-1}(a\theta) =
f_\calA^{-1}(a\theta)\eta_\theta$. These equalities yield the
following lemma that can be used for showing that a congruence is
regular. Note that in $f_{\calA/\theta}^{-1}(a\theta)$, the set
$a\theta$ is regarded as an element of $A/\theta$, but in
$f_\calA^{-1}(a\theta)$ it is a subset of $A$.

\begin{lemma}\label{le:Congr regular} Let $\theta \in \Con(\calA)$.
For any $f\in \Si$ and $a\in A$, $f_{\calA/\theta}^{-1}(a\theta)$
is a regular language over $A/\theta$ iff $f_\calA^{-1}(a\theta)$
is a regular language over $A$. \ep
\end{lemma}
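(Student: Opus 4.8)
The plan is to lean entirely on the two identities established in the paragraph immediately preceding the lemma, namely
$$f_\calA^{-1}(a\theta) = f_{\calA/\theta}^{-1}(a\theta)\eta_\theta^{-1} \qquad \text{and} \qquad f_{\calA/\theta}^{-1}(a\theta) = f_\calA^{-1}(a\theta)\eta_\theta,$$
together with the classical closure properties of regular languages under monoid morphisms and their inverses. The point to keep in mind is that $\eta_\theta : A^* \ra (A/\theta)^*$ is a morphism of \emph{free} monoids, so both its direct image and its inverse image send regular languages to regular languages; this is the only external ingredient required.

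For the implication from right to left, I would assume that $f_{\calA/\theta}^{-1}(a\theta)$ is a regular language over $A/\theta$. Applying the first identity, $f_\calA^{-1}(a\theta)$ is the inverse image of this regular language under the morphism $\eta_\theta$, and since regular languages are closed under inverse morphisms, $f_\calA^{-1}(a\theta)$ is regular over $A$. For the converse, I would assume that $f_\calA^{-1}(a\theta)$ is regular over $A$; by the second identity $f_{\calA/\theta}^{-1}(a\theta)$ is the morphic image of this regular language under $\eta_\theta$, and as regular languages are preserved under morphic images (using that $\eta_\theta$ is surjective only to identify the image with the whole target alphabet, not for regularity), the set $f_{\calA/\theta}^{-1}(a\theta)$ is regular over $A/\theta$.

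Since both directions are now immediate, there is no genuine obstacle; the entire content of the lemma is already packed into the two preparatory identities, and the proof reduces to citing the standard fact that the class $Rec(A^*)$ is closed under both $\eta_\theta$ and $\eta_\theta^{-1}$. The only care needed is notational, namely to remember that $a\theta$ is read as a single \emph{element} of $A/\theta$ inside $f_{\calA/\theta}^{-1}(a\theta)$ but as a \emph{subset} of $A$ inside $f_\calA^{-1}(a\theta)$, exactly as flagged in the remark following the statement. This observation guarantees that the two occurrences of $a\theta$ in the identities are the correct ones and that the image/preimage bookkeeping matches up.
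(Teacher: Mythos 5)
Your proof is correct and is exactly the argument the paper intends: the lemma is stated with no separate proof precisely because, as the preceding paragraph says, ``these equalities yield the following lemma,'' i.e., one applies closure of regular languages under the free-monoid morphism $\eta_\theta$ and its inverse to the two identities $f_\calA^{-1}(a\theta) = f_{\calA/\theta}^{-1}(a\theta)\eta_\theta^{-1}$ and $f_{\calA/\theta}^{-1}(a\theta) = f_\calA^{-1}(a\theta)\eta_\theta$. The only quibble is your parenthetical about surjectivity: its real role (already discharged in the paper's preceding paragraph) is to make the second identity an equality rather than an inclusion, not anything about the target alphabet; since you cite that identity as given, this does not affect the proof.
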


Let us consider a simple example of a finite non-regular
congruence.

\begin{example}\label{ex:Non-regular congruence}{\rm
 For $\Sigma = \{f\}$ and $X=\es$, let $T$ be the $\SX$-tree language such that
\begin{itemize}
  \item[(1)] $f\in T$, and
  \item[(2)] for any $n>0$ and $t_1,\ldots,t_n \in \SXt$, $f(t_1,\ldots,t_n) \in T$ iff (a) $n$ is even, (b) $t_1 ,\ldots , t_{n/2} \in T$, and (c) $t_{n/2+1} ,\ldots , t_{n} \notin T$.
\end{itemize}
It is easy to see that $\theta \in \FCon(\SXta)$ when $\theta$ is
defined by $ s \theta t :\LRa ( s \in T \lra t \in T )$ ($s,t \in
\SXt$). However, $f_{\SXta/\theta}^{-1}(f\theta) = \{ (f \theta)^n
(f(f) \theta)^n \mid n \geq 0 \}$ is not a regular language, and
hence $\theta$ is not a regular congruence.}\ep
\end{example}

If $\SalgA$ is a regular algebra, then for any $\theta \in
\Con(\calA)$,  $f\in \Si$ and $a\in A$,  $f_\calA^{-1}(a\theta)$
is the union of the finitely many regular sets $f_\calA^{-1}(b)$,
where $b\in a\theta$. Hence, Lemma \ref{le:Congr regular} yields
the following proposition.

\begin{proposition}\label{pr:RegCongr of RegAlg} Every congruence of a regular algebra is regular. \ep
\end{proposition}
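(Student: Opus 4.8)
The plan is to reduce the whole statement to the closure of regular languages under finite union, using Lemma~\ref{le:Congr regular} as the bridge between the quotient and $\calA$ itself. First I would dispose of the finiteness requirement for free: since a regular algebra $\SalgA$ is by definition finite, the set $A$ is finite, so $A/\theta$ is finite for \emph{every} congruence $\theta \in \Con(\calA)$, and hence $\theta \in \FCon(\calA)$ automatically. It therefore remains only to check, for each $f \in \Si$ and each $a \in A$, that $f_{\calA/\theta}^{-1}(a\theta)$ is a regular language over $A/\theta$.

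By Lemma~\ref{le:Congr regular}, this last condition holds if and only if $f_\calA^{-1}(a\theta)$ is a regular language over $A$, which moves the problem back to $\calA$, where regularity is part of the hypothesis. At this point I must keep straight the dual role of $a\theta$ flagged in the remark preceding that lemma: on the quotient side it denotes a single element of $A/\theta$, whereas in $f_\calA^{-1}(a\theta)$ it is the block $\{b \in A \mid b\,\theta\,a\}$, a subset of $A$. The key computation is then the decomposition
\[
f_\calA^{-1}(a\theta) \: = \: \{w \in A^* \mid f_\calA(w) \in a\theta\} \: = \: \bigcup_{b\in a\theta} f_\calA^{-1}(b).
\]
Because $A$ is finite, the block $a\theta$ is a finite set, so the right-hand side is a finite union; and because $\calA$ is regular, each $f_\calA^{-1}(b)$ is a regular language over $A$. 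A finite union of regular languages is regular, so $f_\calA^{-1}(a\theta)$ is regular, and Lemma~\ref{le:Congr regular} then yields that $f_{\calA/\theta}^{-1}(a\theta)$ is regular over $A/\theta$. Since $f$ and $a$ were arbitrary, $\theta$ is a regular congruence.

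I do not expect a genuine obstacle in this argument; the statement is essentially a bookkeeping consequence of finite-union closure together with Lemma~\ref{le:Congr regular}. The only step needing any care is precisely the passage through that lemma, where one must correctly interpret $a\theta$ as an element of $A/\theta$ on one side and as a finite subset of $A$ on the other, so that the finite-union decomposition above is the right object to examine.
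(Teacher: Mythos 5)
Your proof is correct and is essentially the paper's own argument: the paper likewise observes that $f_\calA^{-1}(a\theta)$ is the union of the finitely many regular sets $f_\calA^{-1}(b)$ for $b\in a\theta$, and then invokes Lemma~\ref{le:Congr regular} to transfer regularity to $f_{\calA/\theta}^{-1}(a\theta)$. Your explicit handling of the finite-index condition and of the two readings of $a\theta$ is just a careful spelling-out of what the paper leaves implicit.
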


It is clear that $\FCon(\calA)$ and $\FGCon(\calA)$ are filters of
the lattices $\Con(\calA)$ and $\GCon(\calA)$, respectively.
Moreover, the following hold.

\begin{lemma}\label{le:FCon filter} For any unranked algebra
$\SalgA$, $\RCon(\calA)$ is a filter of the lattice $\Con(\calA)$,
and similarly, $\RGCon(\calA)$ is a filter of $\GCon(\calA)$.
\end{lemma}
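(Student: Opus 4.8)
The plan is to verify the two defining conditions of a filter---closure under meets and upward closure---for $\RCon(\calA)$ first, and then to deduce the statement for $\RGCon(\calA)$ from it. Throughout I would work with the languages $f_\calA^{-1}(a\theta)$ over the fixed alphabet $A$ rather than with $f_{\calA/\theta}^{-1}(a\theta)$ over the varying quotient alphabet $A/\theta$; by Lemma~\ref{le:Congr regular} a congruence $\theta$ of finite index is regular precisely when each $f_\calA^{-1}(a\theta)$ ($f\in \Si$, $a\in A$) is a regular language over $A$. This reduction is the one real idea in the argument: it places all the languages in question over a single alphabet, so that the closure of regular languages under finite Boolean operations becomes available.

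For closure under meets, suppose $\theta, \theta' \in \RCon(\calA)$; recall that the meet in $\Con(\calA)$ is the intersection $\theta \cap \theta'$. First, $\theta\cap\theta' \in \FCon(\calA)$ because $A/(\theta\cap\theta')$ embeds into $A/\theta \times A/\theta'$ via $a(\theta\cap\theta') \mapsto (a\theta, a\theta')$ and is hence finite. For regularity I would observe that the class of $a$ modulo $\theta\cap\theta'$ is $a\theta \cap a\theta'$, whence $f_\calA^{-1}(a(\theta\cap\theta')) = f_\calA^{-1}(a\theta) \cap f_\calA^{-1}(a\theta')$. Both sets on the right are regular over $A$ by the reduction above, so their intersection is regular, and Lemma~\ref{le:Congr regular} then gives $\theta\cap\theta' \in \RCon(\calA)$.

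For upward closure, suppose $\theta \in \RCon(\calA)$ and $\theta \se \theta'$ with $\theta' \in \Con(\calA)$. Then $A/\theta'$ is a quotient of the finite set $A/\theta$, so $\theta' \in \FCon(\calA)$. Since $\theta \se \theta'$, each $\theta'$-class is a union of $\theta$-classes, and this union is finite as $A/\theta$ is finite; writing $a\theta'$ as the finite union of the $\theta$-classes it contains expresses $f_\calA^{-1}(a\theta')$ as the corresponding finite union of the regular sets $f_\calA^{-1}(b\theta)$, hence regular. By Lemma~\ref{le:Congr regular}, $\theta' \in \RCon(\calA)$. This establishes that $\RCon(\calA)$ is a filter of $\Con(\calA)$.

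Finally, for $\RGCon(\calA)$ I would pass through the $\theta$-component. Since meets and the order in $\GCon(\calA)$ are formed componentwise, and a g-congruence $(\si,\theta)$ is regular exactly when $\theta$ is regular, both filter conditions transfer immediately: the meet $(\si\cap\si',\theta\cap\theta')$ of two regular g-congruences is regular because $\theta\cap\theta' \in \RCon(\calA)$; and if $(\si,\theta) \le (\si',\theta')$ with $(\si,\theta)$ regular, then $\theta \se \theta'$ and $\theta' \in \Con(\calA)$, so $\theta' \in \RCon(\calA)$ by upward closure of $\RCon(\calA)$, making $(\si',\theta')$ regular. I expect no genuine obstacle here; the only point requiring a little care is keeping the two alphabets $A$ and $A/\theta$ straight, which the appeal to Lemma~\ref{le:Congr regular} handles cleanly.
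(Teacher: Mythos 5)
Your proof is correct and takes essentially the same route as the paper's: the same reduction to languages over $A$ via Lemma~\ref{le:Congr regular}, the same intersection identity $f_\calA^{-1}(a(\theta\cap\theta')) = f_\calA^{-1}(a\theta)\cap f_\calA^{-1}(a\theta')$ for meets, the same finite-union argument for upward closure, and the same componentwise transfer to $\RGCon(\calA)$. The only point the paper checks that you omit is nonemptiness of $\RCon(\calA)$ (required by the paper's definition of a filter), which is immediate since $\nabla_A \in \RCon(\calA)$.
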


\begin{proof} Since $\RCon(\calA)$ contains at least $\nabla_A$, it is
nonempty.

If $\theta,\rho\in \RCon(\calA)$, then clearly $\theta\cap \rho
\in \FCon(\calA)$. Moreover, for any $f\in \Si$ and $a\in A$,
$f_\calA^{-1}(a(\theta\cap\rho)) = f_\calA^{-1}(a\theta) \cap
f_\calA^{-1}(a\rho)$, and hence also
$f_{\calA/\theta\cap\rho}^{-1}(a(\theta\cap\rho))$ is regular by
Lemma \ref{le:Congr regular}.

Next, let $\theta\in \RCon(\calA)$, $\rho\in \Con(\calA)$
and $\theta \se \rho$. Of course, $\rho \in \FCon(\calA)$.
Moreover, for each $a\in A$ there is a finite set of elements
$a_1,\ldots,a_k\in A$ ($k\geq 1$) such that $a\rho = a_1\theta
\cup \ldots \cup a_k\theta$, and hence $f_\calA^{-1}(a\rho) =
f_\calA^{-1}(a_1\theta) \cup \ldots \cup f_\calA^{-1}(a_k\theta)$
is a regular language for every $f\in \Si$. Hence $\rho$ is
regular by Lemma \ref{le:Congr regular}.

That $\RGCon(\calA)$ is a filter of $\GCon(\calA)$ follows
immediately from the fact that $\RCon(\calA)$ is a filter of
$\Con(\calA)$.
\end{proof}

The following connection between regular algebras and regular
congruences is a direct consequence of the definitions of these
concepts.

\begin{proposition}\label{pr:RegCon and RegAlg} If $\theta$ is
a congruence of an unranked algebra $\calA$, then $\calA/\theta$
is a regular algebra exactly in case $\theta$ is a regular
congruence. Similarly, for any g-congruence $(\si,\theta)$ of
$\calA$, the g-quotient  $\calA/(\si,\theta)$ is regular if and
only if $(\si,\theta)\in \RGCon(\calA)$.\ep
\end{proposition}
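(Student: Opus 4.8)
The plan is to prove both equivalences by simply unwinding the definitions of regular algebra (Definition~\ref{de:Regular algebras}) and regular (g-)congruence (Definition~\ref{de:Regular congr}), the only genuine point being the reconciliation of how operations are indexed in the plain quotient $\calA/\theta$ versus the g-quotient $\calA/(\si,\theta)$.

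First I would dispose of the ordinary-congruence statement. By definition $\calA/\theta = (A/\theta,\Si)$ is regular exactly when $A/\theta$ is finite and $f_{\calA/\theta}^{-1}(b)$ is a regular language over $A/\theta$ for every $f\in\Si$ and every $b\in A/\theta$. Now $A/\theta$ is finite precisely when $\theta\in\FCon(\calA)$, and every element $b$ of $A/\theta$ is of the form $a\theta$ for some $a\in A$; hence the inverse-image condition is literally the requirement appearing in Definition~\ref{de:Regular congr}. Thus the two conditions defining ``$\calA/\theta$ regular'' and ``$\theta$ a regular congruence'' coincide verbatim, and the first equivalence follows with no further work.

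For the g-congruence statement I would reduce it to the ordinary case. By Definition~\ref{de:Regular congr}, $(\si,\theta)\in\RGCon(\calA)$ means exactly that $\theta$ is a regular congruence, which by the part already proved is equivalent to $\calA/\theta$ being regular. It therefore suffices to show that $\calA/(\si,\theta)$ is regular if and only if $\calA/\theta$ is. Both algebras have the same carrier $A/\theta$, so only the inverse-image conditions need to be compared. The crucial observation is that the operations of the two quotients are the \emph{same} functions on $(A/\theta)^*$: for each $f\in\Si$ one has $(f\si)_{\calA/(\si,\theta)} = f_{\calA/\theta}$, which is exactly the identity computed in the proof of Lemma~\ref{le:SAs, RAs and VRAs}. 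Consequently $g_{\calA/(\si,\theta)}^{-1}(b) = f_{\calA/\theta}^{-1}(b)$ whenever $g = f\si$, so that the two families of languages $\{\,g_{\calA/(\si,\theta)}^{-1}(b)\,\}$ and $\{\,f_{\calA/\theta}^{-1}(b)\,\}$ over $A/\theta$ are identical. Hence every inverse image required to be regular for $\calA/(\si,\theta)$ is one required for $\calA/\theta$, and conversely, which gives the desired equivalence.

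There is no real obstacle here — the result is, as stated, a direct consequence of the definitions. The one thing demanding a moment of care is the difference in operator indexing: $\calA/\theta$ is indexed by $\Si$ (possibly with distinct symbols inducing the same operation) while $\calA/(\si,\theta)$ is indexed by the quotient alphabet $\Si/\si$ (with $\si$-equivalent symbols collapsed). Since $\si$-equivalent symbols induce identical operations on $A/\theta$, passing to $\Si/\si$ neither introduces nor removes any operation function, so the collection of inverse images — and therefore regularity — is unaffected.
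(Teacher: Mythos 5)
Your proof is correct and follows exactly the route the paper intends: the paper states this proposition without proof as ``a direct consequence of the definitions,'' and your argument simply makes that explicit, including the one point worth care, namely that $(f\si)_{\calA/(\si,\theta)} = f_{\calA/\theta}$ as functions on $(A/\theta)^*$ (the identity also appearing in the proof of Lemma~\ref{le:SAs, RAs and VRAs}), so the plain quotient and the g-quotient have the same operation functions and hence the same regularity status.
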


\section{Syntactic congruences and algebras}\label{SA and RA}

The syntactic congruences and the syntactic algebras of unranked
tree languages are defined in quite the same way   as for ranked
tree languages. Also here it is advantageous to define the notions
for subsets of general unranked algebras.

\begin{definition}\label{de:Syntactic Congr and Alg}{\rm The \emph{syntactic congruence}  $\theta_H$ of a subset $H\se A$ of an unranked algebra $\SalgA$ is defined by
$$
    a\, \theta_H \, b \: :\LRa \: (\forall p\in \Tr(\calA))(p(a)\in H \lra p(b)\in H) \qquad (a,b \in A),
$$
and $\SA(H) := \calA/\theta_H$ is the \emph{syntactic algebra} of
$H$. The natural morphism $\vp_H : \calA \ra \SA(H), \, a \mapsto
a\theta_H,$ is called the \emph{syntactic morphism} of $H$. }\ep
\end{definition}

Let us recall that an equivalence $\theta \in \Eq(A)$ is said to
\emph{saturate} a subset $H\se A$ if $H$ is the union of some
$\theta$-classes.
The following lemma can be proved exactly in the same way as for string languages or ordinary tree languages.

\begin{lemma}\label{le:SyntCongrSat} For any subset $H\se A$ of an unranked algebra $\SalgA$, $\theta_H$ is the greatest congruence of $\calA$ that saturates $H$. \ep
\end{lemma}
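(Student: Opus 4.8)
The plan is to establish the three separate assertions hidden in the phrase ``greatest congruence that saturates $H$'': that $\theta_H$ is a congruence, that it saturates $H$, and that it contains every congruence with the latter property. The whole argument rests on Lemma~\ref{le:Transl and Congr}, which links congruences to invariance under translations in both directions, together with the fact that $\Tr(\calA)$ contains $1_A$ and is closed under composition. Reflexivity, symmetry and transitivity of $\theta_H$ are immediate from the biconditional in its definition, so I would record that $\theta_H \in \Eq(A)$ at the outset and then proceed to the three substantive points.

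First I would show $\theta_H$ is a congruence by verifying, via the second half of Lemma~\ref{le:Transl and Congr}, that it is invariant under every elementary translation. Given an elementary translation $q$ and elements $a \,\theta_H\, b$, I want $q(a)\,\theta_H\,q(b)$; so let $p\in\Tr(\calA)$ be arbitrary. The composite $p\circ q$ again lies in $\Tr(\calA)$, hence applying the defining condition of $\theta_H$ to the translation $p\circ q$ gives $p(q(a))\in H \lra p(q(b))\in H$. As $p$ was arbitrary, this is exactly $q(a)\,\theta_H\,q(b)$, and Lemma~\ref{le:Transl and Congr} then yields $\theta_H\in\Con(\calA)$.

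For saturation I would instantiate the defining condition at the translation $p=1_A\in\Tr(\calA)$: if $a\,\theta_H\,b$ then $a\in H \lra b\in H$, so no $\theta_H$-class meets both $H$ and its complement, i.e.\ $H$ is a union of $\theta_H$-classes. For maximality, let $\rho\in\Con(\calA)$ saturate $H$ and suppose $a\,\rho\,b$. By the first half of Lemma~\ref{le:Transl and Congr}, $\rho$ is invariant under all translations, so $p(a)\,\rho\,p(b)$ for every $p\in\Tr(\calA)$; since $\rho$ saturates $H$ this forces $p(a)\in H \lra p(b)\in H$, and as $p$ ranges over all of $\Tr(\calA)$ we conclude $a\,\theta_H\,b$. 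Thus $\rho\se\theta_H$.

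There is no genuine obstacle here, but the step needing the most care is the congruence property, because it is the only place where the closure of $\Tr(\calA)$ under composition is essential: one must pass from the pair $(a,b)$ being $\theta_H$-related via \emph{all} translations to the pair $(q(a),q(b))$ being so related, and this works precisely because prepending $q$ keeps one inside $\Tr(\calA)$. The maximality step is the dual use of Lemma~\ref{le:Transl and Congr} and is entirely routine once the invariance of $\rho$ under all translations is invoked.
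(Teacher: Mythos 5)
Your proof is correct and follows essentially the same route as the paper's: the congruence property via composition of translations inside $\Tr(\calA)$ together with Lemma~\ref{le:Transl and Congr}, saturation via the identity translation $1_A$, and maximality via the invariance of any saturating congruence under all translations. The only cosmetic difference is that you verify invariance under elementary translations only (which the lemma makes sufficient), whereas the paper runs the same argument for arbitrary translations $q\in\Tr(\calA)$.
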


The following fact is an immediate consequence of Proposition
\ref{pr:TranslRegAlg}.

\begin{proposition}\label{pr:SA(H) effectively computed} If $\SalgA$ is an effectively given regular algebra, then the syntactic congruence $\theta_H$ and the syntactic algebra $\SA(H)$ of any subset $H \se A$ can be effectively constructed. \ep
\end{proposition}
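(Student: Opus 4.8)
The plan is to reduce everything to finite data by invoking Proposition~\ref{pr:TranslRegAlg}. Since $\calA$ is effectively given and regular, that proposition yields the finite set $\Tr(\calA)$ of all translations of $\calA$, each presented as an explicit table of a mapping $A \ra A$. As $A$ is finite and $H \se A$ is given, membership in $H$ is decidable, so $\theta_H$ can be computed directly from its definition: for each pair $a,b \in A$ we test whether $p(a)\in H \lra p(b)\in H$ holds for every $p\in\Tr(\calA)$, and this is a finite check. This produces the partition $A/\theta_H$ explicitly.

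It remains to present $\SA(H) = \calA/\theta_H$ as an effectively given regular algebra, i.e., to furnish a finite recognizer of $f_{\calA/\theta_H}^{-1}(a\theta_H)$ over the alphabet $A/\theta_H$ for each $f\in\Si$ and each class $a\theta_H$. Here I would use the identity $f_{\calA/\theta}^{-1}(a\theta) = f_\calA^{-1}(a\theta)\eta_\theta$ noted before Lemma~\ref{le:Congr regular}, where $\eta_\theta : A^* \ra (A/\theta)^*$ is the relabeling morphism $a \mapsto a\theta$. The set $f_\calA^{-1}(a\theta_H) = \bigcup_{b\in a\theta_H} f_\calA^{-1}(b)$ is a finite union of regular languages, for each of which we already possess a recognizer; a recognizer of the union is then obtained by the standard construction. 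Applying the letter-to-letter morphism $\eta_{\theta_H}$ amounts to relabeling the input symbols, which turns this recognizer into one over $A/\theta_H$ for $f_{\calA/\theta_H}^{-1}(a\theta_H)$.

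Since $\theta_H \in \Con(\calA)$ and $\calA$ is regular, $\theta_H$ is a regular congruence by Proposition~\ref{pr:RegCongr of RegAlg}, so $\calA/\theta_H$ is indeed a regular algebra by Proposition~\ref{pr:RegCon and RegAlg}; the construction above merely makes the required recognizers explicit. The genuinely nontrivial content---coping with the infinitely many contexts that define the translations---is already absorbed into Proposition~\ref{pr:TranslRegAlg}; once $\Tr(\calA)$ is in hand, the computation of $\theta_H$ is purely finitary and the assembly of the quotient's recognizers uses only routine closure of regular languages under finite unions and morphic images. Thus the main obstacle lies not in this argument but in the prior effective computability of $\Tr(\calA)$, which we are entitled to assume.
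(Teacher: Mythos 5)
Your proposal is correct and takes essentially the same route as the paper, which presents this proposition as an immediate consequence of Proposition~\ref{pr:TranslRegAlg}: your write-up simply makes that reduction explicit, computing $\theta_H$ by a finite test over the effectively computed finite set $\Tr(\calA)$ and then assembling recognizers for the quotient's operations via the identity $f_{\calA/\theta}^{-1}(a\theta) = f_\calA^{-1}(a\theta)\eta_\theta$ together with closure of regular languages under finite unions and letter-to-letter morphic images. All of these details are sound and are exactly what the paper leaves implicit.
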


\begin{definition}\label{de:Syntactic Algebra}{\rm An unranked algebra is called \emph{syntactic} if it is isomorphic to the syntactic algebra of a subset of some unranked algebra. A subset $D\se A$ of an unranked algebra $\SalgA$ is \emph{disjunctive} if $\theta_D = \Delta_A$.\ep
}
\end{definition}

The following facts can be proved exactly as their well-known
counterparts in the ranked case (cf.~\cite{Stei79,Stei92,
Stei05}).

\begin{proposition}\label{pr:SyntAlg and DisjSubs SdiAlg}
 An unranked algebra is syntactic iff it has a disjunctive subset. \ep
\end{proposition}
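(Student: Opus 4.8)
The plan is to prove the two implications separately; the ``if'' direction is essentially immediate, while the ``only if'' direction carries all the content.

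Suppose first that $\calA$ has a disjunctive subset $D\se A$, so that $\theta_D = \Delta_A$. Then $\SA(D) = \calA/\theta_D = \calA/\Delta_A$, and the natural map $a \mapsto a\Delta_A$ is an isomorphism $\calA \cong \calA/\Delta_A = \SA(D)$. Since $\SA(D)$ is by definition the syntactic algebra of the subset $D$ of the unranked algebra $\calA$, this exhibits $\calA$ as isomorphic to a syntactic algebra, and hence $\calA$ is syntactic.

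For the converse, assume $\calA$ is syntactic, say $\calA \cong \SA(H) = \calB/\theta_H$ for some subset $H\se B$ of an unranked algebra $\SalgB$. I would show that the image $D := H\vp_H = \{\,b\theta_H \mid b\in H\,\}$ of $H$ under the syntactic morphism $\vp_H : \calB \ra \SA(H)$ is a disjunctive subset of $\SA(H)$; transporting $D$ along the isomorphism $\calA\cong\SA(H)$ then yields a disjunctive subset of $\calA$, because an isomorphism carries translations to translations and respects membership, so it preserves syntactic congruences. Two facts drive the argument. First, since $\theta_H$ is the greatest congruence saturating $H$ (Lemma~\ref{le:SyntCongrSat}), we have $\vp_H^{-1}(D) = H$; that is, $b\theta_H \in D$ iff $b\in H$. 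Second, since $(1_\Si,\vp_H) : \calB \ra \SA(H)$ is a g-epimorphism, Lemma~\ref{le:Morphisms and translations} associates with each $p\in\Tr(\calB)$ a translation of $\SA(H)$ acting by $b\theta_H \mapsto p(b)\theta_H$.

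Combining these, for any $s,t\in B$ I would argue as follows. If $s\theta_H$ and $t\theta_H$ are $\theta_D$-related, then applying the hypothesis to the translation of $\SA(H)$ associated with an arbitrary $p\in\Tr(\calB)$ gives $p(s)\theta_H \in D \lra p(t)\theta_H \in D$, which by the first fact says $p(s)\in H \lra p(t)\in H$. As $p$ ranges over all of $\Tr(\calB)$, the definition of $\theta_H$ then forces $s\,\theta_H\,t$, i.e.\ $s\theta_H = t\theta_H$. Hence $\theta_D = \Delta$ on $\SA(H)$, so $D$ is disjunctive, completing the proof. The one step requiring care is this last computation: one must match the quantifier over translations of $\SA(H)$ in the definition of $\theta_D$ with the quantifier over translations of $\calB$ in the definition of $\theta_H$, which is exactly what Lemma~\ref{le:Morphisms and translations} supplies, while the saturation identity $\vp_H^{-1}(D)=H$ is what lets membership in $D$ be read off from membership in $H$. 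The remaining bookkeeping---the quotient-by-diagonal isomorphism and the transport of disjunctiveness across an isomorphism---is routine.
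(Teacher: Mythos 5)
Your proof is correct and follows exactly the route the paper intends: the paper omits the argument, noting only that it ``can be proved exactly as its well-known counterpart in the ranked case,'' and your two directions --- $\calA\cong\SA(D)=\calA/\Delta_A$ when $D$ is disjunctive, and conversely showing $H\vp_H$ is disjunctive in $\SA(H)$ by combining saturation (Lemma~\ref{le:SyntCongrSat}) with the lifting of translations along the epimorphism $\vp_H$ (Lemma~\ref{le:Morphisms and translations}), then transporting across the isomorphism --- are precisely that standard adaptation. No gaps; the steps you label routine (the quotient-by-diagonal isomorphism and preservation of disjunctiveness under isomorphism) are indeed routine.
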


\begin{proposition}\label{pr:SDS-irr is syntactic} Every finite gsd-irreducible unranked algebra is syntactic. Hence every VRA is generated by regular syntactic algebras. \ep
\end{proposition}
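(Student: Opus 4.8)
The plan is to prove the two assertions of Proposition~\ref{pr:SDS-irr is syntactic} in turn, using Proposition~\ref{pr:SyntAlg and DisjSubs SdiAlg} as the key tool for the first and then bootstrapping from it to the second.

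For the first statement, I would invoke Proposition~\ref{pr:SyntAlg and DisjSubs SdiAlg}: a finite unranked algebra $\calA = (A,\Sigma)$ is syntactic if and only if it possesses a disjunctive subset, i.e. a subset $D \se A$ with $\theta_D = \Delta_A$. So the task reduces to producing such a $D$ for an arbitrary finite gsd-irreducible algebra $\calA$. The trivial case $|A| = |\Si| = 1$ is immediate, since then $\Delta_A = \nabla_A$ and any subset is disjunctive. For the nontrivial case, Proposition~\ref{pr:gsd-irr and gsd-dec}(a) tells us that $\calA$ has a \emph{least} nontrivial g-congruence; call it $(\si_0,\theta_0)$, so that $(\si_0,\theta_0) > (\Delta_\Si,\Delta_A)$ and every g-congruence above the diagonal lies above it. The idea is then to choose $D$ so that its syntactic congruence $\theta_D$ cannot be nontrivial: since $\theta_D \in \Con(\calA)$, if $\theta_D \neq \Delta_A$ then $(\Delta_\Si,\theta_D)$ is a nontrivial g-congruence and hence $\theta_D \supseteq \theta_0$. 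So it suffices to pick $D$ that is \emph{not} saturated by $\theta_0$; by Lemma~\ref{le:SyntCongrSat}, $\theta_D$ is the greatest congruence saturating $D$, so if $\theta_0$ does not saturate $D$ then $\theta_D \not\supseteq \theta_0$, forcing $\theta_D = \Delta_A$. Concretely I would take two distinct elements $a \neq b$ with $a\,\theta_0\,b$ (these exist because $\theta_0 \supsetneq \Delta_A$ when $\theta_0$ is the element-component, or else handle the case where $\si_0$ alone is nontrivial as in the remark following Proposition~\ref{pr:gsd-irr and gsd-dec}) and set $D = \{a\}$; then $\theta_0$ does not saturate $D$, and the argument above gives $\theta_D = \Delta_A$.

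The delicate point, and the step I expect to be the main obstacle, is the case distinction forced by the two-component nature of g-congruences: the least nontrivial g-congruence $(\si_0,\theta_0)$ might have $\theta_0 = \Delta_A$ while $\si_0 > \Delta_\Si$, exactly the situation flagged in the remark after Proposition~\ref{pr:gsd-irr and gsd-dec} (e.g. a trivial algebra with $|\Si| = 2$, where $(\nabla_\Si,\Delta_A)$ is least nontrivial). In that case one cannot separate two elements of $A$ by an unsaturated subset, since $A$ may even be a singleton. I would handle this by observing that $\theta_D$ is always an \emph{ordinary} congruence, so $(\Delta_\Si,\theta_D)$ is the relevant g-congruence, and what must be ruled out is only $\theta_D \supsetneq \Delta_A$; if the least nontrivial g-congruence has trivial element-component, then in fact \emph{every} ordinary congruence except $\Delta_A$ would have to sit above a g-congruence with nontrivial element part, which cannot be $(\si_0,\theta_0)$ — so $\Delta_A$ is already the only ordinary congruence below $\nabla_A$ in the required sense, and any $D$ works. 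Making this dichotomy airtight, rather than the choice of $D$ itself, is where the care is needed.

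For the second statement, I would argue as follows. Given an arbitrary VRA $\bfK$, I want to show it is generated by its regular syntactic members. Take any $\calA \in \bfK$. By Proposition~\ref{pr:gsd-irr and gsd-dec}(b), $\calA$ has a gsd-representation whose finitely many factors are gsd-irreducible g-images of $\calA$. Each such factor is a g-image of $\calA \in \bfK$, hence lies in $\bfK$ by the closure of VRAs under $H_g$, and each is regular since g-images of regular algebras are regular by Lemma~\ref{le:Regular subalgebras and images}. By the first part of this proposition, each factor is syntactic. Thus every member of $\bfK$ embeds (via the gsd-representation, which is in particular a g-monomorphism into a g-product) into a g-product of regular syntactic members of $\bfK$, so $\calA \in S_g P_g$ of the class of regular syntactic algebras in $\bfK$. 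Since $\bfK$ is a VRA it contains this class and is closed under $S_g$ and $P_g$, whence $\bfK$ is generated by its regular syntactic algebras. The only thing to check here is that the gsd-representation genuinely exhibits $\calA$ as built from the factors via the allowed operators, which follows directly from Remark~\ref{re:gsd-repr and monom} together with the definition of $V_g = H_g S_g P_g$ of Proposition~\ref{pr:V_g = H_gS_gP_g}.
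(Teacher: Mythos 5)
Your proof is correct and is essentially the argument the paper intends: the paper omits this proof entirely, deferring to the classical ranked case, and your adaptation --- extracting a disjunctive singleton $\{a\}$ from the least nontrivial g-congruence (via Proposition~\ref{pr:gsd-irr and gsd-dec}(a) and Lemma~\ref{le:SyntCongrSat}) for the first claim, then combining the gsd-decomposition of Proposition~\ref{pr:gsd-irr and gsd-dec}(b) with closure of VRAs under $S_g$ and $P_g$ for the second --- is exactly that classical proof transplanted to the g-congruence setting. The one remark worth adding is that the case you single out as delicate ($\si_0 > \Delta_\Si$ with $\theta_0 = \Delta_A$) can only occur when $|A| = 1$: for $|A| > 1$ the pair $(\Delta_\Si,\nabla_A)$ is a nontrivial g-congruence, so the least nontrivial g-congruence must have $\si_0 \se \Delta_\Si$ and hence $\theta_0 > \Delta_A$, which means your dichotomy collapses to the trivial-algebra case, where every subset is disjunctive and no separate argument is needed.
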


It is easy to see that for any congruence $\theta$ of an unranked
algebra $\SalgA$, there is a greatest equivalence $\rmM(\theta)$
on $\Si$ such that $(\rmM(\theta),\theta) \in \GCon(\calA)$. We
shall need the following properties of the $\rmM$-operator (cf. Lemma~6.1 of~\cite{Stei98}).

\begin{lemma}\label{le:M-operator} Let $\SalgA$ and $\OalgB$ be unranked algebras.
\begin{itemize}
  \item[{\rm (a)}] If $\theta$, $\theta'\in \Con(\calA)$ and $\theta \se \theta'$, then $\rmM(\theta) \se \rmM(\theta')$.
  \item[{\rm (b)}] For any set $\{\theta_i \mid i \in I\}$ of congruences of $\calA$, $\rmM(\bigcap_{i\in I}\theta_i) = \bigcap_{i\in I}\rmM(\theta_i)$.
  \item[{\rm (c)}] For any g-morphism $(\iota,\vp) : \calA \ra \calB$  and any $\theta \in \Con(\calB)$, $$\iota\circ\rmM(\theta)\circ\iota^{-1} \se \rmM(\vp\circ\theta\circ\vpi).$$
      If $\vp$ is surjective, then equality holds. \ep
\end{itemize}
\end{lemma}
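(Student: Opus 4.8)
The plan is to begin by extracting a concrete description of the operator $\rmM$ that makes all three parts transparent. Since $\theta$ is already a congruence of $\calA$, I claim that a pair $(\si,\theta)$ lies in $\GCon(\calA)$ if and only if $f_\calA(w)\,\theta\,g_\calA(w)$ holds for every $(f,g)\in\si$ and every $w\in A^*$. The forward direction comes from taking $a_1\ldots a_m = b_1\ldots b_m = w$ in the defining implication of a g-congruence; for the converse, given $f\,\si\,g$ and $a_i\,\theta\,b_i$, I would use $\theta\in\Con(\calA)$ to get $f_\calA(a_1,\ldots,a_m)\,\theta\,f_\calA(b_1,\ldots,b_m)$ and chain this with the hypothesis $f_\calA(b_1,\ldots,b_m)\,\theta\,g_\calA(b_1,\ldots,b_m)$. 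Consequently $\rmM(\theta) = \{(f,g)\in\Si\times\Si \mid f_\calA(w)\,\theta\,g_\calA(w)\text{ for all }w\in A^*\}$: this set is plainly an equivalence (transitivity inherited from $\theta$), it satisfies the g-congruence condition, and it contains every equivalence $\si$ that does, so it is the greatest such. This characterization is the real content; everything after is bookkeeping.

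With this in hand, (a) is immediate: if $\theta\se\theta'$ and $f\,\rmM(\theta)\,g$, then $f_\calA(w)\,\theta\,g_\calA(w)$ for all $w$ forces $f_\calA(w)\,\theta'\,g_\calA(w)$ for all $w$, i.e.\ $f\,\rmM(\theta')\,g$. For (b), I would unwind both sides: $f\,\rmM(\bigcap_i\theta_i)\,g$ asserts $f_\calA(w)\,\theta_i\,g_\calA(w)$ for all $w$ and all $i$, while $f\,(\bigcap_i\rmM(\theta_i))\,g$ asserts the same two conditions with the quantifiers in the opposite order; as they commute, the two relations coincide. (I would note in passing that $\bigcap_i\theta_i$ is again a congruence, so $\rmM$ of it is defined.)

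The substance of the lemma is (c), and this is where I expect the only real care to be needed. First I would record that $\vp\circ\theta\circ\vpi$ is indeed a congruence of $\calA$: it is the preimage of $\theta$ under the $\Si$-morphism $\vp:\calA\to\iota(\calB)$ of Lemma~\ref{le:HomomDerivAlgebras}, using that $\theta$, being a congruence of $\calB$, is also a congruence of the derived algebra $\iota(\calB)$ (its operations $\iota(f)_\calB$ are among those of $\calB$); hence $\rmM(\vp\circ\theta\circ\vpi)$ makes sense. The engine of the proof is the g-morphism identity $f_\calA(w)\vp = \iota(f)_\calB(w\vp_*)$. Writing $\theta' := \vp\circ\theta\circ\vpi$, I note that $(f,g)\in\iota\circ\rmM(\theta)\circ\iota^{-1}$ means exactly $\iota(f)\,\rmM(\theta)\,\iota(g)$, i.e.\ $\iota(f)_\calB(v)\,\theta\,\iota(g)_\calB(v)$ for all $v\in B^*$. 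Specializing to $v = w\vp_*$ and applying the identity twice yields $f_\calA(w)\vp\,\theta\,g_\calA(w)\vp$, that is $f_\calA(w)\,\theta'\,g_\calA(w)$, for every $w\in A^*$; by the characterization this is $f\,\rmM(\theta')\,g$, proving the claimed inclusion.

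For the reverse inclusion under surjectivity, the key point is that $\vp_*:A^*\to B^*$ is surjective whenever $\vp$ is. Then $f\,\rmM(\theta')\,g$ gives $\iota(f)_\calB(w\vp_*)\,\theta\,\iota(g)_\calB(w\vp_*)$ for all $w\in A^*$, and since the words $w\vp_*$ exhaust $B^*$ this is precisely $\iota(f)_\calB(v)\,\theta\,\iota(g)_\calB(v)$ for all $v\in B^*$, i.e.\ $\iota(f)\,\rmM(\theta)\,\iota(g)$, whence $(f,g)\in\iota\circ\rmM(\theta)\circ\iota^{-1}$. The main obstacle is thus not any single hard step but keeping the relational compositions $\iota\circ(-)\circ\iota^{-1}$ and $\vp\circ(-)\circ\vpi$ straight and matching the quantifier ``for all $v\in B^*$'' against ``for all $w\in A^*$'' through $\vp_*$; surjectivity of $\vp_*$ is exactly what bridges the two, and is the only place the extra hypothesis is used.
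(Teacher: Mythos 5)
Your proof is correct. Note that the paper does not actually prove this lemma: it is stated with a pointer to Lemma~6.1 of \cite{Stei98} (its ranked analogue) and the proof is omitted, so there is no in-paper argument to compare against; your proposal supplies the missing details. Your explicit description $\rmM(\theta)=\{(f,g)\in\Si\times\Si \mid f_\calA(w)\,\theta\,g_\calA(w) \ \text{for all} \ w\in A^*\}$ is the natural way to do this, and you correctly flag the two places where hypotheses are genuinely used: the converse direction of the characterization needs $\theta\in\Con(\calA)$ (not merely $\theta\in\Eq(A)$), and in (c) the pullback $\vp\circ\theta\circ\vpi$ is a congruence of $\calA$ because $\vp:\calA\ra\iota(\calB)$ is a $\Si$-morphism by Lemma~\ref{le:HomomDerivAlgebras} and $\theta$ remains a congruence of the derived algebra $\iota(\calB)$. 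The quantifier transfer through $\vp_*$ in (c) is handled properly, with surjectivity of $\vp$ (hence of $\vp_*$) invoked only for the reverse inclusion, exactly matching the statement.
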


Although we shall mainly use the syntactic congruences and
algebras defined above, the following input-reduced versions will
be needed for the proof of our variety theorem.

\begin{definition}\label{de:RedSyntAlg}{\rm The \emph{reduced syntactic congruence}
of a subset $H$ of an unranked algebra $\SalgA$ is the g-congruence
$(\si_H,\theta_H)$ of $\calA$, where $\theta_H$ is the syntactic congruence
of $H$ and $\si_H := \rmM(\theta_H)$, the \emph{reduced syntactic algebra}
$\RA(H)$ of $H$ is the g-quotient $\calA/(\si_H,\theta_H) = (A/\theta_H,\Si/\si_H)$,
and the \emph{syntactic g-morphism} $(\iota_H,\vp_H) : \calA \ra \RA(H)$ is
defined by $\iota_H : f \mapsto f\si_H$ and $\vp_H : a \ra a\theta_H$.}\ep
\end{definition}

The following proposition corresponds to Proposition 5.4 of
\cite{Stei98}.

\begin{proposition}\label{pr:SyntCongr and Operations} Let $\SalgA$ and
$\OalgB$ be unranked algebras.
\begin{itemize}
  \item[{\rm (a)}] $\theta_{A\setminus H} = \theta_H$ for every $H \se A$.
  \item[{\rm (b)}] $\theta_H\cap\theta_K \se \theta_{H\cap K}$ for all $H,K\se A$.
  \item[{\rm (c)}] $\theta_H \se \theta_{p^{-1}(H)}$ for all $H \se A$
  and $p\in \Tr(\calA)$.
  \item[{\rm (d)}] If $(\iota,\vp) : \calA \ra \calB$ is a g-morphism
  and $H\se B$, then $\vp\circ\theta_H\circ\vpi \se \theta_{H\vpi}$ and
  $\iota \circ \si_H \circ \iota^{-1} \se \si_{H\vpi}$, and equalities
  hold if $(\iota,\vp)$ is a g-epimorphism.
\end{itemize}
\end{proposition}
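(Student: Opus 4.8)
The plan is to prove each of the four statements by unwinding the definition of the syntactic congruence in terms of the translations of the algebra. Recall that $a\,\theta_H\,b$ holds exactly when $p(a)\in H \lra p(b)\in H$ for every $p\in\Tr(\calA)$, and that by Lemma~\ref{le:SyntCongrSat} the congruence $\theta_H$ is the greatest congruence saturating $H$. I would keep both characterizations at hand, since the first is convenient for direct verifications and the second for arguments that pass through the saturation property.

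For (a), since a subset $H$ and its complement $A\setminus H$ are saturated by exactly the same equivalences (an equivalence saturates $H$ iff it saturates $A\setminus H$), the greatest congruence saturating $H$ coincides with the greatest congruence saturating $A\setminus H$; by Lemma~\ref{le:SyntCongrSat} this gives $\theta_{A\setminus H} = \theta_H$ immediately. For (b), I would argue directly from the definition: if $a\,(\theta_H\cap\theta_K)\,b$, then for every $p\in\Tr(\calA)$ we have both $p(a)\in H \lra p(b)\in H$ and $p(a)\in K \lra p(b)\in K$, whence $p(a)\in H\cap K \lra p(b)\in H\cap K$, so $a\,\theta_{H\cap K}\,b$. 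For (c), take $q\in\Tr(\calA)$; then $q(a)\in p^{-1}(H) \lra p(q(a))\in H$, and since $p\circ q\in\Tr(\calA)$ (translations are closed under composition), $a\,\theta_H\,b$ forces $p(q(a))\in H \lra p(q(b))\in H$, i.e.\ $q(a)\in p^{-1}(H)\lra q(b)\in p^{-1}(H)$; as $q$ was arbitrary this yields $a\,\theta_{p^{-1}(H)}\,b$.

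The main work, and the expected obstacle, is part (d). Here I would exploit Lemma~\ref{le:Morphisms and translations}: for a g-morphism $(\iota,\vp):\calA\ra\calB$, every $p\in\Tr(\calA)$ has an associated translation $p_{\iota,\vp}\in\Tr(\calB)$ with $p(a)\vp = p_{\iota,\vp}(a\vp)$ for all $a\in A$, and when $(\iota,\vp)$ is a g-epimorphism every translation of $\calB$ arises this way. For the inclusion $\vp\circ\theta_H\circ\vpi \se \theta_{H\vpi}$, I would take $a,a'\in A$ with $a\vp\,\theta_H\,a'\vp$ and show $a\,\theta_{H\vpi}\,a'$: for each $p\in\Tr(\calA)$, using $p(a)\vp = p_{\iota,\vp}(a\vp)$ we get $p(a)\in H\vpi \lra p(a)\vp\in H \lra p_{\iota,\vp}(a\vp)\in H$, and since $a\vp\,\theta_H\,a'\vp$ this last condition is equivalent to $p_{\iota,\vp}(a'\vp)\in H$, i.e.\ $p(a')\in H\vpi$. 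For the reverse inclusion in the epimorphic case, surjectivity of $\vp$ lets me write arbitrary elements of $B$ as $a\vp$, and the fact that every translation of $\calB$ equals some $p_{\iota,\vp}$ lets me convert quantification over $\Tr(\calB)$ into quantification over $\Tr(\calA)$, running the equivalences backward. The statement $\iota\circ\si_H\circ\iota^{-1}\se\si_{H\vpi}$ about the operator-alphabet part should then follow by applying Lemma~\ref{le:M-operator}(c) with the congruence $\theta_H$ on $\calB$, since $\si_H = \rmM(\theta_H)$ and $\si_{H\vpi} = \rmM(\theta_{H\vpi})$: part~(c) of that lemma gives $\iota\circ\rmM(\theta_H)\circ\iota^{-1}\se\rmM(\vp\circ\theta_H\circ\vpi)$, and monotonicity of $\rmM$ (part~(a)) together with the already-established inclusion $\vp\circ\theta_H\circ\vpi\se\theta_{H\vpi}$ upgrades the right-hand side to $\rmM(\theta_{H\vpi})=\si_{H\vpi}$; in the epimorphic case equality in Lemma~\ref{le:M-operator}(c) combines with the equality $\vp\circ\theta_H\circ\vpi=\theta_{H\vpi}$ to give the claimed equality. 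The delicate point to watch is the careful bookkeeping of $\vpi$ under non-injective $\vp$, ensuring that $H\vpi$ is the right preimage set and that the chain of biconditionals is genuinely reversible only when surjectivity is available.
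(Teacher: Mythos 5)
Your proposal is correct and follows essentially the same route as the paper: parts (a)--(c) by elementary arguments from the definition (the paper leaves these to the reader), and part (d) exactly as in the paper, using Lemma~\ref{le:Morphisms and translations} to translate quantification over $\Tr(\calB)$ into quantification over $\Tr(\calA)$, with the single non-reversible implication becoming an equivalence in the g-epimorphic case, and the operator-alphabet inclusion obtained from Lemma~\ref{le:M-operator}(c) together with monotonicity of $\rmM$ applied to $\vp\circ\theta_H\circ\vpi \se \theta_{H\vpi}$. The only cosmetic difference is that you derive (a) from the saturation characterization of Lemma~\ref{le:SyntCongrSat} rather than directly from the definition, which is equally valid.
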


\begin{proof} Assertions (a)--(c) follow directly from the definition of
syntactic congruences, so we prove just (d). For $a$, $a' \in A$,
\begin{align*}
  a \,\vp\circ\theta_H\circ\vpi\, a' \: &\LRa \:
    (\forall q\in \Tr(\calB))\, (q(a\vp)\in H \lra q(a'\vp)\in H) \\
    &\Ra \: (\forall p\in \Tr(\calA))\,
        (p_{\iota,\vp}(a\vp)\in H \lra p_{\iota,\vp}(a'\vp)\in H) \\
    &\LRa \: (\forall p\in \Tr(\calA))\,(p(a)\in H\vpi \lra p(a')\in H\vpi)\\
    &\LRa \: a \, \theta_{H\vpi} \, a'.
\end{align*}
This proves the first inclusion of (d), and the second one now
follows from Lemma~\ref{le:M-operator}:
$$
    \iota \circ \si_H \circ \iota^{-1} =
    \iota \circ \rmM(\theta_H) \circ \iota^{-1} \se
    \rmM(\vp \circ \theta_H \circ \vpi) \se \rmM(\theta_{H\vpi})
    = \si_{H\vpi}.
$$
If $(\iota,\vp)$ is a g-epimorphism, the only ``$\Ra$'' in the
proof of the first inclusion can be replaced by ``$\LRa$'', and
all inclusions become equalities.
\end{proof}

\begin{proposition}\label{pr:SyntAlg and Operations} Let $\SalgA$ and $\OalgB$ be unranked algebras.
\begin{itemize}
  \item[{\rm (a)}] $\SA(A\setminus H) = \SA(H)$ for every $H \se A$.
  \item[{\rm (b)}] $\SA(H\cap K) \preceq \SA(H)\times \SA(K)$ for all $H,K\se A$.
  \item[{\rm (c)}] $\SA(p^{-1}(H))$ is an epimorphic image of $\SA(H)$ for all $H \se A$ and $p\in \Tr(\calA)$.
  \item[{\rm (d)}] If $(\iota,\vp) : \calA \ra \calB$ is a g-morphism and $H \se B$, then $\RA(H\vpi) \preceq_g \RA(H)$. If $(\iota,\vp)$ is a g-epimorphism, then $\RA(H\vpi) \cong_g \RA(H)$.
\end{itemize}
\end{proposition}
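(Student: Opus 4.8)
The plan is to prove each of the four parts of Proposition 6.19 by transferring the corresponding statements about syntactic \emph{congruences} from Proposition \ref{pr:SyntCongr and Operations} to the level of syntactic \emph{algebras} and reduced syntactic algebras. The basic principle throughout is that an inclusion of congruences $\theta \se \theta'$ on a set $A$ yields a natural epimorphism $\calA/\theta \ra \calA/\theta'$, and likewise an inclusion of g-congruences yields a g-epimorphism of g-quotients; equalities of congruences yield isomorphisms. So each part reduces to reading off the right congruence relation already established.

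For part (a), I would simply observe that $\SA(A\setminus H) = \calA/\theta_{A\setminus H} = \calA/\theta_H = \SA(H)$ by Proposition \ref{pr:SyntCongr and Operations}(a). For part (b), I would use Proposition \ref{pr:SyntCongr and Operations}(b), which gives $\theta_H \cap \theta_K \se \theta_{H\cap K}$. The meet $\theta_H \cap \theta_K$ is the kernel of the natural map $\calA \ra \SA(H)\times\SA(K)$, $a\mapsto (a\theta_H, a\theta_K)$; its image is a subalgebra $\calD$ of $\SA(H)\times\SA(K)$ g-isomorphic (in fact isomorphic, as a $\Si$-algebra) to $\calA/(\theta_H\cap\theta_K)$. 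Since $\theta_H\cap\theta_K \se \theta_{H\cap K}$, there is an epimorphism $\calA/(\theta_H\cap\theta_K) \ra \calA/\theta_{H\cap K} = \SA(H\cap K)$, so $\SA(H\cap K)$ is an epimorphic image of the subalgebra $\calD$ of $\SA(H)\times\SA(K)$, i.e.\ $\SA(H\cap K) \preceq \SA(H)\times\SA(K)$. For part (c), Proposition \ref{pr:SyntCongr and Operations}(c) gives $\theta_H \se \theta_{p^{-1}(H)}$, and the induced natural epimorphism $\calA/\theta_H \ra \calA/\theta_{p^{-1}(H)}$ is exactly the statement that $\SA(p^{-1}(H))$ is an epimorphic image of $\SA(H)$.

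Part (d) is the substantive one, since it involves the reduced syntactic algebras and g-morphisms. Proposition \ref{pr:SyntCongr and Operations}(d) supplies both component inclusions: $\vp\circ\theta_H\circ\vpi \se \theta_{H\vpi}$ and $\iota\circ\si_H\circ\iota^{-1} \se \si_{H\vpi}$, with equalities when $(\iota,\vp)$ is a g-epimorphism. I would first establish the general $\preceq_g$ claim by factoring the g-morphism $(\iota,\vp)$ appropriately. Note that $(\vp\circ\theta_H\circ\vpi,\ \iota\circ\si_H\circ\iota^{-1})$ is the kernel of a composite g-morphism $\calA \ra \RA(H)$, namely the syntactic g-morphism $(\iota_H,\vp_H)$ of $H$ precomposed with $(\iota,\vp)$; by Lemma \ref{le:Quotients&Homomorphisms}(b) the image is g-isomorphic to a g-subalgebra of $\RA(H)$, and this image is $\calA/(\si_{H\vpi}',\theta_{H\vpi}')$ where the primed pair denotes the two converse-composition relations. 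The two inclusions then provide a further g-epimorphism from this g-quotient onto $\RA(H\vpi) = \calA/(\si_{H\vpi},\theta_{H\vpi})$, showing $\RA(H\vpi) \preceq_g \RA(H)$. When $(\iota,\vp)$ is a g-epimorphism, both inclusions are equalities, so the two g-congruences coincide and the chain of g-morphisms becomes a g-isomorphism $\RA(H\vpi) \cong_g \RA(H)$.

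The main obstacle I expect is bookkeeping the g-quotient by a composed kernel in part (d): one must verify carefully that precomposing the syntactic g-morphism of $H$ with $(\iota,\vp)$ has kernel exactly $(\vp\circ\theta_H\circ\vpi,\ \iota\circ\si_H\circ\iota^{-1})$, and then correctly identify that g-quotient as a g-subalgebra of $\RA(H)$ via Lemma \ref{le:Quotients&Homomorphisms}(b). Once the congruence-level inclusions of Proposition \ref{pr:SyntCongr and Operations}(d) are in hand, the rest is a routine application of the quotient/homomorphism correspondence; the only genuine care is keeping the operator-alphabet components $\si$ and the element components $\theta$ tracked together throughout, since everything in part (d) lives at the g-level rather than the plain algebra level.
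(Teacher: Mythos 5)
Your proposal is correct. For parts (a)--(c) you are doing exactly what the paper does (it dismisses these as following directly from Proposition \ref{pr:SyntCongr and Operations}, in the same way as for ranked algebras), so the real comparison is in part (d), where your route is genuinely different. The paper proves the g-epimorphism case \emph{first}, by explicitly defining the maps $\psi : A/\theta_{H\vpi} \ra B/\theta_H$, $a\theta_{H\vpi}\mapsto (a\vp)\theta_H$, and $\vk : \Si/\si_{H\vpi}\ra\Om/\si_H$, $f\si_{H\vpi}\mapsto \iota(f)\si_H$, and verifying that $(\vk,\psi)$ is a g-isomorphism; it then derives the general case by corestricting $(\iota\iota_H,\vp\vp_H)$ to its image $\calC$, a g-subalgebra of $\RA(H)$, applying the epimorphism case to the g-epimorphism $\calA\ra\calC$ and the subset $H\vpi\psi$ of $\calC$, and closing the gap with the set-theoretic identity $H\vpi = H\vpi\psi\psi^{-1}$ (proved via $\vp_H\circ\vp_H^{-1}=\theta_H$, $\vp\circ\theta_H\circ\vpi\se\theta_{H\vpi}$ and $(H\vpi)\theta_{H\vpi}=H\vpi$), so that $\RA(H\vpi)\cong_g\RA(H\vpi\psi)\preceq_g\RA(H)$. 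You instead handle the general case in one pass: the composite $(\iota\iota_H,\vp\vp_H):\calA\ra\RA(H)$ has kernel exactly $(\iota\circ\si_H\circ\iota^{-1},\,\vp\circ\theta_H\circ\vpi)$, so by Lemmas \ref{le:g-morphisms}(c) and \ref{le:Quotients&Homomorphisms}(b) the g-quotient of $\calA$ by this kernel is g-isomorphic to a g-subalgebra of $\RA(H)$, and the inclusions of Proposition \ref{pr:SyntCongr and Operations}(d) then yield a further g-epimorphism from that g-quotient onto $\RA(H\vpi)$; the epimorphism case drops out as the degenerate instance in which the inclusions are equalities and the image is all of $\RA(H)$. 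What your approach buys: the $\RA$-operator never has to be applied inside the image algebra $\calC$, and the $H\vpi\psi\psi^{-1}$ bookkeeping disappears, so the general case is shorter and cleaner. What it costs: you rely on the principle that an inclusion of g-congruences induces a g-epimorphism between the corresponding g-quotients (the maps $f\si\mapsto f\si'$, $a\theta\mapsto a\theta'$), which is routine but is never stated as a lemma in the paper, so a fully written-out version of your proof should verify it. Both proofs are complete and correct.
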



\begin{proof} Claims (a)--(c) follow directly from the corresponding parts
of Proposition \ref{pr:SyntCongr and Operations} (in the same way
as for ranked algebras \cite{Stei92}).

To prove (d), assume first that $(\iota,\vp)$ is a
g-epimorphism. It follows from Proposition~\ref{pr:SyntCongr and
Operations}(d) that the maps $\psi:A/\theta_{H\vpi} \ra
B/\theta_H,\, a\theta_{H\vpi} \mapsto (a\vp)\theta_H,$ and
$\vk:\Si/\si_{H\vpi} \to \Om/\si_H, \, f\si_{H\vpi} \mapsto
\iota(f)\si_H,$
 are well-defined and injective. Clearly, they are also surjective, and for any $f \in \Si$, $m \ge 0$ and $a_1$, \dots, $a_m \in A$,
\begin{align*}
    (f\si_{H\vpi})_{\RA(H\vpi)} &(a_1\theta_{H\vpi}, \dots, a_m\theta_{H\vpi}) \psi
    = (f_{\calA}(a_1,\dots,a_m)\theta_{H\vpi})\psi \\
    &= (f_{\calA}(a_1,\dots,a_m)\vp)\theta_H
    = (\iota(f)_{\calB}(a_1\vp,\dots,a_m\vp))\theta_H \\
    &= (\iota(f)\si_H)_{\RA(H)}((a_1\vp)\theta_H,\dots,(a_m\vp)\theta_H) \\
    &= \vk(f\si_{H\vpi})_{\RA(H)}((a_1\theta_{H\vpi})\psi,\dots,(a_m\theta_{H\vpi})\psi),
\end{align*}
which shows that $(\vk,\psi): \RA(H\vpi) \ra \RA(H)$ is a
g-isomorphism.

Consider now a general g-morphism $(\iota,\vp) : \calA \ra
\calB$. Let $\calC = (C,\iota(\Si)/\sigma_H)$, where $C =
A\vp\vp_H$ and $\iota(\Si)/\si_H = \{\iota(f)\si_H \mid f \in
\Si\}$, be the image of $\calA$ in $\RA(H)$ under the g-morphism
$(\iota\iota_H,\vp\vp_H) : \calA \ra \RA(H)$. The mappings
$$
  \vk : \Si \ra \iota(\Si)/\si_H, \, f \mapsto \iota(f)\si_H, \; \mathrm{and} \; \psi : A \ra C, \, a \mapsto (a\vp)\theta_H,
$$
define a g-epimorphism $(\vk,\psi) : \calA \ra \calC$, and therefore $\RA(H\vpi\psi\psi^{-1}) \cong_g \RA (H\vpi\psi)$ by the previous part of the proof. Also, $\RA(H\vpi\psi) \preceq_g \RA(H)$ because $\RA(H\vpi\psi)$ is a g-image of the g-subalgebra $\calC$ of $\RA(H)$. To obtain $\RA(H\vpi) \preceq_g \RA(H)$ it therefore suffices to show that $H\vpi = H\vpi\psi\psi^{-1}$.
Of course, $H\vpi \se H\vpi\psi\psi^{-1}$, and on the other hand,
\begin{align*}
    H\vpi\psi\psi^{-1} &= (H\vpi)\vp\circ\vp_H\circ(\vp\circ\vp_H)^{-1} =(H\vpi)\vp\circ\theta_H\circ\vpi\\
    &\se (H\vpi)\theta_{H\vpi} = H\vpi,
\end{align*}
where we used the facts that $\psi = \vp\vp_H$,  $\vp_H\circ\vp_H^{-1} = \theta_H$, $\vp\circ\theta_H\circ\vpi \se \theta_{H\vpi}$ and $(H\vpi)\theta_{H\vpi} = H\vpi$.
\end{proof}

Simple modifications of the proofs of statements (d) of Propositions \ref{pr:SyntCongr and Operations} and \ref{pr:SyntAlg and Operations}
yield the following specializations of those statements.

\begin{corollary}\label{co:Morphisms and SA} Let $\SalgA$ and $\SalgB$ be unranked $\Si$-algebras. If $\vp : \calA \ra \calB$ is a morphism and $H\se B$, then $\vp\circ\theta_H\circ\vpi \se \theta_{H\vpi}$ and $\SA(H\vpi) \preceq \SA(H)$. If $\vp$ is an epimorphism, then  $\vp\circ\theta_H\circ\vpi = \theta_{H\vpi}$ and $\SA(H\vpi) \cong \SA(H)$.\ep
\end{corollary}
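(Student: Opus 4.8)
The plan is to specialize the g-morphism arguments of Propositions~\ref{pr:SyntCongr and Operations}(d) and \ref{pr:SyntAlg and Operations}(d) to the case $\iota = 1_\Si$, so that everything takes place within the category of $\Si$-algebras and we may work throughout with the ordinary syntactic congruences $\theta_H$ and syntactic algebras $\SA(H) = \calA/\theta_H$ rather than with their reduced counterparts $(\si_H,\theta_H)$ and $\RA(H)$. First I would record the morphism version of Lemma~\ref{le:Morphisms and translations}: for an ordinary morphism $\vp : \calA \ra \calB$ of $\Si$-algebras, every $p \in \Tr(\calA)$ has a companion translation $p_\vp \in \Tr(\calB)$ with $p(a)\vp = p_\vp(a\vp)$ for all $a \in A$, and if $\vp$ is an epimorphism then every translation of $\calB$ arises in this way. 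This is immediate from Lemma~\ref{le:Morphisms and translations} applied with $\iota = 1_\Si$.

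For the first claim, $\vp\circ\theta_H\circ\vpi \se \theta_{H\vpi}$, I would reproduce verbatim the chain of equivalences in the proof of Proposition~\ref{pr:SyntCongr and Operations}(d): the relation $a\,(\vp\circ\theta_H\circ\vpi)\,a'$ unfolds to $(\forall q \in \Tr(\calB))(q(a\vp)\in H \lra q(a'\vp)\in H)$, and specialising $q$ to $p_\vp$ and using $p(a)\vp = p_\vp(a\vp)$ yields $(\forall p \in \Tr(\calA))(p(a)\in H\vpi \lra p(a')\in H\vpi)$, i.e.\ $a\,\theta_{H\vpi}\,a'$. When $\vp$ is an epimorphism the single implication becomes an equivalence, because every translation of $\calB$ is of the form $p_\vp$, and we obtain $\vp\circ\theta_H\circ\vpi = \theta_{H\vpi}$.

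Next, for the epimorphism case of the second claim I would construct the isomorphism directly rather than go through the image. Using $\vp\circ\theta_H\circ\vpi = \theta_{H\vpi}$, the map $\psi : A/\theta_{H\vpi} \ra B/\theta_H$, $a\theta_{H\vpi} \mapsto (a\vp)\theta_H$, is well defined and injective, and it is surjective since $\vp$ is. The computation in the proof of Proposition~\ref{pr:SyntAlg and Operations}(d), now with $\iota = 1_\Si$ and with the operator-merging map $\vk$ suppressed, shows that $\psi$ is a morphism of $\Si$-algebras; hence $\SA(H\vpi) \cong \SA(H)$.

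Finally, for a general morphism $\vp$ I would factor through the image exactly as in Proposition~\ref{pr:SyntAlg and Operations}(d). Let $\calC = (A\vp\vp_H,\Si)$ be the subalgebra of $\SA(H)$ that is the image of $\calA$ under the composite morphism $\psi := \vp\vp_H : \calA \ra \SA(H)$, so that $\psi : \calA \ra \calC$ is an epimorphism. Applying the already-proved epimorphism case to $\psi$ and the subset $H\vpi\psi \se C$ gives $\SA((H\vpi\psi)\psii) \cong \SA(H\vpi\psi)$, and since $\SA(H\vpi\psi)$ is an epimorphic image of $\calC$, which is a subalgebra of $\SA(H)$, we get $\SA(H\vpi\psi) \preceq \SA(H)$. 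It then remains to check $H\vpi = H\vpi\psi\psii$: the inclusion $\se$ is automatic, and the reverse inclusion follows from $\psi\circ\psii = \vp\circ\vp_H\circ\vp_H^{-1}\circ\vpi = \vp\circ\theta_H\circ\vpi$ (using $\psi = \vp\vp_H$ and $\vp_H\circ\vp_H^{-1} = \theta_H$), together with $\vp\circ\theta_H\circ\vpi \se \theta_{H\vpi}$ and the saturation identity $(H\vpi)\theta_{H\vpi} = H\vpi$ guaranteed by Lemma~\ref{le:SyntCongrSat}. Combining these gives $\SA(H\vpi) = \SA(H\vpi\psi\psii) \cong \SA(H\vpi\psi) \preceq \SA(H)$. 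I expect the only point requiring care, rather than a genuine obstacle, to be the bookkeeping: verifying that dropping the reduced structure $(\si_H,\theta_H)$ in favour of $\theta_H$ breaks no step, i.e.\ that every map constructed is an honest $\Si$-algebra morphism and that the resulting relation is the covering $\preceq$ and not $\preceq_g$. Since $\iota = 1_\Si$ renders the symbol-merging component of each g-construction trivial, these checks are routine.
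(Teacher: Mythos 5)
Your proposal is correct and follows exactly the route the paper intends: the paper's own ``proof'' of this corollary is just the remark that simple modifications of the proofs of Propositions~\ref{pr:SyntCongr and Operations}(d) and \ref{pr:SyntAlg and Operations}(d) (specializing the g-morphism to $\iota = 1_\Si$, so that the reduced data $(\si_H,\theta_H)$, $\RA(H)$ collapse to $\theta_H$, $\SA(H)$ and $\preceq_g$ to $\preceq$) yield the statement, and your write-up carries out precisely those modifications, including the factorization through the image $\calC = (A\vp\vp_H,\Si)$ and the verification $H\vpi = H\vpi\psi\psii$.
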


The \emph{syntactic congruence} $\theta_T$, the \emph{syntactic
algebra} $\SA(T)$, the \emph{syntactic morphism} $\vp_T : \SXta
\ra \SA(T)$, the \emph{reduced syntactic congruence}
$(\si_T,\theta_T)$ and the \emph{reduced syntactic algebra}
$\RA(T)$ as well as the \emph{syntactic g-morphism} $(\iota_T,\vp_T) :
\SXta \ra \RA(T)$ of a $\SX$-tree language $T$ are defined by
regarding $T$ as a subset of the term algebra $\SXta$. Since the
translations of $\SXta$ are given by  $\SX$-contexts, we have
$$s \, \theta_T \, t \: \LRa \: (\forall p\in \SXc)(p(s)\in T \lra p(t)\in T),$$
for all $s,t\in \SXt$. Let us note that the syntactic congruences
of unranked tree languages appear in \cite{BrMW01} under the name
``top congruence''.

\section{Recognizable unranked tree languages}\label{se:Recognizability}

Within our framework it is natural to define recognizability as
follows.

\begin{definition}\label{de:Recognizable Tree Languages}{\rm  An unranked
$\Si$-algebra $\SalgA$ \emph{recognizes} an unranked $\SX$-tree
language $T$ if there exist a morphism $\vp : \SXta \ra \calA$ and
a subset $F\se A$ such that $T = F\vp^{-1}$, and we call $T$
\emph{recognizable} if it is recognized by a regular
$\Si$-algebra. The set of all recognizable unranked $\SX$-tree
languages is denoted by $\RecSX$. }\ep
\end{definition}

Exactly as in the ranked case, the syntactic algebra of any given unranked
tree language $T$ is in a definite sense the minimal algebra
recognizing $T$.

\begin{proposition}\label{pr:SA(T) min rec of T} A $\Si$-algebra $\calA$ recognizes an unranked $\SX$-tree language $T$ if and only if $\SA(T) \preceq \calA$.
\end{proposition}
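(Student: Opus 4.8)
The plan is to exploit the fact that $\SA(T)$ already recognizes $T$: since by Lemma~\ref{le:SyntCongrSat} the syntactic congruence $\theta_T$ saturates $T$, the syntactic morphism $\vp_T : \SXta \ra \SA(T)$ together with the set $F_T := T\vp_T$ satisfies $T = F_T\vp_T^{-1}$. The whole statement then reduces to transporting this recognition along a covering $\SA(T) \preceq \calA$ in one direction, and to locating a copy of $\SA(T)$ inside an arbitrary recognizer in the other.

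For the forward implication, I would assume $\SA(T) \preceq \calA$, so that there are a subalgebra $\calB$ of $\calA$ and an epimorphism $\psi : \calB \ra \SA(T)$. First I would use the surjectivity of $\psi$ to pick, for each $x \in X$, an element $b_x \in B$ with $b_x\psi = x\vp_T$, and set $\alpha : X \ra B,\, x \mapsto b_x$. By the freeness of $\SXta$ (Proposition~\ref{pr:TermAlgebraFree}, with $\iota = 1_\Si$) there is a unique morphism $\chi : \SXta \ra \calA$ extending $\alpha$; since $\alpha(X) \se B$ and $\calB$ is a subalgebra, the image of $\chi$ equals the subalgebra $\angX\chi \se B$, so $\chi$ may be read as a morphism into $\calB$. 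Then $\chi\psi$ and $\vp_T$ are both morphisms $\SXta \ra \SA(T)$ agreeing on $X$, whence $\chi\psi = \vp_T$ by the uniqueness clause of freeness. Setting $F := F_T\psi^{-1} \se B \se A$ then gives $T = F_T\vp_T^{-1} = F_T(\chi\psi)^{-1} = F\chi^{-1}$, so $\calA$ recognizes $T$.

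For the converse, I would assume $\calA$ recognizes $T$ via a morphism $\vp : \SXta \ra \calA$ and a set $F \se A$ with $T = F\vp^{-1}$. The idea is to restrict attention to the image subalgebra $\calB := \vp(\SXt) = \angX\vp$, so that $\vp$ becomes an epimorphism $\SXta \ra \calB$; putting $F' := F \cap B$ we still have $T = F'\vp^{-1}$. Now Corollary~\ref{co:Morphisms and SA}, applied to the epimorphism $\vp$ and the subset $F' \se B$, yields $\SA(T) = \SA(F'\vp^{-1}) \cong \SA(F')$. Since $\SA(F') = \calB/\theta_{F'}$ is an epimorphic image of the subalgebra $\calB$ of $\calA$, we obtain $\SA(F') \preceq \calA$, and therefore $\SA(T) \preceq \calA$.

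I expect the only delicate point to be the matching step $\chi\psi = \vp_T$ in the forward direction: one must verify that $\chi$ genuinely lands in $\calB$ (so that $\chi\psi$ is even defined) and then invoke freeness-uniqueness over the correct operator map $\iota = 1_\Si$, after which recovering $T$ as $F\chi^{-1}$ leans once more on $\theta_T$ saturating $T$. The converse is essentially immediate once Corollary~\ref{co:Morphisms and SA} is available, the only care being the harmless restriction of the accepting set to $F \cap B$.
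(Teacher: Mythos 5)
Your proof is correct, and its skeleton matches the paper's: the forward direction rests on the fact that $\SA(T)$ recognizes $T$ (i.e., $T = T\vp_T\vp_T^{-1}$, which is Lemma~\ref{le:SyntCongrSat}) together with lifting recognition along a covering, and the converse rests on Corollary~\ref{co:Morphisms and SA}. The differences are in execution. In the forward direction the paper simply declares it ``clear, quite generally'' that a tree language recognized by a subalgebra or an epimorphic image of $\calA$ is also recognized by $\calA$; you actually prove this, lifting the generators through the epimorphism $\psi$ and invoking the freeness of $\SXta$ (Proposition~\ref{pr:TermAlgebraFree}) to identify $\chi\psi$ with $\vp_T$ --- a complete and legitimate argument (the point you flag as delicate, that $\chi$ lands in $\calB$, holds since $X\chi \se B$ and $B$ is $\Si$-closed). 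In the converse, the paper applies the covering half of Corollary~\ref{co:Morphisms and SA} directly to the morphism $\vp : \SXta \ra \calA$ and writes $\SA(T) = \SA(F\vpi) \preceq \SA(F) \preceq \calA$, which implicitly uses transitivity of $\preceq$, a fact the paper never establishes for these unranked structures; you instead first cut $\calA$ down to the image subalgebra $\calB = \SXt\vp$, making $\vp$ an epimorphism, and then use the isomorphism half of the corollary, so that $\SA(T) \cong \SA(F')$ is exhibited outright as an epimorphic image of the subalgebra $\calB$ of $\calA$ --- which is the definition of $\SA(T) \preceq \calA$, with no transitivity needed. Your route is somewhat longer, but it is self-contained precisely at the two points where the paper leans on unstated general facts.
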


\begin{proof} It is clear, quite generally, that any tree language
recognized by a subalgebra or an epimorphic image of an algebra
$\calA$, is  recognized by $\calA$, too. Since $\SA(T)$ recognizes
$T$ (as $T = T\vp_T\vp_T^{-1}$), this means that if $\SA(T)
\preceq \calA$, then also $\calA$ recognizes $T$.
The converse holds by Corollary \ref{co:Morphisms and SA}: if there exist a morphism $\vp : \SXt \ra \calA$ and a subset $F$ of
$\calA$ such that $T = F\vpi$, then $\SA(T) = \SA(F\vpi) \preceq \SA(F) \preceq \calA$.
\end{proof}

It should be obvious that the above notion of recognizability is
the same as the one arrived at via deterministic or
nondeterministic automata on unranked trees as defined in
\cite{CrLT05,MaNi07,SeVi02}, for example.  Indeed, for any regular
algebra $\SalgA$, we can introduce finite (``horizontal'')
automata to compute the operations $f_\calA$ and, conversely, the
functions computed by the horizontal automata of a deterministic
automaton on unranked trees are the operations of a regular
algebra. However, since we are not directly concerned with
computational aspects, the algebraic definition is more convenient
here.

Next we give a Myhill-Nerode theorem that characterizes the
recognizability of unranked tree languages in terms of regular
congruences.

\begin{proposition}\label{pr:Myhill-Nerode Thm} For any unranked tree language $T\se \SXt$, the following statements are equivalent:
\begin{itemize}
  \item[{\rm (a)}] $T \in \RecSX$;
  \item[{\rm (b)}] $T$ is saturated by a regular congruence of $\SXta$;
  \item[{\rm (c)}] the syntactic congruence $\theta_T$ is regular.
\end{itemize}
\end{proposition}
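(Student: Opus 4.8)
The plan is to prove the Myhill--Nerode theorem by establishing the cycle of implications $(a) \Rightarrow (b) \Rightarrow (c) \Rightarrow (a)$, exploiting the close ties between regular algebras and regular congruences built up in Sections 5 and 7. The implications $(a) \Rightarrow (b)$ and $(b) \Rightarrow (c)$ are the ``easy'' directions, resting on the minimality of the syntactic algebra and the filter property of $\RCon(\SXta)$, while $(c) \Rightarrow (a)$ closes the loop almost by definition once we know $\SA(T)$ is regular.

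\medskip

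For $(a) \Rightarrow (b)$, I would start from a regular $\Si$-algebra $\calA$, a morphism $\vp : \SXta \ra \calA$ and a subset $F \se A$ with $T = F\vpi$. The kernel $\ker \vp \in \Con(\SXta)$ saturates $T$, since $T = F\vpi$ is a union of $\ker\vp$-classes. The point is that $\ker\vp$ is a \emph{regular} congruence: because $\calA$ is finite, $\ker\vp$ has finite index, and by Proposition~\ref{pr:RegCon and RegAlg} the quotient $\SXta/\ker\vp$ is regular exactly when $\ker\vp$ is regular. Here $\SXta/\ker\vp$ embeds into (indeed is isomorphic to a subalgebra of) the regular algebra $\calA$ via the induced monomorphism, and by Lemma~\ref{le:Regular subalgebras and images} g-subalgebras and g-images of regular algebras are regular; hence $\SXta/\ker\vp$ is regular and $\ker\vp \in \RCon(\SXta)$. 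Thus $T$ is saturated by a regular congruence.

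\medskip

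For $(b) \Rightarrow (c)$, suppose $T$ is saturated by some $\rho \in \RCon(\SXta)$. By Lemma~\ref{le:SyntCongrSat}, $\theta_T$ is the greatest congruence saturating $T$, so $\rho \se \theta_T$. Since $\RCon(\SXta)$ is a filter of $\Con(\SXta)$ by Lemma~\ref{le:FCon filter}, and filters are upward closed, $\rho \in \RCon(\SXta)$ together with $\rho \se \theta_T$ forces $\theta_T \in \RCon(\SXta)$; that is, $\theta_T$ is regular. Finally, $(c) \Rightarrow (a)$ is immediate: if $\theta_T$ is regular, then by Proposition~\ref{pr:RegCon and RegAlg} the syntactic algebra $\SA(T) = \SXta/\theta_T$ is a regular algebra, and since $\SA(T)$ recognizes $T$ (via $T = T\vp_T\vpi_T$, using $\vp_T : \SXta \ra \SA(T)$), we conclude $T \in \RecSX$.

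\medskip

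The only genuine subtlety, and the step I would treat most carefully, is the regularity claim inside $(a) \Rightarrow (b)$: one must verify that the finite-index congruence $\ker\vp$ is actually regular and not merely of finite index, and this is precisely where the preservation of regularity under g-subalgebras/g-images (Lemma~\ref{le:Regular subalgebras and images}) does the real work. Example~\ref{ex:Non-regular congruence} shows that finite index alone does not guarantee regularity, so the argument cannot shortcut through finiteness; the recognition hypothesis supplies a genuinely regular $\calA$ and it is the transfer of that regularity to the quotient algebra that must be invoked explicitly. Everything else reduces to the filter property and the equivalence between regular congruences and regular quotient algebras recorded in Proposition~\ref{pr:RegCon and RegAlg}.
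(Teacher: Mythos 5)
Your proof is correct and follows essentially the same route as the paper's: the kernel of a recognizing morphism supplies the regular congruence for (a)$\Ra$(b), the saturation-plus-filter argument (Lemma \ref{le:SyntCongrSat} and Lemma \ref{le:FCon filter}) gives (b)$\Ra$(c), and Proposition \ref{pr:RegCon and RegAlg} together with $T = T\vp_T\vp_T^{-1}$ closes the loop. The only cosmetic differences are that you argue in a cycle rather than via the two equivalences (a)$\LRa$(b) and (b)$\LRa$(c), and that where the paper says ``we may assume $\vp$ is surjective'' you instead pass explicitly to the image subalgebra of $\calA$ and invoke Lemma \ref{le:Regular subalgebras and images} --- the same step, spelled out.
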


\begin{proof} Let us first prove the equivalence of (a) and (b). If $T\in
\RecSX$, then there exist a regular algebra $\SalgA$, a morphism
$\vp : \SXta \ra \calA$ and subset $F\se A$ such that $T = F\vpi$.
It is clear that $T$ is saturated by $\theta := \ker \vp$.
Moreover, we may assume that $\vp$ is surjective. Then
$\SXta/\theta \cong \calA$ and hence $\theta \in \RCon(\SXta)$ by
Proposition \ref{pr:RegCon and RegAlg}. On the other hand, if $T$
is saturated by a regular congruence $\theta\in \RCon(\SXta)$,
then $T = T\theta_\natural\theta_\natural^{-1}$ means that $T$ is
recognized by the regular algebra $\SXta/\theta$.

If $T$ is saturated by a congruence $\theta\in\RCon(\SXta)$, then
$\theta \se \theta_T$ by Lemma \ref{le:SyntCongrSat}, and hence
$\theta_T$ is regular by Lemma \ref{le:FCon filter}. Therefore,
(b) implies (c), and the converse holds by Lemma
\ref{le:SyntCongrSat}.
\end{proof}

Let us note that in \cite{BrMW01} it was stated (as Lemma 8.2), in
different terms, that if $T \in \RecSX$, then $\theta_T$ is of
finite index, but the example meant to show that the converse does
not hold, appears incorrect. Nevertheless, their Theorem 1
essentially expresses the equivalence of (a) and (c) of our
Proposition \ref{pr:Myhill-Nerode Thm}; the condition concerning
``local views'' seems to be a somewhat intricate way to define the
regularity of a congruence.

The following corollary is an immediate consequence of Proposition
\ref{pr:Myhill-Nerode Thm}.

\begin{corollary}\label{co:Rec and SA} An unranked tree language $T$ is recognizable if and only if the syntactic algebra $\SA(T)$ is regular. \ep
\end{corollary}

We shall now note that the family of recognizable unranked tree
languages is closed under the operations that will define our
varieties of unranked tree languages.

\begin{proposition}\label{pr:ClosureProp of Rec} The following statements hold for all operator alphabets $\Si$ and $\Om$ and all leaf alphabets $X$ and $Y$.
\begin{itemize}
  \item[{\rm (a)}] $\es \in \RecSX$, and $\RecSX$ is closed under all Boolean operations.
  \item[{\rm (b)}]  If $T \in \RecSX$, then $p^{-1}(T) := \{t\in \SXt \mid p(t)\in T\}\in \RecSX$ for every context $p\in\SXc$.
  \item[{\rm (c)}] If  $(\iota,\vp) : \SXta \ra \OYta$ is a g-morphism, then $T\vpi \in \RecSX$ for every $T\in \RecOY$.
\end{itemize}
\end{proposition}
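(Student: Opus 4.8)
The plan is to prove each of the three closure properties by exhibiting, for the target language, a recognizing regular algebra built from one already recognizing the given language, using the established machinery of syntactic algebras and the fact that $\Reg$ is closed under $S_g$, $H_g$ and $P_g$ (Proposition~\ref{pr:RegularAlgebrasVFRA}), together with Corollary~\ref{co:Rec and SA}, which reduces recognizability to regularity of the syntactic algebra.

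\medskip

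For part (a), I would first note that $\es = \es\vp_T^{-1}$ is recognized by any regular algebra (take $F = \es$), so $\es \in \RecSX$. Since $\RecSX$ is evidently closed under complementation (if $T = F\vp^{-1}$ then $\SXt \setminus T = (A\setminus F)\vp^{-1}$ via the same morphism into the same regular algebra), it suffices to handle intersection. If $T_1 = F_1\vp_1^{-1}$ and $T_2 = F_2\vp_2^{-1}$ with $\calA_1, \calA_2$ regular, I would form the direct product $\calA_1 \times \calA_2$, which is regular by Lemma~\ref{le:RegularProduct}, and the induced morphism $\vp : \SXta \ra \calA_1\times\calA_2$; then $T_1\cap T_2 = (F_1\times F_2)\vp^{-1}$. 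Closure under all Boolean operations follows. Alternatively, and perhaps more cleanly, one can invoke Corollary~\ref{co:Rec and SA} together with Proposition~\ref{pr:SyntAlg and Operations}(a),(b): $\SA(\SXt\setminus T)=\SA(T)$ and $\SA(T_1\cap T_2)\preceq \SA(T_1)\times\SA(T_2)$, the latter being regular by Lemmas~\ref{le:Regular subalgebras and images} and~\ref{le:RegularProduct}.

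\medskip

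For part (b), the cleanest route is via syntactic algebras. By Proposition~\ref{pr:SyntAlg and Operations}(c), $\SA(p^{-1}(T))$ is an epimorphic image of $\SA(T)$. Since $T \in \RecSX$, Corollary~\ref{co:Rec and SA} gives that $\SA(T)$ is regular, and then $\SA(p^{-1}(T))$ is regular by Lemma~\ref{le:Regular subalgebras and images} (g-images of regular algebras are regular). Applying Corollary~\ref{co:Rec and SA} in the reverse direction yields $p^{-1}(T)\in\RecSX$. Here I am using that each context $p\in\SXc$ corresponds to a translation of $\SXta$, as noted at the end of Section~\ref{se:Unranked algebras}.

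\medskip

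For part (c), suppose $T\in\RecOY$, so $\SA(T)$ is regular. Given the g-morphism $(\iota,\vp):\SXta\ra\OYta$, Proposition~\ref{pr:SyntAlg and Operations}(d) gives $\RA(T\vp^{-1}) \preceq_g \RA(T)$, i.e.\ $\RA(T\vp^{-1})$ is a g-image of a g-subalgebra of $\RA(T)$. The main point to settle is that $\RA(T)$ is regular; this follows because $\RA(T) = \SA(T)/(\si_T,\theta_T)$ arises from $\SA(T)$ by a g-congruence, and one checks via Proposition~\ref{pr:RegCon and RegAlg} that the g-quotient of a regular algebra by any g-congruence is again regular (the congruences of a regular algebra are regular by Proposition~\ref{pr:RegCongr of RegAlg}). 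Then, since $\Reg$ is closed under $S_g$ and $H_g$ (Lemma~\ref{le:Regular subalgebras and images}), $\RA(T\vp^{-1})$ is regular, whence so is its ordinary quotient counterpart $\SA(T\vp^{-1})$ by Lemma~\ref{le:SAs, RAs and VRAs} applied to the VRA $\Reg$. Corollary~\ref{co:Rec and SA} then delivers $T\vp^{-1}\in\RecSX$. I expect part (c) to be the main obstacle, precisely because it requires the \emph{reduced} syntactic machinery: the g-morphism $(\iota,\vp)$ changes the operator alphabet, so one cannot stay within ordinary syntactic algebras and must pass through $\RA(\cdot)$ and then descend back via Lemma~\ref{le:SAs, RAs and VRAs}. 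The remaining parts are routine given the lemmas of Sections~\ref{se:Regular algebras}--\ref{SA and RA}.
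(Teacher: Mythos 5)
Your proposal is correct and takes essentially the same route as the paper: the paper proves (a) by exactly your product construction (given in its appendix), and settles (b) and (c) by invoking Corollary~\ref{co:Rec and SA}, Proposition~\ref{pr:RegularAlgebrasVFRA} and Proposition~\ref{pr:SyntAlg and Operations}, which you unwind precisely as intended, including the bridge $\SA(T)$ regular $\LRa$ $\RA(T)$ regular via Lemma~\ref{le:SAs, RAs and VRAs} applied to $\Reg$ --- the very argument the paper itself spells out later when verifying (V5) in Lemma~\ref{le:bfKt is a VUT}. One harmless notational slip: $\RA(T)$ is the g-quotient $\OYta/(\si_T,\theta_T)$ of the \emph{term} algebra rather than literally $\SA(T)/(\si_T,\theta_T)$, but your justification that it is a regular g-image of $\SA(T)$ goes through.
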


\begin{proof} Clearly, $\es$ and $\SXt$ are recognized by any $\Si$-algebra,
and the rest of the proposition follows from Corollary \ref{co:Rec
and SA} and Propositions \ref{pr:RegularAlgebrasVFRA} and
\ref{pr:SyntAlg and Operations}.
\end{proof}

The sets $p^{-1}(T)$ play an important role in the variety theory
and we shall need the following fact.

\begin{lemma}\label{le:Finite no of quotients} If $T\in Rec(\Si,X)$, then the set $\{p^{-1}(T) \mid p\in \SXc\}$ is finite.
\end{lemma}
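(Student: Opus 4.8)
The plan is to reduce the statement to the finiteness of the syntactic algebra $\SA(T)$. Since $T \in \RecSX$, Corollary~\ref{co:Rec and SA} (equivalently, the equivalence of (a) and (c) in Proposition~\ref{pr:Myhill-Nerode Thm}) tells us that the syntactic algebra $\SA(T) = \SXta/\theta_T$ is regular, hence finite. Write $A := \SXt/\theta_T$ for its finite set of elements and let $\vp_T : \SXta \ra \SA(T)$ be the syntactic morphism. Because $\theta_T$ saturates $T$ (Lemma~\ref{le:SyntCongrSat}), there is a subset $F \se A$ with $T = F\vp_T^{-1}$.

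Next I would exploit the bijective correspondence between $\SX$-contexts and translations of the term algebra noted at the end of Section~\ref{se:Unranked algebras}: every $p \in \SXc$ acts on $\SXt$ exactly as a unique translation $p \in \Tr(\SXta)$. Applying Lemma~\ref{le:Morphisms and translations} to $\vp_T$, viewed as the g-morphism $(1_\Si,\vp_T)$, this translation $p$ determines a translation $\bar p := p_{1_\Si,\vp_T}$ of the finite algebra $\SA(T)$ satisfying $p(t)\vp_T = \bar p(t\vp_T)$ for every $t \in \SXt$.

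The core of the argument is then the identity
$$p^{-1}(T) = \bigl(\bar p^{-1}(F)\bigr)\vp_T^{-1},$$
which I would verify by the chain $t \in p^{-1}(T) \LRa p(t)\vp_T \in F \LRa \bar p(t\vp_T) \in F \LRa t\vp_T \in \bar p^{-1}(F)$, using $T = F\vp_T^{-1}$ and the intertwining relation above. Consequently $p^{-1}(T)$ depends on $p$ only through the subset $\bar p^{-1}(F)$ of the finite set $A$. As $A$ is finite, there are only finitely many such subsets, so $\{p^{-1}(T) \mid p \in \SXc\}$ is finite.

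I do not anticipate a genuine obstacle here; this is a routine finiteness count once the finiteness of $\SA(T)$ is in hand. The only point requiring care is the bookkeeping around Lemma~\ref{le:Morphisms and translations}, namely checking that the induced map $\bar p$ really satisfies $p(t)\vp_T = \bar p(t\vp_T)$ and that the context-to-translation correspondence is invoked in the correct direction. One could instead argue directly from the finite index of $\theta_T$ without naming the translations of $\SA(T)$, but routing the proof through the syntactic algebra makes the final finiteness estimate immediate.
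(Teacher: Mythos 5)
Your proof is correct. It rests on the same two pillars as the paper's own argument: the finite index of $\theta_T$ (via Proposition~\ref{pr:Myhill-Nerode Thm}, equivalently Corollary~\ref{co:Rec and SA}) and the fact that $p^{-1}(T)$ is determined by a subset of the finite quotient $\SXt/\theta_T$. The difference lies in how the second fact is obtained. The paper simply cites Proposition~\ref{pr:SyntCongr and Operations}(c), i.e.\ $\theta_T \se \theta_{p^{-1}(T)}$, which says at once that every $p^{-1}(T)$ is saturated by $\theta_T$, and then counts unions of the finitely many $\theta_T$-classes; the whole proof is three lines. You instead invoke the context--translation correspondence and Lemma~\ref{le:Morphisms and translations} to push the translation given by $p$ forward along the syntactic morphism, arriving at the explicit identity $p^{-1}(T) = \bigl(\bar p^{-1}(F)\bigr)\vp_T^{-1}$. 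This identity is precisely the saturation statement made concrete: it exhibits $p^{-1}(T)$ as the union of the $\theta_T$-classes indexed by $\bar p^{-1}(F) \se A$. Your route uses somewhat heavier machinery where the paper's citation does the work in one stroke, but in exchange it displays the finite data parametrizing the quotients and yields the explicit bound $2^{|A|}$ on their number. Both arguments are sound; yours is a mildly more constructive rendering of the same idea.
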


\begin{proof} By Proposition \ref{pr:SyntCongr and Operations}(c) every set
$p^{-1}(T)$ is saturated by $\theta_T$. On the other hand, it
follows from Proposition~\ref{pr:Myhill-Nerode Thm} that
$\theta_T$ has just finitely many equivalence classes.  Hence, the
number of different sets $p^{-1}(T)$ must be finite, too.
\end{proof}

Let us say that a recognizable unranked $\SX$-tree language $T$ is
\emph{effectively given} if $T = F\vpi$, where $\vp : \SXta \ra
\calA$ is an effectively given morphism from $\SXta$ to an
effectively given regular algebra $\SalgA$ and $F\se A$ is also
effectively given.

\begin{proposition}\label{pr:SA(T) effectively computed} If $T \in \RecSX$ is effectively given, then $\SA(T)$ can be effectively constructed.
\end{proposition}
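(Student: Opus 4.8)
The plan is to reduce the construction of $\SA(T)$ to the effective construction of the syntactic algebra of a subset of a \emph{regular} algebra, which is already available by Proposition~\ref{pr:SA(H) effectively computed}, and to transport the result back along $\vp$ using the isomorphism invariance supplied by Corollary~\ref{co:Morphisms and SA}. The only work beyond invoking these two results will be to make the intermediate regular algebra effectively available.

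First I would replace $\vp$ by an epimorphism onto its image. Since $\SXta = \angX$ by Proposition~\ref{pr:TermAlgebraFree}, the image $\calA' := \vp(\SXta)$ is exactly the subalgebra $\langle X\vp \rangle$ of $\calA$ generated by $X\vp$, and $\vp : \SXta \ra \calA'$ is an epimorphism. Writing $F' := F\cap A'$, we have $F\vpi = F'\vpi = T$ because every value $t\vp$ lies in $A'$, so Corollary~\ref{co:Morphisms and SA} gives $\SA(T) = \SA(F'\vpi) \cong \SA(F')$, where $\SA(F')$ is formed inside $\calA'$. As a subalgebra of the regular algebra $\calA$, the algebra $\calA'$ is itself regular by Lemma~\ref{le:Regular subalgebras and images}, and $F' \se A'$; hence Proposition~\ref{pr:SA(H) effectively computed} finishes the argument \emph{provided} $\calA'$ and $F'$ can be produced effectively.

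The main obstacle is therefore the effective computation of the image subalgebra $\calA'$: although $A'$ is a finite subset of $A$, it is the closure of $X\vp$ under the operations $f_\calA : A^* \ra A$, whose arguments range over arbitrarily long words, so a naive enumeration will not terminate. Here I would exploit regularity. The set $X\vp$ is computable because $\vp$ is effectively given. Starting from $B := X\vp$, I would iterate the closure step: for each $f\in \Si$ and each $a\in A$, test whether $f_\calA^{-1}(a)\cap B^*$ is nonempty and, if so, adjoin $a$ to $B$. This test is decidable, since $f_\calA^{-1}(a)$ is a regular language presented by a recognizer (as $\calA$ is effectively given regular) and $B^*$ is regular, so their intersection is an effectively obtainable regular language whose emptiness is decidable. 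Because $A$ is finite, the iteration stabilizes after finitely many rounds and yields $A' = \langle X\vp \rangle$.

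Finally, $\calA'$ is effectively given as a regular algebra: for each $f\in\Si$ and $a\in A'$, a recognizer of $f_{\calA'}^{-1}(a) = f_\calA^{-1}(a)\cap (A')^*$ is obtained from the given recognizer of $f_\calA^{-1}(a)$. Computing $F' = F\cap A'$ is immediate from the effectively given $F$, so Proposition~\ref{pr:SA(H) effectively computed} constructs $\SA(F')$, and the isomorphism $\SA(T)\cong\SA(F')$ completes the construction. I expect the closure computation of $\calA'$ to be the only genuinely non-routine point; everything else is bookkeeping with the effectiveness results already established.
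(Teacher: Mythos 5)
Your proposal is correct and takes essentially the same route as the paper: pass to the image of $\vp$ so that it becomes an epimorphism, conclude $\SA(T)\cong\SA(F')$ (the paper cites Proposition~\ref{pr:SyntAlg and Operations}, you cite the equivalent Corollary~\ref{co:Morphisms and SA}), and then invoke Proposition~\ref{pr:SA(H) effectively computed}. The only difference is one of detail: the paper compresses the first step into ``obviously, we may assume that $\vp$ is an epimorphism,'' whereas you make explicit the point that computing the image subalgebra $\langle X\vp\rangle$ effectively requires the regular-language emptiness tests on the sets $f_\calA^{-1}(a)\cap B^*$, which is a legitimate filling-in of what the paper treats as obvious rather than a different argument.
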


\begin{proof} Assume that $T = F\vpi$, where the morphism $\vp : \SXta \ra
\calA$, the regular algebra $\SalgA$ and the subset $F\se A$ are
effectively given. Obviously, we may assume that $\vp$ is an
epimorphism. Then $\SA(T) \cong \SA(F)$ by Proposition
\ref{pr:SyntAlg and Operations}, and $\SA(F)$ can be constructed
by Proposition \ref{pr:SA(H) effectively computed}.
\end{proof}

\section{Varieties of unranked tree languages}\label{se:VUTs and VRCs}

A \emph{family of (recognizable) unranked tree languages} is a
mapping $\calV$ that assigns to each pair $\Si$, $X$ a set
$\calV(\Si,X)$ of (recognizable) $\SX$-tree languages. We write
$\varV$ with the understanding that $\Si$ and $X$ range over all
operator alphabets and leaf alphabets, respectively. The inclusion
relation, unions and intersections of these families are defined
by the natural componentwise conditions. In particular, if $\varU$
and $\varV$ are two such families, then $\calU \se \calV$ means
that $\calU(\Si,X) \se \calV(\Si,X)$ for all $\Si$ and $X$, and
$\calU \cap \calV = \{\calU(\Si,X)\cap\calV(\Si,X)\}$.

\begin{definition}\label{de:VUT}{\rm A \emph{variety of unranked tree languages (VUT)} is a family of recognizable unranked tree languages  $\varV$ for which the following hold for all $\Si$, $\Om$, $X$ and $Y$.
\begin{itemize}
  \item[(V1)] $\es \neq \calV(\Si,X) \se \RecSX$.
  \item[(V2)] If $T\in \calV(\Si,X)$, then also $\SXt\setminus T$ belongs to $\calV(\Si,X)$.
  \item[(V3)] If $T,U\in \calV(\Si,X)$, then $T\cap U \in \calV(\Si,X)$.
  \item[(V4)] If $T\in \calV(\Si,X)$, then $p^{-1}(T) \in \calV(\Si,X)$ for every  $p\in \SXc$.
  \item[(V5)] If  $(\iota,\vp) : \SXta \ra \OYta$ is a g-morphism, then $T\vpi \in \calV(\Si,X)$ for every $T\in \calV(\Om,Y)$.
\end{itemize}
Let $\VUT$ denote the class of all VUTs.\ep}
\end{definition}

It is obvious that the intersection of any family of VUTs is a
VUT.
Hence, $(\VUT,\se)$ is a complete lattice. It is also clear that the
union of any directed family of VUTs is a VUT. The least VUT is
$Triv := \{\{\es,\SXt\}\}$ and the greatest one is the family $Rec
:= \{\RecSX\}$ of all recognizable unranked tree languages. The
following fact will be used in the proof of the variety theorem.

\begin{proposition}\label{pr:theta_T-classes in VUT} If $\varV$ is a VUT and $T\in \calV(\Si,X)$ for some $\Si$ and $X$, then every $\theta_T$-class is also in $\calV(\Si,X)$.
\end{proposition}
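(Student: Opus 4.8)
The plan is to realize every syntactic class $s\theta_T$ (for an arbitrary fixed $s\in\SXt$) as a \emph{finite} intersection of tree languages that the axioms (V2)--(V4) manifestly keep inside $\calV(\Si,X)$; since each $\theta_T$-class has this form, this will suffice. I would start from the contextual description of the syntactic congruence: $t\in s\theta_T$ holds precisely when $p(s)\in T \lra p(t)\in T$ for every $p\in\SXc$. Rewriting $p(t)\in T$ as $t\in p^{-1}(T)$ and branching on the fixed truth value of $p(s)\in T$, set
$$
C_p \: := \: \begin{cases} p^{-1}(T) & \text{if } p(s)\in T, \\ \SXt\setminus p^{-1}(T) & \text{if } p(s)\notin T. \end{cases}
$$
Then $t\in C_p$ is equivalent to $p(s)\in T \lra p(t)\in T$, so that $s\theta_T = \bigcap_{p\in\SXc} C_p$.

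Each factor $C_p$ lies in $\calV(\Si,X)$: since $T\in\RecSX$ by (V1), axiom (V4) gives $p^{-1}(T)\in\calV(\Si,X)$, and in the second case axiom (V2) adds its complement $\SXt\setminus p^{-1}(T)$. Thus every $C_p$ belongs to $\calV(\Si,X)$, and by (V3) together with induction any \emph{finite} intersection of such sets does as well.

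The one real obstacle is that the intersection above runs over the infinite set $\SXc$, whereas (V3) only yields closure under finite intersections. This is resolved by Lemma~\ref{le:Finite no of quotients}: as $T\in\RecSX$, there are only finitely many distinct quotients $p^{-1}(T)$. Moreover $C_p$ depends only on the language $p^{-1}(T)$ --- indeed $p(s)\in T \lra s\in p^{-1}(T)$, so $p^{-1}(T)=q^{-1}(T)$ forces both the same quotient and the same branch, hence $C_p=C_q$. Therefore the family $\{C_p \mid p\in\SXc\}$ is finite, and $s\theta_T$ is a finite intersection of members of $\calV(\Si,X)$ (nonempty, since the trivial context $\xi$ already contributes one term). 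A final application of (V3) then gives $s\theta_T\in\calV(\Si,X)$, completing the argument.
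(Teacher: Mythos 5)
Your proof is correct and follows essentially the same route as the paper: the paper writes $t\theta_T = \bigcap\{p^{-1}(T) \mid p(t)\in T\} \setminus \bigcup\{p^{-1}(T) \mid p(t)\notin T\}$, which is just your $\bigcap_{p} C_p$ with the set difference unfolded into complements, and both arguments reduce to finitely many terms via Lemma~\ref{le:Finite no of quotients} and then invoke (V2)--(V4). Your explicit remark that $C_p$ is determined by the quotient $p^{-1}(T)$ (same language forces the same branch) is a small point the paper leaves implicit, and it is correctly handled.
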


\begin{proof} It follows from the definition of $\theta_T$ that for any
$t\in \SXt$,
$$t\theta_T \: = \: \bigcap\{p^{-1}(T) \mid p\in \SXc, p(t) \in T\} \setminus
\bigcup\{p^{-1}(T) \mid p\in \SXc, p(t) \notin T\}.$$ By Lemma
\ref{le:Finite no of quotients}, this shows that $t\theta_T$ is in
$\calV(\Si,X)$.
\end{proof}

As in the ranked case, many VUTs have natural definitions based on
congruences of term algebras. Hence, before considering further
examples of VUTs, we introduce the systems of congruences that
yield varieties of unranked tree languages.  For any mapping $\vp
: A \ra B$ and any $\theta \in \Eq(B)$, let $\theta_\vp := \vp
\circ \theta \circ \vpi$. Then $\theta_\vp \in \Eq(A)$, and if
$B/\theta$ is finite, then so is $A/\theta_\vp$.

\begin{lemma}\label{GRC and g-morphism} If $(\om,\theta)\in \RGCon(\OYta)$, then $(\om_\iota,\theta_\vp) \in \RGCon(\SXta)$ for every g-morphism $(\iota,\vp) : \SXta \ra \OYta$.
\end{lemma}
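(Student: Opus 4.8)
The plan is to recognize $(\om_\iota,\theta_\vp)$ as the kernel of a single composed g-morphism and then to transfer regularity from the quotient $\OYta/(\om,\theta)$ via the correspondence between regular g-congruences and regular quotient algebras. First I would form the composition of $(\iota,\vp):\SXta \ra \OYta$ with the natural g-epimorphism $(\om_\natural,\theta_\natural):\OYta \ra \OYta/(\om,\theta)$ supplied by Lemma~\ref{le:Quotients&Homomorphisms}(a); by Lemma~\ref{le:g-morphisms}(a) the composite $(\iota\om_\natural,\vp\theta_\natural):\SXta \ra \OYta/(\om,\theta)$ is again a g-morphism. I would then record the (routine) identities $\ker(\iota\om_\natural) = \om_\iota$ and $\ker(\vp\theta_\natural) = \theta_\vp$, which are immediate from $\om_\iota = \iota\circ\om\circ\iota^{-1}$ and $\theta_\vp = \vp\circ\theta\circ\vpi$: two trees $s,t$ satisfy $s\,\theta_\vp\,t$ exactly when $(s\vp)\theta = (t\vp)\theta$, which is precisely $s\ker(\vp\theta_\natural)t$, and the operator components match in the same way. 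Hence $(\om_\iota,\theta_\vp) = \ker(\iota\om_\natural,\vp\theta_\natural)$, and Lemma~\ref{le:Quotients&Homomorphisms}(b) shows this kernel is a g-congruence of $\SXta$, which already disposes of the membership $(\om_\iota,\theta_\vp)\in\GCon(\SXta)$.

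The substantive part is to show that $\theta_\vp$ is regular, and I would obtain this through the quotient algebra rather than by manipulating preimages of languages directly. Co-restricting the composed g-morphism to its image gives a g-epimorphism of $\SXta$ onto a g-subalgebra $\calC$ of $\OYta/(\om,\theta)$, the image being a g-subalgebra by Lemma~\ref{le:g-morphisms}(c). By Lemma~\ref{le:Quotients&Homomorphisms}(b) we then have $\SXta/(\om_\iota,\theta_\vp) \cong_g \calC$. Now, since $(\om,\theta)\in\RGCon(\OYta)$, Proposition~\ref{pr:RegCon and RegAlg} tells us that $\OYta/(\om,\theta)$ is a regular algebra, so Lemma~\ref{le:Regular subalgebras and images} makes its g-subalgebra $\calC$, and hence the g-isomorphic algebra $\SXta/(\om_\iota,\theta_\vp)$, regular. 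A final application of Proposition~\ref{pr:RegCon and RegAlg} converts this back into $(\om_\iota,\theta_\vp)\in\RGCon(\SXta)$, as required.

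The only point demanding care is the bookkeeping in the first step: correctly matching the two kernel components with the pulled-back relations $\om_\iota$ and $\theta_\vp$, and checking that the image carries the operator alphabet obtained by applying $\iota\om_\natural$ to $\Si$, so that Lemma~\ref{le:g-morphisms}(c) applies verbatim. Once the pullback is seen as a kernel, regularity is inherited for free from $\OYta/(\om,\theta)$, and in particular no direct argument about regular languages over the infinite alphabet $\SXt$ is needed; this is exactly what makes the quotient-algebra route preferable to a hands-on computation with the sets $f_{\SXta}^{-1}(a\theta_\vp)$.
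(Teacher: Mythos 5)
Your proof is correct, but it follows a genuinely different route from the paper's. The paper argues elementarily: it verifies the defining compatibility condition for g-congruences directly, deducing from $f\,\om_\iota\,g$ and $s_i\,\theta_\vp\,t_i$ (for $i\in[m]$) that $\iota(f)\,\om\,\iota(g)$ and $s_i\vp\,\theta\,t_i\vp$, whence $f(s_1,\ldots,s_m)\vp \equiv_\theta g(t_1,\ldots,t_m)\vp$, i.e., $f_{\SXta}(s_1,\ldots,s_m)\,\theta_\vp\,g_{\SXta}(t_1,\ldots,t_m)$; the finiteness of the index of $\theta_\vp$ is covered by the remark immediately preceding the lemma. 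You instead identify $(\om_\iota,\theta_\vp)$ as $\ker(\iota\om_\natural,\vp\theta_\natural)$, so that Lemma~\ref{le:Quotients&Homomorphisms}(b) yields the g-congruence property, and then you transfer regularity structurally: the image $\calC$ of the composite is a g-subalgebra of the regular algebra $\OYta/(\om,\theta)$ (Lemma~\ref{le:g-morphisms}(c) together with Proposition~\ref{pr:RegCon and RegAlg}), so $\SXta/(\om_\iota,\theta_\vp)\cong_g\calC$ is regular by Lemma~\ref{le:Regular subalgebras and images}, and Proposition~\ref{pr:RegCon and RegAlg} converts this back into $(\om_\iota,\theta_\vp)\in\RGCon(\SXta)$. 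The paper's computation is shorter and self-contained at the level of the definitions, but its printed proof only establishes that $(\om_\iota,\theta_\vp)$ is a g-congruence of finite index; the regular-language half of regularity (that each $f_{\SXta/\theta_\vp}^{-1}(t\theta_\vp)$ is a regular language over $\SXt/\theta_\vp$) is left tacit there, whereas your quotient-algebra argument delivers exactly this piece for free, with no direct manipulation of languages. In that respect your proof is actually more complete than the printed one, at the modest cost of invoking more of the paper's structural machinery.
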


\begin{proof} If $f,g\in \Si$, $s_1,\ldots,s_m,t_1,\ldots,t_m\in \SXt$
($m\geq 0$) are such that $f\, \om_\iota \,g$ and $s_i\,
\theta_\vp \, t_i$ for every $i \in [m]$, then $\iota(f) \, \om \,
\iota(g)$ and $s_i\vp\, \theta \, t_i\vp$ for every $i\in [m]$,
and therefore
$$
f(s_1,\ldots,s_m)\vp \: = \: \iota(f)(s_1\vp,\ldots,s_m\vp)
 \: \equiv_\theta \: \iota(g)(t_1\vp,\ldots,t_m\vp)
 \: = g(t_1,\ldots,t_m)\vp,
$$
which shows that $f_{\SXta}(s_1,\ldots,s_m) \, \theta_\vp \,
g_{\SXta}(t_1,\ldots,t_m)$ as required.
\end{proof}

By a \emph{family of regular g-congruences} we mean a mapping
$\calC$ that assigns to each pair $\Si$, $X$ a subset
$\calC(\Si,X)$ of $\RGCon(\SXta)$. Again, we write $\varC$ and
order these families by the componentwise inclusion relation.

\begin{definition}\label{de:VRCs}{\rm  A family of regular g-congruences $\varC$ is a \emph{variety of regular g-congruences (VRC)} if the following three conditions hold for all $\Si$, $\Om$, $X$ and $Y$.
\begin{itemize}
  \item[{\rm (C1)}] For every $\si \in \Eq(\Si)$,
  $\calC(\Si,X)_\si \, := \, \{\theta \in \RCon(\SXta) \mid (\si,\theta) \in \calC(\Si,X)\}$ is a filter of $\RCon(\SXta)$.
  \item[{\rm (C2)}] If $(\si,\theta)\in \calC(\Si,X)$ and $(\si',\theta)\in \RGCon(\SXta)$, then $(\si',\theta)\in \calC(\Si,X)$.
  \item[{\rm (C3)}] If  $(\iota,\vp) : \SXta \ra \OYta$ is any g-morphism, then $(\om_\iota,\theta_\vp)\in \calC(\Si,X)$ for every $(\om,\theta)\in \calC(\Om,Y)$.\ep
\end{itemize}
}
\end{definition}

We shall now show that any variety of regular congruences yields a
variety of unranked tree languages. For any family $\varC$ of
regular g-congruences, let $\calC^t$ be the family of recognizable
unranked tree languages such that for all $\Si$ and $X$,
$$\calC^t(\Si,X) \: := \: \{T\se \SXt \mid (\Delta_\Si,\theta_T) \in \calC(\Si,X)\}.$$

\begin{proposition}\label{pr:VRC to VUT} If $\varC$ is a VRC, then $\calC^t = \{\calC^t(\Si,X)\}$ is a VUT.
\end{proposition}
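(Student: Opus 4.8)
The plan is to verify that $\calC^t$ satisfies each of the five conditions (V1)--(V5) defining a VUT, using the three VRC axioms (C1)--(C3) together with the properties of syntactic congruences established in Proposition~\ref{pr:SyntCongr and Operations}. The key observation throughout is that membership $T \in \calC^t(\Si,X)$ is controlled entirely by the syntactic congruence $\theta_T$ via the condition $(\Delta_\Si,\theta_T) \in \calC(\Si,X)$, so each closure property of $\calC^t$ should reduce to a corresponding relation between syntactic congruences. First I would dispose of the easy structural point that every $T \in \calC^t(\Si,X)$ is recognizable: since $(\Delta_\Si,\theta_T)$ lies in a family of \emph{regular} g-congruences, $\theta_T$ is regular, and so $T \in \RecSX$ by Proposition~\ref{pr:Myhill-Nerode Thm}. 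Nonemptiness of $\calC^t(\Si,X)$ (the remaining half of (V1)) follows because $(\Delta_\Si,\nabla_{\SXt})$ is the top element of $\RGCon(\SXta)$ and must lie in $\calC(\Si,X)$ by the filter condition (C1), giving $\es, \SXt \in \calC^t(\Si,X)$ since $\theta_\es = \theta_{\SXt} = \nabla_{\SXt}$.

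Next I would handle (V2)--(V4), which are the ``within a fixed alphabet'' closure properties and should follow from Proposition~\ref{pr:SyntCongr and Operations}(a)--(c) combined with the filter axiom (C1). For (V2), complementation, I use $\theta_{\SXt \setminus T} = \theta_T$ from part~(a), so membership is literally unchanged. For (V3), intersection, part~(b) gives $\theta_T \cap \theta_U \se \theta_{T\cap U}$; since $\calC(\Si,X)_{\Delta_\Si}$ is a filter of $\RCon(\SXta)$, it is closed under finite meets and under passing to larger congruences, so from $\theta_T,\theta_U \in \calC(\Si,X)_{\Delta_\Si}$ I get $\theta_T \cap \theta_U$ in the filter and then $\theta_{T\cap U}$ in the filter by upward closure. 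For (V4), translation preimages, part~(c) gives $\theta_T \se \theta_{p^{-1}(T)}$, and upward closure of the filter again finishes the job; here I should note that $p^{-1}(T)$ is genuinely a preimage under a translation of the term algebra, which is legitimate since translations of $\SXta$ correspond bijectively to contexts.

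The main obstacle, and the step requiring the most care, is (V5), the inverse-g-morphism condition, since this is the only axiom that crosses between different alphabets and therefore must invoke (C2) and (C3) rather than (C1). Given a g-morphism $(\iota,\vp) : \SXta \ra \OYta$ and $T \in \calC^t(\Om,Y)$, so that $(\Delta_\Om,\theta_T) \in \calC(\Om,Y)$, I want $(\Delta_\Si,\theta_{T\vpi}) \in \calC(\Si,X)$. Applying (C3) to $(\Delta_\Om,\theta_T)$ yields $((\Delta_\Om)_\iota,(\theta_T)_\vp) \in \calC(\Si,X)$, where $(\theta_T)_\vp = \vp\circ\theta_T\circ\vpi$. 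By Proposition~\ref{pr:SyntCongr and Operations}(d), $\vp\circ\theta_T\circ\vpi \se \theta_{T\vpi}$, and these two congruences need not be equal when $(\iota,\vp)$ is not surjective, so I cannot conclude directly. The plan is to first apply the filter axiom (C1): since $(\theta_T)_\vp \in \calC(\Si,X)_{(\Delta_\Om)_\iota}$ and $(\theta_T)_\vp \se \theta_{T\vpi}$ with $\theta_{T\vpi} \in \RCon(\SXta)$, upward closure of the filter puts $((\Delta_\Om)_\iota, \theta_{T\vpi})$ into $\calC(\Si,X)$. Finally I invoke (C2) to replace the first component: since $(\Delta_\Si,\theta_{T\vpi}) \in \RGCon(\SXta)$, the axiom (C2) lets me swap $(\Delta_\Om)_\iota$ for $\Delta_\Si$, yielding $(\Delta_\Si,\theta_{T\vpi}) \in \calC(\Si,X)$ and hence $T\vpi \in \calC^t(\Si,X)$. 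The delicate points to check are that $(\Delta_\Om)_\iota$ is indeed an equivalence on $\Si$ of the right form so that (C1) and (C2) apply, and that all congruences involved remain regular throughout (which holds because $\RGCon(\SXta)$ is a filter by Lemma~\ref{le:FCon filter}), so that each intermediate pair legitimately lies in the relevant $\RGCon$.
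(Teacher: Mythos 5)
Your proof is correct and takes essentially the same route as the paper's: (V1) is handled identically, (V2)--(V4) reduce to Proposition~\ref{pr:SyntCongr and Operations}(a)--(c) plus the filter axiom (C1), and for (V5) you use exactly the paper's three ingredients (C3), (C1) and (C2), merely applying (C1) (upgrading $\vp\circ\theta_T\circ\vpi$ to $\theta_{T\vpi}$ in the slice of $\ker\iota$) before (C2), where the paper swaps to $\Delta_\Si$ first and then goes upward --- both orders are sound. One cosmetic slip: $(\Delta_\Si,\nabla_{\SXt})$ is not the top element of $\RGCon(\SXta)$ (that is $(\nabla_\Si,\nabla_{\SXt})$); the justification you actually need, and in effect invoke, is that the filter $\calC(\Si,X)_{\Delta_\Si}$ is nonempty and upward closed in $\RCon(\SXta)$, hence contains $\nabla_{\SXt}$.
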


\begin{proof} Most of the proposition follows directly from the
definitions involved and Proposition \ref{pr:SyntCongr and
Operations}. Let us verify conditions (V1) and (V5) of Definition
\ref{de:VUT}.

Firstly, for any $\Si$ and $X$, $\calC^t(\Si,X) \neq \es$ because
$(\Delta_\Si,\nabla_{\SXt})$ certainly is in $\calC(\Si,X)$ and
$\theta_\es = \nabla_{\SXt}$. If $T\in \calC^t(\Si,X)$, then
$(\Delta_\Si,\theta_T)\in \calC(\Si,X)$ and hence $\theta_T \in
\RCon(\SXta)$, and by Proposition \ref{pr:Myhill-Nerode Thm} this
means that $T$ is recognizable. Hence, $\calC^t$ satisfies (V1).

If $(\iota,\vp) : \SXta \ra \OYta$ is any g-morphism and  $T\in
\calC^t(\Om,Y)$, then we have $(\Delta_\Om,\theta_T)\in \calC(\Om,Y)$
which, by condition (C3), implies
$$
(\iota\circ\Delta_\Om\circ\iota^{-1},\vp\circ\theta_T\circ\vpi)\in
\calC(\Si,X).
$$
Furthermore,
$(\Delta_\Si,\vp\circ\theta_T\circ\vpi)\in\RGCon(\SXta)$ because
$\theta_T \in \RCon(\OYta)$ implies that
$\vp\circ\theta_T\circ\vpi\in\RCon(\SXta)$. Hence,
$(\Delta_\Si,\vp\circ\theta_T\circ\vpi)\in\calC(\Si,X)$ by
condition (C2). On the other hand, $\vp\circ\theta_T\circ\vpi \se
\theta_{T\vpi}$ by Proposition \ref{pr:SyntCongr and
Operations}(d), and therefore
$(\Delta_\Si,\theta_{T\vpi})\in\calC(\Si,X)$ by (C1). This means
that $T\vpi \in \calC^t(\Si,X)$ and therefore $\calC^t$ satisfies
(V5).
\end{proof}

In many of the following examples, we define a whole family
$\calF$ of VUTs indexed by some parameter(s). If $\calF$ forms an
ascending chain or, more generally, is directed, then the union
$\bigcup \calF$ is always also a VUT. In the ranked case the basic
varieties of tree languages forming such a family $\calF$ are
usually defined by so-called \emph{principal varieties of
congruences} \cite{Stei92,Stei05} that consist of principal
filters of the term algebras. Here the same purpose will be served
by the following more general notion.

\begin{definition}\label{de:Consistent system}{\rm For each pair $\Si$, $X$, let $\theta(\Si,X)$ be  a congruence of $\SXta$. We call $\Theta = \{\theta(\Si,X)\}_{\Si,X}$ a \emph{consistent system of congruences} if $\theta(\Si,X) \,\se\, \vp\circ\theta(\Om,Y)\circ\vpi$ for all alphabets $\Si$, $\Om$, $X$ and $Y$, and every g-morphism $(\iota,\vp) : \SXta \ra \OYta$.

For any such system of congruences $\Theta =
\{\theta(\Si,X)\}_{\Si,X}$, and  all $\Si$ and $X$,  let
$$\calC_\Theta(\Si,X) \, := \, \{(\si,\theta) \in \RGCon(\SXta)
\mid \theta(\Si,X) \se \theta \},$$ and let $\calC_\Theta :=
\{\calC_\Theta(\Si,X)\}$ be the thus defined family of regular
congruences. }\ep
\end{definition}

\begin{lemma}\label{le:Congruences to VCRs}  For any consistent system of congruences $\Theta$, $\calC_\Theta$ is a VRC.
\end{lemma}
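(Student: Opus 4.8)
The plan is to verify the three defining conditions (C1)--(C3) of a VRC directly from the definition of $\calC_\Theta$, exploiting that $\calC_\Theta(\Si,X)$ is carved out inside $\RGCon(\SXta)$ by the single inequality $\theta(\Si,X)\se\theta$, which constrains only the second component. Because membership then depends on the first component $\si$ solely through the requirement that $(\si,\theta)$ be a genuine g-congruence, conditions (C2) and the filter property in (C1) will reduce to lattice-theoretic manipulations in $\GCon(\SXta)$, while (C3) will follow from the consistency hypothesis together with Lemma~\ref{GRC and g-morphism}.

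For (C1), fix $\si\in\Eq(\Si)$ and set $G := \calC_\Theta(\Si,X)_\si = \{\theta\in\RCon(\SXta) \mid (\si,\theta)\in\GCon(\SXta),\ \theta(\Si,X)\se\theta\}$. I would first note that $G\neq\es$: the pair $(\si,\nabla_{\SXt})$ is always a g-congruence, $\nabla_{\SXt}$ is regular, and $\theta(\Si,X)\se\nabla_{\SXt}$ trivially, so $\nabla_{\SXt}\in G$. For meet-closure, given $\theta,\rho\in G$, the intersection $\theta\cap\rho$ is regular because $\RCon(\SXta)$ is a filter of $\Con(\SXta)$ by Lemma~\ref{le:FCon filter}, and $(\si,\theta\cap\rho)=(\si,\theta)\wedge(\si,\rho)\in\GCon(\SXta)$ since meets in $\GCon(\SXta)$ are formed componentwise; as $\theta(\Si,X)$ lies below both $\theta$ and $\rho$, it lies below $\theta\cap\rho$. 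For upward-closure, suppose $\theta\in G$, $\rho\in\RCon(\SXta)$ and $\theta\se\rho$. The only delicate point is that $(\si,\rho)$ remain a g-congruence, and I would obtain this from the join decomposition $(\si,\rho)=(\si,\theta)\vee(\Delta_\Si,\rho)$: the pair $(\Delta_\Si,\rho)$ is a g-congruence because $\rho$ is an ordinary congruence, joins in $\GCon(\SXta)$ are componentwise, $\si\vee\Delta_\Si=\si$, and $\theta\vee\rho=\rho$. Together with $\theta(\Si,X)\se\theta\se\rho$ this gives $\rho\in G$.

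Condition (C2) is essentially immediate: if $(\si,\theta)\in\calC_\Theta(\Si,X)$ then $\theta(\Si,X)\se\theta$, and this constraint involves only $\theta$; hence any $(\si',\theta)\in\RGCon(\SXta)$ automatically satisfies it and lies in $\calC_\Theta(\Si,X)$. For (C3), let $(\iota,\vp):\SXta\ra\OYta$ be a g-morphism and $(\om,\theta)\in\calC_\Theta(\Om,Y)$, so that $(\om,\theta)\in\RGCon(\OYta)$ and $\theta(\Om,Y)\se\theta$. By Lemma~\ref{GRC and g-morphism} we have $(\om_\iota,\theta_\vp)\in\RGCon(\SXta)$, which is the regularity part. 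For the inequality, consistency of $\Theta$ yields $\theta(\Si,X)\se\vp\circ\theta(\Om,Y)\circ\vpi$, and since relational composition is monotone and $\theta(\Om,Y)\se\theta$, the right-hand side is contained in $\vp\circ\theta\circ\vpi=\theta_\vp$; thus $\theta(\Si,X)\se\theta_\vp$ and $(\om_\iota,\theta_\vp)\in\calC_\Theta(\Si,X)$.

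The main obstacle is the upward-closure half of (C1): coarsening $\theta$ to $\rho$ while keeping $\si$ fixed must preserve the g-congruence property, and this is exactly where a naive direct argument fails. The clean route around it is the join decomposition $(\si,\rho)=(\si,\theta)\vee(\Delta_\Si,\rho)$ in the complete lattice $\GCon(\SXta)$, which is the only genuinely non-routine step; everything else is bookkeeping with the filter property of $\RCon(\SXta)$, the componentwise lattice operations on $\GCon(\SXta)$, and the monotonicity of relational composition.
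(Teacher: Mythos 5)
Your proof is correct and follows essentially the same route as the paper's: nonemptiness of $\calC_\Theta(\Si,X)_\si$ via $(\si,\nabla_{\SXt})$, meet- and upward-closure from Lemma~\ref{le:FCon filter} together with the componentwise lattice operations of $\GCon(\SXta)$, condition (C2) by noting that the defining inequality $\theta(\Si,X)\se\theta$ constrains only the second component, and (C3) from Lemma~\ref{GRC and g-morphism} plus the consistency hypothesis and monotonicity of relational composition. The only difference is one of explicitness: where the paper concludes $(\si,\rho)\in\RGCon(\SXta)$ directly by citing Lemma~\ref{le:FCon filter}, you spell out the implicit prerequisite that $(\si,\rho)$ is a g-congruence at all, via the join decomposition $(\si,\rho)=(\si,\theta)\vee(\Delta_\Si,\rho)$ in the complete lattice $\GCon(\SXta)$ --- a slightly more careful rendering of the same step.
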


\begin{proof} (C1) If $\si \in \Eq(\Si)$, then $(\si,\nabla_{\SXt})\in
\calC_\Theta(\Si,X)$, and hence $\calC_\Theta(\Si,X)_\si \neq
\es$. If $\theta \se \rho$ and $\theta \in
\calC_\Theta(\Si,X)_\si$, then $\theta(\Si,X) \se \theta \se
\rho$. On the other hand, $(\si,\theta)\in \RGCon(\SXta)$ implies
$(\si,\rho)\in \RGCon(\SXta)$ by Lemma \ref{le:FCon filter}. Hence
$\rho \in \calC_\Theta(\Si,X)_\si$. If $\theta,\rho \in
\calC_\Theta(\Si,X)_\si$, then $\theta(\Si,X) \se \theta,\rho$ and
$(\si,\theta),(\si,\rho) \in \RGCon(\SXta)$, and therefore
$\theta(\Si,X) \se \theta\cap\rho$ and -- again by Lemma
\ref{le:FCon filter}, $(\si,\theta\cap\rho) =
(\si,\theta)\wedge(\si,\rho) \in \RGCon(\SXta)$. This means that
$\theta\cap\rho \in \calC_\Theta(\Si,X)_\si$, and thus we have
shown that $\calC_\Theta(\Si,X)_\si$ is a filter in
$\RCon(\SXta)$.

(C2) If $(\si,\theta)\in \calC_\Theta(\Si,X)$ and
$(\si',\theta)\in \RGCon(\SXta)$, then also $(\si',\theta)\in
\calC_\Theta(\Si,X)$ because $\theta(\Si,X) \se \theta$ by the
first assumption.

(C3) If  $(\iota,\vp) : \SXta \ra \OYta$ is a g-morphism and
$(\om,\theta)\in \calC_\Theta(\Om,Y)$, then
$(\om_\iota,\theta_\vp)\in \RGCon(\SXta)$ by Lemma \ref{GRC and
g-morphism}, and $\theta(\Si,X) \se \theta(\Om,Y)_\vp \se
\theta_\vp$ by our assumption about the
$\theta(\Si,X)$-congruences and the fact that $\theta(\Om,Y) \se
\theta$. Hence $(\om_\iota,\theta_\vp)\in \calC_\Theta(\Si,X)$.
\end{proof}

Let us call a VRC \emph{quasi-principal} if it is defined this way
by a consistent system of congruences $\Theta$. The corresponding
VUT $\calC_\Theta^t$ is written as $\calV_\Theta =
\{\calV_\Theta(\Si,X)\}$ and also it is said to be
\emph{quasi-principal}. The following description of
$\calV_\Theta$ is a direct consequence of its definition.

\begin{lemma}\label{Quasi-principal VUT} Let $\Theta = \{\theta(\Si,X)\}_{\Si,X}$ be  any consistent system of congruences.
Then $\calV_\Theta(\Si,X) \, = \, \{T\in \RecSX \mid \theta(\Si,X)
\se \theta_T\}$ for all $\Si$ and $X$. \ep
\end{lemma}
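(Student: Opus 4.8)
The plan is to unwind the definitions, since the claimed description is essentially a restatement once the Myhill--Nerode characterization of recognizability is applied. First I would recall that $\calV_\Theta = \calC_\Theta^t$ by definition, so that according to the definition of the operator $(\cdot)^t$, a tree language $T \se \SXt$ belongs to $\calV_\Theta(\Si,X)$ if and only if $(\Delta_\Si,\theta_T) \in \calC_\Theta(\Si,X)$. By the definition of $\calC_\Theta(\Si,X)$, this last condition holds exactly when both $(\Delta_\Si,\theta_T) \in \RGCon(\SXta)$ and $\theta(\Si,X) \se \theta_T$.

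Next I would show that the membership $(\Delta_\Si,\theta_T) \in \RGCon(\SXta)$ is equivalent to $T \in \RecSX$. Indeed, $\theta_T$ is an ordinary congruence of $\SXta$, so $(\Delta_\Si,\theta_T)$ is automatically a g-congruence, and by the definition of regular g-congruences it lies in $\RGCon(\SXta)$ precisely when $\theta_T$ is a regular congruence, i.e., $\theta_T \in \RCon(\SXta)$. By Proposition~\ref{pr:Myhill-Nerode Thm} (the equivalence of (a) and (c)), this holds if and only if $T$ is recognizable.

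Combining the two observations yields
$$\calV_\Theta(\Si,X) = \{T \se \SXt \mid T \in \RecSX \; \mathrm{and} \; \theta(\Si,X) \se \theta_T\} = \{T \in \RecSX \mid \theta(\Si,X) \se \theta_T\},$$
as claimed.

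There is no genuinely hard step here; the only point requiring care is keeping straight the distinction between $(\Delta_\Si,\theta_T)$ being a regular g-congruence and $T$ being recognizable, a gap that is bridged by the Myhill--Nerode theorem rather than by any direct manipulation of $\Theta$. I would also note that the consistency of the system $\Theta$ plays no role in this particular statement---it was already used in Lemma~\ref{le:Congruences to VCRs} to guarantee that $\calC_\Theta$ is a VRC---since here the condition $\theta(\Si,X) \se \theta_T$ is simply read off from the defining inequality of $\calC_\Theta$.
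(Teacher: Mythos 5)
Your proof is correct and matches the paper's intent exactly: the paper states this lemma without proof, calling it ``a direct consequence of its definition,'' and your argument is precisely that unwinding---$T\in\calV_\Theta(\Si,X)$ iff $(\Delta_\Si,\theta_T)\in\calC_\Theta(\Si,X)$ iff $\theta_T\in\RCon(\SXta)$ and $\theta(\Si,X)\se\theta_T$, with Proposition~\ref{pr:Myhill-Nerode Thm} converting regularity of $\theta_T$ into recognizability of $T$. Your side remark that consistency of $\Theta$ is not needed for this particular equality (only for $\calC_\Theta$ being a VRC, hence $\calV_\Theta$ being a VUT) is also accurate.
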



\section{The variety theorem}\label{se:Variety theorem}

We shall now prove that the following maps $\bfK \mapsto \bfK^t$
and $\calV \mapsto \calV^a$ form a pair of mutually inverse
isomorphisms between the lattices $(\VRA,\se)$ and $(\VUT,\se)$.

\begin{definition}\label{de:VRA to VUT to VRA}{\rm For any VRA $\bfK$, let $\bfK^t = \{\bfK^t(\Si,X)\}$ be the family of recognizable unranked tree languages in which, for all $\Si$ and $X$,
$$\bfK^t(\Si,X) \, := \, \{T\se \SXt \mid  \SA(T) \in \bfK\}.$$
For any VUT $\varV$, let $\calV^a$ be the VRA generated by the
algebras $\SA(T)$, where $T\in \calV(\Si,X)$ for some $\Si$ and
$X$. }\ep
\end{definition}

Note that $\calV^a$ is a well-defined VRA for every VUT $\calV$
because all the algebras $\SA(T)$ with $T\in \calV(\Si,X)$ are
regular. Note also that the maps $\bfK \mapsto \bfK^t$ and $\calV
\mapsto \calV^a$ were defined in terms of syntactic algebras, but
it follows from Lemma \ref{le:SAs, RAs and VRAs} that definitions
that use reduced syntactic algebras (similarly as in
\cite{Stei98}) would give the same maps.

\begin{lemma}\label{le:bfKt is a VUT} For any VRA $\bfK$, $\bfK^t$ is a VUT.
\end{lemma}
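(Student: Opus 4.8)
The plan is to verify the five closure conditions (V1)--(V5) of Definition~\ref{de:VUT} for the family $\bfK^t = \{\bfK^t(\Si,X)\}$. The common thread is that $T \in \bfK^t(\Si,X)$ means $\SA(T) \in \bfK$, and since every member of a VRA is a regular algebra, Corollary~\ref{co:Rec and SA} immediately tells us that such $T$ is recognizable; this settles the inclusion $\bfK^t(\Si,X) \se \RecSX$ in (V1). Each remaining condition is then an operation on tree languages matched against a corresponding operation on syntactic algebras recorded in Proposition~\ref{pr:SyntAlg and Operations}, combined with the closure of $\bfK$ under $S_g$, $H_g$ and $P_g$.

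For the nonemptiness part of (V1) I would note that the syntactic congruence of $\es$ is $\nabla_{\SXt}$, so $\SA(\es)$ is the trivial $\Si$-algebra; this belongs to $\bfK$ since it arises as an empty ($n=0$) g-product and $P_g(\bfK) \se \bfK$, whence $\es \in \bfK^t(\Si,X)$. Condition (V2) follows at once from $\SA(\SXt \setminus T) = \SA(T)$ (Proposition~\ref{pr:SyntAlg and Operations}(a)). For (V3), from $\SA(T),\SA(U) \in \bfK$ the direct product $\SA(T) \times \SA(U)$ lies in $P_f(\bfK) \se P_g(\bfK) \se \bfK$, and Proposition~\ref{pr:SyntAlg and Operations}(b) gives $\SA(T \cap U) \preceq \SA(T) \times \SA(U)$; as ordinary covering is a special case of g-covering and $\bfK$ is closed under $S_g$ and $H_g$, we get $\SA(T \cap U) \in \bfK$. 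Similarly (V4) is immediate from Proposition~\ref{pr:SyntAlg and Operations}(c), which exhibits $\SA(p^{-1}(T))$ as an epimorphic image of $\SA(T) \in \bfK$, using closure under $H_g$.

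The one delicate point, and what I expect to be the main obstacle, is (V5): here the g-morphism $(\iota,\vp) : \SXta \ra \OYta$ may genuinely change the operator alphabet, so there is no statement giving $\SA(T\vpi) \preceq \SA(T)$ for ordinary syntactic algebras. This is precisely why the reduced syntactic algebras are needed. The plan is to use Lemma~\ref{le:SAs, RAs and VRAs} to transfer the hypothesis $\SA(T) \in \bfK$ into $\RA(T) \in \bfK$ (recall $\RA(T) = \SXta/(\si_T,\theta_T)$ and $\SA(T) = \SXta/\theta_T$ share the underlying congruence $\theta_T$), then invoke Proposition~\ref{pr:SyntAlg and Operations}(d), which yields $\RA(T\vpi) \preceq_g \RA(T)$. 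Closure of $\bfK$ under $S_g$ and $H_g$ gives $\RA(T\vpi) \in \bfK$, and a second application of Lemma~\ref{le:SAs, RAs and VRAs} returns $\SA(T\vpi) \in \bfK$, i.e.\ $T\vpi \in \bfK^t(\Si,X)$. This completes the verification that $\bfK^t$ satisfies (V1)--(V5) and hence is a VUT.
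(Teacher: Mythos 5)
Your proof is correct and follows essentially the same route as the paper: (V1)--(V4) via Corollary~\ref{co:Rec and SA}, the triviality of $\SA(\es)$, and Proposition~\ref{pr:SyntAlg and Operations}(a)--(c) together with closure of $\bfK$ under the class operators, and (V5) via the same two-step passage $\SA(T)\in\bfK \Leftrightarrow \RA(T)\in\bfK$ (Lemma~\ref{le:SAs, RAs and VRAs}), then $\RA(T\vpi)\preceq_g \RA(T)$ (Proposition~\ref{pr:SyntAlg and Operations}(d)), then back to $\SA(T\vpi)\in\bfK$. You correctly identified (V5) as the delicate point where reduced syntactic algebras are indispensable, which is exactly how the paper handles it.
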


\begin{proof} It follows from Corollary \ref{co:Rec and SA}  that
$\bfK^t(\Si,X) \se Rec(\Si,X)$ for all $\Si$ and $X$. Moreover,
$\bfK^t(\Si,X) \neq \es$ because $\bfK$ contains at least the
trivial $\Si$-algebras. Hence, $\bfK^t$ satisfies condition (V1)
of Definition \ref{de:VUT}. Conditions (V2)--(V4) follow
immediately from Proposition \ref{pr:SyntAlg and Operations} and
the fact that $\bfK$ is a VRA. As to (V5), we may argue as
follows. If $T\in \bfK^t(\Om,Y)$, then $\SA(T)\in \bfK$. By Lemma
\ref{le:SAs, RAs and VRAs} this implies $\RA(T)\in \bfK$ which by
Proposition \ref{pr:SyntAlg and Operations}(d) implies that
$\RA(T\vpi)\in \bfK$. Using again Lemma \ref{le:SAs, RAs and
VRAs}, we get $\SA(T\vpi)\in \bfK$ from which $T\vpi \in
\bfK^t(\Si,X)$ follows.
\end{proof}

It is clear that the maps $\bfK \mapsto \bfK^t$ and $\calV \mapsto
\calV^a$ are isotone. To prove that they define isomorphisms
between the lattices  $(\VRA,\se)$ and $(\VUT,\se)$ it therefore
suffices to show that they are inverses of each other.

\begin{lemma}\label{le:VRA to VUT to VRA}  $\bfK^{ta} = \bfK$ for every VRA $\bfK$.
\end{lemma}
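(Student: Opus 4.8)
The plan is to establish the two inclusions $\bfK \se \bfK^{ta}$ and $\bfK^{ta} \se \bfK$ separately, exploiting the $V_g = HSP_g$ representation from Proposition~\ref{pr:V_g = HSP_g} together with the facts that every VRA is generated by its regular syntactic algebras (Proposition~\ref{pr:SDS-irr is syntactic}) and that syntactic algebras are exactly those having a disjunctive subset (Proposition~\ref{pr:SyntAlg and DisjSubs SdiAlg}).

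For the inclusion $\bfK \se \bfK^{ta}$, recall that $\bfK^{ta}$ is the VRA generated by the algebras $\SA(T)$ with $T\in\bfK^t(\Si,X)$ for some $\Si$, $X$. It therefore suffices to show that every $\calA\in\bfK$ lies in this generated VRA, and by Proposition~\ref{pr:SDS-irr is syntactic} it is enough to treat the case where $\calA$ is a regular syntactic algebra. If $\calA = (A,\Si)$ is syntactic, then by Proposition~\ref{pr:SyntAlg and DisjSubs SdiAlg} it has a disjunctive subset $D\se A$, so $\theta_D = \Delta_A$ and hence $\SA(D) = \calA/\theta_D \cong \calA$. I would then exhibit a term language whose syntactic algebra is $\calA$: choosing any g-epimorphism $(\iota,\vp):\SXta \ra \calA$ (which exists because term algebras are free, by Proposition~\ref{pr:TermAlgebraFree}, taking $X$ large enough to surject onto generators), set $T := D\vp^{-1}$. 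By Corollary~\ref{co:Morphisms and SA} for the epimorphism case (or Proposition~\ref{pr:SyntAlg and Operations}(d) together with Lemma~\ref{le:SAs, RAs and VRAs}), $\SA(T) \cong \SA(D\vp^{-1}) \cong \SA(D) \cong \calA$. Since $\SA(T)\cong\calA \in\bfK$, we have $T\in\bfK^t(\Si,X)$, and therefore $\calA \cong \SA(T)$ is among the generators of $\bfK^{ta}$, giving $\calA\in\bfK^{ta}$.

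For the reverse inclusion $\bfK^{ta}\se\bfK$, it suffices to show that every generator $\SA(T)$ of $\bfK^{ta}$ already lies in $\bfK$, because $\bfK$, being a VRA, is closed under $H_g$, $S_g$, $P_g$ and hence contains the VRA generated by any subclass of $\bfK$. But if $T\in\bfK^t(\Si,X)$, then by the very definition of $\bfK^t$ we have $\SA(T)\in\bfK$. Thus every generating algebra of $\bfK^{ta}$ is in $\bfK$, and so $\bfK^{ta} = V_g(\{\SA(T) : T\in\bfK^t\}) \se V_g(\bfK) = \bfK$, the last equality holding because $\bfK$ is itself a VRA.

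The main obstacle is the realization step in the first inclusion: producing, for a given regular syntactic algebra $\calA$, an actual \emph{unranked tree language} $T$ over some $\SX$ whose syntactic algebra is $g$-isomorphic (in fact isomorphic, since the operator alphabet is unchanged here) to $\calA$. The subtlety is that $\bfK^t$ is defined via subsets of term algebras $\SXta$, not via subsets of arbitrary algebras, so the disjunctive subset $D$ living in $\calA$ must be pulled back along a genuine morphism from a term algebra; this requires choosing the leaf alphabet $X$ so that a surjection $\SXta\ra\calA$ exists and then invoking the morphism-invariance of syntactic algebras from Corollary~\ref{co:Morphisms and SA}. Once this identification $\SA(D\vp^{-1})\cong\calA$ is secured, everything else is a formal consequence of the closure properties of VRAs and the generation statements already established.
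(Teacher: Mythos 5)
Your proof is correct and takes essentially the same route as the paper's own: the inclusion $\bfK^{ta}\se\bfK$ from the definition of $\bfK^t$ plus closure of $\bfK$ under $V_g$, and the inclusion $\bfK\se\bfK^{ta}$ by reducing to regular syntactic generators (Proposition~\ref{pr:SDS-irr is syntactic}), pulling back a disjunctive subset $D$ along an epimorphism $\vp:\SXta\ra\calA$ for large enough $X$, and identifying $\SA(D\vpi)\cong\SA(D)\cong\calA$ via Corollary~\ref{co:Morphisms and SA}. The only cosmetic difference is that you invoke Proposition~\ref{pr:V_g = HSP_g}, which is not actually needed here.
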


\begin{proof} The VRA $\bfK^{ta}$ is generated by the algebras $\SA(T)$,
where $T\in \bfK^t(\Si,X)$ for some $\Si$ and $X$, but these
algebras are, by the definition of $\bfK^t$, also in $\bfK$.
Hence, $\bfK^{ta} \se \bfK$.

On the other hand, by Proposition \ref{pr:SDS-irr is syntactic},
$\bfK$ is generated by regular syntactic algebras. Let $\SalgA$ be
any such generating algebra. If $X$ is a sufficiently large leaf
alphabet, there is an epimorphism $\vp : \SXta \ra \calA$.
Furthermore, $\calA$ has a disjunctive subset $D$ by Proposition
\ref{pr:SyntAlg and DisjSubs SdiAlg}. The $\SX$-tree language $T
:= D\vpi$ is recognizable, and  $\SA(T) \cong \SA(D)$ by Corollary
\ref{co:Morphisms and SA}. On the other hand,  $\calA \cong
\SA(D)$ because $D$ disjunctive, and therefore also $\SA(T)\in
\bfK$, which shows that $T \in \bfK^t(\Si,X)$. As this means that
$\SA(T) \in \bfK^{ta}$, we also get $\calA \in \bfK^{ta}$ and we
can conclude that $\bfK\se\bfK^{ta}$.
\end{proof}

\begin{lemma}\label{le:VUT to VRA to VUT}  $\calV^{at} = \calV$ for every VUT $\calV$.
\end{lemma}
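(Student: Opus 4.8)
The plan is to establish the two inclusions $\calV \se \calV^{at}$ and $\calV^{at} \se \calV$ separately. The first is immediate from the definitions: if $T \in \calV(\Si,X)$, then $\SA(T)$ is one of the algebras generating the VRA $\calV^a$, so $\SA(T) \in \calV^a$, and this is exactly the condition $T \in \calV^{at}(\Si,X)$ coming from the definition of $\bfK^t$ in Definition~\ref{de:VRA to VUT to VRA} applied to $\bfK = \calV^a$.

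For the reverse inclusion I would start from $T \in \calV^{at}(\Si,X)$, i.e.\ $\SA(T) \in \calV^a$. Writing $\mathcal{K}$ for the class of all syntactic algebras $\SA(U)$ with $U \in \calV(\Om,Y)$ for some $\Om, Y$, we have $\calV^a = V_g(\mathcal{K})$, and since $V_g = HSP_g$ by Proposition~\ref{pr:V_g = HSP_g}, the algebra $\SA(T)$ is an epimorphic image of a subalgebra of a g-product $\mathcal{P} = \vk(\SA(U_1),\ldots,\SA(U_n))$ of finitely many members of $\mathcal{K}$, where $U_i \in \calV(\Om_i,Y_i)$. As $H$ and $S$ preserve the operator alphabet, $\mathcal{P}$ is itself a $\Si$-algebra, so $\SA(T)\preceq \mathcal{P}$ and Proposition~\ref{pr:SA(T) min rec of T} yields a morphism $\psi : \SXta \ra \mathcal{P}$ and a subset $F$ of the finite carrier $A_1\times\cdots\times A_n$ with $T = F\psi^{-1}$. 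Putting $\psi_i := \psi\pi_i$ and using that each $(\vk_i,\pi_i) : \mathcal{P} \ra \SA(U_i)$ is a g-epimorphism, I obtain the finite Boolean decomposition $T = \bigcup_{(b_1,\ldots,b_n)\in F}\bigcap_{i=1}^{n} \{b_i\}\psi_i^{-1}$, so it suffices to prove each $\{b_i\}\psi_i^{-1} \in \calV(\Si,X)$.

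The crux is then to factor the g-morphism $(\vk_i,\psi_i) : \SXta \ra \SA(U_i)$ through the term algebra $\calT_{\Om_i}(Y_i)$ and its syntactic morphism $\vp_{U_i}$. Since $\vp_{U_i}$ is surjective, for each $x\in X$ I can choose a representative $s_x \in T_{\Om_i}(Y_i)$ with $s_x\vp_{U_i} = x\psi_i$; freeness of the term algebra (Proposition~\ref{pr:TermAlgebraFree}) then provides a unique g-morphism $(\vk_i,\phi_i) : \SXta \ra \calT_{\Om_i}(Y_i)$ with $x\phi_i = s_x$, and the uniqueness clause forces $\psi_i = \phi_i\vp_{U_i}$. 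Hence $\{b_i\}\psi_i^{-1} = (\{b_i\}\vp_{U_i}^{-1})\phi_i^{-1}$, where $\{b_i\}\vp_{U_i}^{-1}$ is a single $\theta_{U_i}$-class and therefore lies in $\calV(\Om_i,Y_i)$ by Proposition~\ref{pr:theta_T-classes in VUT}. Applying axiom (V5) to the g-morphism $(\vk_i,\phi_i)$ gives $\{b_i\}\psi_i^{-1} \in \calV(\Si,X)$.

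Finally, since VUTs are closed under finite Boolean operations by (V2) and (V3) (in particular $\es,\SXt \in \calV(\Si,X)$, which also covers the degenerate case $n=0$), the displayed decomposition of $T$ shows $T \in \calV(\Si,X)$, completing $\calV^{at}\se\calV$ and hence the equality. I expect the factorization step to be the main obstacle: one must correctly align the \emph{generalized} morphism $(\vk_i,\psi_i)$ into the syntactic algebra with the \emph{ordinary} syntactic morphism $\vp_{U_i}$ out of the term algebra, so that the preimage of a single element becomes a full $\theta_{U_i}$-class and Proposition~\ref{pr:theta_T-classes in VUT} together with (V5) can be brought to bear; the surrounding reduction via $V_g = HSP_g$, Proposition~\ref{pr:SA(T) min rec of T}, and Boolean closure is then routine bookkeeping.
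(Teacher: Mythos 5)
Your proof is correct and takes essentially the same route as the paper's: reduce via $V_g = HSP_g$ and Proposition~\ref{pr:SA(T) min rec of T}, write $T$ as a finite union of intersections of preimages of syntactic classes, lift the recognizing morphism through the syntactic morphisms using surjectivity and the freeness of $\SXta$, and finish with Proposition~\ref{pr:theta_T-classes in VUT}, (V5) and Boolean closure. The only difference is organizational: the paper constructs a single lift $\psi$ into the g-product $\vk(\calT_1,\ldots,\calT_n)$ of term algebras and composes with the projections $\tau_i$, whereas you construct the $n$ component lifts $\phi_i$ into each $\calT_{\Om_i}(Y_i)$ directly --- the same surjectivity-plus-uniqueness argument either way.
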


\begin{proof} If $T\in \calV(\Si,X)$, then $\SA(T)\in \calV^a$ implies $T\in
\calV^{at}(\Si,X)$, and hence $\calV \se \calV^{at}$.

If $T\in \calV^{at}(\Si,X)$, then $\SA(T)\in \calV^a$ and by
Proposition \ref{pr:V_g = HSP_g} this means that
$$
    \SA(T) \, \preceq \, \vk(\SA(U_1),\ldots,\SA(U_n)),
$$
where $n\geq 0$, $U_1\in \calV(\Si_1,X_1)$, \dots, $U_n\in
\calV(\Si_n,X_n)$ for some alphabets $\Si_1,\ldots,\Si_n$ and
$X_1,\ldots,X_n$, and $\vk$ is a mapping from $\Si$ to
$\Si_1\times\ldots\times\Si_n$.

For each $i\in [n]$, denote $\calT_{\Si_i}(X_i)$ by $\calT_i$, and
let $\SA(U_i) = (A_i,\Si_i)$. Furthermore, let $\vp_i : \calT_i
\ra \SA(U_i), t \mapsto t\theta_{U_i},$ be the syntactic morphism
of $U_i$. By Proposition \ref{pr:SA(T) min rec of T}, there exist
a morphism
$$\vp : \SXta \ra \vk(\SA(U_1),\ldots,\SA(U_n))$$
and a subset $F\se A_1\times \ldots \times A_n$ such that $T =
F\vpi$. For each $i\in [n]$, define $\lambda_i : \Si \ra \Si_i$ so
that for any $f\in \Si$, if $\vk(f) = (f_1,\ldots,f_n)$, then
$\lambda_i(f) = f_i$. The syntactic morphisms $\vp_i$ yield an
epimorphism
$$\eta : \vk(\calT_1,\ldots,\calT_n) \ra \vk(\SA(U_1),\ldots,\SA(U_n)), (t_1,\ldots,t_n) \mapsto (t_1\vp_1,\ldots,t_n\vp_n),$$
and for each $i\in [n]$, we get the g-morphisms
$$(\lambda_i,\tau_i) : \vk(\calT_1,\ldots,\calT_n) \ra \calT_i \;\, \mathrm{and} \;\, (\lambda_i,\pi_i) : \vk(\SA(U_1),\ldots,\SA(U_n)) \ra \SA(U_i),$$
where $\tau_i : (t_1,\ldots,t_n) \mapsto t_i$ and $\pi_i :
(a_1,\ldots,a_n) \mapsto a_i$ are the respective $i^{th}$
projections.

Clearly, $\tau_i\vp_i = \eta\pi_i$ for every $i\in [n]$. Since
$\eta$ is surjective, we may define a mapping $\psi_0 : X \ra
T(\Si_1,X_1)\times \ldots \times T(\Si_n,X_n)$ such that
$x\psi_0\eta = x\vp$ for every $x\in X$. If $\psi : \SXta \ra
\vk(\calT_1,\ldots,\calT_n)$ is the homomorphic extension of
$\psi_0$, then $\psi\eta = \vp$.

Now, $T$ is the union of finitely many sets $\bfa\vpi$
with $\bfa = (a_1,\ldots,a_n)\in F$. Since $\vp\pi_i =
\psi\tau_i\vp_i$ for each $i\in [n]$, we have
$$\bfa\vpi \, = \, \bigcap\{a_i(\vp\pi_i)^{-1} \mid i\in [n]\} \, = \ \bigcap\{(a_i\vp_i^{-1})(\psi\tau_i)^{-1} \mid i \in [n]\},$$
where each $a_i\vp_i^{-1}$ is a $\theta_{U_i}$-class, and
therefore belongs to $\calV(\Si_i,X_i)$ by Proposition
\ref{pr:theta_T-classes in VUT}. By the definition of VUTs, this
means that $(a_i\vp_i^{-1})(\psi\tau_i)^{-1} \in \calV(\Si,X)$ for
every $i\in [n]$, and hence also $T\in \calV(\Si,X)$. This
concludes the proof of $\calV^{at}\se \calV$.
\end{proof}

The above results can be summed up as the following  variety
theorem.

\begin{theorem}\label{th:Variety Theorem} The mappings
$\VRA \ra \VUT$, $\bfK \mapsto \bfK^t$, and $\VUT \ra \VRA$,
$\calV \mapsto \calV^a$, are mutually inverse isomorphisms between
the lattices $(\VRA,\se)$ and $(\VUT,\se)$. \ep
\end{theorem}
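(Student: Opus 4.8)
The plan is to assemble Theorem~\ref{th:Variety Theorem} purely from the three lemmas already established, namely Lemmas~\ref{le:bfKt is a VUT}, \ref{le:VRA to VUT to VRA} and \ref{le:VUT to VRA to VUT}, together with the evident fact that the two maps are isotone. First I would observe that $\bfK \mapsto \bfK^t$ does land in $\VUT$ (this is exactly Lemma~\ref{le:bfKt is a VUT}), and that $\calV \mapsto \calV^a$ lands in $\VRA$ by the remark immediately following Definition~\ref{de:VRA to VUT to VRA} (all the $\SA(T)$ with $T$ recognizable are regular by Corollary~\ref{co:Rec and SA}, so the generated VRA is well-defined). Thus both maps are well-typed between the two lattices.

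Next I would verify that both maps are isotone with respect to $\se$. For $\bfK \mapsto \bfK^t$ this is immediate from the definition: if $\bfK_1 \se \bfK_2$ and $\SA(T) \in \bfK_1$, then $\SA(T) \in \bfK_2$, so $\bfK_1^t(\Si,X) \se \bfK_2^t(\Si,X)$ for all $\Si$ and $X$. For $\calV \mapsto \calV^a$ it is equally direct: if $\calU \se \calV$, then every generator $\SA(T)$ of $\calU^a$ (with $T \in \calU(\Si,X)$) is also a generator of $\calV^a$, so $\calU^a \se \calV^a$ since the VRA generated by a subclass is contained in the VRA generated by the larger class.

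Finally, the core of the argument is that the two maps are mutually inverse, which is precisely the content of Lemmas~\ref{le:VRA to VUT to VRA} and \ref{le:VUT to VRA to VUT}: the former gives $\bfK^{ta} = \bfK$ for every VRA $\bfK$, and the latter gives $\calV^{at} = \calV$ for every VUT $\calV$. A pair of isotone, mutually inverse maps between two posets is automatically a pair of order-isomorphisms (an isotone bijection whose inverse is also isotone preserves the order in both directions, hence also all existing joins and meets, so it is a lattice isomorphism). I would state this general principle explicitly and apply it to conclude that $(\VRA,\se)$ and $(\VUT,\se)$ are isomorphic via the given maps.

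The honest assessment is that this theorem carries essentially no new difficulty: all the substance has already been discharged in the three cited lemmas, whose own proofs rely on the variety-operator representation $V_g = HSP_g$ (Proposition~\ref{pr:V_g = HSP_g}), the syntactic-algebra machinery of Proposition~\ref{pr:SyntAlg and Operations}, and the fact that VRAs are generated by regular syntactic algebras (Proposition~\ref{pr:SDS-irr is syntactic}). The only genuinely load-bearing step is Lemma~\ref{le:VUT to VRA to VUT}, where one must unwind an $HSP_g$-decomposition of $\SA(T)$ and realize $T$ as a finite union of intersections of inverse-images of $\theta_{U_i}$-classes; but since that lemma is assumed proved, the theorem itself is a one-line formal consequence. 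The main thing to get right in the write-up is therefore just the bookkeeping of which lemma supplies which half, not any new mathematics.
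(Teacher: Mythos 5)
Your proposal is correct and matches the paper's own argument exactly: the paper notes isotonicity of both maps just before the theorem, establishes $\bfK^{ta} = \bfK$ and $\calV^{at} = \calV$ in Lemmas~\ref{le:VRA to VUT to VRA} and~\ref{le:VUT to VRA to VUT} (with well-definedness from Lemma~\ref{le:bfKt is a VUT} and the remark after Definition~\ref{de:VRA to VUT to VRA}), and then states the theorem as a summary of these facts. Your added remark that mutually inverse isotone maps between posets are automatically lattice isomorphisms is precisely the general principle the paper implicitly invokes.
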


\section{Examples of varieties of unranked tree languages}\label{se:Examples of VUTs}

We shall now introduce several varieties of unranked tree languages.
Most of them correspond to some general variety of ranked tree languages
considered in  \cite{Stei98}. The following simple observations
concerning g-morphisms of term algebras are helpful in many of the
examples.

\begin{lemma}\label{le:Trees and g-morphisms} Let
$(\iota,\vp) : \SXta \ra \OYta$ be a g-morphism. \begin{itemize}
\item[{\rm (a)}]  $\hg(f\vp) = 0$ and $\root(f\vp) = f\vp =
\iota(f)$ for every $f\in \Si$. \item[{\rm (b)}] If $t =
f(t_1,\ldots,t_m)$ ($m>0$), then $t\vp =
\iota(f)(t_1\vp,\ldots,t_m\vp)$ and $\root(t\vp) = \iota(f)$.
\item[{\rm (c)}] $\hg(t\vp) \geq \hg(t)$ for every $t\in \SXt$.
\item[{\rm (d)}] $t\vpi$ is finite for every $t\in \OYt$. \ep
\end{itemize} \end{lemma}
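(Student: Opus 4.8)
The plan is to prove each of the four statements by unwinding the definition of a g-morphism and using tree induction where necessary. For parts (a) and (b), I would argue directly from the defining property of a g-morphism, namely that $f_{\SXta}(t_1,\ldots,t_m)\vp = \iota(f)_{\OYta}(t_1\vp,\ldots,t_m\vp)$. For (a), apply this with the empty argument sequence: since $f_{\SXta}(\ve) = f$ by the definition of the term algebra, and $\iota(f)_{\OYta}(\ve) = \iota(f)$ likewise, we get $f\vp = \iota(f)$, which is an $\OY$-tree consisting of a single node, so $\hg(f\vp)=0$ and $\root(f\vp)=\iota(f)$. For (b), apply the g-morphism property with $m>0$: since $f_{\SXta}(t_1,\ldots,t_m) = f(t_1,\ldots,t_m) = t$, we obtain $t\vp = \iota(f)_{\OYta}(t_1\vp,\ldots,t_m\vp) = \iota(f)(t_1\vp,\ldots,t_m\vp)$, whence $\root(t\vp)=\iota(f)$ directly from the definition of $\root$ on a tree of the form $g(\ldots)$.

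For part (c), I would use tree induction on $t$. In the base case $t\in \Si\cup X$: if $t = x\in X$ then $\hg(t)=0\leq\hg(x\vp)$ trivially, and if $t=f\in\Si$ then $\hg(t)=0=\hg(f\vp)$ by part (a). For the inductive step $t=f(t_1,\ldots,t_m)$ with $m>0$, part (b) gives $\hg(t\vp)=\max\{\hg(t_1\vp),\ldots,\hg(t_m\vp)\}+1$; applying the induction hypothesis $\hg(t_i\vp)\geq\hg(t_i)$ to each factor and taking the maximum yields $\hg(t\vp)\geq\max\{\hg(t_1),\ldots,\hg(t_m)\}+1=\hg(t)$.

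For part (d), the claim is that the preimage $t\vpi$ of any single $\OY$-tree $t$ under $\vp$ is a finite set of $\SX$-trees. The key observation is that, by part (c), any $s\in t\vpi$ satisfies $\hg(s)\leq\hg(s\vp)=\hg(t)$, so every element of $t\vpi$ has height bounded by the fixed number $\hg(t)$. However, this alone does not immediately give finiteness, since there are infinitely many $\SX$-trees of any height $\geq 1$ because of unranked branching. The main obstacle is therefore to bound the branching as well. Here I would use part (b) more carefully: if $s = f(s_1,\ldots,s_m)\in t\vpi$ and $t = g(t'_1,\ldots,t'_k)$, then $t\vp$ being equal to $t$ forces $\iota(f)=g$ and $\iota(f)(s_1\vp,\ldots,s_m\vp)=g(t'_1,\ldots,t'_k)$, so in particular $m=k$ and each $s_i\vp = t'_i$, i.e. $s_i\in t'_i\vpi$. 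Thus the arity of $s$ is forced to equal the arity $k$ of $t$ at the root, and recursively the shape of $s$ is constrained to match that of $t$ node by node. I would make this into an induction on $\hg(t)$: the root symbol of any $s\in t\vpi$ must lie in the finite set $\iota^{-1}(\root(t))\subseteq\Si$ (or $s$ is a leaf in $X$ when $t\in Y$, and then $s\in t\vpi\cap X$ is drawn from the finite alphabet $X$), the arity of $s$ is fixed, and each immediate subtree $s_i$ ranges over the set $t'_i\vpi$, which is finite by the induction hypothesis. Since $t\vpi$ is built from finitely many choices of root symbol combined with finitely many choices for each of the fixed number of subtrees, it is finite.
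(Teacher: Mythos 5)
Your proposal is correct, and it supplies exactly what the paper leaves unstated: the paper's own ``proof'' merely declares all four statements obvious, adding only the warning that (d) does not follow directly from (c) in the unranked case --- and your argument for (d) (matching the root symbol via $\iota^{-1}(\root(t))$, forcing the arity of $s$ to equal that of $t$, and inducting on $t$ so that each immediate subtree $s_i$ ranges over the finite set $t'_i\vpi$) is precisely the intended way to close that gap. One small imprecision: a leaf $x\in X$ can lie in $t\vpi$ for \emph{any} $t\in \OYt$, not only when $t\in Y$, because a g-morphism $\SXta \ra \OYta$ may send $x$ to an arbitrary $\OY$-tree $x\vp$ of any height; such an $s=x$ also has no root symbol in $\iota^{-1}(\root(t))$, so your case split should read ``either $s\in X$ with $x\vp = t$, or $s\in\Si$ with $\iota(s)=t$, or $s=f(s_1,\ldots,s_k)$ with $\iota(f)=\root(t)$.'' This does not damage the proof, since the leaves contribute at most $|X|+|\Si|$ elements to $t\vpi$, but as written your enumeration of the preimage is incomplete.
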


All statements of Lemma \ref{le:Trees and g-morphisms} are
obvious. Note, however, that (d) does not follow directly from
(c), as in the ranked case.

\subsection{Nilpotent unranked tree languages}

For any $\Si$ and
$X$, let $Nil(\Si,X)$ consist of all finite $\SX$-tree languages
and their complements in $\SXt$, and let  $Nil := \{Nil(\Si,X)\}$.
In view of Proposition \ref{pr:ClosureProp of Rec}(a), to prove
that $Nil \se Rec$, it suffices to note that, for all $\Si$ and
$X$, each singleton set $\{t\}\se \SXt$ is recognizable.  It is
clear that each set $Nil(\Si,X)$ is closed under all Boolean
operations, and condition (V4) and (V5) follow from the facts that
the pre-images $p^{-1}(T)$ and $T\vpi$ are finite for any finite
$T$; for $p^{-1}(T)$ this is obvious and for $T\vpi$ it follows
from Lemma \ref{le:Trees and g-morphisms}(d).

Similarly as in the unranked case \cite{Stei92,Stei05}, it
is easy to find the VRA corresponding to $Nil$. However, nilpotent
unranked algebras cannot be defined just in terms of the height of
trees as there are infinitely many trees of any height $\geq 1$.
Let us define the  \emph{size} $\size(t)$ of tree $t\in \SXt$ as
the number of nodes of $t$, i.e.,
\begin{itemize}
  \item[(1)] $\size(t) = 1$ for $t\in \Sigma \cup X$, and
  \item[(2)] $\size(t) = \size(t_1)+\ldots+\size(t_m)+1$ for $t = f(t_1,\ldots,t_m)$.
\end{itemize}

We call an unranked algebra $\SalgA$ \emph{nilpotent} if there
exist an element $a_0\in A$ and a $k\geq 1$ such that for any $X$
and $t\in \SXt$, if $\size(t) \geq k$, then $t^\calA(\alpha) =
a_0$ for every $\alpha : X \ra A$. The element $a_0$ is then
called the \emph{absorbing state} of $\calA$ and the least $k$ for
which the above condition holds is its \emph{degree (of
nilpotency)}. For each $k\geq 1$, let $\Nil_k$ denote the class of
regular nilpotent algebras of degree $\leq k$, and let $\Nil$ be
the class of all regular nilpotent algebras. Obviously, $\Nil_1
\subset \Nil_2 \subset \Nil_3 \subset \ldots$ and $\Nil =
\bigcup_{k\geq 1} \Nil_k$. Let us now show that each $\Nil_k$, and
hence also $\Nil$, is a VRA. For this consider any $k\geq 1$ and
any regular algebras $\SalgA$ and $\OalgB$, and assume that $\calA
\in \Nil_k$ with absorbing state $a_0$.

It is obvious that if $\calB$ is a g-subalgebra of $\calA$, then
also $\calB \in \Nil_k$ with $a_0$ as the absorbing state.
Next, let $(\iota,\vp) : \calA \ra \calB$ be a g-epimorphism. For
any $X$ and any $\beta : X \ra B$, there is a mapping $\alpha : X
\ra A$ such that $\alpha\vp = \beta$. Recall the mapping $\iota_X
: \SXt \ra \OXt$ defined in Section \ref{se:Unranked algebras}.
It is easy to verify by induction on $s$ that
$s^\calA(\alpha)\vp = \iota_X(s)^\calB(\beta)$ for every $s\in \SXt$.
Now, let $t$ be any $\OX$-tree of size $\geq k$. As $\iota$ is
surjective, there exists an $s\in \SXt$ such that $\iota_X(s) =
t$, and hence $t^\calB(\beta) = \iota_X(s)^\calB(\beta) =
s^\calA(\alpha)\vp = a_0\vp$. This shows that $\calB$ is in
$\Nil_k$ with $a_0\vp$ as its absorbing state.

Assume now that also $\calB\in\Nil_k$ and let $b_0$ be the
absorbing state.  Consider any g-product $\vk(\calA,\calB) =
(A\times B,\Ga)$, and any $X$ and  $\gamma : X \ra A\times B$. Let
us define $\alpha : X \ra A$, $\beta : X \ra B$, $\iota : \Ga \ra
\Si$ and $\lambda : \Ga \ra \Om$ by $\alpha := \gamma\pi_1$,
$\beta := \gamma\pi_2$, $\iota := \vk\pi_1$ and $\lambda :=
\vk\pi_2$, respectively. If $t\in \GXt$, then $\iota_X(t) \in
\SXt$, $\lambda_X(t) \in\OXt$ and $\size(\iota_X(t)) =
\size(\lambda_X(t)) = \size(t)$, and it can be verified by
induction on $t$ that $t^{\vk(\calA,\calB)}(\gamma) =
(\iota_X(t)^\calA(\alpha),\lambda_X(t)^\calB(\beta))$. This means
that if $\size(t) \geq k$, then $t^{\vk(\calA,\calB)}(\gamma) =
(a_0,b_0)$ which shows that $\vk(\calA,\calB)\in\Nil_k$.

It remains to be shown that $\Nil^t = Nil$. First, let  $T \se
\SXt$ be recognized by some $\SalgA$ in $\Nil_k$, i.e., $T =
F\vpi$ for some morphism $\vp : \SXta \ra \calA$ and $F \se A$. If
$a_0$ is the absorbing state of $\calA$ and $\alpha : X \ra A$ is
the restriction of $\vp$ to $X$, then $t^\calA(\alpha) = a_0$ for
every $t\in \SXt$ of size $\geq k$. This means that $T$ is finite
if $a_0\notin F$ and co-finite if $a_0\in F$. Hence, $\Nil^t \se
Nil$.

To prove the converse inclusion, consider any $\Si$, $X$ and a
finite $\SX$-tree language $T$. Let $k := \max\{\size(t) \mid t\in
T\}+1$ (for $T = \es$, let $k=1$). We construct a nilpotent
algebra $\SalgA$ recognizing $T$ as follows. Let $B := \{t\in \SXt
\mid \size(t) < k\}$ and $A := B \cup \{a_0\}$ (with $a_0\notin
B$), and for all $f\in \Si$, $m\geq 0$ and $b_1,\ldots,b_m\in A$
set
\begin{displaymath}
f_\calA(b_1,\ldots,b_m) = \left\{ \begin{array}{ll}
                f(b_1,\ldots,b_m) & \textrm{if $f(b_1,\ldots,b_m) \in B$;} \\
                a_0 & \textrm{otherwise.}
                \end{array} \right.
\end{displaymath}
It is clear that $t^\calA(\alpha) = a_0$ for every $\alpha : X \ra
A$ whenever $t\in \SXt$ and $\size(t) \geq k$, and also that
$\calA$ is regular. If $\vp : \SXta \ra \calA$ is the morphism
such that $x\vp = x$ for every $x\in X$, then $t\vp =
t^\calA(\alpha) = t$ if $\size(t) < k$ and $t\vp = a_0$ otherwise.
This means that $T = T\vpi$. For a co-finite $T$, we construct
such an $\calA$ for $S := \SXt \setminus T$ and obtain $T$ as
$(A\setminus S)\vpi$. Hence, $Nil \se \Nil^t$.

The above findings may be summed up as follows.

\begin{proposition}\label{pr:Nil} $Nil$ is the VUT corresponding to the VRA $\Nil$.  \ep
\end{proposition}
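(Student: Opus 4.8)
The plan is to read the statement through the Variety Theorem (Theorem~\ref{th:Variety Theorem}): since $\bfK \mapsto \bfK^t$ and $\calV \mapsto \calV^a$ are mutually inverse isomorphisms between $(\VRA,\se)$ and $(\VUT,\se)$, proving that $Nil$ is the VUT corresponding to $\Nil$ reduces to establishing three facts and then combining them. First, that $Nil$ is a genuine VUT; second, that $\Nil$ is a genuine VRA; and third, the identity $\Nil^t = Nil$. Granting these, the mutual-inverse property gives $Nil^a = (\Nil^t)^a = \Nil^{ta} = \Nil$ (using that $\Nil$ is a VRA), which is exactly the asserted correspondence. So the work splits into three largely independent verifications.

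For the first fact I would check conditions (V1)--(V5) of Definition~\ref{de:VUT} for $Nil$. Recognizability (V1) follows from Proposition~\ref{pr:ClosureProp of Rec}(a) once one observes that every singleton $\{t\}\se\SXt$ is recognizable, so every finite language and its complement lie in $\RecSX$. Conditions (V2)--(V3) are immediate because each $Nil(\Si,X)$ is by construction closed under complement and, being the family of finite and co-finite languages, under intersection. For (V4) and (V5) the point is that the operations preserve finiteness: $p^{-1}(T)$ is finite whenever $T$ is, and $T\vpi$ is finite whenever $T$ is, the latter using Lemma~\ref{le:Trees and g-morphisms}(d) that each $t\vpi$ is finite. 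This is the routine half.

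For the second fact the conceptual step is to pick the right notion of nilpotency. Because there are infinitely many trees of any fixed height $\geq 1$, a height-based definition fails, so I would define nilpotency by \emph{size}: an algebra $\calA$ lies in $\Nil_k$ if some absorbing state $a_0$ satisfies $t^\calA(\alpha)=a_0$ for all $t$ with $\size(t)\geq k$ and all valuations $\alpha$. I would then verify that each $\Nil_k$ is closed under $S_g$, $H_g$ and $P_g$. The subalgebra case is immediate; for g-images I would prove by tree induction the transfer identity $s^\calA(\alpha)\vp = \iota_X(s)^\calB(\beta)$ for a suitable lift $\alpha$ of $\beta$, and use surjectivity of $\iota$ to reach every large $\OX$-tree; for g-products I would prove by induction on $t$ that $t^{\vk(\calA,\calB)}(\gamma)=(\iota_X(t)^\calA(\alpha),\lambda_X(t)^\calB(\beta))$ together with $\size(\iota_X(t))=\size(\lambda_X(t))=\size(t)$. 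Since $\Nil_1\subset\Nil_2\subset\cdots$ forms a chain, $\Nil=\bigcup_k\Nil_k$ is a directed union and hence also a VRA.

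The third fact, $\Nil^t = Nil$, is where the real content lies and is the step I expect to be the main obstacle. The inclusion $\Nil^t\se Nil$ is easy: if $T=F\vpi$ with $\calA\in\Nil_k$ and absorbing state $a_0$, then $t\vp=a_0$ for all $t$ of size $\geq k$, so $T$ is finite or co-finite according as $a_0\notin F$ or $a_0\in F$. The reverse inclusion $Nil\se\Nil^t$ requires an explicit construction: for a finite $T$, take $k$ one larger than the maximum size of a tree in $T$, let the states be all trees of size $<k$ together with a fresh absorbing state $a_0$, and define $f_\calA$ to return $f(b_1,\dots,b_m)$ when it is still small and $a_0$ otherwise. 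The delicate points are to confirm that this $\calA$ is genuinely \emph{regular}, i.e.\ each $f_\calA^{-1}(a)$ is a regular language over the finite state set, and that the evaluation morphism sending each $x$ to itself recognizes $T$ exactly; the co-finite case is then handled by applying the construction to the complement. Closing both inclusions yields $\Nil^t=Nil$ and, together with the two structural facts above, completes the argument via the Variety Theorem.
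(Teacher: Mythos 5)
Your proposal is correct and follows essentially the same route as the paper: verifying that $Nil$ satisfies (V1)--(V5), showing each size-defined class $\Nil_k$ (and hence $\Nil$) is closed under $S_g$, $H_g$ and $P_g$ via the same transfer identities, and proving $\Nil^t = Nil$ with the identical explicit construction of a size-bounded nilpotent recognizer for finite languages. The only cosmetic difference is that you make the final appeal to the Variety Theorem explicit, whereas the paper leaves that step implicit in summing up its findings.
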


\subsection{Definite unranked tree languages}

The \emph{$k$-root} $\rt_k(t)$ of a $\SX$-tree $t$ is defined as follows:
\begin{itemize}
  \item[(0)] $\rt_0(t) = \ve$, where $\ve$ represents the empty root segment, for all $t\in \SXt$;
  \item[(1)] $\rt_1(t) = \root(t)$ for every $t\in \SXt$;
  \item[(2)] for $k\geq 2$, $\rt_k(t) = t$ if $\hg(t) < k$, and $\rt_k(t) = f(\rt_{k-1}(t_1),\ldots,\rt_{k-1}(t_m))$ if $\hg(t) \geq k$ and $t = f(t_1,\ldots,t_m)$.
\end{itemize} We call a recognizable unranked $\SX$-tree language $T$
\emph{$k$-definite} if for all $s,t\in \SXt$, if  $\rt_k(s) =
\rt_k(t)$ and  $s\in T$, then $t\in T$, and it is \emph{definite}
if it is $k$-definite for some $k\geq 0$. Let  $Def_k =
\{Def_k(\Si,X)\}$ and  $Def = \{Def(\Si,X)\}$  be the families of
$k$-definite  ($k\geq 0$) and all  definite tree languages,
respectively. Clearly $Def_0 \subset Def_1 \subset Def_2 \subset
\ldots$ and $Def = \bigcup_{k\geq 0}Def_k$.
We could naturally verify directly that the families $Def_k$ satisfy conditions (V1)--(V5), but let us show how
they are obtained from consistent systems of congruences. For any
$k\geq 0$, $\Si$ and $X$, define the relation $\delta_k(\Si,X)$ in
$\SXt$ by
$$
    s \, \delta_k(\Si,X) \, t \: :\LRa \: \rt_k(s) = \rt_k(t) \qquad (s,t\in \SXt).
$$
Note that for every  $k\geq 2$, there are infinitely many
$\delta_k(\Si,X)$-classes. Let $\Delta(k) :=
\{\delta_k(\Si,X))\}_{\Si,X}$. The following technical lemma is needed for showing that the families $Def_k$ are VUTs.

\begin{lemma}\label{le:g-morphisms and roots} Let $(\iota,\vp) : \SXta \ra \OYta$ be a g-morphism. Then $\rt_k(t\vp) = \rt_k(\rt_k(t)\vp)$ for all $t\in \SXt$ and $k\geq 1$.
\end{lemma}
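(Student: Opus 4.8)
The plan is to prove the identity by induction on $k$, but the literal recursive definition of $\rt_k$ is awkward to induct on because it branches on whether $\hg(t)\geq k$. So first I would record a height-independent reformulation of the recursion: for every $k\geq 2$ and every tree $u=g(u_1,\ldots,u_m)$ with $m>0$,
\[
  \rt_k(g(u_1,\ldots,u_m)) \;=\; g(\rt_{k-1}(u_1),\ldots,\rt_{k-1}(u_m)),
\]
\emph{irrespective} of the height of $u$. When $\hg(u)\geq k$ this is just the defining clause; when $\hg(u)<k$ one checks that $\hg(u_i)\leq k-2$ for every $i$, whence $\rt_{k-1}(u_i)=u_i$ and the right-hand side collapses to $u=\rt_k(u)$. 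This identity, call it $(\dagger)$, is the real engine of the proof: it lets both sides of the claimed equation be unfolded one level at a time with no side conditions on heights.

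For the base case $k=1$ I would simply compute both sides as root symbols. We have $\rt_1(t\vp)=\root(t\vp)$ and $\rt_1(\rt_1(t)\vp)=\root(\root(t)\vp)$, and Lemma~\ref{le:Trees and g-morphisms}(a),(b) shows that both equal $\iota(f)$ when $\root(t)=f\in\Si$ and both equal $\root(x\vp)$ when $t=x\in X$.

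For the inductive step, fix $k\geq 2$ and assume the statement for $k-1$ and all trees. If $t\in\Si\cup X$ is a single node then $\rt_k(t)=t$ and the identity is trivial. Otherwise $t=f(t_1,\ldots,t_m)$ with $m>0$, and I would unfold both sides using $(\dagger)$ together with Lemma~\ref{le:Trees and g-morphisms}(b). On the left, $t\vp=\iota(f)(t_1\vp,\ldots,t_m\vp)$ gives $\rt_k(t\vp)=\iota(f)(\rt_{k-1}(t_1\vp),\ldots,\rt_{k-1}(t_m\vp))$. On the right, $(\dagger)$ gives $\rt_k(t)=f(\rt_{k-1}(t_1),\ldots,\rt_{k-1}(t_m))$, hence $\rt_k(t)\vp=\iota(f)((\rt_{k-1}(t_1))\vp,\ldots,(\rt_{k-1}(t_m))\vp)$, and a second application of $(\dagger)$ yields $\rt_k(\rt_k(t)\vp)=\iota(f)(\rt_{k-1}((\rt_{k-1}(t_1))\vp),\ldots,\rt_{k-1}((\rt_{k-1}(t_m))\vp))$. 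Applying the induction hypothesis to each $t_i$ replaces $\rt_{k-1}((\rt_{k-1}(t_i))\vp)$ by $\rt_{k-1}(t_i\vp)$, so the two sides coincide.

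The main obstacle, and the reason a naive induction stalls, is a height mismatch: since $\hg(\rt_k(t))=\min(\hg(t),k-1)$, the tree $\rt_k(t)\vp$ need not have height $\geq k$ even when $t\vp$ does, so under the literal definition the two sides need not decompose through the same clause, and an argument using only $\hg(t\vp)\geq\hg(t)$ (Lemma~\ref{le:Trees and g-morphisms}(c)) breaks down. Observation $(\dagger)$ is precisely what removes this obstacle, after which the induction is routine.
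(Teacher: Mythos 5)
Your proof is correct, and it follows the same overall route as the paper: induction on $k$, with the base case $k=1$ computed via root symbols and the inductive step carried out by unfolding both sides one level (using the g-morphism identity $t\vp = \iota(f)(t_1\vp,\ldots,t_m\vp)$) and applying the induction hypothesis to the subtrees. The genuine difference is your observation $(\dagger)$. The paper's proof, in its chain of equalities, passes from $\iota(f)(\rt_{k-1}(\rt_{k-1}(t_1)\vp),\ldots,\rt_{k-1}(\rt_{k-1}(t_m)\vp))$ to $\rt_k(\iota(f)(\rt_{k-1}(t_1)\vp,\ldots,\rt_{k-1}(t_m)\vp))$ by reading the recursive clause of the definition of $\rt_k$ backwards, without checking that the tree in question has height $\geq k$. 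That check can actually fail: when $\hg(t)\geq k$ one has $\hg(\rt_k(t)) = k-1$, and since $\vp$ may map leaves to single nodes, $\rt_k(t)\vp$ can also have height exactly $k-1 < k$, so the clause being invoked is not literally applicable. Your identity $(\dagger)$ --- that the one-level unfolding of $\rt_k$ holds for every composite tree regardless of its height, because for low trees both sides collapse to the tree itself --- is precisely what legitimizes that step, and you prove it. So your write-up is not just a restatement but a slightly more careful version of the paper's argument; what it buys is that every application of the recursion in the inductive step is unconditional, whereas the paper's proof leaves this point implicit.
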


\begin{proof} We proceed by induction on $k\geq 1$. The case $k=1$ is obvious: $\rt_1(t\vp) = \root(t\vp) = \root(\root(t\vp)) = \rt_1(\rt_1(t)\vp)$.

Assume now that $k\geq 2$ and that the lemma holds for all smaller values of $k$. If $\hg(t) < k$, then $\rt_k(t) = t$ and hence $\rt_k(\rt_k(t)\vp) = \rt_k(t\vp)$. Assume that $\hg(t) \geq k$ and let $t = f(t_1,\ldots,t_m)$. Then $t\vp = \iota(f)(t_1\vp,\ldots,t_m\vp)$ and therefore
\begin{align*}
    \rt_k(t\vp) \: &= \: \iota(f)(\rt_{k-1}(t_1\vp),\ldots,\rt_{k-1}(t_m\vp))\\
    &= \: \iota(f)(\rt_{k-1}(\rt_{k-1}(t_1)\vp),\ldots,\rt_{k-1}(\rt_{k-1}(t_m)\vp))\\
&= \: \rt_k(\iota(f)(\rt_{k-1}(t_1)\vp,\ldots,\rt_{k-1}(t_m)\vp))\\
&= \: \rt_k(f(\rt_{k-1}(t_1),\ldots,\rt_{k-1}(t_m))\vp)\\
&= \: \rt_k(\rt_k(t)\vp),
\end{align*}
where we also used the inductive assumption.
\end{proof}

\begin{proposition}\label{pr:Def_k VUT} For every $k\geq 0$,  $\Delta(k)$ is a consistent system of congruences, and $Def_k$ is the quasi-principal VUT defined by it, and hence  also $Def$ is a VUT.
\end{proposition}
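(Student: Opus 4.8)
The plan is to establish the three assertions in order: that $\Delta(k) = \{\delta_k(\Si,X)\}_{\Si,X}$ is a consistent system of congruences, that the quasi-principal VUT $\calV_{\Delta(k)}$ it determines coincides with $Def_k$, and finally that $Def$ is a VUT. For the first assertion I must check both that each $\delta_k(\Si,X)$ is a congruence of $\SXta$ and that the consistency inclusion holds; the remaining two assertions will then follow from the machinery already developed (Lemmas~\ref{le:Congruences to VCRs} and~\ref{Quasi-principal VUT} and Proposition~\ref{pr:VRC to VUT}) together with a short syntactic-congruence argument.

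To see that $\delta_k(\Si,X)$ is a congruence I would first record the computational fact that, for $k \ge 2$,
$$\rt_k(f(s_1,\ldots,s_m)) = f(\rt_{k-1}(s_1),\ldots,\rt_{k-1}(s_m))$$
for all $f \in \Si$ and $s_1,\ldots,s_m \in \SXt$: the case $\hg(f(s_1,\ldots,s_m)) \ge k$ is the defining clause of $\rt_k$, and when $\hg(f(s_1,\ldots,s_m)) < k$ each $s_i$ has height $< k-1$, so $\rt_{k-1}(s_i) = s_i$ and both sides equal $f(s_1,\ldots,s_m)$. I would also note the monotonicity $\rt_{k-1} = \rt_{k-1}\circ\rt_k$, whence $\rt_k(s)=\rt_k(t)$ implies $\rt_{k-1}(s)=\rt_{k-1}(t)$, both proved by a routine induction. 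Since $\delta_k(\Si,X)$ is plainly an equivalence, by Lemma~\ref{le:Transl and Congr} it suffices to check invariance under elementary translations, i.e.\ under one-level contexts $f(\ldots,\xi,\ldots)$; for $k \ge 2$ the displayed identity reduces this to $\rt_{k-1}(s)=\rt_{k-1}(t)$, which the monotonicity supplies, while the cases $k=0$ (where $\delta_0 = \nabla_{\SXt}$) and $k=1$ (where $\rt_1 = \root$ and every $f(\ldots)$ has root $f$) are immediate.

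Consistency is where the preceding Lemma~\ref{le:g-morphisms and roots} does the work. For a g-morphism $(\iota,\vp):\SXta \ra \OYta$ and $s,t \in \SXt$ with $\rt_k(s)=\rt_k(t)$, that lemma gives $\rt_k(s\vp) = \rt_k(\rt_k(s)\vp) = \rt_k(\rt_k(t)\vp) = \rt_k(t\vp)$ for $k \ge 1$ (and the inclusion is trivial for $k=0$), i.e.\ $s\vp \,\delta_k(\Om,Y)\, t\vp$; this is exactly $\delta_k(\Si,X) \se \vp\circ\delta_k(\Om,Y)\circ\vpi$. Hence $\Delta(k)$ is a consistent system, so by Lemma~\ref{le:Congruences to VCRs} the family $\calC_{\Delta(k)}$ is a VRC and by Proposition~\ref{pr:VRC to VUT} the associated family $\calV_{\Delta(k)} = \calC_{\Delta(k)}^t$ is a VUT.

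It remains to identify $\calV_{\Delta(k)}$ with $Def_k$ and to pass to the union. By Lemma~\ref{Quasi-principal VUT}, $\calV_{\Delta(k)}(\Si,X) = \{T \in \RecSX \mid \delta_k(\Si,X) \se \theta_T\}$, and since $\theta_T$ is the greatest congruence saturating $T$ (Lemma~\ref{le:SyntCongrSat}) and $\delta_k(\Si,X)$ is a congruence, the condition $\delta_k(\Si,X) \se \theta_T$ is equivalent to $\delta_k(\Si,X)$ saturating $T$, i.e.\ to $T$ being $k$-definite; thus $\calV_{\Delta(k)}(\Si,X) = Def_k(\Si,X)$ and $Def_k$ is a VUT. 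Finally, as the $Def_k$ form an ascending chain with $Def = \bigcup_{k\ge 0} Def_k$, and the union of a directed family of VUTs is again a VUT, $Def$ is a VUT as well. I expect the only genuine obstacle to be the bookkeeping in the congruence step, namely pinning down the behaviour of $\rt_k$ on $f(s_1,\ldots,s_m)$ and its interaction with $\rt_{k-1}$; everything else is a direct appeal to results already in place.
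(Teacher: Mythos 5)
Your proposal is correct and follows essentially the same route as the paper: congruence via the interplay of $\rt_k$ and $\rt_{k-1}$, consistency via Lemma~\ref{le:g-morphisms and roots}, identification of $\calV_{\Delta(k)}$ with $Def_k$ via saturation and Lemma~\ref{le:SyntCongrSat}, and $Def$ as the union of the chain. The only difference is that you spell out details the paper leaves implicit (the reduction to elementary translations via Lemma~\ref{le:Transl and Congr} and the identity $\rt_k(f(s_1,\ldots,s_m)) = f(\rt_{k-1}(s_1),\ldots,\rt_{k-1}(s_m))$), which is a matter of thoroughness rather than a different argument.
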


\begin{proof} Obviously it suffices to show that the following statements
(a)--(c) hold for every $k\geq 0$ and for all alphabets $\Si$,
$\Om$, $X$, $Y$.
\begin{itemize}
    \item[{\rm (a)}] $\delta_k(\Si,X)$ is a congruence of $\SXta$.
    \item[{\rm (b)}] $\delta_k(\Si,X) \,\se\, \vp\circ\delta_k(\Om,Y)\circ\vpi$  for any g-morphism $(\iota,\vp) : \SXta \ra \OYta$.
    \item[{\rm (c)}] A recognizable $\SX$-tree language $T$ is $k$-definite iff $\delta_k(\Si,X) \se \theta_T$.
\end{itemize}
 Since $\delta_0(\Si,X) = \nabla_{\SXt}$, statement (a) trivially holds for $k=0$. For $k>0$, it follows from the obvious fact that $\rt_k(s) = \rt_k(t)$ implies $\rt_{k-1}(s) = \rt_{k-1}(t)$.

To prove (b) it suffices to show that if $s,t\in \SXt$ and $\rt_k(s) = \rt_k(t)$, then $\rt_k(s\vp) = \rt_k(t\vp)$, and this follows from Lemma \ref{le:g-morphisms and roots}.

Let $T\in \RecSX$. Since $T$ being $k$-definite means precisely
that $T$ is saturated by $\delta_k(\Si,X)$, statement (c) follows
from Lemma \ref{le:SyntCongrSat}.

As the union of the chain $Def_0 \subset Def_1 \subset Def_2
\subset \ldots$, also $Def$ is a VUT.
\end{proof}

\subsection{Reverse definite unranked tree languages}

A $\SX$-tree $s$ is a \emph{subtree} of a $\SX$-tree $t$ if $t =
p(s)$ for some  context  $p \in \SXc$. For any $t\in \SXt$, let
$\st(t)$ denote the set of subtrees of $t$, and for each $k \geq
0$, let $\st_k(t) = \{ s \in \st(t) \mid  \hg(s) < k \} $. Note
that $\st_0(t) = \es$ for every $t$.

We call a recognizable unranked $\SX$-tree language $T$
\emph{reverse $k$-definite} if for all $s,t\in \SXt$, if $\st_k(s)
= \st_k(t)$ and  $s\in T$, then $t\in T$, and it is \emph{reverse
definite} if it is reverse $k$-definite for some $k\geq 0$. Let
$RDef_k = \{RDef_k(\Si,X)\}$ and  $RDef = \{RDef(\Si,X)\}$  be the
families of  $k$-reverse definite ($k\geq 0$) and all reverse
definite tree languages. Clearly $RDef_0 \subset RDef_1 \subset
RDef_2 \subset \ldots$ and $RDef = \bigcup_{k\geq 0}RDef_k$.

For each $k\geq 0$, a consistent system of congruences
$\mathrm{P}(k) = \{\rho_k(\Si,X)\}_{\Si,X}$ defining $RDef_k$  is
obtained when we set for any $\Si$ and $X$,
$$
    s \, \rho_k(\Si,X) \, t \: :\LRa \: \st_k(s) = \st_k(t) \qquad
    (s,t\in \SXt).
$$

To prove the consistency of the systems $\mathrm{P}(k)$, we need
the following fact.

\begin{lemma}\label{le:g-morph and subtrees} Let $(\iota,\vp) : \SXta \ra \OYta$ be a g-morphism. Then  $\st_k(t\vp) = \bigcup\{\st_k(s\vp) \mid s\in \st_k(t)\}$  for all $t\in \SXt$ and $k\geq 0$.
\end{lemma}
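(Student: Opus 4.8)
The plan is to prove the two inclusions separately; the case $k=0$ is immediate, since $\st_0(\cdot)=\es$ makes both sides empty, so assume $k\geq 1$. Two facts about the g-morphism $(\iota,\vp)$ do all the work. First, $\vp$ \emph{preserves the subtree relation}: if $s\in\st(t)$, then $s\vp\in\st(t\vp)$. This follows from Lemma~\ref{le:Morphisms and translations}: writing $t=p(s)$ for a context $p\in\SXc$ and viewing $p$ as a translation of $\SXta$, there is a translation $p_{\iota,\vp}$ of $\OYta$ with $p(s)\vp=p_{\iota,\vp}(s\vp)$; since the translations of $\OYta$ are exactly the $\OY$-contexts, $p_{\iota,\vp}$ is given by some $q\in\OYc$, whence $t\vp=q(s\vp)$ and $s\vp\in\st(t\vp)$. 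Second, $\vp$ is height-nondecreasing, $\hg(s)\leq\hg(s\vp)$, by Lemma~\ref{le:Trees and g-morphisms}(c).

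The inclusion $\supseteq$ is then easy: if $s\in\st_k(t)$ and $u\in\st_k(s\vp)$, then $s\vp\in\st(t\vp)$ by subtree-preservation and $u\in\st(s\vp)\se\st(t\vp)$ by transitivity of the subtree relation, while $\hg(u)<k$; hence $u\in\st_k(t\vp)$.

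For $\subseteq$, take $u\in\st_k(t\vp)$, a subtree of $t\vp$ of height $<k$ rooted at a node $w$. Since $t\vp$ arises from $t$ by relabelling each operator $f$ as $\iota(f)$ and replacing each $X$-leaf $x$ by $x\vp$ (Lemma~\ref{le:Trees and g-morphisms}(a),(b)), either (i) $w$ is the root of the image $(t_v)\vp$ of the subtree $t_v\in\st(t)$ rooted at some node $v$ of $t$, or (ii) $w$ lies strictly below the root of some inserted image $x\vp$, with $x$ an $X$-leaf of $t$. In case (i), $\hg(t_v)\leq\hg((t_v)\vp)=\hg(u)<k$ by Lemma~\ref{le:Trees and g-morphisms}(c), so $t_v\in\st_k(t)$ and $u=(t_v)\vp\in\st_k((t_v)\vp)$; taking $s:=t_v$ places $u$ in the right-hand union. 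In case (ii), $u\in\st(x\vp)$ with $\hg(u)<k$, and $\hg(x)=0<k$ gives $x\in\st_k(t)$; taking $s:=x$ again places $u$ in the union. This exhausts both cases and proves $\subseteq$.

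The main obstacle is the $\subseteq$ direction: unlike in the ranked setting, a short tree can have a tall image, so a height-$<k$ subtree of $t\vp$ may be buried inside the image $x\vp$ of a single leaf $x$ of height $0$. Ranging $s$ over \emph{all} of $\st_k(t)$ — leaves included — is exactly what captures these buried subtrees, while height-monotonicity (Lemma~\ref{le:Trees and g-morphisms}(c)) is what forces an operator-rooted subtree $u=(t_v)\vp$ to have a pre-image $t_v$ short enough to lie in $\st_k(t)$. To make the informal classification of the nodes of $t\vp$ fully rigorous one can instead run an induction on the structure of $t$, using the recursion $\st_k(f(u_1,\dots,u_m))=\st_k(u_1)\cup\dots\cup\st_k(u_m)$ augmented by the root tree itself when its height is below $k$, together with $t\vp=\iota(f)(t_1\vp,\dots,t_m\vp)$; the only delicate point there is the height bookkeeping when $\hg(t)<k\leq\hg(t\vp)$, again settled by Lemma~\ref{le:Trees and g-morphisms}(c).
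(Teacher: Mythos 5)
Your proof is correct, but it takes a genuinely different route from the paper's. The paper proves the equality by a single structural induction on $t$: the base case $t\in\Si\cup X$ is immediate from $\st_k(t)=\{t\}$, and the inductive step splits on whether $\hg(t)<k$ (so that $t\in\st_k(t)$ and the left-hand side is absorbed as one term of the union) or $\hg(t)\geq k$ (so that $\st_k(t)=\st_k(t_1)\cup\ldots\cup\st_k(t_m)$ and, using $t\vp=\iota(f)(t_1\vp,\ldots,t_m\vp)$, likewise for $t\vp$, after which the inductive hypothesis applies to each $t_i$). You instead prove the two inclusions separately: the inclusion $\supseteq$ falls out of subtree preservation --- which you correctly derive from Lemma~\ref{le:Morphisms and translations} together with the bijection between translations of term algebras and contexts --- plus transitivity of the subtree relation, with no induction at all; the inclusion $\se$ is handled by classifying each height-$<k$ subtree of $t\vp$ as either the image $(t_v)\vp$ of a subtree of $t$ (where height-monotonicity, Lemma~\ref{le:Trees and g-morphisms}(c), forces the preimage $t_v$ to lie in $\st_k(t)$) or as buried strictly inside the image $x\vp$ of a leaf $x$ (where $s:=x\in\st_k(t)$ works). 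Your decomposition has the merit of isolating explicitly the phenomenon that makes this lemma nontrivial in the unranked two-alphabet setting, namely subtrees hidden inside leaf images, which the paper's induction absorbs silently into its base case. The cost is that your case analysis speaks of nodes and occurrences, notions the paper's term formalism never defines, so it is informal as stated; as you acknowledge, making it airtight means re-running a structural induction on $t$, at which point the argument essentially collapses back into the paper's proof, which needs neither nodes nor an explicit subtree-preservation lemma.
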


\begin{proof} The equality is obvious when $k=0$, so we assume that $k>0$
and proceed by induction on $t\in \SXt$.

If $t\in \Si\cup X$, then the equality certainly holds because
$\st_k(t) = \{t\}$.

Let $t = f(t_1,\ldots,t_m$), where $m>0$, and assume that the
claim holds for all trees of height $< \hg(t)$. If $\hg(t) < k$,
then again $t \in \st_k(t)$. If $\hg(t) \geq k$, then $\st_k(t)
= \st_k(t_1)\cup \ldots \cup \st_k(t_m)$. Since we also have $t\vp
= \iota(f)(t_1\vp,\ldots,t_m\vp)$, we get
\begin{align*}
\st_k(t\vp) \, &= \, \st_k(t_1\vp)\cup \ldots\cup \st_k(t_m\vp)\\
     & = \, \bigcup\{\st_k(s\vp) \mid s\in \st_k(t_1)\}\cup \ldots \cup \bigcup\{\st_k(s\vp) \mid s\in \st_k(t_m)\}\\
     & = \, \bigcup\{\st_k(s\vp) \mid s\in \st_k(t)\}
\end{align*}
by applying the inductive assumption to the trees
$t_1,\ldots,t_m$.
\end{proof}

\begin{proposition}\label{pr:RDef_k VUT} For every $k\geq 0$, $\mathrm{P}(k)$ is a consistent system of congruences, and $RDef_k$ is the quasi-principal VUT defined by it, and hence also $RDef$ is a VUT.
\end{proposition}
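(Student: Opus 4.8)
The plan is to follow the proof of Proposition~\ref{pr:Def_k VUT} almost verbatim, with $\rt_k$ replaced by $\st_k$ and $\delta_k$ by $\rho_k$. As there, it suffices to establish, for every $k\geq 0$ and all $\Si$, $\Om$, $X$, $Y$, the three statements: (a) $\rho_k(\Si,X)$ is a congruence of $\SXta$; (b) $\rho_k(\Si,X) \se \vp\circ\rho_k(\Om,Y)\circ\vpi$ for every g-morphism $(\iota,\vp):\SXta\ra\OYta$; and (c) a language $T\in\RecSX$ is reverse $k$-definite iff $\rho_k(\Si,X)\se\theta_T$. Granting these, (a) and (b) are exactly the two clauses of Definition~\ref{de:Consistent system}, so $\mathrm{P}(k)$ is a consistent system of congruences; hence $\calC_{\mathrm{P}(k)}$ is a VRC by Lemma~\ref{le:Congruences to VCRs} and $\calV_{\mathrm{P}(k)}$ is the quasi-principal VUT it defines, whose components Lemma~\ref{Quasi-principal VUT} describes as $\{T\in\RecSX \mid \rho_k(\Si,X)\se\theta_T\}$; by (c) this is precisely $RDef_k(\Si,X)$. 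Finally, $RDef=\bigcup_{k\geq 0}RDef_k$ is a VUT as the union of the ascending chain $RDef_0\subset RDef_1\subset\cdots$, using that the union of a directed family of VUTs is a VUT.

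Two of the three statements should fall out immediately. For (b), Lemma~\ref{le:g-morph and subtrees} gives $\st_k(t\vp)=\bigcup\{\st_k(s\vp)\mid s\in\st_k(t)\}$, so $\st_k(t\vp)$ depends on $t$ only through $\st_k(t)$; hence $\st_k(s)=\st_k(t)$ forces $\st_k(s\vp)=\st_k(t\vp)$, which is exactly the required implication $s\,\rho_k(\Si,X)\,t \Ra s\vp\,\rho_k(\Om,Y)\,t\vp$. For (c), the definition of reverse $k$-definiteness says precisely that $T$ is a union of $\rho_k(\Si,X)$-classes, i.e.\ that $\rho_k(\Si,X)$ saturates $T$; once (a) is known, Lemma~\ref{le:SyntCongrSat} (the greatest congruence saturating $T$ is $\theta_T$) converts this saturation into the inclusion $\rho_k(\Si,X)\se\theta_T$.

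The main obstacle is (a), which is genuinely more delicate than its $\rt_k$-counterpart. For $k=0$ it is trivial since $\st_0(t)=\es$ makes $\rho_0(\Si,X)=\nabla_{\SXt}$. For $k\geq 1$ I would prove the key \emph{compositionality} fact: for a fixed $f$ and $m>0$, $\st_k\big(f(t_1,\ldots,t_m)\big)$ is determined by the tuple $\big(\st_k(t_1),\ldots,\st_k(t_m)\big)$. The reconstruction runs as follows. From each $\st_k(t_i)$ one reads off $h_i:=\max\{\hg(u)\mid u\in\st_k(t_i)\}=\min(\hg(t_i),k-1)$, and $\hg\big(f(t_1,\ldots,t_m)\big)<k$ holds iff $\max_i h_i<k-1$; thus whether the new tree has height $<k$ is visible from the data. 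If the height is $\geq k$, then $f(t_1,\ldots,t_m)$ contributes no subtree of height $<k$ and $\st_k\big(f(t_1,\ldots,t_m)\big)=\bigcup_i\st_k(t_i)$. If the height is $<k$, then every $\hg(t_i)<k-1<k$, so each $t_i$ is the unique maximal element (under the subtree order) of $\st_k(t_i)$ and is therefore recoverable; hence $f(t_1,\ldots,t_m)$ itself is recovered and $\st_k\big(f(t_1,\ldots,t_m)\big)=\{f(t_1,\ldots,t_m)\}\cup\bigcup_i\st_k(t_i)$. The nullary case is immediate since $\st_k(f)=\{f\}$ is constant for $k\geq 1$. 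This compositionality is exactly the congruence property, and the recovery of the $t_i$ as maximal elements in the low-height case is the one point that requires care.
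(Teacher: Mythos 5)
Your proposal is correct and follows essentially the same route as the paper: the same three statements (a)--(c), with (b) obtained from Lemma~\ref{le:g-morph and subtrees}, (c) from Lemma~\ref{le:SyntCongrSat} via saturation, and (a) by the same case split on whether $\hg(f(t_1,\ldots,t_m))$ is $< k$ or $\geq k$, followed by the union formula in the high case. Your explicit reconstruction in the low-height case (reading off the height class from the sets $\st_k(t_i)$ and recovering each $t_i$ as the unique maximal element of $\st_k(t_i)$ under the subtree order) merely spells out what the paper leaves implicit in its assertion that $s_1 = t_1, \ldots, s_m = t_m$ there.
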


\begin{proof} We should show that the following hold for all  $k\geq 0$,
$\Si$, $\Om$, $X$, and $Y$.
\begin{itemize}
    \item[{\rm (a)}] $\rho_k(\Si,X)$ is a congruence of $\SXta$.
    \item[{\rm (b)}] $\rho_k(\Si,X) \,\se\, \vp\circ\rho_k(\Om,Y)\circ\vpi$  for any g-morphism $(\iota,\vp) : \SXta \ra \OYta$.
    \item[{\rm (c)}] A recognizable $\SX$-tree language $T$ is reverse $k$-definite iff $\rho_k(\Si,X) \se \theta_T$.
\end{itemize}
Statement (a) holds for $k=0$ as $\rho_0(\Si,X) = \nabla_{\SXt}$.
Let $k>0$ and consider any $f\in \Si$, $m>0$ and any
$s_1,\ldots,s_m,t_1,\ldots,t_m\in \SXt$ such that
$(s_1,t_1)$, \dots, $(s_m,t_m)\in \rho_k(\Si,X)$. We distinguish two
cases. If $\hg(f(s_1,\ldots,s_m)) < k$, then
$\hg(s_1),\ldots,\hg(s_m) < k$ and we must have
$s_1=t_1,\ldots,s_m= t_m$, from which
$f(s_1,\ldots,s_m)\,\rho_k(\Si,X)\,f(t_1,\ldots,t_m)$ trivially
follows. On the other hand, if $\hg(f(s_1,\ldots,s_m)) \ge k$,
then $\st_k(s_1) = \st_k(t_1),\ldots,\st_k(s_m) = \st_k(t_m)$ implies
$$
    \st_k(f(s_1,\ldots,s_m)) = \st_k(s_1)\cup\ldots\cup\st_k(s_m) = \st_k(f(t_1,\ldots,t_m)),
$$
and hence again $f(s_1,\ldots,s_m)\,\rho_k(\Si,X)\,f(t_1,\ldots,t_m)$.

To prove (b) it suffices to show that if $\st_k(s) = \st_k(t)$ for
some $s,t\in \SXt$, then $\st_k(s\vp) = \st_k(t\vp)$, but this
holds by Lemma \ref{le:g-morph and subtrees}.

Let $T\in \RecSX$. Since $T$ being reverse $k$-definite means
precisely that $T$ is saturated by $\rho_k(\Si,X)$, statement (c)
follows from Lemma \ref{le:SyntCongrSat}.

As the union of the chain $RDef_0 \subset RDef_1 \subset RDef_2
\subset \ldots$, also $RDef$ is a VUT.
\end{proof}

\subsection{Generalized definite tree languages}

For any $h,k\geq 0$, we call an unranked $\SX$-tree language $T$
\emph{$h,k$-definite} if for all $s,t\in \SXt$, if $\st_h(s) =
\st_h(t)$ and $\rt_k(s) = \rt_k(t)$, then  $s\in T$ iff $t\in T$,
and it is \emph{generalized definite} if it is $h,k$-definite for
some $h,k\geq 0$. Let  $GDef_{h,k} = \{GDef_{h,k}(\Si,X)\}$ and
$GDef = \{GDef(\Si,X)\}$  be the families of all recognizable
$h,k$-definite ($h,k\geq 0$) and all recognizable general definite
tree languages. Clearly $GDef_{h,k} \se GDef_{h',k'}$ whenever
$h\leq h'$ and $k\leq k'$, and $GDef = \bigcup_{h,k\geq
0}GDef_{h,k}$.

For any $h,k\geq 0$, $\Si$ and $X$, let $ \gamma_{h,k}(\Si,X) =
\rho_h(\Si,X) \cap \delta_k(\Si,X)$, and let  $\Gamma(h,k) :=
\{\gamma_{h,k}(\Si,X))\}_{\Si,X}$. The following proposition can
be proved simply by combining the arguments used in the previous
two examples.

\begin{proposition}\label{pr:GDef VUT} For all $h,k\geq 0$, $\Gamma(h,k)$ is a consistent  system of congruences, and $GDef_{h,k}$ is the quasi-principal VUT defined by it, and hence  $GDef$ is also a VUT.\ep
\end{proposition}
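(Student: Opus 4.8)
The plan is to follow verbatim the template of Propositions~\ref{pr:Def_k VUT} and~\ref{pr:RDef_k VUT}, reducing the whole statement to the three assertions that, for all $h,k\geq 0$ and all alphabets $\Si$, $\Om$, $X$, $Y$:
(a) $\gamma_{h,k}(\Si,X)$ is a congruence of $\SXta$;
(b) $\gamma_{h,k}(\Si,X) \se \vp\circ\gamma_{h,k}(\Om,Y)\circ\vpi$ for every g-morphism $(\iota,\vp) : \SXta \ra \OYta$; and
(c) a recognizable $\SX$-tree language $T$ is $h,k$-definite iff $\gamma_{h,k}(\Si,X) \se \theta_T$.
Once (a) and (b) hold, $\Gamma(h,k)$ is a consistent system of congruences by Definition~\ref{de:Consistent system}, so $\calC_{\Gamma(h,k)}$ is a VRC by Lemma~\ref{le:Congruences to VCRs}, the associated quasi-principal VUT $\calV_{\Gamma(h,k)}$ is a VUT by Proposition~\ref{pr:VRC to VUT}, and Lemma~\ref{Quasi-principal VUT} describes it as $\{T\in\RecSX \mid \gamma_{h,k}(\Si,X)\se\theta_T\}$; statement (c) then identifies this family with $GDef_{h,k}$.

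For (a) I would simply note that $\gamma_{h,k}(\Si,X) = \rho_h(\Si,X)\cap\delta_k(\Si,X)$ is a meet of two congruences of $\SXta$ (by Propositions~\ref{pr:RDef_k VUT} and~\ref{pr:Def_k VUT}) and that $\Con(\SXta)$ is closed under intersection. For (b), since $\vp$ is a mapping, the relation $\vp\circ\gamma_{h,k}(\Om,Y)\circ\vpi$ consists exactly of the pairs $(s,t)$ with $s\vp\,\gamma_{h,k}(\Om,Y)\,t\vp$, so (b) is equivalent to the implication $s\,\gamma_{h,k}(\Si,X)\,t \Ra s\vp\,\gamma_{h,k}(\Om,Y)\,t\vp$. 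Here $s\,\gamma_{h,k}(\Si,X)\,t$ unpacks to the conjunction $\st_h(s)=\st_h(t)$ and $\rt_k(s)=\rt_k(t)$; Lemma~\ref{le:g-morph and subtrees} yields $\st_h(s\vp)=\st_h(t\vp)$ from the first conjunct and Lemma~\ref{le:g-morphisms and roots} yields $\rt_k(s\vp)=\rt_k(t\vp)$ from the second, so both components survive and $s\vp\,\gamma_{h,k}(\Om,Y)\,t\vp$ follows.

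Statement (c) is the same observation used in the two preceding propositions: by definition $T$ is $h,k$-definite exactly when it is saturated by $\gamma_{h,k}(\Si,X)$, and Lemma~\ref{le:SyntCongrSat} converts saturation by a congruence into the inclusion $\gamma_{h,k}(\Si,X)\se\theta_T$. Finally, since $GDef_{h,k}\se GDef_{h',k'}$ whenever $h\leq h'$ and $k\leq k'$, the family $\{GDef_{h,k}\}$ is directed, so $GDef=\bigcup_{h,k\geq 0}GDef_{h,k}$ is a VUT as a union of a directed family of VUTs.

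I expect no genuinely new difficulty here, as every ingredient has been prepared in the definite and reverse-definite cases. The one point deserving a word of care is step (b): one must observe that the functional description of $\vp\circ(-)\circ\vpi$ lets the two defining conditions of $\gamma_{h,k}$ propagate independently, so that the consistency of the intersection really does follow from the consistency of its two factors rather than requiring a fresh argument about how intersections interact with the operator $\theta \mapsto \vp\circ\theta\circ\vpi$.
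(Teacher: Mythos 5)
Your proposal is correct and is exactly what the paper intends: the paper omits the proof, remarking only that it ``can be proved simply by combining the arguments used in the previous two examples,'' and your reduction to (a)--(c), with the intersection handled via the functional description of $\vp\circ(-)\circ\vpi$ and the lemmas on subtrees and roots applied to the two conjuncts separately, is precisely that combination. Your closing observation---that consistency of $\gamma_{h,k} = \rho_h\cap\delta_k$ follows componentwise because $\vp\circ\theta\circ\vpi = \{(s,t) \mid s\vp\,\theta\,t\vp\}$ for a mapping $\vp$, so no fresh argument about intersections is needed---is a correct and worthwhile elaboration of the detail the paper leaves implicit.
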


\subsection{Locally testable unranked tree languages}

For any $k\geq 2$, $\Si$ and $X$, we define the set $\fork_k(t)$
of \emph{$k$-forks} of a $\SX$-tree $t$ thus:
\begin{itemize}
  \item[(1)] if $\hg(t) < k-1$, then $\fork_k(t) = \es$;
  \item[(2)] if $\hg(t) \geq k-1$ and $t = f(t_1,\ldots,t_m)$, then $\fork_k(t) =  \{\rt_k(t)\} \cup \fork_k(t_1)\cup\ldots \cup \fork_k(t_m)$.
\end{itemize}
Clearly, $\fork_k(t)$ is a finite set of $\SX$-trees of height
$k-1$. For example, if $t = f(x,f(y))$, then $\fork_2(t) =
\{f(x,f),f(y)\}$, $\fork_3(t) = \{t\}$ and $\fork_k(t) = \es$ for
all $k\geq 4$. Note that the set of all possible
$k$-forks of $\SX$-trees is infinite.

Now, let $\lambda_k(\Si,X)$ be the relation on $\SXt$ such that for any $s,t\in \SXt$,
$$
  s \, \lambda_k(\Si,X) \,t \; :\LRa\; \st_{k-1}(s) =
  \st_{k-1}(t), \rt_{k-1}(s) = \rt_{k-1}(t), \fork_k(s) = \fork_k(t).
$$
It is easy to see that $\lambda_k(\Si,X)\in \Con(\SXta)$. An unranked $\SX$-tree language is said
to be \emph{$k$-testable} if it is saturated by
$\lambda_k(\Si,X)$, and it is called \emph{locally testable} if it
is $k$-testable for some $k\geq 2$. Let $Loc_k(\Si,X)$ be the set
of all recognizable $k$-testable $\SX$-tree languages, and let
$Loc(\Si,X) := \bigcup_{k\geq 2}Loc_k(\Si,X)$ be the set of all recognizable
locally testable $\SX$-tree languages.

Note that for any $\SX$-tree $t$ of height $\geq k-1$, $\st_{k-1}(t)$ consists of the subtrees of $t$ of height $\leq k-2$, $\rt_{k-1}(t)$ is its root segment of height $k-2$, and $\fork_k(t)$ consists of its forks of height $k-1$. In particular, if $t$ is a string represented as a unary tree, then $\st_{k-1}(t)$ consists of the prefixes of $t$ of length $\leq k-1$, $\rt_{k-1}(t)$ is the suffix of $t$ of length $k-1$, and $\fork_k(t)$ is the set of its substrings of length $k$. Hence, our unranked $k$-testable tree languages are obtained by a natural adaptation of the usual definition of $k$-testable string languages (cf. \cite{Eil76}, for example).

To show that the families
$Loc_k := \{Loc_k(\Si,X)\}$ ($k\geq 2$) and $Loc :=
\{Loc(\Si,X)\}$ are varieties, we consider the systems of
congruences $\Lambda(k) := \{\lambda_k(\Si,X)\}_{\Si,X}$ ($k\geq
2$). For proving the consistency of these systems, we need the
following lemma.

\begin{lemma}\label{le:g-morphism and forks} If $(\iota,\vp) : \SXta \ra \OYta$ is a g-morphism and  $k\geq 2$, then $$\fork_k(t\vp) \:=\: \bigcup\{\rt_k(u\vp)\mid u\in \fork_k(t)\}\cup \bigcup\{\fork_k(s\vp)\mid s\in \st_{k-1}(t)\},$$
for every $t\in \SXt$.
\end{lemma}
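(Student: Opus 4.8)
The plan is to argue by induction on the structure of the tree $t\in\SXt$, with a case split governed by how the heights of $t$ and of its image $t\vp$ compare to the threshold $k-1$. Throughout I would use Lemma~\ref{le:Trees and g-morphisms}(b), which gives $t\vp = \iota(f)(t_1\vp,\ldots,t_m\vp)$ whenever $t = f(t_1,\ldots,t_m)$, so that the children of $t\vp$ are exactly the images $t_1\vp,\ldots,t_m\vp$, together with Lemma~\ref{le:g-morphisms and roots}, which lets me rewrite the ``root fork'' as $\rt_k(t\vp) = \rt_k(\rt_k(t)\vp)$. I would also record two routine monotonicity facts: (i) if $s$ is a subtree of $t$, then $s\vp$ is a subtree of $t\vp$ (immediate by induction, since the subtrees of $t$ are $t$ together with the subtrees of the $t_i$), and (ii) if $s'$ is a subtree of $t'$, then $\fork_k(s')\se\fork_k(t')$ (clear from the recursive definition of $\fork_k$). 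Fact~(ii) is what makes the delicate case below collapse.

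Base case: $t\in\Si\cup X$. Then $\hg(t)=0<k-1$, so $\fork_k(t)=\es$ and the first union on the right is empty, while $\st_{k-1}(t)=\{t\}$, so the right-hand side reduces to $\fork_k(t\vp)$, which is the left-hand side. (Note $t\vp$ may have large height when $t\in X$, but this case already matches.)

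Inductive step: $t=f(t_1,\ldots,t_m)$ with $m>0$. If $\hg(t\vp)<k-1$, then the left-hand side is $\es$; since $\hg(t)\le\hg(t\vp)<k-1$ by Lemma~\ref{le:Trees and g-morphisms}(c) we get $\fork_k(t)=\es$, and every $s\in\st_{k-1}(t)$ has $s\vp$ a subtree of $t\vp$ of height $<k-1$ by fact~(i), so $\fork_k(s\vp)=\es$; hence the right-hand side is $\es$ as well. If $\hg(t\vp)\ge k-1$, then $\fork_k(t\vp)=\{\rt_k(t\vp)\}\cup\fork_k(t_1\vp)\cup\ldots\cup\fork_k(t_m\vp)$, and I split on $\hg(t)$. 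When $\hg(t)\ge k-1$, I expand $\fork_k(t)=\{\rt_k(t)\}\cup\bigcup_i\fork_k(t_i)$ and $\st_{k-1}(t)=\bigcup_i\st_{k-1}(t_i)$, apply the induction hypothesis to each $t_i$, and use $\rt_k(t\vp)=\rt_k(\rt_k(t)\vp)$ to match the two sides term by term.

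The case $\hg(t)<k-1\le\hg(t\vp)$ is the one I expect to be the main obstacle, since here the fork set of $t$ is empty yet $t\vp$ genuinely has forks. In this case $\fork_k(t)=\es$, so the first union on the right vanishes, and $\st_{k-1}(t)=\{t\}\cup\bigcup_i\st_{k-1}(t_i)$. The summand $s=t$ contributes exactly $\fork_k(t\vp)$, the whole left-hand side, while for each remaining $s\in\st_{k-1}(t_i)$ the tree $s\vp$ is a subtree of $t\vp$ by fact~(i), so $\fork_k(s\vp)\se\fork_k(t\vp)$ by fact~(ii); hence every other summand is absorbed and the right-hand side equals $\fork_k(t\vp)$. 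This is the point where the height increase under $\vp$ crossing the threshold is handled purely by the subtree-monotonicity of $\fork_k$, with no appeal to the induction hypothesis.
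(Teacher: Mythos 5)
Your proof is correct and follows essentially the same route as the paper's: induction on $t$, the subtree-monotonicity of $\fork_k$ together with the fact that $\vp$ maps subtrees to subtrees, and Lemma~\ref{le:g-morphisms and roots} to handle the root fork; your ``delicate'' case $\hg(t)<k-1\le\hg(t\vp)$ is exactly the paper's first case (its absorption argument via $t\in\st_{k-1}(t)$ is the same as yours), and your case $\hg(t)\ge k-1$ matches the paper's inductive case, differing only in that you compute the set identity directly where the paper chases elements through both inclusions.
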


\begin{proof} In the course of this proof, we use a couple of times the obvious fact that if $s$ is a subtree of $s'$, then $\fork_k(s) \se \fork_k(s')$ for all $k\geq 2$. Let $RHS$ denote the righthand side of the claimed equality. We proceed by induction on the height of $t\in \SXt$.

If $\hg(t) < k-1$, then $\fork_k(t) = \es$. Moreover, $\fork_k(s\vp) \se \fork_k(t\vp)$ for every $s \in \st_{k-1}(t)$ as $s\in \st(t)$ clearly implies $s\vp\in \st(t\vp)$. Hence, $RHS \se \fork_k(t\vp)$. On the other hand, $\fork_k(t\vp) \se RHS$ because now $t\in \st_{k-1}(t)$.

Let $\hg(t) \geq k-1$ and let $t = f(t_1,\ldots,t_m)$, and assume that the equality holds for all trees of lesser height. Then $t\vp = \iota(f)(t_1\vp,\ldots,t_m\vp)$.

To prove the inclusion $\fork_k(t\vp) \se RHS$, consider any $v\in \fork_k(t\vp)$. Since $\fork_k(t\vp) =  \{\rt_k(t\vp)\} \cup \fork_k(t_1\vp)\cup\ldots \cup \fork_k(t_m\vp)$, there are two possibilities. If $v = \rt_k(t\vp)$, then $v = \rt_k(\rt_k(t)\vp)$ by Lemma \ref{le:g-morphisms and roots}, and hence $v \in RHS$ as $\rt_k(t) \in \fork_k(t)$. If $v\in \fork_k(t_i\vp)$ for some $i\in [m]$, then by the induction assumption, either $v = \rt_k(u\vp)$ for some $u\in \fork_k(t_i) (\se \fork_k(t))$ or $v \in \fork_k(s\vp)$ for some $s\in \st_{k-1}(t_i) (\se \st_{k-1}(t))$. In either case, $v\in RHS$.

Assume now that $v\in RHS$. If $v = \rt_k(u\vp)$ for some  $u \in \fork_k(t)$, we have two cases to consider: $u =\rt_k(t)$ or $u \in \fork_k(t_i)$ for some $i\in [m]$. In the first case, $v = \rt_k(\rt_k(t)\vp) = \rt_k(t\vp)  \in \fork_k(t\vp)$ by Lemma \ref{le:g-morphisms and roots}.  In the second case, $v\in \fork_k(t_i\vp) (\se \fork_k(t\vp))$ by the inductive hypothesis.

Finally, if $v\in \fork_k(s\vp)$ for some $s\in \st_{k-1}(t)$, then $s\in \st_{k-1}(t_i)$ for some $i\in [m]$, and hence $v \in \fork_k(t_i\vp) \se \fork_k(t\vp)$ by the induction assumption. This completes the proof of the inclusion $RHS \se \fork_k(t\vp)$.
\end{proof}

\begin{proposition}\label{LocTestVUT} For every $k\geq 2$, the system of congruences $\Lambda(k)$ is consistent, and $Loc_k$ is the quasi-principal VUT defined by it, and hence also $Loc$ is a VUT.
\end{proposition}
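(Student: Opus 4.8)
The plan is to follow the same three-step pattern used for Propositions~\ref{pr:Def_k VUT} and~\ref{pr:RDef_k VUT}: for each fixed $k\geq 2$ and all alphabets $\Si$, $\Om$, $X$, $Y$ I would establish (a) that $\lambda_k(\Si,X)$ is a congruence of $\SXta$ (already noted in the text), (b) the consistency inclusion $\lambda_k(\Si,X) \se \vp\circ\lambda_k(\Om,Y)\circ\vpi$ for every g-morphism $(\iota,\vp) : \SXta \ra \OYta$, and (c) that a recognizable $T \in \RecSX$ is $k$-testable iff $\lambda_k(\Si,X) \se \theta_T$. Together, (a) and (b) make $\Lambda(k)$ a consistent system of congruences in the sense of Definition~\ref{de:Consistent system}, so that $\calV_{\Lambda(k)}$ is a quasi-principal VUT by Lemma~\ref{le:Congruences to VCRs} and Proposition~\ref{pr:VRC to VUT}, while (c) together with Lemma~\ref{Quasi-principal VUT} identifies $Loc_k$ with $\calV_{\Lambda(k)}$.

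The heart of the argument is the consistency step (b). Unfolding the three defining clauses of $\lambda_k$, it suffices to check that $s\,\lambda_k(\Si,X)\,t$ forces the three equalities $\st_{k-1}(s\vp) = \st_{k-1}(t\vp)$, $\rt_{k-1}(s\vp) = \rt_{k-1}(t\vp)$ and $\fork_k(s\vp) = \fork_k(t\vp)$, and each is delivered by one of the transfer lemmas already proved. For the subtree clause, Lemma~\ref{le:g-morph and subtrees} (applied with $k-1$ in place of $k$) expresses $\st_{k-1}(t\vp)$ as $\bigcup\{\st_{k-1}(s'\vp)\mid s'\in\st_{k-1}(t)\}$, so $\st_{k-1}(s)=\st_{k-1}(t)$ yields equal images. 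For the root clause, Lemma~\ref{le:g-morphisms and roots} gives $\rt_{k-1}(t\vp) = \rt_{k-1}(\rt_{k-1}(t)\vp)$, so $\rt_{k-1}(s)=\rt_{k-1}(t)$ transfers directly. For the fork clause, Lemma~\ref{le:g-morphism and forks} writes $\fork_k(t\vp)$ as $\bigcup\{\rt_k(u\vp)\mid u\in\fork_k(t)\}\cup\bigcup\{\fork_k(s'\vp)\mid s'\in\st_{k-1}(t)\}$, a quantity depending on $t$ only through $\fork_k(t)$ and $\st_{k-1}(t)$; since $\lambda_k$ freezes precisely these two sets (and the root segment), $s$ and $t$ produce the same fork image. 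This is exactly why the three conditions were bundled into a single relation in the definition, and it is the main place where care is required.

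Part (c) is then routine: by definition $T$ is $k$-testable iff it is saturated by $\lambda_k(\Si,X)$, which by Lemma~\ref{le:SyntCongrSat} (maximality of $\theta_T$ among congruences saturating $T$) is equivalent to $\lambda_k(\Si,X)\se\theta_T$. This gives $Loc_k = \calV_{\Lambda(k)}$, as claimed.

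Finally, to conclude that $Loc = \bigcup_{k\geq 2}Loc_k$ is a VUT I would show that $\{Loc_k\}_{k\geq 2}$ is an ascending chain, so that it is a directed union of VUTs and hence itself a VUT (as observed after Definition~\ref{de:VUT}). It suffices to prove $\lambda_{k+1}(\Si,X)\se\lambda_k(\Si,X)$, since then $\lambda_k\se\theta_T$ implies $\lambda_{k+1}\se\theta_T$ and therefore $Loc_k(\Si,X)\se Loc_{k+1}(\Si,X)$. The implications $\st_k(s)=\st_k(t)\Ra\st_{k-1}(s)=\st_{k-1}(t)$ and $\rt_k(s)=\rt_k(t)\Ra\rt_{k-1}(s)=\rt_{k-1}(t)$ are immediate from the definitions, and the remaining implication $\fork_k(s)=\fork_k(t)$ follows from the identity
$$\fork_k(r) \; = \; \{u\in\st_k(r)\mid \hg(u)=k-1\}\,\cup\,\{\rt_k(u)\mid u\in\fork_{k+1}(r)\},$$
which rests on $\rt_k(\rt_{k+1}(r'))=\rt_k(r')$ for all $\SX$-trees $r'$. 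I expect this last identity, expressing the $k$-forks through the $(k+1)$-forks and the small subtrees, to be the one genuinely fiddly computation; everything else is a direct appeal to the three transfer lemmas.
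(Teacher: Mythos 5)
Your proposal is correct and follows essentially the same route as the paper: consistency of $\Lambda(k)$ is reduced to the three transfer facts for $\st_{k-1}$, $\rt_{k-1}$ and $\fork_k$ (the fork case using exactly the dependence of $\fork_k(t\vp)$ on $\fork_k(t)$ and $\st_{k-1}(t)$ given by Lemma~\ref{le:g-morphism and forks}), and $Loc_k = \calV_{\Lambda(k)}$ via saturation and Lemma~\ref{le:SyntCongrSat}. The only point where you go beyond the paper is that you actually prove the chain inclusion $Loc_k \se Loc_{k+1}$ by establishing $\lambda_{k+1}(\Si,X) \se \lambda_k(\Si,X)$ through your fork-decomposition identity (which is correct, resting on $\rt_k(\rt_{k+1}(r')) = \rt_k(r')$), whereas the paper simply asserts the chain $Loc_2 \se Loc_3 \se \ldots$ without proof.
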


\begin{proof} Consider any  $k\geq 2$ and any g-morphism $(\iota,\vp) :
\SXta \ra \OYta$, and let $s,t\in \SXt$ be such that
$s\,\lambda_k(\Si,X)\, t$. To prove the consistency of
$\Lambda(k)$, we should show that $s\vp\,\lambda_k(\Om,Y)\, t\vp$.

By the proofs of Propositions \ref{pr:RDef_k VUT} and
\ref{pr:Def_k VUT}, we know that $\st_{k-1}(s\vp) =
\st_{k-1}(t\vp)$ and  $\rt_{k-1}(s\vp) = \rt_{k-1}(t\vp)$ follow from $\st_{k-1}(s) = \st_{k-1}(t)$ and $\rt_{k-1}(s) = \rt_{k-1}(t)$,
respectively. Similarly, $\fork_k(s) = \fork_k(t)$ and $\st_{k-1}(s) = \st_{k-1}(t)$ imply $\fork_k(s\vp) = \fork_k(t\vp)$ by Lemma \ref{le:g-morphism and forks}. Hence $s\vp\,\lambda_k(\Om,Y)\, t\vp$.

That $Loc_k$ is the  quasi-principal VUT defined by $\Lambda(k)$
follows immediately from its definition. Finally, $Loc$ is a VUT as the union of the chain $Loc_2 \se Loc_3 \se \ldots$.
\end{proof}

\subsection{Aperiodic tree languages}

To show that the natural unranked counterparts of the aperiodic tree languages \cite{Thom84} form a variety is as easy as in the ranked case \cite{Stei98}.

For any $p,q\in \SXc$ and $t\in \SXt$, let $p\cdot q := q(p)$ and
$t\cdot p := p(t)$. Obviously, $(\SXc,\cdot,\xi)$ is a monoid and
the powers $p^n \,(n\geq 0)$ of a $\SX$-context $p$ are defined as
usual. An unranked tree language $T \se \SXt$ is called
\emph{aperiodic} (or \emph{noncounting}) if there exists a number
$n\geq 0$ such that for all $q,r\in \SXc$ and $t\in \SXt$,
$$t\cdot q^{n+1}\cdot r \in T \: \LRa \: t\cdot q^n\cdot r \in T.$$
If $T$ is aperiodic, the least $n$ for which the above condition
holds, is denoted by $\ia(T)$. Let $Ap(\Si,X)$ be the set of all
recognizable aperiodic $\SX$-tree languages. Let $Ap := \{Ap(\Si,X)\}$.

\begin{proposition}\label{pr:Ap} $Ap$ is a VUT.
\end{proposition}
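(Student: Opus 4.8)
The plan is to verify directly the five defining conditions (V1)--(V5) of a VUT from Definition~\ref{de:VUT} for the family $Ap$, exploiting the fact that the aperiodicity index behaves well under each relevant operation. The recognizability demanded by (V1), together with the memberships $\SXt\setminus T,\,p^{-1}(T),\,T\vpi\in\RecSX$ needed for (V2), (V4) and (V5), are all supplied by Proposition~\ref{pr:ClosureProp of Rec}; so the real content is to show that each of these operations preserves aperiodicity. First I would record a \emph{monotonicity} observation: if $T$ has aperiodicity index $n$, then for every $m\geq n$ and all $q,r\in\SXc$, $t\in\SXt$ one has $t\cdot q^{m}\cdot r\in T \LRa t\cdot q^{n}\cdot r\in T$. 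This follows by applying the defining equivalence with $t$ replaced by $t\cdot q^{j}$, which gives $t\cdot q^{j+n+1}\cdot r\in T\LRa t\cdot q^{j+n}\cdot r\in T$ for all $j\geq 0$, so membership is constant for all powers $\geq n$. With this in hand, (V3) is immediate: for aperiodic $T,U$ take $n:=\max\{\ia(T),\ia(U)\}$, so that both satisfy the index-$n$ equivalence, and then so does $T\cap U$. Condition (V2) is equally direct, since the defining biconditional for $\SXt\setminus T$ is merely the negation of the one for $T$, whence $\SXt\setminus T$ is aperiodic with the same index. Finally (V1) holds because $\es$ (and $\SXt$) is trivially aperiodic, the biconditional holding vacuously, and recognizable, so $Ap(\Si,X)$ is a nonempty subset of $\RecSX$.

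For (V4), let $T$ be aperiodic with index $n$ and $p\in\SXc$. Using $p(u)=u\cdot p$ and the associativity of the context monoid $(\SXc,\cdot,\xi)$, I would compute for any $q,r\in\SXc$ and $s\in\SXt$ that $s\cdot q^{k}\cdot r\in p^{-1}(T)$ iff $p(s\cdot q^{k}\cdot r)=s\cdot q^{k}\cdot(r\cdot p)\in T$. Writing $r':=r\cdot p\in\SXc$ and applying the aperiodicity of $T$ to the context $r'$, the equivalence $s\cdot q^{n+1}\cdot r'\in T\LRa s\cdot q^{n}\cdot r'\in T$ translates back to $s\cdot q^{n+1}\cdot r\in p^{-1}(T)\LRa s\cdot q^{n}\cdot r\in p^{-1}(T)$. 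Hence $p^{-1}(T)$ is aperiodic with index at most $n$.

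The main work is (V5), which I expect to be the principal obstacle because it requires transporting contexts and their powers across the g-morphism. The tool is Lemma~\ref{le:Morphisms and translations}: for the g-morphism $(\iota,\vp):\SXta\ra\OYta$, every translation of $\SXta$ has a corresponding translation of $\OYta$ intertwining with $\vp$. Since the translations of a term algebra are exactly its contexts, this yields for each $q\in\SXc$ a context $\hat q\in\OYc$ with $(t\cdot q)\vp=(t\vp)\cdot\hat q$ for all $t\in\SXt$. Crucially, I do not need a monoid homomorphism $q\mapsto\hat q$; iterating this single relation gives $(s\cdot q^{k}\cdot r)\vp=(s\vp)\cdot\hat q^{\,k}\cdot\hat r$ for all $k\geq 0$ by a short induction on $k$ (with $\hat q^{\,0}$ the identity context of $\OYc$). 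Now if $T\in Ap(\Om,Y)$ has index $n$, then since $s\vp\in\OYt$ and $\hat q,\hat r\in\OYc$, aperiodicity of $T$ gives $(s\vp)\cdot\hat q^{\,n+1}\cdot\hat r\in T\LRa (s\vp)\cdot\hat q^{\,n}\cdot\hat r\in T$, which is exactly $s\cdot q^{n+1}\cdot r\in T\vpi\LRa s\cdot q^{n}\cdot r\in T\vpi$. Thus $T\vpi$ is aperiodic with index at most $n$, and (V5) follows. Collecting (V1)--(V5) shows that $Ap$ is a VUT.
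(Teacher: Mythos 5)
Your proof is correct. For (V1)--(V4) it matches the paper's argument (your (V4) computation via $r' := r\cdot p$ is exactly the paper's), with the bonus that your monotonicity observation on the aperiodicity index makes (V2)--(V3) explicit where the paper simply declares (V1)--(V3) obvious. The genuine difference is in (V5). The paper constructs a global map $\hat{\vp} : \SXc \ra \OYc$ by structural recursion on contexts ($\xi\hat{\vp} = \xi$, and $p\hat{\vp} = \iota(f)(t_1\vp,\ldots,q\hat{\vp},\ldots,t_m\vp)$ for $p = f(t_1,\ldots,q,\ldots,t_m)$), notes that it is a monoid morphism satisfying $(t\cdot p)\vp = t\vp\cdot p\hat{\vp}$, and uses multiplicativity to turn $(q^{n+1})\hat{\vp}$ into $(q\hat{\vp})^{n+1}$. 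You instead obtain, for each individual context $q\in\SXc$, some $\hat{q}\in\OYc$ with $(t\cdot q)\vp = (t\vp)\cdot\hat{q}$ for all $t$, by combining Lemma~\ref{le:Morphisms and translations} with the bijection between translations of $\SXta$ and $\SX$-contexts, and you recover the behaviour on powers by induction on $k$ rather than from multiplicativity of $q\mapsto\hat{q}$. Both are sound: your route buys economy, since it reuses the translation-transport machinery already in the paper and avoids defining and verifying a context-monoid morphism; the paper's route buys an explicit, reusable homomorphism of context monoids at the price of a construction whose verification it leaves to the reader. The one point in your argument that needed care is the induction step, which works precisely because the relation $(t\cdot q)\vp = (t\vp)\cdot\hat{q}$ is quantified over all trees $t$ and so can be applied to the tree $s\cdot q$; you handled this correctly.
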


\begin{proof} It is obvious that $Ap$ satisfies conditions (V1)--(V3). To verify
(V4), consider any $T\in Ap(\Si,X)$ and $p\in \SXc$. If $\ia(T) =
n$, then for all  $q,r\in \SXc$ and $t\in \SXt$,
$$t\cdot q^{n+1}\cdot r \in p^{-1}(T) \, \LRa \, t\cdot q^{n+1}\cdot
(r\cdot p) \in T \, \LRa t\cdot q^n \cdot (r\cdot p) \in T \, \LRa
\, t\cdot q^n \cdot r \in p^{-1}(T),$$ which shows that $p^{-1}(T)
\in Ap(\Si,X)$.

Finally, let $(\iota,\vp) : \SXta \ra \OYta$ be a g-morphism,
$T\in Ap(\Om,Y)$ and $\ia(T) = n$. Define $\hat{\vp} : \SXc \ra
\OYc$ as follows:
\begin{itemize}
  \item[(1)] $\xi\hat{\vp} := \xi$;
  \item[(2)] if $p = f(t_1,\ldots,q,\ldots,t_m)$ for some
  $f\in \Si$, $m\geq 1$, $t_1,\ldots,t_m\in \SXt$ and $q\in \SXc$,
  then $p\hat{\vp} := \iota(f)(t_1\vp,\ldots,q\hat{\vp},\ldots,t_m\vp)$.
\end{itemize}
It is easy to see that $\hat{\vp}$ is a monoid morphism and that
$(t\cdot p)\vp = t\vp\cdot p\hat{\vp}$ for all $t\in \SXt$ and
$p\in \SXc$. This implies that, for all $q,r\in \SXc$ and $t\in
\SXt$,
$$t\cdot q^{n+1}\cdot r \in T\vpi \LRa t\vp\cdot
(q\hat{\vp})^{n+1}\cdot r\hat{\vp} \in T \LRa t\vp\cdot
(q\hat{\vp})^n \cdot r\hat{\vp} \in T \LRa t\cdot q^n\cdot r
\in T\vpi,$$ which shows that $Ap$ satisfies (V5), too.
\end{proof}

\subsection{Piecewise testable tree languages}

As our final example, we consider piecewise testable unranked tree
languages. As shown in \cite{Piir04}, a natural definition of
piecewise subtrees can be based on the well-known homeomorphic
embedding order of trees (cf.~\cite{BaNi98}, for example), and a
corresponding order implicitly underlies the definition of the
piecewise testable `forests' (i.e., finite sequences of unranked
trees) considered in \cite{BoSS12}. The following presentation
parallels that of \cite{Piir04} with the small modifications
introduced in \cite{Stei12}.

For any $\Si$, $X$ and $k\geq 0$, the \emph{homeomorphic embedding
order} $\unlhd$ on $\SXt$ is defined by stipulating that for any
$s,t\in \SXt$, $s \unlhd t$ if and only if
\begin{itemize}
  \item[(1)] $s = t$, or
  \item[(2)] $s = f(s_1,\ldots,s_m)$ and $t = f(t_1,\ldots,t_m)$
  where $s_1 \unlhd t_1,\ldots,s_m \unlhd t_m$, or
  \item[(3)] $t = f(t_1,\ldots,t_m)$ and $s \unlhd t_i$ for some $i\in [m]$.
\end{itemize}

Consider any $\Si$, $X$ and $k\geq 0$. For any $t\in \SXt$, let
$P_k(t) := \{s\in \SXt \mid s \unlhd t, \hg(s) < k\}$, and then
define $\tau_k(\Si,X) := \{(s,t) \mid s,t \in \SXt, P_k(s) =
P_k(t)\}$. Now, an unranked $\SX$-tree language is said to be
\emph{piecewise $k$-testable} if it is saturated by
$\tau_k(\Si,X)$, and it is \emph{piecewise testable} if it is
piecewise $k$-testable for some $k\geq 0$. Let $Pwt_k(\Si,X)$
denote the set of all recognizable piecewise $k$-testable unranked
$\SX$-tree languages, and let $Pwt(\Si,X) := \bigcup_{k\geq
0}Pwt_k(\Si,X)$ be the set of all recognizable piecewise testable
unranked $\SX$-tree languages. We want to prove that the families
$Pwt_k := \{Pwt_k(\Si,X)\}$ ($k\geq 0$) and $Pwt :=
\{Pwt(\Si,X)\}$ are VUTs. For this it suffices to show that for
every $k\geq 0$, $\rmT(k) := \{\tau_k(\Si,X)\}_{\Si,X}$ is a
consistent system of congruences that defines $Pwt_k$.

It is easy to see that $\tau_k(\Si,X)\in \Con(\SXta)$ for every
$k\geq 0$. (Note, however, that $\tau_k(\Si,X)$ is not of finite
index when $k\geq 2$.) For showing that the system $\rmT(k)$
is consistent, we need the following lemma.

\begin{lemma}\label{le:Pw subtrees and g-morphism}
Let $(\iota,\vp) : \SXta \ra \OYta$ be a g-morphism. For any
$k\geq 0$, $s\in \SXt$ and $t \in P_k(s\vp)$, there exists an
$s'\in P_k(s)$ such that $t\in P_k(s'\vp)$.
\end{lemma}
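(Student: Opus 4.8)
The plan is to prove a height-explicit strengthening by induction on the structure of $s$ and then read off the lemma as an immediate corollary. The strengthening I would establish is: \emph{for every $s \in \SXt$ and every $t \unlhd s\vp$, there is an $s' \in \SXt$ with $s' \unlhd s$, $\hg(s') \le \hg(t)$, and $t \unlhd s'\vp$.} Granting this, the lemma is immediate: if $t \in P_k(s\vp)$ then $t \unlhd s\vp$ and $\hg(t) < k$, so the strengthening gives $s' \unlhd s$ with $\hg(s') \le \hg(t) < k$ and $t \unlhd s'\vp$; hence $s' \in P_k(s)$ and $t \in P_k(s'\vp)$. Note that the strengthening makes no reference to $k$, which conveniently removes the degenerate case $k=0$ (where $P_0(\cdot) = \es$) from consideration.

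For the induction, the base case where $s \in \Si \cup X$ is a single node is handled uniformly by $s' := s$: then $\hg(s') = 0 \le \hg(t)$, and $s'\vp = s\vp$, so $t \unlhd s'\vp$ holds by hypothesis. For the inductive step write $s = f(s_1,\ldots,s_m)$, so that $s\vp = \iota(f)(s_1\vp,\ldots,s_m\vp)$ by Lemma \ref{le:Trees and g-morphisms}(b), and split according to which clause of the definition of $\unlhd$ witnesses $t \unlhd s\vp$. If $t = s\vp$, take $s' := s$; then $\hg(s') = \hg(s) \le \hg(s\vp) = \hg(t)$ by Lemma \ref{le:Trees and g-morphisms}(c). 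If $t = \iota(f)(t_1,\ldots,t_m)$ with $t_i \unlhd s_i\vp$ for each $i$, apply the induction hypothesis to each pair $(s_i,t_i)$ to obtain $s_i' \unlhd s_i$ with $\hg(s_i') \le \hg(t_i)$ and $t_i \unlhd s_i'\vp$, and set $s' := f(s_1',\ldots,s_m')$; then $s' \unlhd s$ and $t \unlhd s'\vp$ both follow from clause~(2) of the definition applied to the $s_i'$. Finally, if $t \unlhd s_i\vp$ for some $i$, the induction hypothesis applied to $(s_i,t)$ yields the required $s'$ directly, with $s' \unlhd s$ following from clause~(3).

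The only delicate point, and the step I expect to be the crux, is the height bookkeeping in the branching case $t = \iota(f)(t_1,\ldots,t_m)$. A naive induction tracking only the bound $\hg(s') < k$ would there give $\hg(s') = \max_i \hg(s_i') + 1 \le k$, which is off by one and fails to re-establish the invariant. Carrying the sharper invariant $\hg(s') \le \hg(t)$ repairs this exactly, since $\hg(t) = \max_i \hg(t_i) + 1$ together with $\hg(s_i') \le \hg(t_i)$ gives $\hg(s') \le \max_i \hg(t_i) + 1 = \hg(t)$. This matching of the height increment is precisely why the strengthened formulation, rather than the lemma as stated, is the right object for the induction.
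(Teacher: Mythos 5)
Your proof is correct, and it takes a genuinely different (and arguably cleaner) route than the paper's. The paper keeps the statement in its $P_k$ form and runs a double induction: an outer induction on $k$ (with $k=0$ and $k=1$ as separate base cases) and an inner structural induction on $s$; the off-by-one problem you identify in the branching case is resolved there not by strengthening the invariant but by decrementing the parameter, i.e.\ by noting that $t = \iota(f)(t_1,\ldots,t_m)$ with $\hg(t)<k$ forces $t_i \in P_{k-1}(s_i\vp)$, so the outer hypothesis at $k-1$ yields $s_i' \in P_{k-1}(s_i)$ and hence $\hg(s') \le \max_i \hg(s_i')+1 < k$. Your formulation instead eliminates $k$ altogether and proves the sharper, $k$-uniform statement that for $t \unlhd s\vp$ one can find $s' \unlhd s$ with $\hg(s') \le \hg(t)$ and $t \unlhd s'\vp$, which collapses everything into a single structural induction on $s$ and handles the degenerate cases ($k=0$, $k=1$) for free; the same three-way case analysis on how $t$ embeds in $\iota(f)(s_1\vp,\ldots,s_m\vp)$ does the combinatorial work in both arguments. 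What your version buys is economy (one induction instead of two, no case split on $k$) and a slightly stronger conclusion (the witness $s'$ has height at most $\hg(t)$, not merely $<k$); what the paper's version buys is that it never leaves the vocabulary of the sets $P_k$ actually used in the consistency proof of $\rmT(k)$. Your diagnosis of why the naive single induction on $s$ with invariant $\hg(s')<k$ fails, and that the height-matching invariant is the exact repair, is accurate.
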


\begin{proof} The proof goes by induction on $k$. The case $k=0$ is trivial
since $P_0(s\vp)$ is empty. If $t\in P_1(s\vp)$, then $t \in \Om
\cup Y$, and now we proceed by induction on $s$. If $s\in \Si\cup
X$, we may let $s'$ be $s$. If $s = f(s_1,\ldots,s_m)$, then $s\vp
= \iota(f)(s_1\vp,\ldots,s_m\vp)$ and we must have $t\in
P_1(s_i\vp)$ for some $i\in [m]$. By our tacit inductive
assumption, there exists an $s'\in P_1(s_i) \se P_1(s)$ such that
$t\in P_1(s'\vp)$.

Assume now that $k\geq 2$ and that the lemma holds for all smaller
values of $k$. Again, we proceed by induction on $s$. If $s\in
\Si\cup X$, then $\hg(s) < k$, and we may set $s' := s$. Let $s =
f(s_1,\ldots,s_m)$ and suppose the claim holds for all smaller
trees. Since  $s\vp = \iota(f)(s_1\vp,\ldots,s_m\vp)$, there are
two cases to consider. If $t\in P_k(s_i\vp)$ for some $i\in [m]$,
the required $s'$ can be found as a piecewise subtree of $s_i$.
Otherwise, $t = \iota(f)(t_1,\ldots,t_m)$ for some $t_1\in
P_{k-1}(s_1\vp),\ldots,P_{k-1}(s_m\vp)$. By the main inductive
assumption, there are trees $s_1'\in P_{k-1}(s_1),\ldots,s_m'\in
P_{k-1}(s_m)$ such that $t_1\in P_{k-1}(s_1'\vp),\ldots,t_m\in
P_{k-1}(s_m'\vp)$. Then $t\in P_k(s'\vp)$ for $s' :=
f(s_1',\ldots,s_m') \in P_k(s)$.
\end{proof}

\begin{proposition}\label{pr:Pwt VUT} For every $k\geq 0$, the
system of congruences $\rmT(k)$ is consistent, and $Pwt_k$
is the quasi-principal VUT defined by it, and hence also $Pwt$ is
a VUT.
\end{proposition}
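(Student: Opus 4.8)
The plan is to establish, in turn, the three assertions of the proposition: that each $\rmT(k)$ is a consistent system of congruences, that the quasi-principal VUT it determines is exactly $Pwt_k$, and that $Pwt$ is their union and hence a VUT. Since it has already been observed that $\tau_k(\Si,X)\in\Con(\SXta)$ for every $k\geq 0$, the only thing left for consistency is the inclusion $\tau_k(\Si,X)\se\vp\circ\tau_k(\Om,Y)\circ\vpi$ of Definition \ref{de:Consistent system}, for an arbitrary g-morphism $(\iota,\vp):\SXta\ra\OYta$. Unwinding the definitions of $\tau_k$ and of $\theta_\vp$, this amounts to the implication: if $P_k(s)=P_k(t)$ then $P_k(s\vp)=P_k(t\vp)$, for all $s,t\in\SXt$.

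The key step is the \emph{localization identity}
\[
    P_k(s\vp) \;=\; \bigcup\{\,P_k(s'\vp)\mid s'\in P_k(s)\,\}\qquad(s\in\SXt).
\]
The inclusion $\se$ is precisely Lemma \ref{le:Pw subtrees and g-morphism}: every $t\in P_k(s\vp)$ already occurs in $P_k(s'\vp)$ for some height-$<k$ piecewise subtree $s'$ of $s$. For the reverse inclusion I would first record the auxiliary monotonicity fact that $\vp$ preserves the embedding order, i.e.\ $s'\unlhd s$ implies $s'\vp\unlhd s\vp$; this is a routine induction on $s$, treating the three clauses of the definition of $\unlhd$ and using $s\vp=\iota(f)(s_1\vp,\ldots,s_m\vp)$ from Lemma \ref{le:Trees and g-morphisms}(b). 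Combined with the transitivity of $\unlhd$, this gives $P_k(s'\vp)\se P_k(s\vp)$ whenever $s'\in P_k(s)$, which is the inclusion $\supseteq$. The genuine content sits in Lemma \ref{le:Pw subtrees and g-morphism}, since $\vp$ may expand a single leaf into an arbitrarily large tree and so a piecewise subtree of $s\vp$ need not arise from $s$ in any transparent way; but that lemma is already established, so the remaining obstacle is merely to verify monotonicity and assemble the identity, both routine.

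Granting the identity, its right-hand side depends on $s$ only through $P_k(s)$, so $P_k(s)=P_k(t)$ at once yields $P_k(s\vp)=P_k(t\vp)$, establishing the consistency of $\rmT(k)$. For the second assertion, Lemma \ref{Quasi-principal VUT} gives $\calV_{\rmT(k)}(\Si,X)=\{T\in\RecSX\mid\tau_k(\Si,X)\se\theta_T\}$, while a recognizable $T$ is piecewise $k$-testable exactly when it is saturated by $\tau_k(\Si,X)$, which by Lemma \ref{le:SyntCongrSat} is equivalent to $\tau_k(\Si,X)\se\theta_T$; hence $Pwt_k=\calV_{\rmT(k)}$ is the quasi-principal VUT defined by $\rmT(k)$. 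Finally, from $P_k(u)=\{v\in P_{k+1}(u)\mid\hg(v)<k\}$ one reads off $\tau_{k+1}(\Si,X)\se\tau_k(\Si,X)$, so a language saturated by $\tau_k$ is saturated by $\tau_{k+1}$ and thus $Pwt_k\se Pwt_{k+1}$. The $Pwt_k$ therefore form an ascending chain, and $Pwt=\bigcup_{k\geq 0}Pwt_k$ is a VUT as the union of a directed family of VUTs.
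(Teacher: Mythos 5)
Your proposal is correct and follows essentially the same route as the paper's proof: consistency of $\rmT(k)$ is reduced to the implication ``$P_k(s)=P_k(t)$ implies $P_k(s\vp)=P_k(t\vp)$'' obtained from Lemma \ref{le:Pw subtrees and g-morphism}, and the identification $Pwt_k=\calV_{\rmT(k)}$ comes from equating piecewise $k$-testability with saturation by $\tau_k(\Si,X)$, i.e.\ with $\tau_k(\Si,X)\se\theta_T$. The only difference is that you make explicit two details the paper treats as immediate -- the order-preservation of $\vp$ with respect to $\unlhd$ (needed for the reverse inclusion in your localization identity) and the chain property $Pwt_k\se Pwt_{k+1}$ ensuring that $Pwt$ is a union of a directed family of VUTs -- which is a welcome tightening rather than a different argument.
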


\begin{proof} That the system $\rmT(k)$ is consistent follows directly
from Lemma \ref{le:Pw subtrees and g-morphism}. Indeed, if
$(\iota,\vp) : \SXta \ra \OYta$ is a g-morphism and $s\,
\tau_k(\Si,X)\, t$, then $s\vp \, \tau_k(\Om,Y)\, t\vp$ because
$P_k(s) = P_k(t)$ implies $P_k(s\vp) = P_k(t\vp)$ by that lemma.

By the definition of $\tau_k(\Si,X)$, a
recognizable $\SX$-tree language $T$ is piecewise $k$-testable if
and only if $T$ is saturated by $\tau_k(\Si,X)$, and this is the
case exactly in case $\tau_k(\Si,X) \se \theta_T$. This means that
$\calV_{\rmT(k)} = Pwt_k$.
\end{proof}

\section{Concluding remarks}\label{se:Concluding remarks}

We have introduced and studied varieties of unranked tree languages that contain languages over all operator and leaf alphabets. We also defined the basic algebraic notions, such as subalgebras, morphisms and direct products, for unranked algebras in a way that allows us to consider algebras over any operator alphabets together. In particular, we have considered regular algebras, i.~e.,  finite unranked algebras in which the operations are defined by regular languages. A bijective correspondence between varieties of unranked tree languages and varieties of regular algebras was established via  syntactic algebras. We have also shown that the natural unranked counterparts of several known varieties of ranked tree languages form varieties in our sense. In many of these examples we made use of a general scheme by which so-called quasi-principal varieties are obtained from certain systems of congruences of term algebras.

Of course, much remains to be done. In particular, many of the example varieties considered here would deserve a deeper study. For example, it is natural to ask for characterizations of the corresponding varieties of regular algebras, or whether there are logics matching some of these varieties.

\newpage

\section*{Appendix: Some proofs}

This appendix contains several proofs that we have either omitted or just outlined in the main text. Most of them are straightforward, technical and rather uninstructive. In many cases, they can be obtained by obvious modifications from earlier similar proofs.

\renewcommand{\theapplemma}{\ref{le:g-morphisms}}
\begin{applemma} Let $\SalgA$, $\OalgB$ and
$\GalgC$ be unranked algebras, and  $(\iota,\vp) : \calA \ra \calB$ and
$(\vk,\psi) : \calB \ra \calC$ be g-morphisms.
\begin{itemize}
  \item[{\rm (a)}] The product $(\iota\vk,\vp\psi) : \calA \ra
  \calC$ is also a g-morphism. Moreover, if $(\iota,\vp)$ and
  $(\vk,\psi)$ are g-epi-, g-mono- or g-isomorphisms, then so is
  $(\iota\vk,\vp\psi)$.
  \item[{\rm (b)}] If $R$ is a g-subalgebra of $\calB$, then $R\vp^{-1}$ is a g-subalgebra of $\calA$. In particular, if $R$ is a $\Psi$-subalgebra of $\calB$ for some $\Psi \se \Om$, then $R\vp^{-1}$ is a $\iota^{-1}(\Psi)$-subalgebra of $\calA$.
  \item[{\rm (c)}] If $S$ is a g-subalgebra of $\calA$, then $S\vp$ is a g-subalgebra of $\calB$. In particular, if $S$ is a $\Psi$-subalgebra of $\calA$ for some $\Psi\se \Si$, then $S\vp$ is a $\iota(\Psi)$-subalgebra of $\calB$.
\end{itemize}
\end{applemma}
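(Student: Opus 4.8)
The plan is to treat the three parts separately, relying throughout on the correspondence (noted just after Definition~\ref{de:Subalgebras}) between g-subalgebras and closed subsets, so that for (b) and (c) it will suffice to verify closure of the relevant subset under the relevant operations. Part (a) I would handle by a direct computation using the defining equations of the two given g-morphisms, followed by an elementary remark about composites of maps.

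For (a), writing $w = a_1\ldots a_m \in A^*$, I would compute $f_\calA(w)(\vp\psi)$ in two steps: first apply the g-morphism equation for $(\iota,\vp)$ to get $f_\calA(w)\vp = \iota(f)_\calB(w\vp_*)$, and then apply the equation for $(\vk,\psi)$ to $\iota(f)_\calB(w\vp_*)$, obtaining $\vk(\iota(f))_\calC(w\vp_*\psi_*)$. Since $\vp_*\psi_* = (\vp\psi)_*$ and $\vk(\iota(f)) = (\iota\vk)(f)$, this is exactly the g-morphism equation for $(\iota\vk,\vp\psi)$. The structural part is then immediate: a composite of surjective (resp.\ injective, bijective) maps is again surjective (resp.\ injective, bijective), so if both component maps of $(\iota,\vp)$ and $(\vk,\psi)$ enjoy one of these properties, so do $\iota\vk$ and $\vp\psi$.

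Part (b) is exactly the statement already verified as an example in the main text, so in the appendix I would simply repeat that short argument: for $f\in\iota^{-1}(\Psi)$ and $a_1,\ldots,a_m\in R\vp^{-1}$ one has $\iota(f)\in\Psi$ and $a_1\vp,\ldots,a_m\vp\in R$, whence $f_\calA(a_1,\ldots,a_m)\vp = \iota(f)_\calB(a_1\vp,\ldots,a_m\vp)\in R$ by closure of $R$, giving $f_\calA(a_1,\ldots,a_m)\in R\vp^{-1}$. For (c) I would argue dually: given $g\in\iota(\Psi)$, choose $f\in\Psi$ with $\iota(f)=g$, and given $b_1,\ldots,b_m\in S\vp$, choose preimages $a_i\in S$ with $a_i\vp = b_i$; then $g_\calB(b_1,\ldots,b_m) = \iota(f)_\calB(a_1\vp,\ldots,a_m\vp) = f_\calA(a_1,\ldots,a_m)\vp \in S\vp$, using closure of $S$ under $f$. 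In both (b) and (c) the unqualified assertions follow by specializing $\Psi = \Om$ and $\Psi = \Si$, respectively.

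The arguments are all routine, so I do not expect a genuine obstacle; the only points requiring care are bookkeeping ones. In (a) I must respect the paper's right-operator convention, so that $\iota\vk$ sends $f$ to $\vk(\iota(f))$ and $\vp\psi$ sends $a$ to $(a\vp)\psi$, and verify that the monoid extensions compose as $(\vp\psi)_* = \vp_*\psi_*$. In (c) the mild subtlety is that the inputs must be lifted to preimages in $S$; since the value $g_\calB(b_1,\ldots,b_m)$ is fully determined by $g$ and the $b_i$, no well-definedness question arises, and any choice of preimages yields the same membership conclusion.
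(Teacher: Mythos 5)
Your proposal is correct and follows essentially the same route as the paper's own appendix proof: the same two-step computation with $\vp_*\psi_* = (\vp\psi)_*$ for (a), and the same closure verifications for (b) and (c), with (c) handled by lifting the $b_i$ to preimages in $S$. One trivial bookkeeping remark: the unqualified assertions in (b) and (c) follow not by setting $\Psi=\Om$ (resp.\ $\Psi=\Si$) but because a g-subalgebra is, by definition, a $\Psi$-subalgebra for \emph{some} $\Psi$ contained in the relevant operator alphabet, so the parametrized statements already cover the general case.
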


\begin{proof} (a) For any $f \in \Si$ and $w \in A^*$,
$$
f_{\calA}(w)\vp\psi = \iota(f)_{\calB} (w\vp_*)\psi =
(\iota\vk)(f)_{\calC} ((w\vp_*)\psi_*)) =
(\iota\vk)(f)_{\calC}(w(\vp\psi)_*).
$$
Thus $(\iota\kappa,\vp\psi)$ is a g-morphism from $\calA$ to
$\calC$. Moreover, if $\iota$, $\kappa$, $\vp$, $\psi$ are
injective, surjective or bijective, so are also $\iota\vk$ and
$\vp\psi$, respectively.

(b) Let $R$ be a $\Psi$-subalgebra of $\calB$ for some $\Psi \se
\Om$. To show that $R\vp^{-1}$ is a $\iota^{-1}(\Psi)$-closed
subset of $\calA$, consider  any $f \in \iota^{-1}(\Psi)$, $m\geq
0$ and $a_1,\ldots,a_m \in R\vp^{-1}$. Since $R$ is $\Psi$-closed,
$\iota(f) \in \Psi$ and $a_1\vp,\dots, a_m\vp \in R$, we get
$$
    f_{\calA} (a_1,\dots,a_m)\vp =
    \iota(f)_{\calB} (a_1\vp,\dots,a_m\vp) \in R,
$$
and hence $f_{\calA} (a_1,\dots,a_m) \in R\vp^{-1}$.

(c) Let $S$ be a $\Psi$-subalgebra of $\calA$ for some $\Psi \se
\Si$. To see that $S\vp$ is a $\iota(\Psi)$-closed subset of
$\calB$, consider any $g \in \iota(\Psi)$, $m \ge 0$, and $b_1$,
\dots, $b_m \in S\vp$. Then $g = \iota(f)$ for some $f \in \Psi$,
and $b_1 = a_1\vp$, \dots, $b_m = a_m\vp$ for some $a_1$, \dots,
$a_m \in S$, and hence
$$
    g_{\calB}(b_1, \dots,b_m)
    = \iota(f)_{\calB} (a_1\vp,\dots,a_m\vp)
    = f_{\calA} (a_1,\dots,a_m)\vp \in S\vp
$$
since $S$ is $\Psi$-closed.
\end{proof}

\renewcommand{\theapplemma}{\ref{le:Quotients&Homomorphisms}}
\begin{applemma} Let $\SalgA$
and $\OalgB$ be any algebras.
\begin{enumerate}
  \item[{\rm (a)}] For any g-congruence $(\si,\theta)$ of $\calA$,
   the natural maps $\theta_\natural : A \ra A/\theta , a \mapsto
   a\theta,$ and $\si_\natural : \Sigma \ra \Sigma/\si, f \mapsto
   f\si,$ define a g-epimorphism $(\si_\natural,\theta_\natural) :
   \calA \ra \calA/(\si,\theta)$.
  \item[{\rm (b)}] The \emph{kernel} $\ker (\iota,\vp) := (\ker
  \iota, \ker \vp)$ of any g-morphism $(\iota,\vp) : \calA \ra
  \calB$  is a g-congruence on $\calA$. If
  $(\iota,\varphi)$ is a g-epimorphism, then $\calA/\ker \,(\iota,
  \varphi) \cong_g \calB$.
\end{enumerate}
\end{applemma}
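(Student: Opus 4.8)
The plan is to treat both parts as the natural two-sorted analogues of the classical correspondence between quotients and kernels, verifying in each case only the defining identity of a g-morphism from Definition~\ref{de:g-morphism} together with the relevant surjectivity or bijectivity.

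For part (a), I would simply unwind the definitions. To see that $(\si_\natural,\theta_\natural)$ is a g-morphism, I must check that $f_{\calA}(a_1,\ldots,a_m)\theta_\natural = (f\si_\natural)_{\calA/(\si,\theta)}(a_1\theta_\natural,\ldots,a_m\theta_\natural)$ for all $f\in\Si$, $m\geq 0$ and $a_1,\ldots,a_m\in A$. The left-hand side is $f_\calA(a_1,\ldots,a_m)\theta$, while by Definition~\ref{de:QuotientAlgebra} the right-hand side is $(f\si)_{\calA/(\si,\theta)}(a_1\theta,\ldots,a_m\theta) = f_\calA(a_1,\ldots,a_m)\theta$; these coincide, so the required identity holds by the very construction of the g-quotient. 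Since $\theta_\natural$ and $\si_\natural$ are the canonical surjections onto $A/\theta$ and $\Si/\si$, they are surjective, and hence $(\si_\natural,\theta_\natural)$ is a g-epimorphism.

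For part (b), I would first verify the g-congruence condition of Definition~\ref{de:g-Congruence} for $(\ker\iota,\ker\vp)$. Assuming $\iota(f)=\iota(g)$ and $a_i\vp=b_i\vp$ for every $i\in[m]$, two applications of the g-morphism identity for $(\iota,\vp)$ give
$$f_\calA(a_1,\ldots,a_m)\vp = \iota(f)_\calB(a_1\vp,\ldots,a_m\vp) = \iota(g)_\calB(b_1\vp,\ldots,b_m\vp) = g_\calA(b_1,\ldots,b_m)\vp,$$
so that $f_\calA(a_1,\ldots,a_m)\,(\ker\vp)\,g_\calA(b_1,\ldots,b_m)$, as required; hence $(\ker\iota,\ker\vp)\in\GCon(\calA)$.

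For the isomorphism, set $\si:=\ker\iota$ and $\theta:=\ker\vp$, and define $\bar\iota:\Si/\si\ra\Om,\, f\si\mapsto\iota(f)$, and $\bar\vp:A/\theta\ra B,\, a\theta\mapsto a\vp$. These maps are well defined and injective precisely because $\si$ and $\theta$ are the respective kernels, and they are surjective whenever $(\iota,\vp)$ is a g-epimorphism, so they are bijections. It then remains to check that $(\bar\iota,\bar\vp)$ is a g-morphism: evaluating both sides of the defining identity and using Definition~\ref{de:QuotientAlgebra} reduces it to the g-morphism identity for $(\iota,\vp)$ itself. Thus $(\bar\iota,\bar\vp):\calA/(\si,\theta)\ra\calB$ is a g-isomorphism and $\calA/\ker(\iota,\vp)\cong_g\calB$. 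I do not expect any genuine obstacle here: the only point requiring care is the bookkeeping of the two-sorted quotient, namely ensuring that the induced maps on the operator alphabet and on the element set are simultaneously well defined, injective and surjective.
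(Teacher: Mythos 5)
Your proposal is correct and follows essentially the same route as the paper's proof: the same direct verification of the g-morphism identity via Definition~\ref{de:QuotientAlgebra} in part (a), the same two-step kernel computation for the g-congruence property in part (b), and the same induced pair of maps $f\si\mapsto\iota(f)$, $a\theta\mapsto a\vp$ (the paper's $\vk$ and $\psi$) shown to be well defined, injective, surjective, and a g-morphism. The only difference is that you leave the final g-morphism identity for the induced pair as a stated reduction rather than writing out the chain of equalities, which is a matter of detail, not of substance.
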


\begin{proof} (a) If $f \in \Sigma$, $m \ge 0$ and $a_1$, \dots, $a_m \in
A$, then
\begin{align*}
    f_{\calA}(a_1,\dots,a_m)\theta_\natural
    &= f_{\calA}(a_1,\dots,a_m)\theta
    = (f\si)_{\calA/(\si,\theta)} (a_1\theta, \dots, a_m\theta)
    \\
    &= \si_\natural(f)_{\calA/(\si,\theta)} (a_1\theta_\natural,
    \dots, a_m\theta_\natural),
\end{align*}
which proves $(\si_\natural,\theta_\natural)$ to be a g-morphism.
Of course, $\theta_\natural$ and $\si_\natural$ are surjective.

(b) Naturally $\ker \iota \in \Eq(\Si)$ and $\ker \vp \in \Eq(A)$,
and for any $f,g \in \Si$, $m \ge 0$ and $a_1,\dots,a_m,
b_1,\dots,b_m \in A$,
\begin{align*}
    (f,g) \in\ker \iota,  &(a_1, b_1),\dots,(a_m,b_m) \in \ker \vp \\
    &\Ra \iota(f) = \iota(g), a_1\vp = b_1\vp, \dots, a_m\vp
    = b_m \vp \\
    &\Ra
      \iota(f)_{\calB}(a_1\vp, \dots, a_m\vp) =
      \iota(g)_{\calB}(b_1\vp, \dots, b_m\vp) \\
    &\Ra
      f_{\calA}(a_1, \dots, a_m)\vp =
      g_{\calA}(b_1, \dots, b_m)\vp  \\
    &\Ra
      (f_{\calA}(a_1,\dots,a_m),g_{\calA}(b_1,\dots,b_m))\in \ker\vp.
\end{align*}
Thus $\ker(\iota, \vp)$ is a g-congruence.

Let $(\iota,\varphi)$ now be a g-epimorphism.  We show
that $(\vk,\psi): \calA/\ker(\iota,\vp) \to \calB$ is a
g-isomorphism when $\vk : \Si/\ker \iota \ra \Om$ and $\psi :
A/\ker \vp \ra B$ are defined by $\vk : f\ker\iota \mapsto
\iota(f)$ and $\psi : a\ker\vp \mapsto a\vp$, respectively. First
of all, the two mappings are well-defined and injective: for any
$f_1,f_2 \in \Si$,
$$
  \vk(f_1\ker\iota) = \vk(f_2\ker\iota) \;  \LRa \; \iota(f_1)
  = \iota(f_2) \; \LRa \; f_1\ker\iota = f_2\ker\iota,
$$
and for any $a_1,a_2 \in A$,
$$
  (a_1\ker\vp)\psi = (a_2\ker\vp)\psi \; \LRa \; a_1\vp = a_2\vp \;
    \LRa \; a_1 \ker\vp = a_2\ker\vp.
$$
Moreover, the maps are surjective because $\iota$ and $\vp$ are
surjective. Finally, for any $f \in \Si$, $m \ge 0$,
$a_1,\dots,a_m \in A$,
\begin{align*}
    (f\ker\iota&)_{\calA/\ker(\iota,\vp)} (a_1\ker\vp, \dots,
                                         a_m\ker\vp)\, \psi \\
    &= f_{\calA} (a_1,\dots,a_m) (\ker\vp)\; \psi
     = f_{\calA} (a_1, \dots, a_m) \vp \\
    &= \iota(f)_{\calB} (a_1\vp, \dots, a_m\vp)
     = \vk(f\ker\iota)_{\calB} ((a_1\ker\vp)\psi, \dots,
    (a_m\ker\vp)\psi).
\end{align*}
Altogether, $(\vk,\psi)$ is a g-isomorphism between
$\calA/\ker(\iota,\vp)$ and $\calB$.
\end{proof}

\renewcommand{\theappropo}{\ref{pr:TermAlgebraFree}}
\begin{appropo} For any $\Si$ and any $X$,
the term algebra $\SXta$ is \emph{freely generated}
by $X$ over the class of all unranked algebras, that is to say,
\begin{itemize}
  \item[{\rm (1)}]  $\angX = \SXt$, and
  \item[{\rm (2)}] if $\OalgA$ is any unranked algebra, then for
  any pair of mappings $\iota : \Si \ra \Om$ and $\alpha : X \ra A$,
  there is a unique g-morphism
  $(\iota,(\iota,\alpha)_\calA) : \SXta \ra \calA$ such that
  $(\iota, \alpha)_\calA \big|_X  = \alpha$.
\end{itemize}
\end{appropo}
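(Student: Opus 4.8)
The plan is to treat the two parts in turn, establishing part (1) first, since the induction principle it supplies is exactly what justifies the recursions in part (2).

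For part (1), the inclusions $X \se \angX \se \SXt$ are immediate from the definitions, so only $\SXt \se \angX$ requires work. For this it suffices to show that every $\Si$-closed subset $B$ of $\SXta$ with $X \se B$ already contains all of $\SXt$, and I would carry this out by tree induction. The one point to observe carefully is that each one-node tree $f \in \Si$ lies in $B$: since $f = f_{\SXta}(\ve)$ by the definition of the term algebra, and $B$ is $\Si$-closed, applying $f_{\SXta}$ to the empty word $\ve \in B^*$ places $f$ in $B$. The inductive step is then immediate, because $f(t_1,\ldots,t_m) = f_{\SXta}(t_1,\ldots,t_m) \in B$ whenever $t_1,\ldots,t_m \in B$. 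Taking $B = \angX$ yields $\SXt \se \angX$.

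For part (2), I would argue uniqueness first and existence second, both by structural induction on trees, which is legitimate precisely because part (1) guarantees that every tree is generated from $X$ and $\Si$ by the operations. For uniqueness, suppose $(\iota,\vp) : \SXta \ra \calA$ is a g-morphism with $\vp|_X = \alpha$. Reading the defining identity $f_{\SXta}(t_1,\ldots,t_m)\vp = \iota(f)_\calA(t_1\vp,\ldots,t_m\vp)$ in the two cases $m=0$ and $m>0$ forces $x\vp = \alpha(x)$ for $x \in X$, forces $f\vp = \iota(f)_\calA(\ve)$ for each $f \in \Si$ (take $w = \ve$, using $f = f_{\SXta}(\ve)$), and forces $t\vp = \iota(f)_\calA(t_1\vp,\ldots,t_m\vp)$ for $t = f(t_1,\ldots,t_m)$. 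These clauses determine $t\vp$ for every $t \in \SXt$ by induction, so $\vp$ is unique, and I would denote it $(\iota,\alpha)_\calA$.

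For existence, I would take these same three clauses as the \emph{definition} of $(\iota,\alpha)_\calA$ and verify by tree induction that they assign a unique value to every $t \in \SXt$. That $(\iota,\vp)$ is then a g-morphism is immediate from the definition: the g-morphism identity for $m>0$ is the recursive clause, and the case $m=0$ is the clause fixing $f\vp = \iota(f)_\calA(\ve)$; the restriction $\vp|_X = \alpha$ holds by the first clause. No step presents a genuine obstacle. The only subtlety worth flagging is the uniform handling of the nullary case, i.e. the values $f_\calA(\ve)$, which in the unranked setting play the role that nullary operation symbols play for ranked algebras and must be woven consistently into both the uniqueness forcing and the existence definition.
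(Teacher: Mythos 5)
Your proof is correct and follows essentially the same route as the paper's: part (1) by tree induction on $\Si$-closed subsets containing $X$, with the key observation that $f = f_{\SXta}(\ve) \in B$ handles the one-node trees, and part (2) by forcing the three clauses (including the nullary clause $f\vp = \iota(f)_{\calA}(\ve)$) from the g-morphism identity and then reading them as a recursive definition. The paper merges uniqueness and existence into a single pass where you separate them explicitly, but the content is identical.
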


\begin{proof} (1) As $\angX$ is the intersection of those subalgebras of
$\SXta$ that contain $X$, it is clear that $X\se \angX \se \SXt$,
and that to prove $\SXt \se \angX$, it suffices to show that $\SXt
\se B$ for any $\Si$-closed subset $B$ of $\SXta$ for which $X \se
B$. This can be done by tree induction as follows.
\begin{enumerate}
  \item The inclusion $X\se B$ holds by the choice of $B$. For any
  $f \in \Si$, $f = f_{\SXta}(\ve) \in B$ because $B$ is $\Si$-closed.
  Hence also $\Si \se B$ holds.
  \item Let $t = f(t_1,\dots,t_m)$ for some $f \in \Si$, $m > 0$, and
  $t_1$, \dots, $t_m \in \SXt$ such that $t_1$, \dots, $t_m \in B$.
  Then also $t = f(t_1,\dots,t_m) = f_{\SXta}(t_1, \dots, t_m) \in B$.
\end{enumerate}

(2) For any given $\iota:\Sigma\to\Omega$ and $\alpha:X \to A$, a
g-morphism $(\iota,\vp):\SXta \to \calA$ such that $\vp\big|_X =
\alpha$ must satisfy the following conditions:
\begin{itemize}
    \item[(1)] For $x \in X$, $x\vp = \alpha(x)$.
    \item[(2)] For $f \in \Si$, $f\vp = f_{\SXta}(\ve)\vp =
          \iota(f)_{\calA}(\ve)$.
    \item[(3)] For $t= f(t_1,\dots,t_m)$, where $f \in \Si$,
    $m >0$, $t_1$, \dots, $t_m \in \SXt$,
    $$t\vp = f_{\SXta}(t_1,\dots,t_m)\vp
    = \iota(f)_{\calA}(t_1\vp, \dots, t_m\vp).$$ Assuming
    inductively that the values $t_i\vp$ are uniquely defined, this
    defines  a unique value for $t\vp$.
\end{itemize}
It is clear that the thus defined $(\iota,\vp)$ is a g-morphism.
\end{proof}

\renewcommand{\theapplemma}{\ref{le:SHPcommut}}
\begin{applemma}
\begin{itemize}
    \item[{\rm (a)}]  $S_gS_g = S_gS = SS_g = S_g$.
    \item[{\rm (b)}]  $H_gH_g = H_gH = HH_g = H_g$.
    \item[{\rm (c)}]  $P_gP_g = P_gP_f = P_fP_g = P_g$.
    \item[{\rm (d)}]  $S_gH \leq S_gH_g \leq HS_g \leq H_gS_g$.
    \item[{\rm (e)}]  $P_gS \leq P_gS_g \leq SP_g \leq S_gP_g$.
    \item[{\rm (f)}]  $P_gH \leq P_gH_g \leq HP_g \leq H_gP_g$
\end{itemize}
\end{applemma}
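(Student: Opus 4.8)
The plan is to reduce all six statements to a handful of genuinely nontrivial facts, exploiting the monotonicity of the operators together with the inflationary inclusions $\bfK \se S(\bfK) \se S_g(\bfK)$, $\bfK \se H(\bfK) \se H_g(\bfK)$ and $\bfK \se P_f(\bfK) \se P_g(\bfK)$ recorded just before the lemma. For (a) and (b) I would first establish the idempotencies $S_gS_g = S_g$ and $H_gH_g = H_g$: a g-subalgebra of a g-subalgebra is again a g-subalgebra, and a g-image of a g-image is again a g-image, the relevant transitivity being exactly the composition statement of Lemma~\ref{le:g-morphisms}(a). Once idempotency is in hand, the remaining equalities follow by sandwiching. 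Since $S \le S_g$ gives $S_gS \le S_gS_g = S_g$ and $SS_g \le S_gS_g = S_g$, while applying $\bfK \se S(\bfK)$ to the class $S_g(\bfK)$ (and applying $S_g$ to $\bfK \se S(\bfK)$) gives $S_g \le SS_g$ and $S_g \le S_gS$, one is forced to conclude $S_gS = SS_g = S_g$; the same template settles (b).

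For (c) the only substantial point is the inclusion $P_gP_g \le P_g$: a g-product whose factors are themselves g-products must be rewritten as a single g-product of the underlying factors. Since a g-product is specified by a map into a product of operator alphabets (Definition~\ref{de:DirectProduct}), I would flatten an iterated product by composing the outer map with the inner product maps, thereby exhibiting the nested product, up to isomorphism, as an ordinary g-product of the atomic factors. The equalities $P_gP_f = P_fP_g = P_g$ then follow by the sandwiching used in (a), now with $P_f \le P_g$ and $\bfK \se P_f(\bfK)$.

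For (d), (e) and (f) the first and third inequalities in each chain are immediate from monotonicity and from $H \le H_g$ (respectively $S \le S_g$), so the work concentrates on the middle \emph{semi-commutation} inequalities $S_gH_g \le HS_g$, $P_gS_g \le SP_g$ and $P_gH_g \le HP_g$. For $S_gH_g \le HS_g$ I would take a typical $\GalgC \in S_gH_g(\bfK)$, given by $\SalgA \in \bfK$, a g-epimorphism $(\iota,\vp) : \calA \ra \calA'$, a g-subalgebra $\calB$ of $\calA'$, and a g-isomorphism onto $\calC$; then pull $\calB$ back along $\vp$ via Lemma~\ref{le:g-morphisms}(b) to land inside $\calA$, producing a member of $S_g(\bfK)$, and finally push it forward to show $\calC$ is a g-image of it. For $P_gS_g \le SP_g$ and $P_gH_g \le HP_g$ I would argue dually: form the g-product of the larger algebras and check that the product of the g-subalgebras (respectively the product of the g-epimorphisms) realises the given algebra as a subalgebra (respectively a g-image) of that product.

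The main obstacle I anticipate lies in the operator-alphabet bookkeeping inside these semi-commutation arguments. A g-epimorphism carries a map $\iota : \Si \ra \Om$ that need not be injective, so pulling a g-subalgebra $\OalgB$ back along $\vp$ yields the $\iota^{-1}(\Om)$-algebra $\calB\vp^{-1}$, whose operator alphabet is generally too large. To obtain a genuine member of $S_g(\bfK)$ with the correct signature I must select a subset $\Sigma' \se \iota^{-1}(\Om)$ on which $\iota$ restricts to a bijection $\iota' : \Sigma' \ra \Om$, relabel the operations through $\iota'$, and then verify that the composite element map is the desired epimorphism. Keeping the two layers -- the operator map and the element map -- coherent, and checking that the relabelled operations agree on every word, is where the care is required; the element-level reasoning itself is routine once the signature has been fixed.
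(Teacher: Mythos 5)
Your proposal is correct and matches the paper's proof essentially step for step: idempotency together with the inflationary inclusions disposes of (a)--(c) (with the flattening of nested g-products via composition of the alphabet maps giving $P_gP_g \le P_g$), and in (d)--(f) the work is reduced, exactly as in the paper, to the middle inequalities $S_gH_g \le HS_g$, $P_gS_g \le SP_g$ and $P_gH_g \le HP_g$, which the paper establishes by precisely the constructions you describe, including choosing a subset $\Sigma' \subseteq \iota^{-1}(\Omega)$ on which $\iota$ restricts to a bijection so that the operations can be relabelled and the composite element map becomes an ordinary epimorphism. The only wording caveat is that in the $P_gH_g \le HP_g$ (and $P_gS_g \le SP_g$) cases the conclusion must be an ordinary epimorphic image (respectively an ordinary subalgebra), not merely a g-image; your construction does deliver this once the product's alphabet map $\vk$ is chosen through preimages under the maps $\iota_i$, which is exactly the bookkeeping you flag as the main point of care.
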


\begin{proof}  Statements (a) and (b) hold because obviously $S_gS_g =
S_g$ and $H_gH_g = H_g$.

To prove (c), it clearly suffices to show that $P_gP_g \leq P_g$.
To reduce the notational complexity, we consider g-products with
two factors only. In what follows, $\bfK$ is always any given class of unranked
algebras.

Let $\calA_i = (A_i,\Sigma_i) \in \bfK$ for $i \in [4]$ and let
$\tau_1: \Omega_1 \to \Sigma_1 \times \Sigma_2$ and $\tau_2:
\Omega_2 \to \Sigma_3 \times \Sigma_4$ be mappings, and let
$\calB_1 = (B_1,\Omega_1)$ and $\calB_2 = (B_2,\Omega_2)$ be
algebras that are isomorphic to $\tau_1(\calA_1,\calA_2)$ and
$\tau_2(\calA_3,\calA_4)$ via the respective isomorphisms $\vp_1:
A_1 \times A_2 \to B_1$ and $\vp_2: A_3 \times A_4 \to B_2$. Next,
define a mapping $\lambda: \Gamma \to \Omega_1 \times \Omega_2$.
Then any algebra  $\calC = (C,\Gamma)$ isomorphic to
$\lambda(\calB_1,\calB_2)$ is a typical member of
$P_g(P_g(\bfK))$. We should show that $\calC \in P_g(\bfK)$.

Define $\mu: \Gamma \to \Sigma_1 \times \Sigma_2 \times \Sigma_3
\times \Sigma_4$ so that $\mu(g) = (f_1, f_2,f_3,f_4)$ if
$\lambda(g) = (h_1,h_2)$, $\tau_1(h_1) = (f_1,f_2)$, and
$\tau_2(h_2) = (f_3,f_4)$. Then
$\mu(\calA_1,\calA_2,\calA_3,\calA_4)$ is a g-product of members
of $\bfK$. We show that
$$\psi : \mu(\calA_1,\calA_2,\calA_3,\calA_4) \ra \lambda(\calB_1,\calB_2), \, (a_1,a_2,a_3,a_4) \mapsto ((a_1,a_2)\vp_1, (a_3,a_4)\vp_2),$$
is an isomorphism. Clearly, $\psi$ is a bijection since $\vp_1$
and $\vp_2$ are bijections. Consider any $g \in \Gamma$, $m \ge
0$, and $(a_{11}, a_{12}, a_{13}, a_{14})$, \dots, $(a_{m1},
a_{m2}, a_{m3}, a_{m4}) \in A_1 \times A_2 \times A_3 \times A_4$.
If $\lambda(g) = (h_1,h_2)$, $\tau_1(h_1) = (f_1,f_2)$, and
$\tau_2(h_2) = (f_3,f_4)$, then
\begin{align*}
    g&_{\mu(\calA_1,\calA_2,\calA_3,\calA_4)}
      ((a_{11}, a_{12}, a_{13}, a_{14}), \dots,
       (a_{m1}, a_{m2}, a_{m3}, a_{m4}))\psi\\
   &=
   ( (f_1)_{\calA_1} (a_{11}, \dots, a_{m1}),
     (f_2)_{\calA_2} (a_{12}, \dots, a_{m2}), \\
   &\phantom{mm}
     (f_3)_{\calA_3} (a_{13}, \dots, a_{m3}),
     (f_4)_{\calA_4} (a_{14}, \dots, a_{m4}) ) \psi \\
   &=
   (( (f_1)_{\calA_1} (a_{11}, \dots, a_{m1}),
             (f_2)_{\calA_2} (a_{12}, \dots, a_{m2}) ) \vp_1 , \\
   &\phantom{mm}
     ( (f_3)_{\calA_3} (a_{13}, \dots, a_{m3}),
             (f_4)_{\calA_4} (a_{14}, \dots, a_{m4}) )\vp_2  ) \\
   &=
   (( (h_1)_{\tau_1(\calA_1,\calA_2)}
             ( (a_{11}, a_{12}), \dots, (a_{m1}, a_{m2}) ) ) \vp_1, \\
   &\phantom{mm}
    ( (h_2)_{\tau_2(\calA_3,\calA_4)}
             ( (a_{13}, a_{14}), \dots, (a_{m3}, a_{m4}) ) ) \vp_2 ) \\
   &=
   ( (h_1)_{\calB_1}
             ( (a_{11},a_{12})\vp_1, \dots, (a_{m1},a_{m2})\vp_1 ), \\
   &\phantom{mm}
     (h_2)_{\calB_2}
             ( (a_{13},a_{14})\vp_2, \dots, (a_{m3},a_{m4})\vp_2
             ) ) \\
   &= g_{\lambda(\calB_1,\calB_2)}
           (  ( (a_{11},a_{12})\vp_1, (a_{13},a_{14})\vp_2 ),
             \dots,
              ( (a_{m1},a_{m2})\vp_1, (a_{m3},a_{m4})\vp_2 ) ) \\
   &= g_{\lambda(\calB_1,\calB_2)}
           (  (a_{11},a_{12}, a_{13},a_{14})\psi,
             \dots,
              (a_{m1},a_{m2}, a_{m3},a_{m4})\psi ).
\end{align*}
So $\psi$ is also a morphism, and $\calC \in P_g(\bfK)$ as $\calC
\cong \lambda(\calB_1, \calB_2) \cong
\mu(\calA_1,\calA_2,\calA_3,\calA_4)$.

In each of (d), (e) and (f), it suffices to prove the second
inequality because the first and the third inequalities are
obvious.

To complete the proof of (d), we should show that $S_gH_g \leq
HS_g$. To construct a typical member  $\GalgC$ of $S_gH_g(\bfK)$,
let $\calA = (A,\Sigma)$ be in $\bfK$, $(\iota,\vp): \calA \ra
\calA'$ be a g-epimorphism, $\OalgB$ be a g-subalgebra of
$\calA'$, and let $(\vk,\psi): \calB \to \calC$ be a
g-isomorphism. Now $\calB\vp^{-1} =
(B\vp^{-1},\iota^{-1}(\Omega))$ is a g-subalgebra of $\calA$. If
we choose a subset $\Sigma'$ of $\iota^{-1}(\Omega)$ so that the
restriction of $\iota$ to $\Sigma'$ is a bijection $\iota':
\Sigma' \to \Omega$, then $\calD = (B\vp^{-1},\Sigma')$ is a
g-subalgebra of $\calA$.

Next we define a $\Gamma$-algebra $\calE = (B\vp^{-1},\Gamma)$ so
that for each $g \in \Gamma$, $g_{\calE} = f_{\calD}$ for the $f
\in \Sigma'$ such that $g = \vk(\iota'(f))$. Then $(\iota'\vk,
1_{B\vp^{-1}}) : \calD \ra \calE$ is a g-isomorphism. Indeed, if
$f \in \Sigma'$ and $w \in (B\vp^{-1})^*$, then
$$
    f_{\calD} (w) 1_{B\vp^{-1}}  = f_{\calD} (w)
    = \vk(\iota'(f))_{\calE} (w 1_{B\vp^{-1}}).
$$
This means that $\calE \in S_g(\bfK)$. Next, we show that
$\vp\psi: \calE \to \calC$ is an epimorphism. Clearly,
$B\vp^{-1}\vp\psi = C$. Consider any $g \in \Gamma$ and $w \in
(B\vp^{-1})^*$. Let $f \in \Sigma'$ and $h \in \Omega$ be such
that $\iota'(f) = h$ and $\vk(h) = g$. Then
$$
    g_{\calE} (w) \vp\psi = f_{\calD} (w) \vp \psi
    = f_{\calA} (w) \vp \psi = h_{\calA'} (w\vp) \psi
    = h_{\calB} (w\vp) \psi = g_{\calC} (w\vp\psi).
$$
Thus $\calC \in HS_g(\bfK)$.

The proof of (e) is complete when we show that $P_gS_g \leq SP_g$.
Any algebra $\calD = (D,\Ga)$ in $P_gS_g(\bfK)$ is isomorphic to a
g-product $\lambda(\calC_1, \dots \calC_n)$, where $n\geq 0$, and
for each $i \in [n]$, $\calA_i = (A_i,\Sigma_i)$ is a member of
$\bfK$,  $\calB_i = (B_i,\Omega_i)$ is a g-subalgebra of
$\calA_i$, $\calC_i = (C_i,\Gamma_i)$ is an algebra g-isomorphic
to $\calB_i$ via some g-isomorphism $(\iota_i,\vp_i): \calB_i \to
\calC_i$, and  $\lambda: \Gamma \to \Gamma_1 \times \dots \times
\Gamma_n$ is a mapping. It suffices to show that $\lambda(\calC_1,
\dots \calC_n)\in SP_g(\bfK)$.

To do this, we define the mapping $\vk: \Gamma \to \Sigma_1 \times
\dots \times \Sigma_n$ so that $\vk(g) = (f_1, \dots, f_n)$ if
$\lambda(g) = (g_1, \dots, g_n)$ and $\iota_1(f_1) = g_1$, \dots,
$\iota_n(f_n) = g_n$. Then $\vk(\Gamma) \subseteq \Omega_1 \times
\dots \times \Omega_n$. Now $\vk(\calA_1, \dots, \calA_n)$ is in
$P_g(\bfK)$ and $\vk(\calB_1, \dots, \calB_n)$ is a subalgebra of
$\vk(\calA_1, \dots, \calA_n)$. Indeed, for any $g\in \Gamma$,  $m
\ge 0$ and $(b_{11}, \dots, b_{1n})$, \dots, $(b_{m1}, \dots,
b_{mn}) \in B_1 \times \dots \times B_n$, if $\vk(g) = (f_1,
\dots, f_n)$, then
\begin{align*}
    g_{\vk(\calA_1, \dots, \calA_n)} &((b_{11}, \dots, b_{1n}),
    \dots, (b_{m1}, \dots, b_{mn}))      \\
    &=
    ((f_1)_{\calA_1} (b_{11}, \dots, b_{m1}), \dots,
     (f_n)_{\calA_n} (b_{1n}, \dots, b_{mn}))
    \in B_1 \times \dots \times B_n
\end{align*}
since each $\calB_i$ is an $\Omega_i$-closed subset of $A_i$. Next
we verify that the mapping
$$\vp: B_1 \times \dots \times B_n \to C_1
\times \dots \times C_n, \, (b_1, \dots, b_n)\vp \mapsto
(b_1\vp_1, \dots, b_n\vp_n),$$ defines an isomorphism from
$\vk(\calB_1, \dots, \calB_n)$ to $\lambda(\calC_1, \dots,
\calC_n)$. Clearly, $\vp$ is bijective. Moreover, for $g$, the
$b_{ij}$'s and $f_i$'s as above,
\begin{align*}
    g_{\vk(\calB_1, \dots, \calB_n)}& ((b_{11}, \dots, b_{1n}),
    \dots, (b_{m1}, \dots, b_{mn}))\vp\\
    &= ((f_1)_{\calB_1} (b_{11}, \dots, b_{m1}), \dots,
     (f_n)_{\calB_n} (b_{1n}, \dots, b_{mn})) \vp \\
    &=
    ((f_1)_{\calB_1} (b_{11}, \dots, b_{m1}) \vp_1, \dots,
     (f_n)_{\calB_n} (b_{1n}, \dots, b_{mn}) \vp_n) \\
    &=
    (\iota_1(f_1)_{\calC_1} (b_{11}\vp_1, \dots, b_{m1}\vp_1), \dots,
     \iota_n(f_n)_{\calC_n} (b_{1n}\vp_n, \dots, b_{mn}\vp_n)) \\
    &=
    g_{\lambda(\calC_1, \dots, \calC_n)}
      ((b_{11}\vp_1, \dots, b_{1n}\vp_n), \dots,
       (b_{m1}\vp_1, \dots, b_{mn}\vp_n)) \\
    &=
    g_{\lambda(\calC_1, \dots, \calC_n)}
      ((b_{11}, \dots, b_{1n})\vp, \dots,
       (b_{m1}, \dots, b_{mn})\vp).
\end{align*}
This shows that $\lambda(\calC_1, \dots \calC_n)\in SP_g(\bfK)$.

Finally, we prove the critical inequality $P_gH_g \leq HP_g$ of
(f).  Let $n\geq 0$ and for each $i\in [n]$, let $\calA_i =
(A_i,\Sigma_i)$ be an algebra  in $\bfK$ and let $(\iota_i,\vp_i):
\calA_i \to \calB_i$ be a g-epimorphism from $\calA_i$ onto an
algebra $\calB_i = (B_i,\Omega_i)$. Let $\lambda: \Gamma \to
\Omega_1 \times \dots \times \Omega_n$ be a mapping and $\psi:
\lambda(\calB_1, \dots, \calB_n) \to \calC$ an isomorphism from
$\lambda(\calB_1, \dots, \calB_n)$ to some $\calC = (C,\Gamma)$.
Then $\calC$ is a typical member of $P_gH_g(\bfK)$.

Define $\vk: \Gamma \to \Sigma_1 \times \dots \times \Sigma_n$ as
follows. If $g \in \Gamma$ and $\lambda(g) = (h_1, \dots, h_n)$,
choose any $f_1 \in \Sigma_1$, \dots, $f_n \in \Sigma_n$ for which
$\iota_1(f_1) = h_1$, \dots, $\iota_n(f_n) = h_n$ and set $\vk(g)
= (f_1, \dots, f_n)$. Then the mapping
$$\vp: A_1
\times \dots \times A_n \to C, \, (a_1, \dots, a_n)\vp \mapsto
(a_1 \vp_1, \dots, a_n\vp_n)\psi,$$ is an epimorphism from
$\vk(\calA_1, \dots,\calA_n)$ to $\calC$. Firstly, $\vp$ is
surjective since $\psi$ and $\vp_1$, \dots, $\vp_n$ are
surjective. Secondly, for any $g \in \Gamma$, $m \ge 0$, and
$(a_{11}, \dots, a_{1n})$, \dots, $(a_{m1}, \dots, a_{mn}) \in A_1
\times \dots \times A_n$, if $\vk(g) = (f_1, \dots, f_n)$, then
\begin{align*}
    g&_{\vk(\calA_1, \dots, \calA_n)}
    ( (a_{11}, \dots, a_{1n}), \dots, (a_{m1}, \dots, a_{mn}))
    \vp \\
    &=
    ((f_1)_{\calA_1} (a_{11}, \dots, a_{m1}), \dots,
     (f_n)_{\calA_n} (a_{1n}, \dots, a_{mn})) \vp \\
    &=
    ((f_1)_{\calA_1} (a_{11}, \dots, a_{m1}) \vp_1, \dots,
     (f_n)_{\calA_n} (a_{1n}, \dots, a_{mn}) \vp_n) \psi \\
    &=
    ( \iota_1(f_1)_{\calB_1} (a_{11}\vp_1, \dots, a_{m1}\vp_1), \dots,
      \iota_n(f_n)_{\calB_n} (a_{1n}\vp_n, \dots, a_{mn}\vp_n)) \psi \\
    &=
    g_{\lambda(\calB_1, \dots, \calB_n)}
    ( (a_{11}\vp_1, \dots, a_{1n}\vp_n), \dots,
      (a_{m1}\vp_1, \dots, a_{mn}\vp_n)) \psi \\
    &=
    g_{\calC}
    ( (a_{11}\vp_1, \dots, a_{1n}\vp_n) \psi, \dots,
      (a_{m1}\vp_1, \dots, a_{mn}\vp_n) \psi ) \\
    &=
    g_{\calC}
    ( (a_{11}, \dots, a_{1n}) \vp, \dots,
      (a_{m1}, \dots, a_{mn}) \vp ).
\end{align*}
This shows that $\calC \in HP_g(\bfK)$.
\end{proof}

\renewcommand{\theapplemma}{\ref{le:SyntCongrSat}}
\begin{applemma} For any subset $H\se A$ of an unranked algebra $\SalgA$, $\theta_H$ is the greatest congruence of $\calA$ that saturates $H$.
\end{applemma}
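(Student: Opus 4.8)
The plan is to establish three facts in turn: that $\theta_H$ is a congruence of $\calA$, that it saturates $H$, and that it contains every congruence of $\calA$ that saturates $H$. The workhorse throughout will be Lemma~\ref{le:Transl and Congr}, which reduces congruence-hood to invariance under elementary translations and guarantees that every congruence is invariant under \emph{all} translations, together with the fact that $\Tr(\calA)$ is closed under composition and contains $1_A$.

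That $\theta_H$ is an equivalence relation is immediate from its definition via a biconditional ranging over $\Tr(\calA)$. To see that it is a congruence, I would appeal to Lemma~\ref{le:Transl and Congr} and verify invariance under an arbitrary elementary translation $q$: assuming $a\,\theta_H\,b$, I take any $p\in\Tr(\calA)$ and note that the composite $p\circ q\in\Tr(\calA)$ (the map $x\mapsto p(q(x))$), so $p(q(a))\in H \LRa p(q(b))\in H$ by the hypothesis $a\,\theta_H\,b$; as $p$ was arbitrary, this is precisely $q(a)\,\theta_H\,q(b)$. For saturation, since $1_A\in\Tr(\calA)$, instantiating the defining condition at $p=1_A$ shows directly that $a\in H$ and $a\,\theta_H\,b$ force $b\in H$; hence $H$ is a union of $\theta_H$-classes.

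For maximality, let $\theta$ be any congruence of $\calA$ that saturates $H$, and suppose $a\,\theta\,b$. By Lemma~\ref{le:Transl and Congr}, $\theta$ is invariant under every translation, so $p(a)\,\theta\,p(b)$ for all $p\in\Tr(\calA)$. Because $\theta$ saturates $H$, the two elements $p(a)$ and $p(b)$ lie in a common $\theta$-class and therefore satisfy $p(a)\in H \LRa p(b)\in H$. Ranging over all $p$ yields $a\,\theta_H\,b$, so $\theta\se\theta_H$, establishing that $\theta_H$ is the greatest such congruence.

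The only delicate point—and the mild obstacle I anticipate—is the congruence verification, since one must argue invariance under \emph{elementary} translations rather than directly at the level of the fundamental operations $f_\calA$; this is exactly where the closure of $\Tr(\calA)$ under composition is indispensable, and where Lemma~\ref{le:Transl and Congr} shoulders the burden of translating translation-invariance into the congruence property. Everything else is a routine unwinding of the definition of $\theta_H$, entirely parallel to the classical string- and tree-language cases.
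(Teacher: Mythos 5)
Your proposal is correct and follows essentially the same route as the paper's proof: both use Lemma~\ref{le:Transl and Congr} together with the closure of $\Tr(\calA)$ under composition to get the congruence property, instantiate the definition at $1_A$ for saturation, and derive maximality from translation-invariance of an arbitrary saturating congruence. The only cosmetic difference is that you verify invariance of $\theta_H$ under elementary translations (which the lemma requires), while the paper checks it for arbitrary translations; this changes nothing of substance.
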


\begin{proof} It is clear that $\theta_H$ is an equivalence on $A$. The
congruence property follows immediately by Lemma \ref{le:Transl
and Congr}: if $a \, \theta_H \, b$ and $q\in \Tr(\calA)$, then
for any $p\in \Tr(\calA)$,
$$p(q(a))\in H \, \LRa \, p(q)(a)\in H \, \LRa \, p(q)(b) \in H \, \LRa \, p(q(b))\in H,$$
which shows that $q(a) \, \theta_H \, q(b)$. The congruence
$\theta_H$ also saturates $H$. Indeed, if $a \, \theta_H \, b$ and
$a\in H$, then also $b\in H$ because $a = 1_A(a)$ and $b =
1_A(b)$.

Finally, let $\theta$ be a congruence of $\calA$ saturating $H$,
and consider any $a,b\in A$ such that $a \, \theta \, b$. Again by
Lemma \ref{le:Transl and Congr},  $p(a) \, \theta \, p(b)$ for
every $p\in \Tr(\calA)$. Since $\theta$ saturates $H$, this means
that, for every $p\in \Tr(\calA)$, $p(a)\in H$ iff $p(b) \in H$,
i.e., that $a \, \theta_H \, b$.
\end{proof}

\renewcommand{\theappropo}{\ref{pr:ClosureProp of Rec}}
\begin{appropo}
(a)  $\es$, $\SXt \in \RecSX$ and $\RecSX$ is closed under all Boolean operations.
\end{appropo}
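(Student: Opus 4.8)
The plan is to reduce every claim to the regularity of a syntactic algebra via Corollary~\ref{co:Rec and SA}, and then to exploit the closure properties of $\Reg$ established in Proposition~\ref{pr:RegularAlgebrasVFRA} together with the behaviour of syntactic algebras under Boolean operations recorded in Proposition~\ref{pr:SyntAlg and Operations}.

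First I would dispose of the two constants directly from the definition of recognizability. Pick any trivial $\Si$-algebra $\calA = (\{a\},\Si)$; it is finite and each $f_\calA^{-1}(a)$ equals $\{a\}^*$, a regular language, so $\calA \in \Reg$. By freeness (Proposition~\ref{pr:TermAlgebraFree}) there is a morphism $\vp : \SXta \ra \calA$, and taking $F = \es$ gives $\es = \es\vpi$ while taking $F = A$ gives $\SXt = A\vpi$. Hence both $\es$ and $\SXt$ lie in $\RecSX$, being recognized by the \emph{regular} algebra $\calA$.

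For closure under the Boolean operations I would argue through syntactic algebras. If $T\in \RecSX$, then $\SA(T)$ is regular by Corollary~\ref{co:Rec and SA}; since $\SA(\SXt\setminus T) = \SA(T)$ by Proposition~\ref{pr:SyntAlg and Operations}(a), the complement has the same (regular) syntactic algebra and is therefore recognizable. For intersection, let $T,U\in \RecSX$, so $\SA(T)$ and $\SA(U)$ are regular. Proposition~\ref{pr:SyntAlg and Operations}(b) gives $\SA(T\cap U) \preceq \SA(T)\times \SA(U)$, and because $\Reg$ is a VRA (Proposition~\ref{pr:RegularAlgebrasVFRA}) it is closed under finite direct products and under taking covered algebras, i.e.\ epimorphic images of subalgebras (recall $S \leq S_g$, $H \leq H_g$ and $P_f \leq P_g$). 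Thus $\SA(T\cap U)$ is regular and $T\cap U$ is recognizable. Union then follows from the De~Morgan identity $T\cup U = \SXt\setminus\big((\SXt\setminus T)\cap(\SXt\setminus U)\big)$ by combining the two cases just proved.

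There is no genuine obstacle here: the statement is essentially a bookkeeping consequence of the cited results. The only points requiring a little care are to observe that a trivial algebra really is regular, so that the constants are recognized by a regular algebra and not merely by some algebra, and to invoke the VRA closure of $\Reg$ under the ordinary constructions $\times$ and $\preceq$ rather than only their generalized counterparts.
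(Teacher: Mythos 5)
Your proposal is correct, and there is no circularity: Corollary~\ref{co:Rec and SA} (via Proposition~\ref{pr:Myhill-Nerode Thm}) and Propositions~\ref{pr:RegularAlgebrasVFRA} and~\ref{pr:SyntAlg and Operations} are all established before Proposition~\ref{pr:ClosureProp of Rec}, so they may legitimately be invoked. However, the route differs from the proof the paper actually writes out for this statement in the Appendix, which is a direct, automaton-level construction: for the complement one keeps the same recognizing regular algebra $\calA$ and morphism $\vp$ and simply replaces the accepting set $F$ by $A\setminus F$, so that $\SXt\setminus T = (A\setminus F)\vpi$; for the intersection one forms the direct product $\calA\times\calB$ of the two recognizing algebras, takes the pairing morphism $\eta : \SXta \ra \calA\times\calB$, $t \mapsto (t\vp,t\psi)$, and accepting set $F\times G$, so that $T\cap U = (F\times G)\etai$, regularity of $\calA\times\calB$ being Lemma~\ref{le:RegularProduct}. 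Your argument instead pushes everything through syntactic algebras and the equivalence ``$T$ recognizable iff $\SA(T)$ regular'' --- which is in fact exactly the sketch the main text gives for Proposition~\ref{pr:ClosureProp of Rec}. The trade-off: the Appendix proof is more elementary and self-contained (it needs neither syntactic algebras nor the Myhill--Nerode theorem, only Lemmas~\ref{le:Regular subalgebras and images} and~\ref{le:RegularProduct}), while your route is shorter bookkeeping given the machinery already built and is the one that uniformly yields parts (b) and (c) of the proposition as well. Two small merits of your write-up: you verify explicitly that a trivial algebra is regular (the paper leaves this implicit in ``recognized even by trivial $\Si$-algebras''), and you spell out that closure of $\Reg$ under $\times$ and $\preceq$ follows from its VRA property together with $S\leq S_g$, $H\leq H_g$, $P_f\leq P_g$.
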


\begin{proof} Clearly, $\es$ and $\SXt$ are recognized even by trivial
$\Si$-algebras. If $T,U \in \RecSX$, then there are regular
$\Si$-algebras $\SalgA$ and $\SalgB$ such that $T = F\vpi$ and $U
= G\psii$ for some morphisms $\vp : \SXta \ra \calA$, $\psi :
\SXta \ra \calB$ and some $F\se A$, $G\se B$. Then $\SXt\setminus
T = (A\setminus F)\vpi \in \RecSX$ and $T\cap U = (F\times
G)\etai$ for the morphism $\eta : \SXta \ra \calA \times \calB, t
\mapsto (t\vp,t\psi)$. Since $\calA\times\calB$ is regular, this
means that also $T\cap U \in \RecSX$.
\end{proof}

\end{document}